\begin{document}
\title{Operators for Parabolic Block Spin Transformations}

\author{Tadeusz Balaban}
\affil{\small Department of Mathematics \authorcr
       Rutgers, The State University of New Jersey \authorcr
       tbalaban@math.rutgers.edu\authorcr
       \  }

\author{Joel Feldman\thanks{Research supported in part by the Natural 
                Sciences and Engineering Research Council 
                of Canada and the Forschungsinstitut f\"ur 
                Mathematik, ETH Z\"urich.}}
\affil{Department of Mathematics \authorcr
       University of British Columbia \authorcr
       feldman@math.ubc.ca \authorcr
       http:/\hskip-3pt/www.math.ubc.ca/\squig feldman/\authorcr
       \  }

\author{Horst Kn\"orrer}
\author{Eugene Trubowitz}
\affil{Mathematik \authorcr
       ETH-Z\"urich \authorcr
       knoerrer@math.ethz.ch, trub@math.ethz.ch \authorcr
       http:/\hskip-3pt/www.math.ethz.ch/\squig knoerrer/}


\maketitle

\begin{abstract}
\noindent
This paper is a contribution to a program to see symmetry breaking in a
weakly interacting many Boson system on a three dimensional lattice at 
low temperature.  It is part of an analysis of the ``small field''  approximation 
to the ``parabolic flow'' which exhibits the formation of a ``Mexican hat''
potential well. Bounds on  the fluctuation integral covariance, as well
as on some other  linear operators, are an important ingredient
in the renormalization group step analysis of  \cite{PAR1,PAR2}.
These bounds are proven here.

\end{abstract}

\newpage
\tableofcontents

\newpage
\section{Introduction}

In \cite{PAR1,PAR2}, we exhibit, for a many particle system of 
weakly interacting Bosons in three space dimensions, the 
formation of a potential well of the type that typically leads to 
symmetry breaking in the thermodynamic limit.  To do so, we 
use the block spin renormalization group approach. In previous papers 
\cite{UV,fnlint1,fnlint2,CPC} (followed by a simple change of
variables) we have written the partition function of such a system on a discrete torus\footnote{All bounds achieved so far are uniform in the 
volume of this torus.} 
in terms of a functional integral on a $1+3$ dimensional space
\begin{equation*}
\cX_0 
= \big( \bbbz / L_\tp\bbbz\big) \times 
  \big(\bbbz^3 / L_\sp\bbbz^3 \big)
\end{equation*}
with positive integers $ L_\tp$, $ L_\sp$. Up to corrections which are 
exponentially small in the coupling constant, and up to a multiplicative 
normalization factor, this representation is of the form
\begin{equation}\label{eqnPINTstartingpoint}
   \int \Big[ \smprod_{x\in \cX_0} 
        \sfrac{ d\psi(x)^\ast\wedge d\psi(x)}{2\pi \imath}\Big] \,
  e^{\cA_0(\psi^*,\psi) }\chi_0(\psi)
\end{equation}
with an action $\cA_0$ of the form
\begin{equation}\label{eqnPINaIn}
\cA_0(\psi_*,\psi) = -\<\psi_*,\,D_0\psi\>_0
       -\cV_0(\psi_*,\psi)
       +\mu_0 \<\psi_*,\,\psi\>_0
       +\cE'_0(\psi_*,\psi)
\end{equation}
Here
\begin{itemize}[leftmargin=*, topsep=2pt, itemsep=0pt, parsep=0pt]
\item
$
D_0=\bbbone - e^{-\oh_0} -e^{-\oh_0} \partial_0\,$,
 where $\,\partial_0$  the forward time derivative  (see 
\eqref{eqnPBSforwardDeriv} below and $\oh_0 $ is -- up to a scaling --  
the single particle Hamiltonian.
\item
$
\cV_0(\psi_*,\psi)
$ 
is a quartic monomial that describes the coupling between the particles
\item 
$\mu_0 $ is related to the chemical potential of the system
\item
 $\cE'_0(\psi_*,\psi)$ is perturbatively small
\item
 $\,\chi_0(\psi)\,$ is a ``small field cut off function''.
\end{itemize}

\noindent See \cite[(\eqnHTstartingpoint), (\eqnHTaIn)]{PAR1}.

For the block spin renormalization group action, we pick a 
 ``block rectangle'' of length $L^2$ in the ``time direction'' 
and $L$ in ``space directions'', where $L$ is a sufficiently 
large odd positive integer, and a corresponding nonnegative, 
compactly supported function $q(x)$ on $\bbbz\times \bbbz^3$ 
(the averaging profile). The choice of this kind of rectangle is 
characteristic of ``parabolic scaling''. See 
\cite[Definition \defHTscaling, Remark \remHTbasicremarkonscaling,
Definition \defHTbasicnorm.d]{PAR1}. 
For simplicity we assume that $L_\sp$ and $L_\tp$ are powers of $L$. 

The block spin averaging operator, which we denote $Q$, maps functions
on the lattice  $\cX_0$ to functions on the ``coarse'' lattice
$\cX_{-1}^{(1)}= \big( L^2\bbbz / L_\tp\bbbz\big) \times 
  \big(L\bbbz^3 / L_\sp\bbbz^3 \big)$. After each renormalization group
step we scale, to again give functions on a unit lattice. After the
first RG step this unit lattice is  
$\cX_{0}^{(1)}= \big( \bbbz /\sfrac{1}{L^2}L_\tp \bbbz\big) \times 
  \big(\bbbz^3 / \sfrac{1}{L}L_\sp\bbbz^3 \big)$. The ``scaled''
block spin averaging operator maps functions
on the lattice  
$\cX_1=\big(\sfrac{1}{L^2}\bbbz /\sfrac{1}{L^2}L_\tp \bbbz\big) \times 
  \big(\sfrac{1}{L}\bbbz^3 / \sfrac{1}{L}L_\sp\bbbz^3 \big)$ to functions 
on the  unit lattice  $\cX_{0}^{(1)}$.

In the $n^{\rm th}$ renormalization group step, we end up considering functions
on the chain of lattices
\begin{equation*}
\cX_{-1}^{(n+1)} \subset \cX_0^{(n)} \subset \cX_1^{(n-1)} 
\subset \cdots \subset  \cX_{n-1}^{(1)} \subset  \cX_{n}^{(0)}
\end{equation*}
where, for integers $\,j\ge -1\,$ and $\,n\ge 0\,$,
\begin{equation*}
\cX_{j}^{(n)} 
= \big(\veps_j^2 \bbbz / \veps_{n+j}^2 L_\tp\bbbz \big)  
   \times \big(\veps_j\bbbz^3 / \veps_{n+j} L_\sp  \bbbz^3 \big)
\qquad\text{with } \veps_j=\sfrac{1}{L^j}
\end{equation*}
The subscript in $\cX_{j}^{(n)}$ determines the ``coarseness'' of the lattice
--- nearest neighbour points are a distance $\veps_{2j}=\sfrac{1}{L^{2j}}$ 
apart in the time direction and a distance $\veps_j=\sfrac{1}{L^j}$ apart in
spatial directions. The superscript in $\cX_{j}^{(n)}$ determines the number
of points in the lattice --- $|\cX_{j}^{(n)}|=|\cX_0|/L^{5n}$ for all $j$.
 We usually write $\cX_{n}^{(0)}= \cX_n$.
See \cite[Definition \defHTbackgrounddomaction.a or Appendix \appDEFlattices]{PAR1}.

The $(n+1)^{\rm st}$ block spin transformation involves the passage from
$\cX_0^{(n)}$ to its sublattice $\cX_{-1}^{(n+1)}$. The averaging operations
determine linear maps
\begin{equation*}
Q: \cH^{(n)}_0 \mapsto  \cH^{(n+1)}_{-1} \qquad \text{and}
\qquad 
Q_n: \cH_n=\cH^{(0)}_n \mapsto  \cH^{(n)}_0
\end{equation*}
where $\cH^{(n)}_j =L^2\big(\cX_{j}^{(n)}\big)$ denotes the (finite dimensional)
Hilbert space  of functions on $\cX_{j}^{(n)} $ with integral 
$\
\int_{X^{(n)}_j} du = \veps_j^5 \smsum_{u\in\cX_j^{(n)} }
\ $
and the \emph{real} inner product
\begin{equation*}
\< \al_1,\al_2\>_j
= \int_{X^{(n)}_j} \al_1(u) \,\al_2(u) \ du
\end{equation*}
Again see \cite[Definition \defHTbackgrounddomaction.a or 
   Appendix \appDEFblockspinops]{PAR1}.
In \S\ref{secPOblockSpin}  we pick a specific averaging profile 
and give bounds on the operators
$Q$, $Q_n$, their Fourier transforms, and related operators.

Scaling is performed by the linear isomorphisms
\begin{equation*}
\bbbl : \cX_j^{(n)}\rightarrow \cX_{j-1}^{(n)}
    \qquad  \qquad
  (u_0,\bu) \mapsto (L^2u_0, L\bu)
\end{equation*}
 For a function $\,\al\in\cH_j^{(n)}\,$,
define  the function $\,\bbbl_*(\al) \in \cH_{j-1}^{(n)}\,$  by 
$\,
\bbbl_*(\al)(\bbbl u) = \al(u)
\,$. 
See  \cite[Appendix \appDEFscaling]{PAR1}. 
In particular, after rescaling and multiplication with the ``scaling factor'' $ L^{2n}$,  the differential operator $D_0$ in \eqref{eqnPINaIn} becomes the operator
\begin{equation*}
D_n = L^{2n}\ \bbbl_*^{-n}
         \big(\bbbone - e^{-\oh_0} -e^{-\oh_0} \partial_0\big)\bbbl_*^n
\end{equation*}
on $\cH_n$. 
This operator is discussed in \S\ref{secPOdiffOps}.

As mentioned above, the passage from a functional integral on $\cX_0^{(n)}$ 
to a functional integral on $\cX_{-1}^{(n+1)}$ is an averaging procedure over,
roughly speaking, a rectangle of size $L^2$ in the time direction and
size $L$ in the spatial directions. This passage is analyzed using 
stationary phase techniques that involve 
\begin{itemize}[leftmargin=*, topsep=2pt, itemsep=0pt, parsep=0pt]
\item the determination of critical fields on $\cX_0^{(n)}$
(that are functions of external fields on $\cX_{-1}^{(n+1)}$) for an 
appropriate action, and 
\item a functional integral over ``fluctuation fields'' around the critical field. 
\end{itemize}
The covariance for the integral over the fluctuation fields has been 
identified in \cite[(\eqnHTcn)]{PAR1} and is bounded in \S\ref{secPOcovariance}.

The composition of the critical fields of $n$ renormalization group steps
is -- after rescaling -- a field on $\cX_n$, called the ``background field'', 
that is a function of an external field on $\cX_0^{(n)}$ .
It is crucial in our representation of the partition function. See 
\cite[Theorem \thmTHmaintheorem]{PAR1}.
The ``leading order'' part of the background field is linear in the 
external field and has been identified in 
\cite[Proposition \propHTexistencebackgroundfields]{PAR1}. It is the 
composition of an operator, from $\cH_0^{(n)}$ to $\cH_n$ determined 
by the averaging profile $q$,
and an operator $S_n$ on $\cH_n$ which can be viewed as a Green's function 
for the differential operator $D_n$ (plus a mass term). This operator, 
$S_n$, is discussed in \S\ref{secPOgreens}.

To get bounds on  the critical fields in the fluctuation integral at step 
$n+1$, we use a well known algebraic relation between these critical 
fields and the background fields at step $n+1$ given in 
\cite[Proposition \propFormalFldSlns]{BlockSpin}  
and \cite[Proposition \propBGAomnibus.a]{PAR1}. 
The operators 
in the linearization of this relation, and various other linearizations, 
are studied in \S\ref{secPOleadingOrder}.

By construction, many of the operators discussed in this paper are 
linear operators defined on the Hilbert space of functions on a lattice 
that are invariant under translations with respect to a sublattice. 
It is natural to use Bloch/Floquet decompositions and Fourier transforms 
for an analysis of such operators.  We use the abstract basic results about such 
decompositions given in \cite{Bloch}. 

\bigskip

Most operator estimates we obtain in this paper are with respect to a 
norm of the following kind.

\begin{definition}\label{defPINoperatornorm}
For any operator $A:\cH_j^{(n-j)}\rightarrow\cH_k^{(n-k)}$, with kernel
$A(u,u')$, and for any mass $m\ge 0$, we define the norm
\begin{equation*}
\|A\|_m
=\max\Big\{\sup_{u\in\cX_k^{(n-k)}}\,\int_{\cX_j^{(n-j)}} \hskip-15pt du'\ 
                  e^{m|u-u'|}|A(u,u')|\ ,\ 
\sup_{u'\in\cX_j^{(n-j)}}\int_{\cX_k^{(n-k)}}\hskip-15pt du \ 
                           e^{m|u-u'|}|A(u,u')|
\Big\}
\end{equation*}
In the special case that $m=0$, this is just the usual $\ell^1$--$\ell^\infty$
norm of the kernel.
\end{definition}

As we point out in \cite[Lemmas \lemBOlonelinfty\ and \lemBOuniqueness]{Bloch} 
this norm is related to the analyticity properties of the Fourier transform.
In this paper we use the following Fourier transform conventions.
\smallskip
 
\noindent
The dual lattice of $\cX_{j}^{(n)}$ is
\begin{equation*}
\hat\cX_j^{(n)}
  =\big(\sfrac{2\pi}{\veps_{n+j}^2L_\tp}\bbbz/\sfrac{2\pi}{\veps_j^2}\bbbz\big)
    \!\times\!
 \big(\sfrac{2\pi}{\veps_{n+j}L_\sp}\bbbz^3/\sfrac{2\pi}{\veps_j}\bbbz^3\big)
\end{equation*}
For a function
$\al\in \cH_{j}^{(n)}$
\begin{equation*}
\hat \al(p)
 =\int_{\cX_{j}^{(n)}} \al(u) e^{-i p\cdot u }\ du
\qquad 
\al(u)
  =\int_{\hat\cX_{j}^{(n)}}  \hat \al(p) e^{i u\cdot p }\ \sfrac{dp}{(2\pi)^4}
\end{equation*}
where
$\ 
\int_{\hat\cX_{j}^{(n)}} \sfrac{dp}{(2\pi)^4} 
= \sfrac{1}{\veps_{n+j}^5L_\tp L_\sp^3}    \sum_{p\in\hat\cX_{j}^{(n)}}
\ $.
The maps
\begin{equation*}
\bbbl : \hat \cX_{j-1}^{(n)}\rightarrow \hat\cX_j^{(n)}
    \qquad  \qquad
  (q_0,\bq) \mapsto (L^2q_0, L\bq)
\end{equation*}
are again linear isomorphisms, and, for a function $\,\al\in\cH_j^{(n)}\,$,
\begin{equation}\label{eqnPINTlft}
\widehat{\bbbl_*(\al)} (q) = L^ 5 \hat \al(\bbbl q)
\end{equation}
The quotient map dual to the inclusion
$\cX_j^{(n)}\subset \cX_{j+k}^{(n-k)}$ is
\begin{equation}\label{eqnPINTproj}
\hat \pi_{n+j}^{(j+k,j)} : \hat\cX_{j+k}^{(n-k)} \rightarrow \hat\cX_j^{(n)}
\end{equation}
When the indices are clear from the context we suppress them and 
write $\hat\pi$. 

\bigskip

The estimates of this paper are used in \cite{PAR1,PAR2}. In particular, the 
construction of the background fields and the critical fields in 
\cite{BGE} uses a contraction mapping 
argument around the linearizations of \S\ref{secPOleadingOrder}
and \S\ref{secPOgreens} in this paper.

For the readers' convenience we have included, in Appendix \ref{appLatOpSummary}, 
a list of most of the operators and lattices that appear in this paper.

\begin{convention}\label{convPOconstants}
Most estimates in this paper are bounds on norms of operators as in 
Definition \ref{defPINoperatornorm}. The (finite number of) constants that 
appear in these bounds are consecutively labelled
$\Gam_1, \Gam_2, \cdots $, $\gam_1, \gam_2, \cdots $, $m_1, m_2, \cdots$.
All of these constants $\Gam_j$, $\gam_j$, $m_j$ are independent
of $L$ and the scale index $n$. We define $\Gam_\op$ to be the maximum 
of the $\Gam_j$'s, and, in \cite{PAR1,PAR2, BGE}, refer to the estimates 
using only this constant $\Gam_\op$. 
\end{convention}

\newpage
\section{Block Spin Operators}\label{secPOblockSpin}
In this chapter, we analyze the block spin ``averaging operators'' 
$Q$ of \cite[Definitions \defHTblockspintr.a and \defHTbasicnorm.d]{PAR1}
and $Q_n$ of \cite[Definition \defHTbasicnorm.d]{PAR1} as well as the
operator $\fQ_n$ of \cite[Definition \defHTbackgrounddomaction.b]{PAR1}. 
Recall that $Q:\cH_0^{(n)}\rightarrow  \cH_{-1}^{(n+1)}$ is defined by
\begin{equation}\label{eqnPBSaveop}
(Q\psi)(y) = \smsum_{x\in \bbbz\times\bbbz^3} q(x) \psi(y+[x])
\end{equation}
where $\,[x]\,$ denotes the class of $\,x\in\bbbz\times\bbbz^3\,$ in the
quotient space $\,\cX_0^{(n)}\,$.
The averaging profile $q$ is the $\fq$--fold convolution of 
the characteristic function, $1_{\sq}(x)$, of the rectangle 
$\big[-\sfrac{L^2-1}{2}, \sfrac{L^2-1}{2} \big]
\times \big[-\sfrac{L-1}{2}, \sfrac{L-1}{2} \big]^3$, normalized
to have integral one. That is,
\begin{equation*}
q=\sfrac{1}{L^{5\fq}} 
\overbrace{ 1_{\sq}*1_{\sq}*\cdots*1_{\sq} }^{\fq\ {\rm times}}
\end{equation*}
See \cite[Example \exBOnaive\ and Remark \remBOlessnaive]{Bloch}.
Except where otherwise stated, we shall assume that $\fq\ge4$ 
is a fixed even natural number.\footnote
{See Remark \ref{remPBOqfour} for a discussion of the condition $\fq>2$.
 The condition that $\fq$ be even is imposed purely for convenience.}

The operator 
\begin{equation}\label{eqnPBSqn}
Q_n = Q^{(1)} \cdots  Q^{(n)} 
   = \big(\bbbl_*^{-1}Q\big)^n \bbbl_*^n
         :  \cH_n=\cH_n^{(0)} \rightarrow \cH_0^{(n)} 
\end{equation}
where $Q^{(j)} = \bbbl_*^{-j}\,Q \,\bbbl_*^j:
        \cH^{(n-j)}_j\rightarrow\cH^{(n-j+1)}_{j-1}$. The operator
\begin{equation*}
\fQ_n=a\Big(\bbbone
             +\sum_{j=1}^{n-1}\sfrac{1}{L^{2j}}Q_jQ_j^*\Big)^{-1} 
\end{equation*}

The Fourier transform of the characteristic function $1_{\sq}$ is
$\sfrac{\si(\bbbl k)}{\si(k)}$ with $k\in \hat\cX_0^{(n)}$ and with
\begin{equation}\label{eqnPBSsidef}
\si(k) = \sin\big(\half k_0\big)
          \prod_{\nu=1}^3 \sin\big(\half \bk_\nu\big)
\end{equation}
Therefore
\begin{equation}\label{eqnPBSuplusdef}
\hat q(k)=u_+(k)^\fq\qquad\text{with}\qquad
u_+(k) =\frac{\si(\bbbl k)}{L^5\si(k)}
\end{equation}
and, by \cite[Lemma \lemBOfourier.a]{Bloch}
\begin{equation}\label{eqnPBSqaction}
\widehat{(Q\psi)}(\fk)
  =\sum_{ \atop{k\in\hat\cX_0^{(n)}}{\hat\pi(k)=\fk}} 
                       \hat q(k)\hat\psi(k)
\end{equation}
for all $\psi\in\cH_0^{(n)}$ and $\fk\in \hat\cX_{-1}^{(n+1)}$.

\begin{remark}\label{remPBSqnft}
\ 
\begin{enumerate}[label=(\alph*), leftmargin=*]
\item 
Since $\fq$ is even, $\si(k)^\fq$ is an entire function of 
$k\in\bbbc\times\bbbc^3$ that is periodic with respect to the lattice 
$2\pi (\bbbz\times\bbbz^3)$. 
Also
\begin{align*}
\si(p_j)^\fq = \si\big(\hat \pi^{(j,0)}_n(p_j)\big)^\fq
\qquad\text{for all }p_j\in \hat\cX_j^{(n-j)}
\end{align*}

\item 
For all $\phi\in\cH_n$ and $k\in\hat\cX_0^{(n)}$,
\begin{align*}
\widehat{(Q_n\phi)}(k)
  &=\sum_{ \atop{p\in\hat\cX_n}{\hat\pi(p)=k}}
                         u_n(p)^\fq\hat\phi(p)\qquad\text{with}\qquad
u_n(p)
=\veps_n^5\frac{\si(p)}{\si(\bbbl^{-n}p)}
\end{align*}

\item For all $\psi\in\cH_0^{(n)}$ and $k\in\hat\cX_0^{(n)}$,
$\widehat{\fQ_n\psi}(k)=\hat\fQ_n(k)\hat\psi(k)$ where
\begin{equation*}
\hat \fQ_n(k) = a\bigg[1+\sum_{j=1}^{n-1}
    \sum_{ \atop{p_j\in\hat\cX_j^{(n-j)}}
                {\hat\pi(p_j)= k} }
         \sfrac{1}{L^{2j}}u_j(p_j)^{2\fq}\bigg]^{-1}
\end{equation*}

\item The functions $u_n(p)$ and $u_+(p)$ are entire in $p$  and are
invariant under $p_\nu\rightarrow-p_\nu$ for each $0\le\nu\le 3$
and under $p_\nu\leftrightarrow p_{\nu'}$ for all $1\le\nu,\nu'\le 3$.

\item  Set, with the notation of \eqref{eqnPINTproj}, 
the ``single period'' lattices and their duals
\begin{alignat*}{3}
\cB^+\!&=\big(\bbbz/L^2\bbbz\big)\times
         \big(\bbbz^3/L\bbbz^3\big) &\qquad
\hat\cB^+\!
  &=\big(\sfrac{2\pi}{L^2}\bbbz/2\pi\bbbz\big)
    \!\times\!
 \big(\sfrac{2\pi}{L}\bbbz^3/2\pi\bbbz^3\big) 
  =\ker \hat \pi_{n-1}^{(0,-1)}
\\
\cB_j&=\big(\veps_j^2\bbbz/\bbbz\big)\times
         \big(\veps_j\bbbz^3/\bbbz^3\big) &\qquad
\hat\cB_j
  &=\big(2\pi\bbbz/\sfrac{2\pi}{\veps_j^2}\bbbz\big)
    \!\times\!
 \big(2\pi\bbbz^3/\sfrac{2\pi}{\veps_j}\bbbz^3\big) 
 = \ker \hat \pi_n^{(j,0)}
\end{alignat*}
for each integer $j\ge 0$. In this notation, the representations of 
$Q$, $Q_n$ and $\fQ_n$ of \eqref{eqnPBSqaction} and parts (b) and (c) are
\begin{align*}
\widehat{(Q\psi)}(\fk)
  &=\sum_{ \ell\in\hat\cB^+} 
                       u_+(\fk+\ell)^\fq\hat\psi(\fk+\ell)
\displaybreak[0]\\
\widehat{(Q_n\phi)}(k)
  &=\sum_{ \ell\in\hat\cB_n} 
                         u_n(k+\ell)^\fq\hat\phi(k+\ell)
\displaybreak[0]\\
\hat \fQ_n(k) &= a\bigg[1+\sum_{j=1}^{n-1}
    \smsum_{\ell_j\in\hat\cB_j}\sfrac{1}{L^{2j}}u_j(k+\ell_j)^{2\fq}\bigg]^{-1}
\end{align*}
Here in $\widehat{(Q\psi)}(\fk)=\sum_{ \ell\in\hat\cB^+} 
                       u_+(\fk+\ell)^\fq\hat\psi(\fk+\ell)$, for example,
$\fk\in\hat\cX^{(n+1)}_{-1}$ is represented by the element of 
$\sfrac{2\pi}{\veps_n^2L_\tp}\bbbz\times \sfrac{2\pi}{\veps_n L_\sp}\bbbz^3$
having minimal components and $\ell$ is represented by the element of 
$\sfrac{2\pi}{\veps_{-1}^2}\bbbz\times \sfrac{2\pi}{\veps_{-1}}\bbbz^3$
having minimal components. Similarly
\begin{align*}
\big(\widehat{Q^* \th}\big)(\fk+\ell)
   &=u_+(\fk+\ell)^\fq \hat\th(\fk)\\
\big(\widehat{Q_n^* \psi}\big)(k+\ell_n)
   &=u_n(k+\ell_n)^\fq \hat\psi(k)
\end{align*}
\end{enumerate}
\end{remark}

\begin{proof}
(a) Any two points of $\hat\cX_j^{(n-j)}$ with the same image 
in $\hat\cX_0^{(n)}$ under $\hat \pi^{(j,0)}_n$ differ by $2\pi$ times 
an integer vector. The formula follows.

\Item (b)
By \eqref{eqnPINTlft} and \eqref{eqnPBSqaction}, we have, 
for $\al\in\cH_j^{(n-j)}$ 
and $p_{j-1}\in \hat\cX_{j-1}^{(n-j+1)}$
\begin{alignat*}{3}
\widehat{(Q^{(j)}\al)}(p_{j-1})
  &= \sfrac{1}{L^{5j}}\widehat{(QL_*^j\al)}(\bbbl^{-j}p_{j-1}) 
   = \sfrac{1}{L^{5j}}\sum_{ \atop{k\in\hat\cX_0^{(n-j)}}
                                  {\hat\pi(k)=\bbbl^{-j} p_{j-1}} } 
                       \hat q(k)\widehat{(\bbbl_*^j\al)}(k)
\\
  &= \frac{1}{L^{5\fq}}\sum_{ \atop{k\in\hat\cX_0^{(n-j)}}
                                   {\hat\pi(k)=\bbbl^{-j} p_{j-1}} }
              \frac{\si(\bbbl k)^\fq}{\si(k)^\fq}\hat\al\big(\bbbl^jk\big) 
   = \frac{1}{L^{5\fq}}\sum_{ \atop{p_j\in\hat\cX_j^{(n-j)}}
                                   {\hat\pi(p_j)= p_{j-1}} }
           \frac{\si(\bbbl^{-j+1} p_j)^\fq}{\si(\bbbl^{-j}p_j)^\fq}\hat\al(p_j)
\end{alignat*}
so that, by part (a),
\begin{align*}
\widehat{(Q_n\phi)}(p_0)
&= \frac{1}{L^{5\fq n}}\sum_{ \atop{\atop{p_j\in\hat\cX_j^{(n-j)}} 
                                         {\hat\pi(p_j)= p_{j-1}} }
                                   {1\le j\le n} }
           \frac{\si(p_1)^\fq}{\si(\bbbl^{-1}p_1)^\fq}\ 
           \frac{\si(\bbbl^{-1} p_2)^\fq}{\si(\bbbl^{-2}p_2)^\fq}\ 
    \cdots\ 
             \frac{\si(\bbbl^{-n+1} p_n)^\fq}{\si(\bbbl^{-n}p_n)^\fq}\ 
           \hat\phi(p_n) \\
&= \veps_n^{5\fq}\sum_{ \atop{\atop{p_j\in\hat\cX_j^{(n-j)}}
                                   {\hat\pi(p_j)= p_{j-1}} }
                              {1\le j\le n} }
           \frac{\si(p_n)^\fq}{\si(\bbbl^{-1}p_n)^\fq}\ 
           \frac{\si(\bbbl^{-1} p_n)^\fq}{\si(\bbbl^{-2}p_n)^\fq}\ 
    \cdots\ 
             \frac{\si(\bbbl^{-n+1} p_n)^\fq}{\si(\bbbl^{-n}p_n)^\fq}\ 
           \hat\phi(p_n) \\
&= \veps_n^{5\fq}\sum_{ \atop{p_n\in\hat\cX_n}
                             {\hat\pi(p_n)= p_0} }
           \frac{\si(p_n)^\fq}{\si(\bbbl^{-n}p_n)^\fq}\ 
           \hat\phi(p_n) 
\end{align*}

\Item (c)
 follows from part (b) and \cite[Lemma \lemBOfourier.a]{Bloch}.

\Item (d)
 is obvious since $\sfrac{\sin z}{\sin\sfrac{z}{m}}$
 is even and entire for any nonzero integer $m$.
\end{proof}

In Lemmas \ref{lemPBSunppties} and \ref{lemPBSuplusppties} we derive a number of
bounds on the kernels $u_n$ and $u_+$ that appear in the representations
for $Q_n$ and $Q$ of Remark \ref{remPBSqnft}.e. In Proposition \ref{propPBSAnppties} 
we analyze the operator $\hat\fQ_n$. Then in Remark \ref{remPBSunderivAlg} and
Lemma \ref{lemPBSunderiv} we study how to move derivatives past $Q$ and $Q_n$.

When dealing with the asymmetry between ``temporal'' and 
``spatial'' scaling we set, for convenience,
\begin{align}\label{eqnPINTlnu}
L_\nu&=\left.\begin{cases} L^2 & \text{for $\nu=0$} \\
               L &   \text{for $\nu=1,2,3$}
             \end{cases}\right\}
\qquad\veps_{n,\nu}=\left.\begin{cases}
    \veps_n^2=\sfrac{1}{L_0^n}=\sfrac{1}{L^{2n}} & \text{for $\nu=0$}\\
    \veps_n=\sfrac{1}{L_\nu^n}=\sfrac{1}{L^n} & \text{for $\nu=1,2,3$} 
     \end{cases}\right\}
\end{align}

\begin{lemma}\label{lemPBSunppties}
Let $\fq\in\bbbn$.
Assume that $|\Re k_\nu|\le\pi$, $|\Im k_\nu|\le 2$  for each $0\le\nu\le3$.

\begin{enumerate}[label=(\alph*), leftmargin=*]
\item 
$
\big|u_n(k+\ell)\big|\le \smprod_{\nu=0}^3\,
\sfrac{24}{|\ell_\nu|+\pi}
$ 
for all $\ell\in\hat\cB_n$. We use $|\ell_\nu|$ to denote the magnitude of
the smallest representative of $\ell_\nu$ in its equivalence class,
as an element of $\hat\cB_n$. There is a constant 
$\Gam_1$, depending only on $\fq$, such that
$
\|Q_n\|_{m=1}\le\Gam_1
$.

\item
$
\big|u_n(k+\ell)\big|\le 
   \Big[\prod\limits_{\atop{0\le\nu\le 3}{\ell_\nu\ne 0}}|k_\nu| \Big]
   \prod_{\nu=0}^3  \sfrac{24}{|\ell_\nu|+\pi}
$ 
if $0\ne \ell\in\hat\cB_n$.

\item
 $\big|u_n(k)-1\big|\le 4^3 |k|^2$.

\item 
If $\ell\in\hat\cB_n$ and $\ell_{\tilde\nu}\ne 0$
for some $0\le\tilde\nu\le3$, then
$u_n(k+\ell)
     =\sin\big(\half k_{\tilde\nu}\big) v_{n,\tilde\nu}(k+\ell)$ with
$
\big|v_{n,\tilde\nu}(k+\ell)\big|
      \le \smprod_{\nu=0}^3  \sfrac{24}{|\ell_\nu|+\pi}
$.

\item 
For all $\ell\in\hat\cB_n$,
\begin{align*}
\big|\Im u_n(k+\ell)\big|
&\le 12\ |\Im k|\smprod_{\nu=0}^3  \sfrac{24}{|\ell_\nu|+\pi}\\
\big|\Im u_n(k+\ell)^\fq\big|
&\le 12\,\fq\ |\Im k|\ 
  \Big[\smprod_{\nu=0}^3  \sfrac{24}{|\ell_\nu|+\pi}\Big]^\fq
\end{align*}

\item
Recall that $|\Re k_\nu|\le\pi$, $|\Im k_\nu|\le 2$  
for each $0\le\nu\le3$. We have
\begin{equation*}
\sfrac{1}{4\pi^4}  \le |u_n(k)| \le 4\pi^4
\end{equation*}
If, in addition, $k$ is real 
\begin{equation*}
\big(\sfrac{2}{\pi}\big)^4  \le u_n(k)\le\big(\sfrac{\pi}{2}\big)^4
\end{equation*}
\end{enumerate}
\end{lemma}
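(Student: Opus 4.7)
The plan is to exploit the coordinate-wise factorization of $u_n$ coming from $\si(k)=\sin(k_0/2)\prod_{\nu=1}^3\sin(\bk_\nu/2)$ together with $\veps_n^5=\prod_{\nu=0}^3\veps_{n,\nu}$ and the fact that $\bbbl^{-n}$ acts diagonally as $p_\nu\mapsto\veps_{n,\nu}p_\nu$. This yields
\begin{equation*}
u_n(p)=\prod_{\nu=0}^3 R_\nu(p_\nu),\qquad
R_\nu(p_\nu) = \veps_{n,\nu}\,\frac{\sin(p_\nu/2)}{\sin(\veps_{n,\nu}p_\nu/2)},
\end{equation*}
so all six parts reduce to one-dimensional estimates on the $R_\nu$. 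Throughout, the key observation is the identity $\sin\bigl((k_\nu+\ell_\nu)/2\bigr)=\pm\sin(k_\nu/2)$, valid because $\ell_\nu\in 2\pi\bbbz$; this is what drives parts (b) and (d) and supplies the small-$k_\nu$ vanishing behind (c).

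For part (a) I would choose the smallest representative of $\ell_\nu$ so that $|\veps_{n,\nu}\ell_\nu/2|\le\pi/2$, combine the upper bound $|\sin z|\le e^{|\Im z|/2}$ on the numerator with the lower bound $|\sin z|\ge(2/\pi)|\Re z|$ (valid on $|\Re z|\le\pi/2$) on the denominator, and absorb the $|\Im k_\nu|\le 2$ corrections and the $\ell_\nu=0$ case (where the ``$+\pi$'' dominates the denominator) into the constant $24$. Multiplying the four one-dimensional bounds gives the estimate on $|u_n(k+\ell)|$. The operator-norm bound $\|Q_n\|_{m=1}\le\Gam_1$ then follows from Remark \ref{remPBSqnft}(e) together with the analyticity criterion of \cite[Lemma \lemBOlonelinfty]{Bloch}: since $\fq\ge 4$, the bound on $|u_n|^\fq$ just proved makes $\sum_{\ell\in\hat\cB_n}|u_n(k+\ell)|^\fq$ summable uniformly on the strip $|\Im k_\nu|\le 1$, so the kernel of $Q_n$ decays exponentially at rate $1$ with a constant depending only on $\fq$.

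Parts (b), (c), (d) are then quick consequences. For (b) and (d), the sign identity lets me factor $\sin(k_{\tilde\nu}/2)$ out of $R_{\tilde\nu}$ whenever $\ell_{\tilde\nu}\ne 0$, and $|\sin(k_{\tilde\nu}/2)|\le|k_{\tilde\nu}|/2$ converts this into the promised $|k_\nu|$ factors in (b); the remaining ratio is bounded as in (a). For (c), each $R_\nu$ is even in $p_\nu$ (Remark \ref{remPBSqnft}(d)) with $R_\nu(0)=1$ by L'H\^opital, so a direct second-order Taylor estimate gives $|R_\nu(k_\nu)-1|\le C k_\nu^2$ on the strip, and telescoping with a uniform bound on the remaining $R_{\nu'}$ from (f) yields $|u_n(k)-1|\le 4^3|k|^2$ after tidying constants.

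Part (e) uses the reality of $u_n$ on the real subspace, which gives $u_n(\bar k)=\overline{u_n(k)}$, to write $\Im u_n(k)=\tfrac{1}{2i}[u_n(k)-u_n(\bar k)]$; the fundamental theorem of calculus along the imaginary segment from $\bar k$ to $k$, combined with a gradient bound on $u_n$ obtained by differentiating the factorized form, yields the first inequality. The $\fq$-th power bound follows from the algebraic identity $z^\fq-w^\fq=(z-w)\sum_{j=0}^{\fq-1}z^j w^{\fq-1-j}$ together with the uniform upper bound from (f). Finally, part (f) is the direct application of the elementary inequalities $(2/\pi)|x|\le 2|\sin(x/2)|\le|x|$ on $|x|\le\pi$ to each factor $R_\nu$ for the real case, and analogous $|\sin z|\in[c_1e^{-|\Im z|},c_2e^{|\Im z|}]$-type bounds in the complex strip. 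The main obstacle is calibrating the constants in the one-variable estimate on $R_\nu$ so that $\sfrac{24}{|\ell_\nu|+\pi}$ interpolates cleanly between the $\ell_\nu=0$, small-$\ell_\nu$, and large-$\ell_\nu$ regimes; once this is stabilized, all six parts are essentially bookkeeping on top of the product structure.
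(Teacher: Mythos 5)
Your factorization $u_n(p)=\prod_\nu R_\nu(p_\nu)$ and the treatment of parts (a)--(d), (f) match the paper's strategy in all essentials, though some of the one-variable inequalities you invoke are misstated: the claimed lower bound $|\sin z|\ge c_1 e^{-|\Im z|}$ is false near the real zeros of $\sin$, and for complex $k_\nu$ the bound $|\sin(k_\nu/2)|\le|k_\nu|/2$ fails (one has $|\sin z|\le 2|z|$ for $|\Im z|\le 1$, not $|\sin z|\le|z|$). The estimates the paper actually uses, packaged as Lemma \ref{lemPBSsinxoverx}, are $|\sin z|\ge\sfrac{\sqrt{2}}{\pi}|z|$ on $|\Re z|\le\sfrac{\pi}{2}$ and $|\sin z|\le 2|z|$ on $|\Im z|\le 1$; with those substitutions your computations go through. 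For part (e) you take a genuinely different route --- Schwarz reflection plus the fundamental theorem of calculus along the vertical segment from $\bar k+\ell$ to $k+\ell$ --- in place of the paper's direct bound on each $|\Im R_\nu|$ via Lemma \ref{lemPBSsinxoverx}.c followed by the elementary product inequality $|\Im\prod_j z_j|\le\sum_j|\Im z_j|\prod_{j'\ne j}|z_{j'}|$. Your version is valid but not lighter: to finish you would still need a gradient bound $|\partial_\nu u_n(k+\ell)|\le \const\prod_{\nu'}\sfrac{1}{|\ell_{\nu'}|+\pi}$ along the segment, which requires either a Cauchy estimate (demanding the $\ell$-decaying bound in a strip wider than $|\Im k_\nu|\le 2$) or a direct differentiation of $R_\nu$ --- neither of which is simpler than bounding $\Im R_\nu$ factor by factor.

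There is one genuine gap: the operator bound $\|Q_n\|_{m=1}\le\Gam_1$. You derive it from summability of $\sum_{\ell\in\hat\cB_n}|u_n(k+\ell)|^\fq$ over $\ell$, ``since $\fq\ge 4$,'' but the lemma is stated for every $\fq\in\bbbn$, and for $\fq=1$ the series $\sum_\ell\prod_\nu(|\ell_\nu|+\pi)^{-1}$ diverges, so the summability criterion does not apply. The paper closes this case with a separate, direct position-space computation: for $\fq=1$ the averaging kernel of $Q_n$ is the normalized characteristic function of the rectangle $\big([-\sfrac{1}{2},\sfrac{1}{2}]\times[-\sfrac{1}{2},\sfrac{1}{2}]^3\big)\cap(\veps_n^2\bbbz\times\veps_n\bbbz^3)$, whose mass-one weighted $L^1$--$L^\infty$ norm is at most $\exp\{|(\sfrac{1}{2},\sfrac{1}{2},\sfrac{1}{2},\sfrac{1}{2})|\}=e$. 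Your proof needs this (or at minimum the remark that summability already handles all $\fq\ge 2$, leaving only $\fq=1$ to treat separately) to cover the full range of $\fq$ the lemma claims.
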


\begin{proof} 
 Set $s(x)=\sfrac{\sin x}{x}$. By the definitions of $u_n(p)$ 
in Remark \ref{remPBSqnft}.b,  $\si(k)$ in \eqref{eqnPBSsidef} and
$\veps_{n,\nu}$ in \eqref{eqnPINTlnu},
\begin{equation}\label{eqnPBSunagain}
u_n(p) 
=\frac{\sin\half p_0}{\sfrac{1}{\veps_n^2}\sin\half\veps_n^2 p_0}
\prod_{\nu=1}^3 
      \frac{\sin\half\bp_\nu}{\sfrac{1}{\veps_n}\sin\half\veps_n\bp_\nu}
=\prod_{\nu=0}^3 \frac{s(p_\nu/2)}{s(\veps_{n,\nu} p_\nu/2)}
\end{equation}

\noindent (a) 
 We may assume without loss of generality that $\ell_\nu$
is bounded, as a real number, by $\sfrac{\pi}{\veps_{n,\nu}}-\pi$. (Recall
that $\sfrac{1}{\veps_{n,\nu}}$ is an odd natural number.) 
So $\big|\Re k_\nu+\ell_\nu\big|$ is always bounded by 
$\sfrac{\pi}{\veps_{n,\nu}}$.
Consequently,  the hypotheses of Lemma \ref{lemPBSsinxoverx}.c, 
with $x+iy=k_\nu+\ell_\nu$ and $\veps=\veps_{n,\nu}$, are satisfied
and
\begin{align*}
\bigg|\frac{\sin\half (k_\nu+\ell_\nu)}
   {\sfrac{1}{\veps_{n,\nu}}\sin\half\veps_{n,\nu} (k_\nu+\ell_\nu)}\bigg|
\le\left.\begin{cases}
            4 &\text{if $\ell_\nu=0$}\\
           \sfrac{8}{|\Re k_\nu+\ell_\nu|}  
                  &\text{if $|\ell_\nu|\ge 2\pi$}
            \end{cases}\right\}
\le\frac{24}{|\ell_\nu|+\pi}
\end{align*}
since  $|\Re k_\nu|\le\pi$ and $\ell_\nu\in 2\pi\bbbz$. 

When $\fq>1$, $|u_n(k+\ell)|^\fq$ is summable in $\ell$ and the bound
on $\|Q_n\|_{m=1}$ follows from \cite[Lemma \lemBOlonelinfty.c]{Bloch}.
When $\fq=1$, we use that the action of
\begin{equation*}
Q_n:\cH_n^{(0)}\rightarrow  \cH_0^{(n)}
\end{equation*}
in position space is
\begin{equation}\label{eqnELBSavejop}
(Q_n\phi)(x) = \smsum_{v\in \veps_n^2\bbbz\times\veps_n\bbbz^3} u_n(v) \phi(x+[v])
\end{equation}
where $\,[v]\,$ denotes the class of 
$\,v\in\veps_n^2\bbbz\times\veps_n\bbbz^3\,$ 
in the quotient space 
\begin{equation*}
\cX_{n}^{(0)}=
     \big(\veps_n^2\bbbz/\veps_n^2 L_\tp\bbbz\big)
     \times\big(\veps_n\bbbz^3/\veps_n L_\sp\bbbz^3\big)
\end{equation*}
and $x$ runs over
\begin{equation*}
\cX_0^{(n)}= \big(\bbbz/\veps_n^2 L_\tp\bbbz\big)
     \times\big(\bbbz^3/\veps_nL_\sp\bbbz^3\big)
\end{equation*}
When $\fq=1$, the averaging profile $u_n$ is the characteristic function, 
$1_{\sq}(v)$ (the dependence on $n$ is suppressed in the notation), 
of the rectangle 
\begin{equation*}
\Big(\big[-\sfrac{1}{2}, \sfrac{1}{2} \big]
\times \big[-\sfrac{1}{2}, \sfrac{1}{2} \big]^3\Big)\cap
\big(\veps_n^2\bbbz\times\veps_n\bbbz^3\big)
\end{equation*}
Note that
$
u_n= 1_{\sq}
$
is already normalized to have integral one. The $L^1$--$L^\infty$ norm
of $u_n$ (with mass zero) is exactly one. The $L^1$--$L^\infty$ norm,
with mass $m=1$,  of $u_n$ is bounded by 
$\exp\big\{\big|\big(\half,\half,\half,\half\big)\big|\big\}=e$.

\Item (b)
If $\ell_\nu\ne 0$, 
\begin{equation*}
\half\big|k_\nu+\ell_\nu\big|
\ge\half\big|\Re k_\nu+\ell_\nu\big|
\ge \sfrac{1}{6}\big(\pi+|\ell_\nu|\big)
\end{equation*}
so that, by Lemma \ref{lemPBSsinxoverx}.a, the denominator
\begin{equation*}
\sfrac{1}{\veps_{n,\nu}}\big|\sin\half\veps_{n,\nu}(k_\nu+\ell_\nu)\big|
\ge \sfrac{1}{\veps_{n,\nu}} \sfrac{\sqrt{2}}{\pi}
    \sfrac{1}{6}\veps_{n,\nu}\big(\pi+|\ell_\nu|\big)
= \sfrac{1}{3\sqrt{2}\pi}\big(\pi+|\ell_\nu|\big)
\end{equation*}
On the other hand, the numerator, by Lemma \ref{lemPBSsinxoverx}.b,
\begin{equation*}
\big|\sin\sfrac{1}{2} (k_\nu+\ell_\nu)\big|
=\big|\sin\big(\sfrac{1}{2} k_\nu\big)\big|
\le  |k_\nu|
\end{equation*}
As $\ell\ne 0$, there is at least one $\nu$ with $\ell_\nu\ne 0$. 
For each $\nu$ with $\ell_\nu\ne 0$ bound the factor
\begin{align*}
 \bigg|\frac{\sin\sfrac{1}{2} (k_\nu+\ell_\nu)}
   {\sfrac{1}{\veps_{n,\nu}}
        \sin\half\veps_{n,\nu}(k_\nu+\ell_\nu)}\bigg|
\le  \frac{3\sqrt{2}\pi|k_\nu|}{|\ell_\nu|+\pi}
\le  \frac{24|k_\nu|}{|\ell_\nu|+\pi}
\end{align*}
Bound the remaining factors, with $\nu$ having $\ell_\nu= 0$, by 
$\sfrac{24}{|\ell_\nu|+\pi}$ as in part (a).  All together
\begin{align*}
\prod_{\nu=0}^3 \bigg|\frac{\sin\sfrac{1}{2} (k_\nu+\ell_\nu)}
   {\sfrac{1}{\veps_{n,\nu}}\sin\half\veps_{n,\nu}(k_\nu+\ell_\nu)}\bigg|
\le \bigg[\prod_{\atop{0\le\nu\le 3}{\ell_\nu\ne 0}}|k_\nu| \bigg]
    \prod_{\nu=0}^3 \frac{24}{|\ell_\nu|+\pi}
\end{align*}

\Item (c)
For both $z=\half \veps_{n,\nu} k_\nu$ and $z=\half k_\nu$, 
$|z|^2\le\sfrac{\pi^2}{4}+1 <4$ so that, by Lemma \ref{lemPBSsinxoverx}.e,
\begin{equation*}
\big|\sfrac{\sin z}{z}-1\big|\le\half |z|^2
\end{equation*}
Using
$
\sfrac{a}{b}-1=\sfrac{(a-1)-(b-1)}{b}
$
we have, by Lemma \ref{lemPBSsinxoverx}.a, 
\begin{align*}
\bigg|\frac{\sin(\sfrac{1}{2} k_\nu)} {\sfrac{1}{\veps_{n,\nu}}\sin(\half\veps_{n,\nu} k_\nu)}-1\bigg|
\le\frac{\half|\half k_\nu|^2+\half|\sfrac{1}{2}\veps_{n,\nu} k_\nu|^2}
          {|\sin(\half \veps_{n,\nu} k_\nu)|/|\half\veps_{n,\nu} k_\nu|}
\le\sfrac{\pi}{\sqrt{2}}\sfrac{1}{8}(1+\veps_{n,\nu}^2)|k_\nu|^2
\le |k_\nu|^2
\end{align*}
Finally, using
\begin{equation*}
\prod_{\nu=0}^3 A_\nu -1
=(A_0-1)\prod_{\nu=1}^3 A_\nu
+(A_1-1)\prod_{\nu=2}^3 A_\nu
+(A_2-1) A_3
+(A_3-1)
\end{equation*}
we have, by Lemma \ref{lemPBSsinxoverx}.c,
\begin{equation*}
\big|u_n(k)-1\big|
\le 4^3|k_0|^2+4^{3-1}|k_1|^2+4|k_2|^2+|k_3|^2
\le 4^3 |k|^2
\end{equation*}

\Item (d)
The proof is the same as that for part (b), except that
the factor $\sin\big(\sfrac{1}{2} k_{\tilde \nu}\big)$ in the numerator
\begin{equation*}
\sin\sfrac{1}{2} (k_{\tilde \nu}+\ell_{\tilde \nu})
=(-1)^{\sfrac{\ell_{\tilde\nu}}{2\pi}}
    \sin\big(\sfrac{1}{2} k_{\tilde \nu}\big)
\end{equation*}
is pulled out of $u_n$, leaving $v_{n,\tilde\nu}$, rather than being bounded 
by $|k_{\tilde\nu}|$.

\Item (e)
By Lemma \ref{lemPBSsinxoverx}.c
\begin{align*}
\bigg|\Im\frac{\sin\half (k_\nu+\ell_\nu)}
   {\sfrac{1}{\veps_{n,\nu}}\sin\half\veps_{n,\nu} (k_\nu+\ell_\nu)}\bigg|
\le 6|\Im k_\nu| \frac{24}{|\ell_\nu|+\pi}
\end{align*}
In general, for any complex numbers $z_j=r_je^{i\th_j}$, $1\le j\le J$,
\begin{align*}
\Big|\Im\smprod_{j=1}^Jz_j\Big|
=\Big|\sin\Big(\smsum_{j=1}^J\th_j\Big)\Big|\smprod_{j=1}^Jr_j
\end{align*}
Repeatedly using
\begin{equation*}
\big|\sin(\th+\th')\big|
=\big|\sin(\th)\cos(\th')+\cos(\th)\sin(\th')\big|
\le|\sin(\th)|+|\sin(\th')|
\end{equation*}
we have
\begin{align*}
\Big|\Im\smprod_{j=1}^Jz_j\Big|
\le \sum_{j=1}^J\big|\sin(\th_j)\big|\smprod_{j=1}^Jr_j
=\sum_{j=1}^J\big|\Im z_j\big|\smprod_{j'\ne j}|z_{j'}|
\end{align*}
So
\begin{align*}
\big|\Im u_n(k+\ell)\big|
\le  \Big(\smsum_{\nu=0}^3  6|\Im k_\nu| \Big)
         \smprod_{\nu=0}^3  \sfrac{24}{|\ell_\nu|+\pi}
\le 12\ |\Im k|\smprod_{\nu=0}^3  \sfrac{24}{|\ell_\nu|+\pi}
\end{align*}
and 
\begin{align*}
\big|\Im u_n(k+\ell)^\fq\big|
\le 12\,\fq\ |\Im k|\ 
\Big[\smprod_{\nu=0}^3  \sfrac{24}{|\ell_\nu|+\pi}\Big]^\fq
\end{align*}

\Item (f)
Just apply Lemma \ref{lemPBSsinxoverx}.a,b separately to all of the 
numerators and denominators in the right hand side of \eqref{eqnPBSunagain}.
When $k$ is real, apply Lemma \ref{lemPBSsinxoverx}.d instead.

\end{proof}

\pagebreak[2]
\begin{lemma}\label{lemPBSuplusppties}
Let $\fq\in\bbbn$. 
Assume that $|\Re \fk_\nu|\le\sfrac{\pi}{L_\nu}$  and 
$|\Im \fk_\nu|\le \sfrac{2}{L_\nu}$ for each $0\le\nu\le3$.

\begin{enumerate}[label=(\alph*), leftmargin=*]
\item 
$
\big|u_+(\fk+\ell)\big|\le \smprod\limits_{\nu=0}^3 
                               \sfrac{24}{L_\nu|\ell_\nu|+\pi}
$ 
for all $\ell\in\hat\cB^+$. We use $|\ell_\nu|$ to denote the magnitude of
the smallest representative of $\ell_\nu$ in its equivalence class,
as an element of $\hat\cB^+$. 

\item 
$
\big|u_+(\fk+\ell)\big|\le 
        \Big[\smprod\limits_{\nu\in\cI_+} 
                                                L_\nu|\fk_\nu|\Big]
         \Big[\smprod\limits_{\nu=0}^3 
                               \sfrac{24}{L_\nu|\ell_\nu|+\pi}\Big]
$ 
for all $\ell\in\hat\cB^+$ with $\ell\ne 0$. Here $\cI_+$ is any subset
of $\set{\nu}{0\le\nu\le3,\ \ell_\nu\ne 0}$.

\item 
$\big|u_+(\fk)-1\big|\le 4^3 \sum\limits_{\nu=0}^3 L_\nu^2 |\fk_\nu|^2$.

\item 
If $\ell\in\hat\cB^+$ and $\ell_{\tilde\nu}\ne 0$
for some $0\le\tilde\nu\le3$, then
$u_+(\fk+\ell)=\sin\big(\sfrac{L_{\tilde\nu}}{2}\fk_{\tilde\nu}\big) v_{+,\tilde\nu}(\fk+\ell)$ with
$
\big|v_{+,\tilde\nu}(\fk+\ell)\big|\le \smprod\limits_{\nu=0}^3 
                               \sfrac{24}{L_\nu|\ell_\nu|+\pi}
$.

\item 
For all $\ell\in\hat\cB^+$,
\begin{align*}
\big|\Im u_+(\fk+\ell)\big|
&\le 12\ |\Im\bbbl \fk|
       \smprod_{\nu=0}^3  \sfrac{24}{L_\nu|\ell_\nu|+\pi}\\
\big|\Im u_+(\fk+\ell)^\fq\big|
&\le 12\,\fq\ |\Im \bbbl\fk|\ 
      \Big[\smprod_{\nu=0}^3  \sfrac{24}{L_\nu|\ell_\nu|+\pi}\Big]^\fq
\end{align*}
\end{enumerate}
\end{lemma}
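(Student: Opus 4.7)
The plan is to reduce everything to Lemma \ref{lemPBSunppties} via the identity
\begin{equation*}
u_+(\fk) = u_1(\bbbl \fk),
\end{equation*}
which follows directly from the definitions $u_+(\fk) = \si(\bbbl\fk)/(L^5\si(\fk))$ of \eqref{eqnPBSuplusdef} and $u_1(p) = \veps_1^5\,\si(p)/\si(\bbbl^{-1}p)$ from Remark \ref{remPBSqnft}.b, upon substituting $p = \bbbl\fk$ and using $\veps_1^5 = L^{-5}$.

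Under the substitution $k = \bbbl\fk$, the hypothesis $|\Re\fk_\nu|\le\pi/L_\nu$, $|\Im\fk_\nu|\le 2/L_\nu$ becomes $|\Re k_\nu|\le\pi$, $|\Im k_\nu|\le 2$, precisely the hypothesis of Lemma \ref{lemPBSunppties}. Moreover, $\bbbl$ sends $\hat\cB^+$ bijectively onto $\hat\cB_1$, with $|(\bbbl\ell)_\nu| = L_\nu|\ell_\nu|$. With these substitutions, parts (a), (c), and (e) are immediate: (a) from Lemma \ref{lemPBSunppties}.a applied to $u_1(\bbbl\fk + \bbbl\ell)$; (c) from Lemma \ref{lemPBSunppties}.c, together with $|\bbbl\fk|^2 = \sum_\nu L_\nu^2|\fk_\nu|^2$; and (e) from Lemma \ref{lemPBSunppties}.e, after noting that $\bbbl\ell$ is real so that $\Im(\bbbl\fk + \bbbl\ell) = \Im\bbbl\fk$.

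Part (d) reduces similarly: the factorization $u_1(k+\ell) = \sin(k_{\tilde\nu}/2)\,v_{1,\tilde\nu}(k+\ell)$ from Lemma \ref{lemPBSunppties}.d, specialized to $k = \bbbl\fk$, yields $\sin(L_{\tilde\nu}\fk_{\tilde\nu}/2)$ as the factored sine, and setting $v_{+,\tilde\nu}(\fk+\ell) := v_{1,\tilde\nu}(\bbbl\fk + \bbbl\ell)$ gives the claimed bound. For part (b), the statement is slightly stronger than the direct analog of Lemma \ref{lemPBSunppties}.b since it allows $\cI_+$ to be any subset of $\{\nu : \ell_\nu\ne 0\}$, not just the full set. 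To obtain this flexibility I would revisit the factor-by-factor argument rather than quote the previous lemma verbatim: for each $\nu\in\cI_+$, bound $|\sin(L_\nu(\fk_\nu+\ell_\nu)/2)| = |\sin(L_\nu\fk_\nu/2)| \le L_\nu|\fk_\nu|$, absorbing the sign exactly as in (d) and controlling the denominator by Lemma \ref{lemPBSsinxoverx}.a; for the remaining $\nu$, use the bound from part (a).

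The main obstacle is only careful bookkeeping of the $L_\nu$ factors: verifying that $L_\nu\ell_\nu\in 2\pi\bbbz$ for $\ell\in\hat\cB^+$ (which legitimizes the sine-factoring in (d) and the $\cI_+$-flexible version of (b)), and remembering that $L$ odd forces each $L_\nu = L^2$ or $L$ to be odd, as is needed wherever Lemma \ref{lemPBSsinxoverx}.c is invoked. No new analytic estimate beyond Lemma \ref{lemPBSunppties} and Lemma \ref{lemPBSsinxoverx} is required.
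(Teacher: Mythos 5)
Your proposal is correct, and it takes a slightly different, more economical route than the paper. You reduce parts (a), (c), (d), (e) to Lemma~\ref{lemPBSunppties} (with $n=1$) via the scaling identity $u_+(\fk)=u_1(\bbbl\fk)$, which follows immediately from \eqref{eqnPBSuplusdef} and the formula for $u_n$ in Remark~\ref{remPBSqnft}.b; under $k=\bbbl\fk$ the hypotheses $|\Re\fk_\nu|\le\pi/L_\nu$, $|\Im\fk_\nu|\le 2/L_\nu$ transform exactly into those of Lemma~\ref{lemPBSunppties}, and $\bbbl$ carries $\hat\cB^+$ bijectively onto $\hat\cB_1$ with $|(\bbbl\ell)_\nu|=L_\nu|\ell_\nu|$, so the conclusions come out with the stated $L_\nu$ factors automatically. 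The paper instead reproves each part directly with arguments parallel to those for Lemma~\ref{lemPBSunppties}, invoking Lemma~\ref{lemPBSsinxoverx} afresh; your reduction avoids that duplication and, as a side benefit, carries over the constant $12$ in part (e) cleanly (the paper's explicit rederivation quotes an intermediate constant $8$ from Lemma~\ref{lemPBSsinxoverx}.c instead of the $6$ used in the $u_n$ case, which makes its bookkeeping slightly murkier). You correctly spot the one place the reduction does not apply verbatim: part (b) is genuinely stronger than a direct transcription of Lemma~\ref{lemPBSunppties}.b, because $\cI_+$ may be a proper subset of $\set{\nu}{\ell_\nu\ne 0}$ and the unwanted factors $L_\nu|\fk_\nu|$ cannot simply be discarded (they need not be $\ge 1$). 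Your proposed fix---bound the numerator $\sin(\sfrac{L_\nu}{2}(\fk_\nu+\ell_\nu))$ by $L_\nu|\fk_\nu|$ for $\nu\in\cI_+$, using that $L_\nu\ell_\nu\in 2\pi\bbbz$ to drop $\ell_\nu$ up to sign, and use the part-(a) bound for the remaining factors---is exactly the factor-by-factor argument the paper runs; that matches. One small thing to make explicit in a final write-up: in part (c) you need $|\bbbl\fk|^2=\sum_\nu L_\nu^2|\fk_\nu|^2$, and in part (e) you need $\Im(\bbbl\fk+\bbbl\ell)=\Im\bbbl\fk$ since $\bbbl\ell$ is real; you mention both, so the argument is complete.
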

\begin{proof}  
(a)
We may assume without loss of generality that $\ell_\nu$
is bounded, as a real number, by $\pi-\sfrac{\pi}{L_\nu}$. (Recall that $L$
is an odd natural number.) So we will always have 
$\big|\Re \fk_\nu+\ell_\nu\big|\le\pi$.
So,  by Lemma \ref{lemPBSsinxoverx}.c with $\veps=\sfrac{1}{L_\nu}$ and
$x+iy=L_\nu(\fk_\nu+\ell_\nu)$, 
\begin{align*}
\bigg|\frac{\sin\sfrac{L_\nu}{2} (\fk_\nu+\ell_\nu)}
   {L_\nu\sin\half(\fk_\nu+\ell_\nu)}\bigg|
&\le\left.\begin{cases}
              4 &\text{if $\ell_\nu=0$}\\
              \sfrac{8}{L_\nu|\Re \fk_\nu+\ell_\nu|}  
                  &\text{if $|L_\nu\ell_\nu|\ge 2\pi$}
          \end{cases}\right\}
\le \frac{24}{L_\nu|\ell_\nu|+\pi}
\end{align*}
since
$|\Re \fk_\nu|\le\sfrac{\pi}{L_\nu}$, 
$|\Im \fk_\nu|\le \sfrac{2}{L_\nu}$ 
and $\ell_\nu\in \sfrac{2\pi}{L_\nu}\bbbz$.

\Item (b)
If $\ell_\nu\ne 0$, 
\begin{equation*}
\half\big|\fk_\nu+\ell_\nu\big|
\ge\half\big|\Re \fk_\nu+\ell_\nu\big|
\ge \sfrac{1}{6}\big(\sfrac{\pi}{L_\nu}+|\ell_\nu|\big)
\end{equation*}
and $\half\big|\Re \fk_\nu+\ell_\nu\big|\le\sfrac{\pi}{2}$
so that, by Lemma \ref{lemPBSsinxoverx}.a, the denominator
\begin{equation*}
L_\nu\big|\sin\half(\fk_\nu+\ell_\nu)\big|
\ge \sfrac{\sqrt{2}}{6\pi}\big(\pi+L_\nu|\ell_\nu|\big)
\end{equation*}
On the other hand, the numerator, by Lemma \ref{lemPBSsinxoverx}.b,
\begin{equation*}
\big|\sin\sfrac{L_\nu}{2} (\fk_\nu+\ell_\nu)\big|
=\big|\sin\big(\sfrac{L_\nu}{2} \fk_\nu\big)\big|
\le 2\sfrac{L_\nu}{2} |\fk_\nu|
= L_\nu |\fk_\nu|
\end{equation*}
As $\ell\ne 0$, there is at least one $\nu$ with $\ell_\nu\ne 0$. 
Bound each factor with $\nu\in\cI_+$  by
\begin{align*}
 \bigg|\frac{\sin\sfrac{L_\nu}{2} (\fk_\nu+\ell_\nu)}
   {L_\nu\sin\half(\fk_\nu+\ell_\nu)}\bigg|
\le  \frac{6\pi}{\sqrt{2}} \frac{L_\nu|\fk_\nu|}{L_\nu|\ell_\nu|+\pi} 
\end{align*}
Bound the remaining factors by $\sfrac{24}{L_\nu|\ell_\nu|+\pi}$.

\Item (c)
has the same proof as that of Lemma \ref{lemPBSunppties}.c, with
$k_\nu$ replaced by $L_\nu\fk_\nu$ and $\sfrac{1}{\veps_{n,\nu}}$ 
replaced by $L_\nu$.

\Item (d)
The proof is similar to that for part (b), except that
the factor  $\sin\big(\sfrac{L_{\tilde \nu}}{2} \fk_{\tilde \nu}\big)$ in 
the numerator
\begin{equation*}
\sin\sfrac{L_{\tilde\nu}}{2} (\fk_{\tilde\nu}+\ell_{\tilde\nu})
=(-1)^{\sfrac{L_{\tilde\nu}\ell_{\tilde\nu}}{2\pi}}
   \sin\big(\sfrac{L_{\tilde\nu}}{2} \fk_{\tilde\nu}\big)
\end{equation*}
is pulled out of $u_+$, leaving $v_{+,\tilde\nu}$, rather than being bounded 
by $L_{\tilde\nu}|\fk_{\tilde\nu}|$. Also, each 
$\Big|\sfrac{\sin\sfrac{L_\nu}{2} (\fk_\nu+\ell_\nu)}
   {L_\nu\sin\half(\fk_\nu+\ell_\nu)}\Big|$ with $\nu\ne\tilde\nu$
is bounded by $\sfrac{24}{L_\nu|\ell_\nu|+\pi}$.

\Item (e)
By Lemma \ref{lemPBSsinxoverx}.c
\begin{align*}
\bigg|\Im\frac{\sin\sfrac{L_\nu}{2} (\fk_\nu+\ell_\nu)}
   {L_\nu\sin\half(\fk_\nu+\ell_\nu)}\bigg|
\le 8|\Im L_\nu \fk_\nu| \frac{24}{L_\nu|\ell_\nu|+\pi}
\end{align*}
The proof now continues as in Lemma \ref{lemPBSunppties}.e, just by substituting
$k_\nu\rightarrow L_\nu\fk_\nu$, $\ell_\nu\rightarrow L_\nu\ell_\nu$
and $\veps_{n,\nu}=\sfrac{1}{L_\nu}$.
\end{proof}

\begin{proposition}\label{propPBSAnppties}
Let $\fq\in\bbbn$.
There are constants $\Gam_2$, depending only on $\fq$, and $\Gam_3$,
depending only on $a$, such that
the following hold for all $L>\Gam_2$.

\begin{enumerate}[label=(\alph*), leftmargin=*]
\item  
On the domain 
$\set{k\in\bbbc\times\bbbc^3}{|\Im k_\nu|< 2 \text{ for each }
0\le\nu\le3}$ $\hat \fQ_n(k)$ is analytic,  and invariant 
under $k_\nu\rightarrow-k_\nu$ for each $0\le\nu\le 3$
and under $k_\nu\leftrightarrow k_{\nu'}$ for all $1\le\nu,\nu'\le 3$,
and obeys
\begin{equation*}
\sfrac{5}{6}a\le |\hat \fQ_n(k)| \le \sfrac{5}{4} a\qquad
\Re \hat \fQ_n(k)\ge \sfrac{a}{2}
\end{equation*}
If $k$ is real $\sfrac{5}{6}a\le \hat \fQ_n(k)\le a$.

\item
If $|\Re k_\nu|\le \pi$ and $|\Im k_\nu|\le 2$ for each $0\le\nu\le3$, then
\begin{equation*}
\big|\hat \fQ_n(k)-a_n\big|\le \sfrac{a}{3003}|k|^2\qquad\text{where}\qquad
a_n=a\sfrac{1-L^{-2}}{1-L^{-2n}}
\end{equation*}

\item 
 $\|\fQ_n\|_{m=1}\le\Gam_3$

\end{enumerate}
\end{proposition}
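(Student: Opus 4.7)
The plan is to exploit the factorization $\hat\fQ_n(k)=a/S(k)$ with
\[S(k)=1+\sum_{j=1}^{n-1}\sfrac{1}{L^{2j}}\sum_{\ell_j\in\hat\cB_j}u_j(k+\ell_j)^{2\fq},\]
together with the elementary identity $a/a_n=\sum_{j=0}^{n-1}L^{-2j}$. Thus $a_n$ is exactly what $a/S(k)$ would equal if every $\ell_j\neq 0$ contribution vanished and each $\ell_j=0$ factor $u_j(k)^{2\fq}$ were replaced by $1$. All three parts reduce to controlling these two discrepancies via the pointwise bounds of Lemma~\ref{lemPBSunppties}, with $\Gam_2$ chosen large enough that $\sum_{j\ge 1}L^{-2j}\le 1/(L^2-1)$ absorbs all $\fq$-dependent constants.

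For part (a), I would first argue analyticity and the symmetries. Each summand $u_j(k+\ell_j)^{2\fq}$ is entire with the claimed symmetries by Remark~\ref{remPBSqnft}(a,d), and Lemma~\ref{lemPBSunppties}(a) furnishes the $\ell_j$-summable majorant $\prod_\nu\big(24/(|\ell_{j,\nu}|+\pi)\big)^{2\fq}$ (summable because $2\fq\ge 8>1$), so the double series for $S(k)$ converges absolutely and uniformly on compacts in $\{|\Im k_\nu|\le 2\}$ and $S$ inherits the symmetries. Combining that uniform majorant with the pointwise bound $|u_j(k+\ell_j)^{2\fq}|\le(4\pi^4)^{2\fq}$ from Lemma~\ref{lemPBSunppties}(f) yields $|S(k)-1|\le C(\fq)/(L^2-1)$, which falls below $1/5$ once $L\ge\Gam_2(\fq)$. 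Hence $|S(k)|\in[4/5,6/5]$ and $\Re S(k)\ge 4/5$, giving the claimed bounds on $|\hat\fQ_n|$ and $\Re\hat\fQ_n$. For real $k$ each $u_j(k+\ell_j)^{2\fq}$ is nonnegative (real $u_j$ on real arguments, even exponent), so $S(k)\ge 1$ and $\hat\fQ_n(k)\le a$.

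For part (b), I would decompose
\[S(k)-\sfrac{a}{a_n}=\sum_{j=1}^{n-1}\sfrac{1}{L^{2j}}\Big[\big(u_j(k)^{2\fq}-1\big)+\sum_{\ell_j\neq 0}u_j(k+\ell_j)^{2\fq}\Big]\]
and bound each piece by a multiple of $|k|^2$. For the $\ell_j=0$ piece, $|u_j(k)-1|\le 4^3|k|^2$ from Lemma~\ref{lemPBSunppties}(c) together with $|u_j(k)|\le 4\pi^4$ from Lemma~\ref{lemPBSunppties}(f) combine via $|z^{2\fq}-1|\le 2\fq\max(|z|,1)^{2\fq-1}|z-1|$ into $|u_j(k)^{2\fq}-1|\le C_1(\fq)|k|^2$. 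For the $\ell_j\neq 0$ piece, pick any $\tilde\nu$ with $\ell_{j,\tilde\nu}\neq 0$; Lemma~\ref{lemPBSunppties}(d) factors out $\sin(k_{\tilde\nu}/2)$, whose modulus is bounded by $|k_{\tilde\nu}|$ on the strip, and raising to the $2\fq$, dominating the leftover $|k_{\tilde\nu}|^{2\fq-2}$ by a constant on the bounded domain, and summing against the $\ell_j$-summable kernel gives $\sum_{\ell_j\neq 0}|u_j(k+\ell_j)|^{2\fq}\le C_2(\fq)|k|^2$. Summing against $\sum_{j\ge 1}L^{-2j}\le 1/(L^2-1)$ delivers $|S(k)-a/a_n|\le C(\fq)|k|^2/(L^2-1)$. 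Since $\hat\fQ_n-a_n=-a(S-a/a_n)/[S\cdot(a/a_n)]$ with $|S|\ge 4/5$ and $a/a_n\ge 1$ from part (a), the target $a|k|^2/3003$ follows after enlarging $\Gam_2(\fq)$ once more.

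Part (c) is then immediate from the analyticity correspondence. Given the bound $|\hat\fQ_n(k)|\le 5a/4$ uniformly on the open strip $\{|\Im k_\nu|<2\}$ (and uniformly in $n$ once $L\ge\Gam_2$), the standard link between $\|\cdot\|_m$ norms and analyticity of the Fourier multiplier, recorded in \cite[Lemma~\lemBOlonelinfty]{Bloch}, immediately yields $\|\fQ_n\|_{m=1}\le\Gam_3$ with $\Gam_3$ depending only on $a$. The main obstacle is not conceptual but quantitative: the pointwise bound $(4\pi^4)^{2\fq}$ on $|u_j^{2\fq}|$ in Lemma~\ref{lemPBSunppties}(f) grows superexponentially in $\fq$, so $\Gam_2(\fq)$ must exceed roughly $(4\pi^4)^\fq$ for the explicit constants $1/5$ in part (a) and $1/3003$ in part (b) to be honoured; careful bookkeeping is needed to verify the uniformity in $n$ that ultimately yields $\Gam_3$ depending only on $a$.
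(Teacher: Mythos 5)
Your proposal is correct and follows essentially the same route as the paper's proof: bound the off-one-part of the denominator $S(k)=1+\sum_{j\ge1}L^{-2j}\sum_{\ell_j}u_j(k+\ell_j)^{2\fq}$ via Lemma~\ref{lemPBSunppties}.a and absorb everything in $1/(L^2-1)$; for part (b) split off the $\ell_j=0$ and $\ell_j\neq0$ contributions and use Lemma~\ref{lemPBSunppties}.b,c (you use (d) where the paper uses (b), which is an equivalent factorization); and for part (c) convert the uniform analytic bound into a $\|\cdot\|_{m=1}$ estimate via the analyticity–decay lemma from \cite{Bloch}.

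Two small technical notes. First, your invocation of Lemma~\ref{lemPBSunppties}.f to bound $|u_j(k+\ell_j)|$ is not justified when $\ell_j\neq 0$, because (f) is stated only for arguments in the fundamental cell $|\Re k_\nu|\le\pi$ and $u_j$ is not $2\pi$-periodic in each variable; however, Lemma~\ref{lemPBSunppties}.a already provides the needed summable majorant for all $\ell_j$, so the extra citation is harmless redundancy rather than a gap. Second, deducing $\Re\hat\fQ_n(k)\ge a/2$ from $\Re S(k)\ge4/5$ requires the explicit observation $\Re(1/S)=\Re S/|S|^2\ge(4/5)/(6/5)^2=5/9$; you gesture at this ("giving the claimed bounds") but should state it, since $\Re(a/S)\ne a/\Re S$. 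The paper makes this explicit.
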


\begin{proof} 
(a)
Recall, from Remark \ref{remPBSqnft}.e, that 
$
\hat \fQ_n(k) = a\bigg[1+\sum\limits_{j=1}^{n-1}
 \smsum\limits_{\ell_j\in\hat\cB_j}
 \sfrac{1}{L^{2j}}u_j(k+\ell_j)^{2\fq}\bigg]^{-1}
$.
By Lemma \ref{lemPBSunppties}.a,
\begin{align*}
\sum_{j=1}^{n-1}
    \smsum_{\ell_j\in\hat\cB_j}\sfrac{1}{L^{2j}}|u_j(k+\ell_j)|^{2\fq}
&\le \sum_{j=1}^\infty \sfrac{1}{L^{2j}}
    \smsum_{\ell\in 2\pi\bbbz\times 2\pi\bbbz^3}\smprod_{\nu=0}^3
      \big(\sfrac{24}{|\ell_\nu|+\pi}\big)^{2\fq}
= \sfrac{c_\fq}{L^2-1}
\end{align*}
where 
$c_\fq 
= \Big[\sum_{j\in\bbbz}\big(\sfrac{24/\pi}{[2|j|+1}\big)^{2\fq}\Big]^{4}$.
Just pick $L$ large enough that $\sfrac{c_\fq}{L^2-1}<\sfrac{1}{5}$ and 
use that, if $|z|\le\sfrac{1}{5}$
\begin{equation*}
\Re\sfrac{1}{1+z}
=\sfrac{\Re(1+\overline{z})}{|1+z|^2}
\ge\sfrac{4/5}{(6/5)^2}
=\sfrac{20}{36}
\end{equation*}

\Item (b)
Using $O(|k|^2)$ to denote any function that is bounded
by a constant, depending only on $\fq$,
\begin{align*}
1+\sum_{j=1}^{n-1}
    \smsum_{\ell_j\in\hat\cB_j}\sfrac{1}{L^{2j}}u_j(k+\ell_j)^{2\fq}
&=\sum_{j=0}^{n-1}\sfrac{1}{L^{2j}}
  +\sum_{j=1}^{n-1}\sfrac{1}{L^{2j}}[u_j(k)^{2\fq}-1] \\
  &\hskip1.5in
  +\sum_{j=1}^{n-1}
    \smsum_{\atop{\ell_j\in\hat\cB_j}{\ell\ne0}}
              \sfrac{1}{L^{2j}}u_j(k+\ell_j)^{2\fq}
    \\
&\le \sfrac{1-L^{-2n}}{1-L^{-2}}
      +\sum_{j=1}^{n-1}\sfrac{1}{L^{2j}}O\big(|k|^2\big)\qquad
    \text{by Lemma \ref{lemPBSunppties}.b,c}\cr
&\le \sfrac{1-L^{-2n}}{1-L^{-2}}
      +\sfrac{1}{L^2}O\big(|k|^2\big)
\end{align*}
So
\begin{align*}
\hat \fQ_n(k)
=a\big[\sfrac{1-L^{-2n}}{1-L^{-2}}+\sfrac{1}{L^2}O\big(|k|^2\big)\big]^{-1}
=a_n\big[1+\sfrac{1}{L^2}\sfrac{1-L^{-2}}{1-L^{-2n}}O\big(|k|^2\big)\big]^{-1}
\end{align*}
and it suffices to choose $\Gam_2$ large enough that
\begin{equation*}
\sfrac{9}{10}\le \sfrac{1-L^{-2}}{1-L^{-2n}}\le\sfrac{11}{10}\quad
\Big|\sfrac{1}{L^2}\sfrac{1-L^{-2}}{1-L^{-2n}}O\big(|k|^2\big)\Big|
\le \sfrac{10}{22}\sfrac{1}{3003}|k|^2\quad
\Big|\sfrac{1}{L^2}\sfrac{1-L^{-2}}{1-L^{-2n}}O\big(|k|^2\big)\Big|
\le \half
\end{equation*}
for all allowed $k$'s and $L$'s.

\Item (c)
follows immediately from part (a) and \cite[Lemma \lemBOlonelinfty.b]{Bloch}
with $\cX_\fin = \cX_\crs = \cX_0^{(n)}$.
\end{proof}

We now define operators $Q_{n,\nu}^{(\pm)}$ and $Q_{+,\nu}^{(\pm)} 
$ so that the next remark holds.
\begin{remark}\label{remPBSunderivAlg}
Let $0\le\nu\le3$. We have
\begin{equation*}
\partial_\nu Q_n^*= Q_{n,\nu}^{(+)} \partial_\nu\qquad
\partial_\nu Q_n= Q_{n,\nu}^{(-)} \partial_\nu\qquad
\partial_\nu Q^*= Q_{+,\nu}^{(+)} \partial_\nu\qquad
\partial_\nu Q= Q_{+,\nu}^{(-)} \partial_\nu\qquad
\end{equation*}
If $S:\cH_n\rightarrow\cH_n$ 
and $T:\cH_0^{(n)}\rightarrow\cH_0^{(n)}$ 
are linear operators that are translation invariant with respect to $\cX_n$
and $\cX_0^{(n)}$, respectively, then
\begin{equation*}
Q_{n,\nu}^{(-)}S Q_{n,\nu}^{(+)}=Q_nSQ_n^*\qquad
Q_{+,\nu}^{(-)}T Q_{+,\nu}^{(+)}=QTQ^*
\end{equation*}
\end{remark}

To  prepare for the definitions, recall that
the forward derivatives of $\al\in \cH_j^{(n)}$   
are defined by
\begin{equation}\label{eqnPBSforwardDeriv}
(\partial_\nu \al)(x) =  
\sfrac{1}{\veps_{j,\nu}} \big[\al(x+ \veps_{j,\nu} e_\nu)-\al(x)\big]
\end{equation}
where $e_\nu$ is a unit vector in the $\nu^{\mathrm{th}}$ direction. 
The Fourier transforms
\begin{alignat}{5}\label{eqnPBSderivft}
\big(\widehat{\partial_\nu \phi}\big)(p)
   &=2ie^{i\veps_{n,\nu} p_\nu/2}\ 
  \sfrac{\sin(\veps_{n,\nu} p_\nu/2)}{\veps_{n,\nu}} \hat\phi(p) &
\quad&\text{for all }\phi\in\cH_n &
&\text{and }p\in\hat\cX_n   \notag\\
\big(\widehat{\partial_\nu \psi}\big)(k)
   &=2ie^{ik_\nu/2}\ \sin(k_\nu/2) \hat\psi(k) &
&\text{for all }\psi\in\cH_0^{(n)} &
&\text{and }k\in\hat\cX_0^{(n)}\\
\big(\widehat{\partial_\nu \th}\big)(\fk)
   &=2ie^{iL_\nu \fk_\nu/2}\ 
          \sfrac{\sin(L_\nu \fk_\nu/2)}{L_\nu} \hat\th(\fk) &
&\text{for all }\th\in\cH_{-1}^{(n+1)} &\quad
&\text{and }\fk\in\hat\cX_{-1}^{(n+1)}  \notag
\end{alignat}
Set
\begin{alignat*}{5}
u_{n,\nu}^{(+)}(p)&=\prod_{0\le\nu'\le3\atop\nu'\ne\nu} 
      \frac{\sin\half p_{\nu'}}
  {\sfrac{1}{\veps_{n,\nu'}}\sin\half\veps_{n,\nu'}p_{\nu'}} &
u_{+,\nu}^{(+)}(k)
&=\prod_{0\le\nu'\le3\atop\nu'\ne\nu} 
      \frac{\sin\half L_{\nu'} k_{\nu'}}{L_{\nu'}\sin\half k_{\nu'}}
\\
u_{n,\nu}^{(-)}(p)&=
\frac{\sin\half p_\nu}
  {\sfrac{1}{\veps_{n,\nu}}\sin\half\veps_{n,\nu} p_\nu}
\prod_{\nu'=0}^3 
      \frac{\sin\half p_{\nu'}}
  {\sfrac{1}{\veps_{n,\nu'}}\sin\half\veps_{n,\nu'} p_{\nu'}} &\quad
u_{+,\nu}^{(-)}(k)
&=\frac{\sin\half L_\nu k_\nu}{L_\nu\sin\half k_\nu}
\prod_{\nu'=0}^3 
      \frac{\sin\half L_{\nu'} k_{\nu'}}{L_{\nu'}\sin\half k_{\nu'}}
\end{alignat*}
and
\begin{alignat*}{3}
\ze_{n,\nu}^{(+)}(k,\ell_n)
    &=e^{i\veps_{n,\nu} (k+\ell_n)_\nu/2 }e^{-ik_\nu/2 }
    \cos\half\ell_{n,\nu} &\quad
\ze_{+,\nu}^{(+)}(\fk,\ell)
    &=e^{i(\fk+\ell)_\nu/2 }e^{-iL_\nu \fk_\nu/2 }
    \cos\half L_\nu\ell_\nu\\
\ze_{n,\nu}^{(-)}(k,\ell_n)
    &=e^{ik_\nu/2 } e^{-i\veps_{n,\nu} (k+\ell_n)_\nu/2 }
    \cos\half\ell_{n,\nu} &\quad
\ze_{+,\nu}^{(-)}(\fk,\ell)
    &=e^{iL_\nu\fk_\nu/2 } e^{-i(\fk+\ell)_\nu/2 }
    \cos\half L_\nu\ell_\nu
\end{alignat*}
Define the operators $Q_{n,\nu}^{(+)}:\cH_0^{(n)}\rightarrow \cH_n$
and $Q_{n,\nu}^{(-)}: \cH_n\rightarrow \cH_0^{(n)}$ by
\begin{equation}\label{eqnPBSqnplusminus}
\begin{split}
\big(\widehat{Q_{n,\nu}^{(+)} \psi}\big)(k+\ell_n)
   &=\ze_{n,\nu}^{(+)}(k,\ell_n)u_{n,\nu}^{(+)}(k+\ell_n)u_n(k+\ell_n)^{\fq-1} \hat\psi(k)
\\
\big(\widehat{Q_{n,\nu}^{(-)} \phi}\big)(k)
   &=\sum_{\ell_n\in\hat\cB_n} \ze_{n,\nu}^{(-)}(k,\ell_n)
     u_{n,\nu}^{(-)}(k+\ell_n)
     u_n(k+\ell_n)^{\fq-1} \hat\phi(k+\ell_n)\cr
\end{split}
\end{equation}
and the operators 
   $Q_{+,\nu}^{(+)}:\cH_{-1}^{(n+1)}\rightarrow \cH_0^{(n)}$
and 
  $Q_{+,\nu}^{(-)}: \cH_0^{(n)}\rightarrow \cH_{-1}^{(n+1)}$ by
\begin{equation}\label{eqnPBSqplusplusminus}
\begin{split}
\big(\widehat{Q_{+,\nu}^{(+)} \th}\big)(\fk+\ell)
   &=\ze_{+,\nu}^{(+)}(\fk,\ell)u_{+,\nu}^{(+)}(\fk+\ell)u_+(\fk+\ell)^{\fq-1} \hat\th(\fk)
\\
\big(\widehat{Q_{+,\nu}^{(-)} \psi}\big)(\fk)
  &=\sum_{\ell\in\hat\cB^+} \ze_{+,\nu}^{(-)}(\fk,\ell)u_{+,\nu}^{(-)}(\fk+\ell)
     u_+(\fk+\ell)^{\fq-1} \hat\psi(\fk+\ell)
\end{split}
\end{equation} 

\begin{proof}[Proof of Remark \ref{remPBSunderivAlg}]
For the ``$Q^*_n$'' and ``$Q_n$'' cases, it suffices to observe that
\begin{align*}
\big(2ie^{i\veps_{n,\nu} (k+\ell)_\nu/2}\ 
    \sfrac{\sin(\veps_{n,\nu} (k+\ell)_\nu/2)}{\veps_{n,\nu}}\big)
                u_n(k+\ell)
&=\ze_{n,\nu}^{(+)}(k,\ell)u_{n,\nu}^{(+)}(k+\ell)
          \big(2ie^{ik_\nu/2}\ \sin(k_\nu/2)\big)
\end{align*}
and
\begin{align*}
\big(2ie^{ik_\nu/2}\ \sin(k_\nu/2)\big)
                u_n(k+\ell)
&=\ze_{n,\nu}^{(-)}(k,\ell)u_{n,\nu}^{(-)}(k+\ell)
    \big(2ie^{i\veps_{n,\nu} (k+\ell)_\nu/2}\ 
    \sfrac{\sin(\veps_{n,\nu} (k+\ell)_\nu/2)}{\veps_{n,\nu}}\big)
\end{align*}
and
$$
\ze_{n,\nu}^{(-)}(k,\ell)u_{n,\nu}^{(-)}(k+\ell)
\ze_{n,\nu}^{(+)}(k,\ell)u_{n,\nu}^{(+)}(k+\ell)
=u_n(k+\ell)^2
$$
for all $k$, $\ell$, $\nu$. We remark that 
``$Q_{n,\nu}^{(-)} SQ_{n,\nu}^{(+)}=Q_nSQ_n^*$''
should not be surprising since $\partial_\nu Q_nSQ_n^*
=Q_{n,\nu}^{(-)} SQ_{n,\nu}^{(+)}\partial_\nu$
and $Q_nSQ_n^*$ is translation invariant on the
unit scale and so commutes with $\partial_\nu$.
The proof for the ``$Q^*$'' and ``$Q$'' cases are virtually identical.
\end{proof}

\begin{lemma}\label{lemPBSunderiv}
Let $0\le\nu\le3$, $\ell\in\hat\cB^+$ and $\ell_n\in\hat\cB_n$.

\begin{enumerate}[label=(\alph*), leftmargin=*]
\item $\ze_{n,\nu}^{(+)}(k,\ell_n)u_{n,\nu}^{(+)}(k+\ell_n)$ 
and $\ze_{n,\nu}^{(-)}(k,\ell_n)u_{n,\nu}^{(-)}(k+\ell_n)$ are entire in $k$
and \newline
$\ze_{+,\nu}^{(+)}(\fk,\ell)u_{+,\nu}^{(+)}(\fk+\ell)$ 
and $\ze_{+,\nu}^{(-)}(\fk,\ell)u_{+,\nu}^{(-)}(k+\ell)$ are entire in $\fk$

\item
Assume that $|\Re k_{\nu'}|\le\pi$, $|\Im k_{\nu'}|\le 2$,
$|\Re \fk_{\nu'}|\le\sfrac{\pi}{L_{\nu'}}$  and 
$|\Im \fk_{\nu'}|\le \sfrac{2}{L_{\nu'}}$  for each $0\le\nu'\le3$.
Then
\begin{align*}
\big|\ze_{n,\nu}^{(+)}(k,\ell_n)u_{n,\nu}^{(+)}(k+\ell_n)\big|
&\le e\prod_{0\le\nu'\le3\atop\nu'\ne\nu}  
                     \sfrac{24}{|\ell_{n,\nu'}|+\pi}\\
\big|\ze_{+,\nu}^{(+)}(\fk,\ell)u_{+,\nu}^{(+)}(\fk+\ell)\big|
&\le e\prod_{0\le\nu'\le3\atop\nu'\ne\nu}  
             \sfrac{24}{L_\nu|\ell_{\nu'}|+\pi}\\
\big|\ze_{n,\nu}^{(-)}(k,\ell_n)u_{n,\nu}^{(-)}(k+\ell_n)\big|
&\le \sfrac{24 e}{|\ell_{n,\nu}|+\pi}
            \smprod_{\nu'=0}^3 \sfrac{24}{|\ell_{n,\nu'}|+\pi} \\
\big|\ze_{+,\nu}^{(-)}(\fk,\ell)u_{+,\nu}^{(-)}(\fk+\ell)\big|
&\le \sfrac{24 e}{L_\nu|\ell_\nu|+\pi}
            \smprod_{\nu'=0}^3 \sfrac{24}{L_{\nu'}|\ell_{\nu'}|+\pi}
\end{align*}

\item 
There is a constant $\Gam_4$, depending only on $\fq$, 
such that $\big\|Q^{(\pm)}_{n,\nu}\big\|_{m=1}\le\Gam_4$.
\end{enumerate}
\end{lemma}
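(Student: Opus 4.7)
The plan is to use Lemmas \ref{lemPBSunppties} and \ref{lemPBSuplusppties} to reduce everything to estimates on $u_n$, $u_+$ and sine ratios, with the Bloch--Floquet $\ell^1$--$\ell^\infty$ criterion of \cite{Bloch} used only at the last step. Part (a) is immediate: each exponential in a $\ze$--factor is entire in $k$ (resp.\ $\fk$), each cosine $\cos(\ell_{n,\nu}/2)$ or $\cos(L_\nu\ell_\nu/2)$ is a constant once $\ell$ is fixed, and each sine ratio of the form $\frac{\sin(p_{\nu'}/2)}{\veps_{n,\nu'}^{-1}\sin(\veps_{n,\nu'}p_{\nu'}/2)}$ appearing in $u^{(\pm)}_{n,\nu}$ (and the analogous ratio in $u^{(\pm)}_{+,\nu}$) is entire by the argument already given in Remark \ref{remPBSqnft}.d, since $\veps_{n,\nu'}^{-1}$ and $L_{\nu'}$ are positive integers.

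For part (b), first note that $\ell_{n,\nu}\in 2\pi\bbbz$ and $L_\nu\ell_\nu\in 2\pi\bbbz$, hence $|\cos(\ell_{n,\nu}/2)|=1$ and $|\cos(L_\nu\ell_\nu/2)|=1$. The two exponentials in $\ze_{n,\nu}^{(+)}(k,\ell_n)$ combine to
\begin{equation*}
\big|e^{i\veps_{n,\nu}(k+\ell_n)_\nu/2}\,e^{-ik_\nu/2}\big|
 = e^{(1-\veps_{n,\nu})\Im k_\nu/2}\ \le\ e,
\end{equation*}
using $|\Im k_\nu|\le 2$ and $0<\veps_{n,\nu}\le 1$; the other three $\ze$--factors obey the same bound $e$ by identical arithmetic (with $\veps_{n,\nu}\leftrightarrow 1/L_\nu$). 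The sine ratios composing $u^{(\pm)}_{n,\nu}$ and $u^{(\pm)}_{+,\nu}$ are of exactly the type already estimated in Lemmas \ref{lemPBSunppties}.a and \ref{lemPBSuplusppties}.a, each contributing $\frac{24}{|\ell_{n,\nu'}|+\pi}$ or $\frac{24}{L_{\nu'}|\ell_{\nu'}|+\pi}$ as appropriate; the ``$(-)$'' case carries one additional sine ratio at $\nu'=\nu$, which accounts for the extra factor in the stated bounds.

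For part (c), I would invoke the Bloch--Floquet $\ell^1$--$\ell^\infty$ criterion \cite[Lemma \lemBOlonelinfty]{Bloch} to bound $\|Q_{n,\nu}^{(\pm)}\|_{m=1}$ by the supremum, over $k$ in the strip $|\Im k_{\nu'}|\le 1$, $|\Re k_{\nu'}|\le\pi$, of the sum over $\ell_n\in\hat\cB_n$ of the modulus of the Fourier symbol in \eqref{eqnPBSqnplusminus}. Combining part (b) with Lemma \ref{lemPBSunppties}.a, that summand is dominated, up to an overall constant, by
\begin{equation*}
\Big(\frac{24}{|\ell_{n,\nu}|+\pi}\Big)^{\fq-1+\delta}\;
    \prod_{\nu'\ne\nu}\Big(\frac{24}{|\ell_{n,\nu'}|+\pi}\Big)^{\fq},
\end{equation*}
where $\delta=0$ in the ``$+$'' case and $\delta=2$ in the ``$-$'' case. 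Since the smallest exponent appearing is $\fq-1\ge 3$, the sum over $\ell_n$ converges to a constant depending only on $\fq$, which is the desired $\Gam_4$. The only delicate point is the exponent bookkeeping; it confirms that the standing assumption $\fq\ge 4$ (in fact $\fq\ge 3$ would already do) is precisely what makes the sum absolutely convergent, and no new analytic input is required beyond Lemmas \ref{lemPBSunppties} and \ref{lemPBSuplusppties}.
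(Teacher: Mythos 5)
Your proposal is correct and follows essentially the same route as the paper: part (a) by reduction to Remark \ref{remPBSqnft}.d, part (b) by bounding the $\ze$--exponentials by $e$ (cosines having modulus $1$) and estimating the sine ratios exactly as in Lemmas \ref{lemPBSunppties}.a and \ref{lemPBSuplusppties}.a, and part (c) by combining (b) with Lemma \ref{lemPBSunppties}.a for the factor $u_n^{\fq-1}$ and applying the $\ell^1$--$\ell^\infty$ criterion from \cite{Bloch}. The only stylistic difference is that you track the exponents with a $\delta$--bookkeeping and note the binding case yields exponent $\fq-1$; the paper states the same conclusion more tersely as ``As $\fq>2$, the claim follows,'' which is the precise convergence threshold (your parenthetical ``$\fq\ge 3$ would already do'' agrees with this for integer $\fq$).
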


\begin{proof}
(a)
The proof is virtually identical to that of  Remark \ref{remPBSqnft}.d.

\Item (b) 
The proof is virtually identical to that of 
            Lemmas \ref{lemPBSunppties}.a and \ref{lemPBSuplusppties}.a.

\Item (c)
By \eqref{eqnPBSqnplusminus}, the Fourier transform of 
          $Q^{(\pm)}_{n,\nu}$ is
          $\ze_{n,\nu}^{(\pm)}(k,\ell_n)u_{n,\nu}^{(\pm)}(k+\ell_n)
            u_n(k+\ell_n)^{\fq-1}$, which by part (b) and 
            Lemma \ref{lemPBSunppties}.a, is bounded in magnitude by
\begin{equation*}
e\prod_{0\le\nu'\le3\atop\nu'\ne\nu}  
                     \sfrac{24}{|\ell_{n,\nu'}|+\pi}
   \smprod_{\nu=0}^3 \big(\sfrac{24}{|\ell_\nu|+\pi}\big)^{\fq-1}
\end{equation*}
As $\fq>2$, the claim now follows by \cite[Lemma \lemBOlonelinfty.c]{Bloch}.
\end{proof}

\begin{remark}\label{remPBOqfour}
The principle obstruction to allowing $\fq=1$ arises when a differential
operator $\partial_\nu$ is intertwined with the block spin averaging 
operator $Q_n$, as happens in Remark \ref{remPBSunderivAlg}.
See, for example, the proof of Lemma \ref{lemPBSunderiv}.c. 
We use the condition $\fq>1$ starting at 
Lemma \ref{lemPOCakmatrix} in \S\ref{secPOcovariance} and in \S\ref{secPOgreens}, \ref{secPOleadingOrder}. (See Lemma \ref{lemPOGSnnuppties}.) We use the condition
$\fq>2$ in Proposition \ref{propPOLmain} and Lemma \ref{lemPBSunderiv}.c. 
\end{remark}

\newpage
\section{Differential Operators}\label{secPOdiffOps}

In \cite[Definition \defHTbackgrounddomaction.a]{PAR1} we associated to
an operator $h_0$ on $L^2\big(\bbbz^3/L_\sp\bbbz^3\big)$ the operators
\begin{equation}\label{eqnPDOdndef}
D_n = L^{2n}\ \bbbl_*^{-n}
         \big(\bbbone - e^{-\oh_0} -e^{-\oh_0} \partial_0\big)\bbbl_*^n
\end{equation}
Here $\partial_0$ is the forward time derivative of \eqref{eqnPBSforwardDeriv}.
In this chapter we assume that $h_0$ is the periodization (see
\cite[\S\secBOperiodization]{Bloch}) of a translation
invariant operator $\bh_0$ on $L^2\big(\bbbz^3\big)$ whose Fourier transform
$\hat\bh_0(\bp)$ 
\begin{itemize}[leftmargin=*, topsep=2pt, itemsep=0pt, parsep=0pt]
\item
 is entire in $\bp$ and invariant under 
$\bp_\nu\rightarrow-\bp_\nu$ for each $1\le\nu\le3$
\item
 is nonnegative when $\bp$ is real and is strictly positive
 when $\bp\in\bbbr^3\setminus{2\pi\bbbz^3}$
\item
 obeys $\hat\bh_0(\bZ)
     =\sfrac{\partial\,\hat\bh_0\ }{\partial\bp_\nu}(\bZ)=0$
 for $1\le\nu\le3$ and has strictly positive Jacobian matrix
 $H=\Big[ 
   \sfrac{\partial^2\,\hat\bh_0\ }{\partial\bp_\mu\partial\bp_\nu}(\bZ)   
  \Big]_{1\le\mu,\nu\le3}$. 
\end{itemize}
\begin{remark}\label{remPDOftDn}
\ 
\begin{enumerate}[label=(\alph*), leftmargin=*]
\item
The operator $D_n$ is the periodization of a translation invariant
operator $\bD_n$, acting on $L^2\big(\veps_n\bbbz\times\veps_n^2\bbbz^3\big)$, 
whose Fourier transform is
\begin{align*}
\hat\bD_n(p)
&=\half\veps_n^2p_0^2 e^{-\hat\bh_0(\veps_n\bp)}
\bigg[\frac{\sin\half\veps_n^2p_0}{\half\veps_n^2p_0}\bigg]^2
+\bp^2\frac{1-e^{-\hat\bh_0(\veps_n\bp)}}{\veps_n^2\bp^2}
-ip_0\,e^{-\hat\bh_0(\veps_n\bp)}\frac{\sin\veps_n^2p_0}{\veps_n^2p_0}\cr
\end{align*}
with $p=(p_0,\bp)\in\bbbc\times\bbbc^3$.

\item 
$\hat\bD_n(p)$ is entire in $p$ and invariant under 
$\bp_\nu\rightarrow-\bp_\nu$ for each $1\le\nu\le3$.

\item 
$\hat\bD_n(p)$ has nonnegative real part when $p$ is real.
\end{enumerate}
\end{remark}
\begin{proof} (a) follows from \eqref{eqnPINTlft} and the observation, 
by \eqref{eqnPBSderivft},
that the Fourier transform of $\partial_0$, on $\bbbz$, is
\begin{align*}
2ie^{i k_0/2}\ \sin( k_0/2)
= -2 \sin^2(k_0/2) +i\sin(k_0)
\end{align*}

\noindent (b) and (c) are obvious.
\end{proof}

\begin{lemma}\label{lemPDOhatSzeroppties}
There are constants $\gam_1$, $\Gam_5$ and a function
$\bmm(c)>0$  that depend only on $\hat\bh_0$  and 
in particular are independent of $n$ and $L$, such that the following hold.

\begin{enumerate}[label=(\alph*), leftmargin=*]
\item 
For all $p\in\bbbr\times\bbbr^3$, 
\begin{equation*}
\big|\hat\bD_n(p)\big|
\ge\gam_1\big(|p_0|+\smsum_{\nu=1}^3|\bp_\nu|^2\big)
\end{equation*}
We use $|p_0|$, $|\bp_\nu|$ and $|\bp|$ to refer to the magnitudes of the 
smallest representatives of $\,p_0\in\bbbc$, $\bp_\nu\in\bbbc$ and $\bp\in\bbbc^3$
in $\bbbc/\sfrac{2\pi}{\veps_n^2}\bbbz$, $\bbbc/\sfrac{2\pi}{\veps_n}\bbbz$
and $\bbbc^3/\sfrac{2\pi}{\veps_n}\bbbz^3$, respectively.

\item  
For all $p\in\bbbc\times\bbbc^3$ with $\veps_n^2p_0,\veps_n\bp$
having modulus less than one,
\begin{equation*}
\hat\bD_n(p)
 = -ip_0+\half\veps_n^2p_0^2 
     +\half\! \smsum_{\nu,\nu'=1}^3 \!\!H_{\nu,\nu'}\bp_\nu\bp_{\nu'}
     +O\Big(\veps_n|\bp|^3+\veps_n^4|p_0|^3\Big)
\end{equation*}
The higher order part $O(\ \cdot\ )$ is uniform in $n$ and $L$.

\item
We have, 
for all $p\in\bbbc\times\bbbc^3$ with $\veps_n^2|\Im p_0|\le 1$
and $\veps_n|\Im\bp|\le 1$,
\begin{equation*}
\big|\hat\bD_n(p)\big|
\le\Gam_5\big(|p_0|+\smsum_{\nu=1}^3|\bp_\nu|^2\big)
\end{equation*}
 and 
\begin{align*}
\big|\sfrac{\partial^{\ell_i}\hfill}{\partial p_\nu^{\ell_i}}
      \hat\bD_n(p)\big|
&\le\Gam_5\begin{cases}
                  \sfrac{1}{[1+|p_0|+|\bp|^2]^{\ell_i-1}} & 
                             \text{if $\nu=0$, $\ell_i=1,2$}\\
                   \noalign{\vskip0.05in}
                  \sfrac{1}{[1+|p_0|+|\bp|^2]^{\ell_i/2-1}} & 
                             \text{if $1\le\nu\le3$, $1\le\ell_i\le4$}
           \end{cases}
\end{align*}

\item  
For all $c>0$ and $p\in\bbbc\times\bbbc^3$, with $|p_0|+|\bp|\ge c$ 
and $|\Im p|\le \bmm(c)$,
\begin{equation*}
\big|\hat\bD_n(p)\big|
\ge\gam_1\,\big(|p_0|+\smsum_{\nu=1}^3|\bp_\nu|^2\big)
\end{equation*}

\item
For all $c>0$ and all $p$ in the set
\begin{align*}
&\set{p\in\bbbc\times\bbbc^3}{|\Im p|\le \bmm(c),\ |\bp|\ge c\!}\\
&\hskip2in\cup
\set{p\in\bbbc\times\bbbc^3}{|\Im p|\le \bmm(c),\ |\veps_n p_0|\ge c\!}
\end{align*}
we have
\begin{equation*}
\Re\hat\bD_n(p)
\ge\gam_1\,\big(\veps_n^2|p_0|^2+\smsum_{\nu=1}^3|\bp_\nu|^2\big)
\end{equation*}
\end{enumerate}
\end{lemma}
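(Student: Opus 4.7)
The plan rests on first collapsing the three-term formula for $\hat\bD_n(p)$ in Remark \ref{remPDOftDn}(a) into the single compact identity
\[\hat\bD_n(p) \,=\, \frac{1 - e^{-w(p)}}{\veps_n^2},\qquad w(p)\,:=\,\hat\bh_0(\veps_n\bp) - i\veps_n^2 p_0,\]
which follows from the elementary rewrite $2\sin^2(x/2)-i\sin x = 1-e^{ix}$. All five parts then reduce to properties of $1-e^{-w}$ combined with properties of $\hat\bh_0$ on the fundamental cell. Part (b) is immediate: Taylor expand $\hat\bh_0(\bq) = \half\bq^T H\bq + O(|\bq|^3)$ using the hypothesis that $\hat\bh_0$ vanishes together with its first derivatives at $\bZ$, combine with $1-e^{-w}=w-\half w^2+O(|w|^3)$, and divide by $\veps_n^2$; the hypothesis $|\veps_n^2p_0|,|\veps_n\bp|<1$ is just strong enough to absorb the cross terms into $O(\veps_n|\bp|^3+\veps_n^4|p_0|^3)$. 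Part (c) is mainly bookkeeping: the upper bound on $|\hat\bD_n(p)|$ follows from $|1-e^{-w}|\le|w|e^{|w|}$ together with boundedness of $\hat\bh_0$ on the compact strip $|\Im(\veps_n\bp)|\le 1$, and the derivative bounds follow by differentiating the original three-term formula, since each derivative acting on $\sin(x)/x$ or on $e^{-\hat\bh_0(\veps_n\bp)}$ reduces the effective power of $p$ in a controlled way.

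The heart of the lemma is part (a). On real $p$, writing $w=u+iv$ with $u=\hat\bh_0(\veps_n\bp)\ge 0$ and $v=-\veps_n^2p_0$, a short computation gives
\[\veps_n^4|\hat\bD_n(p)|^2 \,=\, (1-e^{-u})^2 + 4e^{-u}\sin^2(v/2),\]
both summands nonnegative, so $|\hat\bD_n(p)|$ dominates both $(1-e^{-u})/\veps_n^2$ and $2\sqrt{e^{-u}}|\sin(v/2)|/\veps_n^2$ individually. I would control the first by establishing the universal quadratic lower bound $\hat\bh_0(\bq)\ge c|\bq|^2$ on $\bq\in[-\pi,\pi]^3$: near $\bZ$ this follows from positive-definiteness of $H$, and away from $\bZ$ from compactness combined with the strict positivity hypothesis on $\hat\bh_0$. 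Paired with $1-e^{-u}\ge\min\{u/2,\,1-e^{-1}\}$, this forces $(1-e^{-u})/\veps_n^2\ge c_1|\bp|^2$ uniformly in $n$. For the second summand, the bound $|\sin x|\ge 2|x|/\pi$ on $[-\pi/2,\pi/2]$ applies since $|\veps_n^2 p_0|\le\pi$ in the fundamental cell; combined with the uniform upper bound on $\hat\bh_0$ (hence a positive lower bound on $e^{-u}$), this gives $2\sqrt{e^{-u}}|\sin(v/2)|/\veps_n^2\ge c_2|p_0|$. Averaging the two bounds yields part (a). Part (e) proceeds identically on the real-part identity $\Re\hat\bD_n(p) = (1-e^{-u})/\veps_n^2 + 2e^{-u}\sin^2(\veps_n^2p_0/2)/\veps_n^2$, with the two summands now separately yielding the $|\bp|^2$ and $\veps_n^2|p_0|^2$ pieces of the claim.

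For parts (d) and (e) on the complex strip $|\Im p|\le\bmm(c)$, I would extend the real-axis bounds by a continuity argument. For $p^R$ real with $|p^R_0|+|\bp^R|\ge c$, the real-axis lower bound is of order $c$ uniformly in $n$; writing $p=p^R+ip^I$ and invoking the derivative bounds from part (c), one controls $|\hat\bD_n(p)-\hat\bD_n(p^R)|$ by $|p^I|$ times quantities whose growth in $p$ is explicit. Taking $\bmm(c)$ as a sufficiently small explicit function of $c$ makes this perturbation less than half of the real-axis lower bound. The main obstacle throughout is uniformity in $n$ and $L$: the prefactor $1/\veps_n^2$ in front of $1-e^{-w}$ is balanced only by the second-order vanishing of $\hat\bh_0$ at $\bZ$, so every estimate must be carried through with care as $\veps_n\downarrow 0$, and $\bmm(c)$ must depend only on $\hat\bh_0$. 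That uniformity ultimately hinges on the universal quadratic lower bound $\hat\bh_0(\bq)\ge c|\bq|^2$ on the fundamental cell, which is the one nontrivial ingredient requiring both hypotheses on $\hat\bh_0$ --- positive-definite Hessian at $\bZ$ and global strict positivity away from $\bZ$ --- to cooperate.
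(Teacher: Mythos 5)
Your identity $\hat\bD_n(p)=\veps_n^{-2}\big(1-e^{-w(p)}\big)$ with $w(p)=\hat\bh_0(\veps_n\bp)-i\veps_n^2p_0$ is correct --- it follows from $2\sin^2(x/2)-i\sin x=1-e^{ix}$ applied to Remark~\ref{remPDOftDn}(a) --- and is a genuinely cleaner starting point than the paper's three-term formula. Your arguments for (a), (b), and (c) are sound, and the continuity argument for (d) works as you describe, since there the real-axis lower bound $\gam_1\big(|p_0|+\sum_\nu|\bp_\nu|^2\big)$ grows at least as fast as the perturbation estimate $O\big(\bmm\sqrt{1+|p_0|+|\bp|^2}\big)$ coming from (c)'s first-derivative bounds.

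The gap is in the extension of (e) to the complex strip. The $\nu\ge1$, $\ell_i=1$ derivative bound from (c) is $\Gam_5[1+|p_0|+|\bp|^2]^{1/2}$, so the perturbation you obtain is of order $\bmm\sqrt{1+|p_0|+|\bp|^2}$, but (e)'s target $\gam_1\big(\veps_n^2|p_0|^2+|\bp|^2\big)$ can be far smaller than this when $|p_0|$ is large and $|\bp|$ is small. In the borderline regime $|\bp|<c$, $|\veps_n p_0|=c$ (so $|p_0|=c/\veps_n$), the lower bound is $\gam_1 c^2$ while the perturbation estimate is of order $\bmm\sqrt{c/\veps_n}$; the requirement $\bmm\sqrt{c/\veps_n}\le\tfrac12\gam_1 c^2$ forces $\bmm\lesssim\veps_n^{1/2}$, which is not uniform in $n$. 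So citing part (c) as stated cannot close (e). The fix is the sharper spatial-derivative bound $|\partial_{\bp_\nu}\hat\bD_n(p)|=O(|\bp|)$, carrying no growth in $p_0$; your identity delivers this immediately, since $\partial_{\bp_\nu}\hat\bD_n=\veps_n^{-1}e^{-w}\,\partial_\nu\hat\bh_0(\veps_n\bp)$ and $\nabla\hat\bh_0(\bZ)=0$ gives $\partial_\nu\hat\bh_0(\veps_n\bp)=O(\veps_n|\bp|)$. (The same mechanism powers the paper's term-by-term estimates for (e) and the analogous Lemma~\ref{lemPOCDenppties}.g for $\hat\bDe^{(n)}$.) With that, the perturbation becomes $O\big(\bmm(1+|\bp|)\big)$ and (e)'s two cases close with $\bmm(c)$ of order $c^2/(1+c)$. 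The route is salvageable, but your proof must invoke the second-order vanishing of $\hat\bh_0$ at $\bZ$ at this step; part (c) alone is insufficient.
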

\begin{proof}  (a) By the hypotheses on $\hat h$,
\begin{equation*}
\frac{1-e^{-\hat\bh_0(\veps_n\bp)}}{\veps_n^2\bp^2}
=\frac{\hat\bh_0(\veps_n\bp)+O(\hat\bh_0(\veps_n\bp)^2)}{\veps_n^2\bp^2}
=\frac{\half\veps_n^2\bp\cdot H\bp+O(|\veps_n\bp|^3)}{\veps_n^2\bp^2}
\end{equation*}
for $\veps_n\bp$ is a real neighbourhood of $\bZ$. This is strictly positive
and bounded away from 0 on some real neighbourhood of $0$, uniformly in $\veps_n$. 
Since $\hat h$ is continuous and strictly positive on 
$\bbbr^3\setminus 2\pi\bbbz^3$, there is a constant $\gam'_1>0$, independent
of $\veps_n$, such that
\begin{equation*}
\frac{1-e^{-\hat\bh_0(\veps_n\bp)}}{\veps_n^2|\bp|^2}\ge\gam'_1
\qquad\text{and}\qquad
e^{-\hat\bh_0(\veps_n\bp)}\ge\gam'_1
\end{equation*}
for all $\bp\in\bbbr^3$. The claim now follows from Lemma \ref{lemPBSsinxoverx}.a.

\Item (b) Expanding
$\sfrac{\sin z}{z}=1+O(|z|^2)$, for $|z|\le 1$, and
\begin{align*}
e^{-\hat h(\veps_n\bp)} 
    &= 1- \half \veps_n^2 \smsum_{\nu,\nu'=1}^3 H_{\nu,\nu'}\bp_\nu\bp_{\nu'}
                    +O\Big(\big(\veps_n|\bp|\big)^3\Big)
                    && &\qquad\text{for}\quad\veps_n|\bp|\le 1
\end{align*}
gives, using Remark \ref{remPDOftDn}.a,
\begin{align*}
\hat\bD_n(p)
&=\Big[\!-\!ip_0+\half\veps_n^2p_0^2 \Big]
     \Big[1+O\Big(\big(\veps_n|\bp|\big)^2\!\!+\!\big(\veps_n^2p_0\big)^2\Big)\Big]
+\half\!\!\smsum_{\nu,\nu'=1}^3 \!\!\! H_{\nu,\nu'}\bp_\nu\bp_{\nu'}\!
                    +O\Big(\veps_n|\bp|^3\Big)\\
&=-ip_0+\half\veps_n^2p_0^2 
+\half\!\smsum_{\nu,\nu'=1}^3 \!\! H_{\nu,\nu'}\bp_\nu\bp_{\nu'}
          +O\Big(\veps_n|\bp|^3+\veps_n^2|p_0||\bp|^2+\veps_n^4|p_0|^3\Big)
\end{align*}
The claim now follows from
\begin{equation*}
\veps_n^2|p_0||\bp|^2\le\big(\veps_n^4|p_0|^3\big)^{1/3}
                              \big(\veps_n|\bp|^3\big)^{2/3}
\end{equation*}

\Item (c) Since  $\hat\bD_n(p)$ is periodic with respect
to $\sfrac{2\pi}{\veps_n^2}\bbbz\times\sfrac{2\pi}{\veps_n}\bbbz^3$,
we may assume that $\veps_n^2\Re p_0$ and each $\veps_n\Re\bp_\nu$, 
$1\le\nu\le3$ is bounded in magnitude by $\pi$. 
By Lemma \ref{lemPBSsinxoverx}.b, $\big|\sfrac{\sin z}{z}\big|\le 2$
for all $z\in\bbbc$ with $|\Im z|\le 1$. Since $\veps_n\bp$ runs over a
compact set (independently of $n$ and $L$), $e^{-\hat h(\veps_n\bp)}$
is bounded. So 
\begin{equation*}
\bigg|\half\veps_n^2p_0^2 e^{-\hat\bh_0(\veps_n\bp)}
\bigg[\frac{\sin\half\veps_n^2p_0}{\half\veps_n^2p_0}\bigg]^2\bigg|
\ ,\ 
\bigg|ip_0\,e^{-\hat\bh_0(\veps_n\bp)}\frac{\sin\veps_n^2p_0}{\veps_n^2p_0}\bigg|
\ \le\ \const |p_0|
\end{equation*}
and
\begin{equation*}
\bigg|\frac{1-e^{-\hat\bh_0(\veps_n\bp)}}{\veps_n^2}\bigg|
\le\const \sfrac{|\veps_n\bp|^2}{\veps_n^2}
\le\const|\bp|^2
\end{equation*}
This gives the bound on $\big|\hat\bD_n(p)\big|$.

The bounds 
\begin{alignat*}{5}
\sfrac{\partial\hfill}{\partial p_0}
   \veps_n^2\Big[\sfrac{\sin(\half\veps_n^2 p_0)}{\half\veps_n^2}\Big]^2
    &=2\sin(\veps_n^2 p_0)=O(1) &
\sfrac{\partial\hfill}{\partial p_0}
   \Big[\sfrac{\sin(\veps_n^2 p_0)}{\veps_n^2}\Big]
    &=\cos(\veps_n^2 p_0)=O(1)\\
\sfrac{\partial^2\hfill}{\partial p_0^2}
   \veps_n^2\Big[\sfrac{\sin(\half\veps_n^2 p_0)}{\half\veps_n^2}\Big]^2
    &=2\veps_n^2 \cos(\veps_n^2 p_0)=O(\veps_n^2) & 
\sfrac{\partial^2\hfill}{\partial p_0^2}
   \Big[\sfrac{\sin(\veps_n^2 p_0)}{\veps_n^2}\Big]
    &=-\veps_n^2\sin(\veps_n^2 p_0)=O(\veps_n^2)\\
\sfrac{\partial\hfill}{\partial \bp_\nu}
   \Big[\sfrac{1}{\veps_n^2}e^{-\hat\bh_0(\veps_n\bp)}\Big]
    &=O(|\bp|) &
\sfrac{\partial^2\hfill}{\partial \bp_\nu^2}
   \Big[\sfrac{1}{\veps_n^2}e^{-\hat\bh_0(\veps_n\bp)}\Big]
    &=O(1+\veps_n^2|\bp|^2) =O(1)\\
\sfrac{\partial^3\hfill}{\partial \bp_\nu^3}
   \Big[\sfrac{1}{\veps_n^2}e^{-\hat\bh_0(\veps_n\bp)}\Big]
    &=O(\veps_n^2|\bp|\!+\!\veps_n^4|\bp|^3)&
\sfrac{\partial^4\hfill}{\partial \bp_\nu^4}
   \Big[\sfrac{1}{\veps_n^2}e^{-\hat\bh_0(\veps_n\bp)}\Big]
    &=O(\veps_n^2\!+\!\veps_n^4|\bp|^2\!+\!\veps_n^6|\bp|^4)\\
    &=O(\veps_n)&
    &=O(\veps_n^2)
\end{alignat*}
together with 
\begin{equation*}
\veps_n^2\le\sfrac{\const}{1+|p_0|+|\bp|^2}\qquad
\veps_n^2|p_0|\le\const\qquad
\end{equation*}
yield the bounds on the derivatives.

\Item (d)
Write $p=\sP+i\sQ$ with 
$\sP=(\sP_0,\sbP),\ \sQ=(\sQ_0,\sbQ)\in\bbbr\times\bbbr^3$. 
We may choose $\bmm(c)$ sufficiently small that, 
if $|\sP+i\sQ|\ge c$ and $|\sQ|\le \bmm(c)$,
then 
\begin{align*}
&|\sP_0+i\sQ_0|+\!\smsum_{\nu=1}^3\!|\sbP_\nu+i\sbQ_\nu|^2 \\
      &\hskip1in\le \Big(|\sP_0|+\!\smsum_{\nu=1}^3\!|\sbP_\nu|^2\Big)
           +\Big(|\sQ_0|+\!\smsum_{\nu=1}^3 |\sbQ_\nu|^2\Big) \\
      &\hskip3in\le 2\Big(|\sP_0|+\!\smsum_{\nu=1}^3|\sbP_\nu|^2\Big)
\end{align*} 
If $\gam_1>0$ is chosen small enough and $\Gam_5$ is chosen large enough,
then, for all such $\sP,\sQ$, we have, by parts (a) and (c),
\begin{align*}
\big|\hat\bD_n(\sP)\big|
&\ge 4\gam_1\big(|\sP_0|+\smsum_{\nu=1}^3|\sbP_\nu|^2\big)\\
&\ge 2\gam_1\big(|\sP_0+i\sQ_0|+\smsum_{\nu=1}^3|\sbP_\nu+i\sbQ_\nu|^2\big)\\
\Big|\hat\bD_n(\sP+i\sQ)
     -\hat\bD_n(\sP)\Big|
&\le2\,
 \Gam_5\Big(1+|\sP_0+i\sQ_0|+\smsum_{\nu=1}^3|\sbP_\nu+i\sbQ_\nu|^2\Big)|\sQ|
\end{align*}
Recalling that
$
|\sP_0+i\sQ_0|+\smsum_{\nu=1}^3|\sbP_\nu+i\sbQ_\nu|^2
\ge\min\Big\{\sfrac{c}{2},\sfrac{c^2}{4}\Big\}
$,
it now suffices to choose $\bmm(c)$ small enough that
\begin{equation*}
2\,\Gam_5 \bmm(c)
\le\sfrac{1}{2}\gam_1\min\Big\{\sfrac{c}{2},\sfrac{c^2}{4} \Big\}
\qquad\text{and}\qquad
2\,\Gam_5 \bmm(c)\le\sfrac{1}{2}\gam_1
\end{equation*}

\Item (e) 
Again write $p=\sP+i\sQ$ with 
$\sP=(\sP_0,\sbP),\ \sQ=(\sQ_0,\sbQ)\in\bbbr\times\bbbr^3$. 
Since  $\hat\bD_n(\sP+i\sQ)$ is periodic with respect
to $\sP\in\sfrac{2\pi}{\veps_n^2}\bbbz\times\sfrac{2\pi}{\veps_n}\bbbz^3$,
we may assume that $\big|\veps_n^2 \sP_0\big|\le\pi$ and 
$\big|\veps_n\sbP_\nu\big|\le\pi$, for each $1\le\nu\le3$. 
If the constant $\gam_1$ was chosen small enough, then, as in part (a), 
\begin{align*}
\Re\hat\bD_n(\sP)
&= \half\veps_n^2\sP_0^2 e^{-\hat\bh_0(\veps_n\sbP)}
      \bigg[\frac{\sin\half\veps_n^2\sP_0}{\half\veps_n^2\sP_0}\bigg]^2
      +\sbP^2\frac{1-e^{-\hat\bh_0(\veps_n\sbP)}}{\veps_n^2\sbP^2}
\ge 2\gam_1 \big(\veps_n^2\sP_0^2+|\sbP|^2\big)
\end{align*} 
Hence it suffices to prove that it is possible to choose $\bmm=\bmm(c)$ so 
that 
\begin{equation}\label{eqnPDOimshifta}
\Big|\Re\hat\bD_n(\sP+i\sQ)
         -\Re\hat\bD_n(\sP)\Big|\le \gam_1|\sbP|^2
\qquad\text{when $|\sbP|\ge c$ and $|\sQ|\le \bmm$}
\end{equation}
and that
\begin{equation}\label{eqnPDOimshiftb}
\Big|\Re\hat\bD_n(\sP+i\sQ)
   -\Re\hat\bD_n(\sP)\Big|\le \gam_1 c^2
\qquad\text{when $|\sbP|\le c$, $|\veps_n \sP_0|\ge c$ and $|\sQ|\le \bmm$}
\end{equation}
This is a consequence of the following bounds on the derivatives of the 
real parts of the three terms making up $\hat\bD_n(\sP+i\sQ)$ in 
Remark \ref{remPDOftDn}.a. For the first term,
\begin{align*}
&\bigg|\sfrac{d\hfill}{dt} \veps_n^2(\sP_0+it\sQ_0)^2 
        e^{-\hat\bh_0(\veps_n\sbP)}
\bigg[\frac{\sin\half\veps_n^2(\sP_0+it\sQ_0)}{\half\veps_n^2(\sP_0+it\sQ_0)}\bigg]^2
\bigg|\\
&\hskip0.1in\le \const \big[\veps_n^2|\sP_0+i\sQ_0|\,|\sQ_0|
            +\veps_n^2|\sP_0+i\sQ_0|^2\ \veps_n^2|\sQ_0|\big]\\
&\hskip0.1in\le\const\, \bmm\\
&\sfrac{d\hfill}{dt}\Re \veps_n^2(\sP_0+it\sQ_0)^2 
        \Big[e^{-\hat\bh_0(\veps_n\sbP+it\veps_n\sbQ)}-e^{-\hat\bh_0(\veps_n\sbP)}\Big]
\bigg[\frac{\sin\half\veps_n^2(\sP_0+it\sQ_0)}{\half\veps_n^2(\sP_0+it\sQ_0)}\bigg]^2
\bigg|_{t=0}=0
\\
&\bigg|\sfrac{d^2\hfill}{dt^2}\veps_n^2(\sP_0+it\sQ_0)^2 
       \Big[e^{-\hat\bh_0(\veps_n\sbP+it\veps_n\sbQ)}-e^{-\hat\bh_0(\veps_n\sbP)}\Big]
\bigg[\frac{\sin\half\veps_n^2(\sP_0+it\sQ_0)}{\half\veps_n^2(\sP_0+it\sQ_0)}\bigg]^2
\bigg|\\
&\hskip0.1in\le \const \big[\veps_n^2|\sQ_0|^2
   +\veps_n^2|\sP_0\!+\!i\sQ_0|\,|\sQ_0|\big(\veps_n|\sbQ|+\veps_n^2|\sQ_0|\big)
   +\veps_n^2|\sP_0\!+\!i\sQ_0|^2\ \big(\veps_n|\sbQ|+\veps_n^2|\sQ_0|\big)^2\big]
 \\
&\hskip0.1in\le\const\, \bmm^2
\end{align*}
For the second term,
\begin{align*}
&\sfrac{d\hfill}{dt}
   \Re \sfrac{1-e^{-\hat\bh_0(\veps_n\sbP+it\veps_n\sbQ)}}{\veps_n^2}
\bigg|_{t=0}=0
\\
&\bigg|\sfrac{d^2\hfill}{dt^2}
   \sfrac{1-e^{-\hat\bh_0(\veps_n\sbP+it\veps_n\sbQ)}}{\veps_n^2}
\bigg|
\le \const\sfrac{1}{\veps_n^2}\big(\veps_n|\sbQ|\big)^2
\le \const\, \bmm^2
\end{align*}
For the third term,
\begin{align*}
&\bigg|\sfrac{d\hfill}{dt}\Re
   i(\sP_0+it\sQ_0)\,e^{-\hat\bh_0(\veps_n\sbP)}
         \frac{\sin\veps_n^2(\sP_0+it\sQ_0)}{\veps_n^2(\sP_0+it\sQ_0)}
\bigg|\\
&\hskip0.3in\le \const|\sQ_0|
   +\const |\sP_0+i\sQ_0|\ \veps_n^2|\sQ_0|\\
&\hskip0.3in\le\const\, \bmm\\
&\bigg|\sfrac{d\hfill}{dt}\Re
   i(\sP_0+it\sQ_0)\,
        \Big[e^{-\hat\bh_0(\veps_n\sbP+it\veps_n\sbQ)}-e^{-\hat\bh_0(\veps_n\sbP)}\Big]
         \frac{\sin\veps_n^2(\sP_0+it\sQ_0)}{\veps_n^2(\sP_0+it\sQ_0)}
\bigg|_{t=0}\bigg|\\
&\hskip0.3in=\bigg|\Re
   i\sP_0\, \Big[\sfrac{d\hfill}{dt}\hat\bh_0(\veps_n\sbP+it\veps_n\sbQ)\Big]_{t=0}
         e^{-\hat\bh_0(\veps_n\sbP)}
         \frac{\sin\veps_n^2(\sP_0)}{\veps_n^2\sP_0}\bigg|\\
&\hskip0.3in=\bigg|
   \veps_n \sP_0\  \big(\sbQ\cdot\hat\bh_0'(\veps_n\sbP)\big)\ 
         e^{-\hat\bh_0(\veps_n\sbP)}
         \frac{\sin\veps_n^2(\sP_0)}{\veps_n^2\sP_0}\bigg|\\
&\hskip0.3in\le\const \veps_n^2|\sP_0|\, |\sbP|\, \bmm\\
&\hskip0.3in\le\const |\sbP|\, \bmm\\
&\bigg|\sfrac{d^2\hfill}{dt^2}
   i(\sP_0+it\sQ_0)\,
        \Big[e^{-\hat\bh_0(\veps_n\sbP+it\veps_n\sbQ)}-e^{-\hat\bh_0(\veps_n\sbP)}\Big]
         \frac{\sin\veps_n^2(\sP_0+it\sQ_0)}{\veps_n^2(\sP_0+it\sQ_0)}\bigg|\\
&\hskip0.3in\le \const \big[
    |\sQ_0|\big(\veps_n|\sbQ|+\veps_n^2|\sQ_0|\big)
   +|\sP_0\!+\!i\sQ_0|\ \big(\veps_n|\sbQ|+\veps_n^2|\sQ_0|\big)^2\big]
 \\
&\hskip0.3in\le\const\, \bmm^2
\end{align*}
Now choose $\bmm=\bmm(c)$ small enough that
\eqref{eqnPDOimshifta} and \eqref{eqnPDOimshiftb} are satisfied.
\end{proof}

\newpage
\section{The Covariance}\label{secPOcovariance}

The covariance for the fluctuation integral in \cite{PAR2} is
\begin{equation*}
C^{(n)}=(\sfrac{a}{L^2}Q^*Q+\De^{(n)})^{-1} 
\end{equation*}
where
\begin{equation*}
\De^{(n)}=\left.
   \begin{cases}
      \big(\bbbone+\fQ_n Q_n D_n^{-1} Q_n^*\big)^{-1}\fQ_n & \text{if $n\ge 1$}\\
      \noalign{\vskip0.05in}
      D_0       & \text{if $n=0$}
    \end{cases}\right\}
    :\cH_0^{(n)}\rightarrow\cH_0^{(n)}
\end{equation*}
See \cite[(\eqnHTcn) and (\eqnHTden)]{PAR1}.
In Lemma \ref{lemPOCDenppties} we study the properties of $\De^{(n)}$
and in Corollary \ref{corPOCsquareroot} we study the properties of $C^{(n)}$
and its square root.

\begin{remark}\label{remPOCde}
Let $n\ge 1$.

\begin{enumerate}[label=(\alph*), leftmargin=*]
\item
 The operator $\De^{(n)}$ is the periodization of a 
translation invariant operator $\bDe^{(n)}$, acting on $L^2\big(\bbbz\times\bbbz^3\big)$, whose Fourier transform
is
\begin{align*}
\hat\bDe^{(n)}(k)
&=\hat\fQ_n(k)\Big(1+\hat\fQ_n(k)\sum_{\ell\in\hat\cB_n} u_n(k+\ell)^{2\fq}\ 
                  \hat\bD_n^{-1}(k+\ell)\Big)^{-1}\\
&=\frac{\hat \fQ_n(k)\ \hat\bD_n(k)}
          {\hat\bD_n(k)
           +\hat \fQ_n(k)\sum_{\ell}
             u_n(k+\ell)^{2\fq}\ 
          \hat\bD_n^{-1}(k+\ell)
           \hat\bD_n(k)}
\end{align*}
with $k\in\bbbc\times\bbbc^3$, where $u_n(p)$ and $\hat\cB_n$ were defined 
in parts (b) and (e) of Remark \ref{remPBSqnft}, respectively.

\item 
$\hat\bDe^{(n)}(k)$ is invariant under 
$\bk_\nu\rightarrow-\bk_\nu$  for each $1\le\nu\le3$.

\item 
$\hat\bDe^{(n)}(k)$ has nonnegative real part when $k$ is real.
\end{enumerate}
\end{remark}

\begin{lemma}\label{lemPOCDenppties}
There are constants 
$\mm_1>0$, $\gam_2$, $\Gam_6$ and $\Gam_7$,
such that, for $L>\Gam_2$, the following hold.

\begin{enumerate}[label=(\alph*), leftmargin=*]
\item 
$\hat\bDe^{(n)}(k)$ is analytic on $|\Im k| < 3\mm_1$.

\item  
For all $k\in\bbbc\times\bbbc^3$ with $|\Im k|\le 3\mm_1$.
\begin{align*}
\hat\bDe^{(n)}(k)
&=-ik_0+\big(\sfrac{1}{a_n}+\sfrac{\veps_n^2}{2}\big)k_0^2
    +\half\! \smsum_{\nu,\nu'=1}^3 \!\!H_{\nu,\nu'}\bk_\nu\bk_{\nu'}
        +O\big(|k|^3\big)\displaybreak[0]\\
\hat\bDe^{(n)}\!(k)\,\hat\bD_n^{-1}(k)
    &=1+\sfrac{ik_0}{a_n} +O\big(|k|^2\big)
\end{align*} 
The higher order part $O(\ \cdot\ )$ is uniform in $n$ and $L$.

\item  
$\big|\hat\bDe^{(n)}(k)\big|\le 2a$ and
$\big|\sfrac{\partial\hfill}{\partial k_\nu}\hat\bDe^{(n)}(k)\big|,
\big|\sfrac{\partial^2\hfill}{\partial k_\nu\partial k_{\nu'}}
                                       \hat\bDe^{(n)}(k)\big|\le \Gam_7$
for all $0\le\nu,\nu'\le3$ and $k\in\bbbc\times\bbbc^3$ with $|\Im k|<3\mm_1$.

\item 
There is a function $\rho(c)>0$, which is defined for all 
$c>0$ and which depends only on $\mm_1$, $\fq$, $\hat\bh_0$ and $a$ 
and, in particular, is independent of $n$ and $L$, such that
\begin{equation*}
\Re\hat\bDe^{(n)}(k)\ge \rho(c)
\end{equation*}
for all $k\in\bbbc\times\bbbc^3$ with $|k|\ge c$ and $|\Im k|\le 3\mm_1$.

\item 
For all $k\in\bbbc\times\bbbc^3$ with $|\Im k|\le 3\mm_1$ 
and $|\Re k_\nu|\le\pi$ for all $0\le\nu\le3$,
\begin{equation*}
\big|\hat\bDe^{(n)}(k)\big|\ge \gam_2 \big|\hat\bD_n(k)\big| 
\end{equation*}

\item
$\hat\bD_n^{-1}(p)\,\hat\bDe^{(n)}(p)$ is analytic on $|\Im p|\le 3\mm_1$.
Furthermore, for all $p\in\bbbc\times\bbbc^3$ with $|\Im p|\le 3\mm_1$,
\begin{equation*}
\big|\hat\bD_n^{-1}(p)\,\hat\bDe^{(n)}(p)\big|
\le \frac{\Gam_6}{1+|p_0|+\smsum_{\nu=1}^3 |\bp_\nu|^2}
\end{equation*}
Here, as usual, $|p_0|$ and $|\bp_\nu|$ refer to the magnitudes of the 
smallest representatives of $p_0\in\bbbc$ and $\bp_\nu\in\bbbc$
in $\bbbc/\sfrac{2\pi}{\veps_n^2}\bbbz$ and $\bbbc/\sfrac{2\pi}{\veps_n}\bbbz$
respectively.

\item
$\big|\sfrac{\partial\hfill}{\partial \bk_\nu}\hat\bDe^{(n)}(k)\big|
\le \Gam_7|\bk_\nu|$ for all $1\le\nu\le3$ and $k\in\bbbc\times\bbbc^3$ with 
$|\Im k|<3\mm_1$.
\end{enumerate}
\end{lemma}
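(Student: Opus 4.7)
The plan is to work from the closed formula in Remark~\ref{remPOCde}(a) after splitting off the $\ell=0$ summand, writing
\begin{equation*}
\hat\bDe^{(n)}(k) \;=\; \frac{\hat\fQ_n(k)\,\hat\bD_n(k)}{D(k)},
\qquad
D(k) \,:=\, \hat\bD_n(k) + \hat\fQ_n(k)\,F(k),
\end{equation*}
where $F(k) := u_n(k)^{2\fq} + \hat\bD_n(k)\,R'(k)$ and $R'(k) := \sum_{\ell\in\hat\cB_n,\,\ell\ne 0} u_n(k+\ell)^{2\fq}\,\hat\bD_n^{-1}(k+\ell)$. The key observation is that for $\ell\ne 0$ the shifted argument $k+\ell$ is bounded away from $2\pi\bbbz\times 2\pi\bbbz^3$ by a constant whenever $|\Re k_\nu|\le\pi$ and $|\Im k_\nu|$ is small: Lemma~\ref{lemPDOhatSzeroppties}(d) then bounds $|\hat\bD_n^{-1}(k+\ell)|$ uniformly in $n$ and $\ell$, and Lemma~\ref{lemPBSunppties}(a) bounds $|u_n(k+\ell)|^{2\fq}$ by the summable sequence $\prod_\nu(|\ell_\nu|+\pi)^{-2\fq}$. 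Hence $R'$ is analytic and uniformly bounded, $F(0)=1$, and $F(k)=1+O(|k|^2)$ by Lemma~\ref{lemPBSunppties}(b,c) and Lemma~\ref{lemPDOhatSzeroppties}(b).

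For (a)--(c), fix $\mm_1$ small enough that Proposition~\ref{propPBSAnppties}(a) gives $|\hat\fQ_n|\ge 5a/6$ and $\Re\hat\fQ_n\ge a/2$ on $|\Im k|<3\mm_1$, and that all the bounds above apply. The task of (a) reduces to showing $D(k)\ne 0$. Near $k=0$ this is immediate, since $D\to\hat\fQ_n(0)\cdot 1\approx a_n\ne 0$; on the intermediate region (bounded inside the principal period) a real-$k$ argument using $\Re\hat\bD_n\ge 0$ (Remark~\ref{remPDOftDn}(c)), the strict positivity of $\Re[\hat\fQ_n(k)u_n(k)^{2\fq}]$ (both factors being real and positive on reals), and the perturbation estimate for $\hat\fQ_n\hat\bD_n R'$ via Lemma~\ref{lemPBSunppties}(b) produces $\Re D>0$, and the extension to complex $k$ with $|\Im k|<3\mm_1$ follows by continuity after shrinking $\mm_1$. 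Part~(b) is then a direct Taylor expansion: substituting the expansions of $\hat\fQ_n$ (Proposition~\ref{propPBSAnppties}(b)), of $\hat\bD_n$ (Lemma~\ref{lemPDOhatSzeroppties}(b)), and of $F$ into the formula yields the claim for $\hat\bDe^{(n)}$, while the parallel identity $\hat\bDe^{(n)}/\hat\bD_n=\hat\fQ_n/D$ gives the second expansion. In~(c), $|\hat\bDe^{(n)}|\le 2a$ comes from $|\hat\fQ_n|\le 5a/4$ combined with $|\hat\bD_n/D|\le 1$, and the derivative bounds follow from Cauchy's inequality on a disk of radius $\sim\mm_1$ interior to $|\Im k|<3\mm_1$.

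For the remaining parts, the identity $\hat\bDe^{(n)}(k)/\hat\bD_n(k)=\hat\fQ_n(k)/D(k)$ does most of the work. In~(d), split into $|k|\le c_0$, where (b) gives $\Re\hat\bDe^{(n)}\ge c|k|^2\ge c c_0^2$ (after noting $\hat\bDe^{(n)}$ is real on reals and using continuity in the imaginary shift), and $|k|\ge c_0$, where Lemma~\ref{lemPDOhatSzeroppties}(e) forces $\Re\hat\bD_n$ to dominate so that $\Re(\hat\fQ_n\hat\bD_n/D)$ is bounded below by some $\rho(c)>0$. For~(e), on the principal cell $|\Re k_\nu|\le\pi$ both $|\hat\bD_n|$ and $|F|$ are uniformly bounded, so $|D|$ is bounded, giving $|\hat\fQ_n/D|\ge\gam_2$. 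For~(f), $\hat\bD_n^{-1}\hat\bDe^{(n)}=\hat\fQ_n/D$ is analytic by~(a); on the full strip, Lemma~\ref{lemPDOhatSzeroppties}(d) implies $|\hat\bD_n(p)|\ge\gam_1(|p_0|+\sum_\nu|\bp_\nu|^2)$ once $|p|\ge c$, so $|D|\ge|\hat\bD_n|/2$ there and $|\hat\fQ_n/D|\le\Gam_6/(1+|p_0|+\sum_\nu|\bp_\nu|^2)$, while the bounded region is controlled by~(b). For~(g), use the evenness of $\hat\bDe^{(n)}$ in $\bk_\nu$ from Remark~\ref{remPOCde}(b), which forces $\partial_{\bk_\nu}\hat\bDe^{(n)}$ to vanish at $\bk_\nu=0$; then $\bk_\nu\mapsto\partial_{\bk_\nu}\hat\bDe^{(n)}(k)/\bk_\nu$ extends analytically across $\bk_\nu=0$, and Cauchy's inequality on a disk of radius $1$ inside the strip, together with the derivative bound from~(c), gives $|\partial_{\bk_\nu}\hat\bDe^{(n)}|\le\Gam_7|\bk_\nu|$ for $|\bk_\nu|\le 1$; for $|\bk_\nu|>1$ the bound from~(c) already suffices.

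The principal difficulty is the uniform-in-$n$ lower bound on $|D(k)|$ throughout the complex strip $|\Im k|<3\mm_1$, simultaneously at the origin (where $\hat\bD_n$ vanishes but $\hat\fQ_n F\approx a_n$), in the intermediate region (where both summands are comparable and phase cancellation must be ruled out), and near the edge of the principal cell (where the imaginary shift must be small relative to $\Re[\hat\fQ_n u_n^{2\fq}]$). This is what fixes $\mm_1$ and underpins every subsequent part of the lemma.
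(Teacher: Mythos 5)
Your overall setup is the same as the paper's: you write $\hat\bDe^{(n)}=\hat\fQ_n\hat\bD_n/D$ with $D=\hat\bD_n+\hat\fQ_n\big(u_n^{2\fq}+\hat\bD_n R'\big)$, and this is exactly the paper's second representation in Remark~\ref{remPOCde}(a) with the $\ell=0$ term split off. The treatment of (a)--(c), (e), and (f) is essentially the paper's; in particular for (f) the lower bound $|D|\ge\tfrac{6}{10}|\hat\bD_n|$ (used to conclude $|\hat\fQ_n/D|\le\const/|\hat\bD_n|$) is exactly what comes out of the $|1+\hat\fQ_n\sum_\ell\cdots|\ge\tfrac{6}{10}$ step inside the proof of (a) and (c).

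There is a genuine gap in (d). The appeal to Lemma~\ref{lemPDOhatSzeroppties}(e) to ``force $\Re\hat\bD_n$ to dominate'' for $|k|\ge c_0$ is false. That lemma gives $\Re\hat\bD_n(p)\ge\gam_1(\veps_n^2|p_0|^2+\Sigma|\bp_\nu|^2)$ only for $|\bp|\ge c$ or $\veps_n|p_0|\ge c$. Take $\bk=0$ and $k_0$ bounded but $\ge c_0$, with $n$ large: then $\veps_n|k_0|\to 0$, the lemma does not apply, and in fact $\Re\hat\bD_n(k)=O(\veps_n^2)$ while $\hat\bD_n(k)\approx -ik_0$ is essentially purely imaginary. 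So $\Re\hat\bD_n$ is not bounded below uniformly in $n$ on $\{|k|\ge c_0\}$. The paper handles (d) by a different mechanism: it writes $\hat\bDe^{(n)}=\hat\fQ_n/\big(1+\hat\fQ_n\sum_\ell u_n^{2\fq}\hat\bD_n^{-1}\big)$, bounds the real and imaginary parts $R_n,I_n,R_d,I_d$ of numerator and denominator separately (using $R_n\ge a/2$, $R_d\ge 6/10$ from the proof of (a), the summability of $|u_n(k+\ell)|^{2\fq}$ over $\ell$, and smallness of $|I_n|$ on a thin strip), and then estimates $\Re\hat\bDe^{(n)}=(R_nR_d+I_nI_d)/(R_d^2+I_d^2)$ directly. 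That argument succeeds precisely because it does not require $\Re\hat\bD_n$ to be large: the positivity comes from the denominator $D$ being essentially $-ik_0+a_n$, which has positive real part after inversion. Relatedly, your small-$|k|$ bound $\Re\hat\bDe^{(n)}\ge c|k|^2$ from (b) is not correct as stated for complex $k$: the $-ik_0$ term contributes $\Im k_0$, which can be as negative as $-3\mm_1$, and this need not be dominated by $|k|^2$ unless $c^2\gg\mm_1$; you would have to go through a shift-from-the-real-axis estimate rather than reading it directly off (b).

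The argument for (g) is also not quite right. After noting that $\partial_{\bk_\nu}\hat\bDe^{(n)}/\bk_\nu$ extends analytically through $\bk_\nu=0$, you propose a Cauchy estimate on a disk of radius $1$. But the domain of analyticity in the single variable $\bk_\nu$ is the strip $|\Im\bk_\nu|<3\mm_1$ (periodically extended in the real direction), and $\mm_1$ is small, so no disk of radius $1$ fits inside it. The paper avoids this by integrating along a segment parallel to the real axis: $\partial_{\bk_\nu}\hat\bDe^{(n)}(k)=\bk_\nu\int_0^1\partial^2_{\bk_\nu}\hat\bDe^{(n)}(k(t))\,dt$ with $k(t)$ obtained by replacing $\bk_\nu$ by $t\bk_\nu$, which stays in the strip, and then applies the second-derivative bound from (c).
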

\begin{proof} 
We first prove part (b).
Using that
\begin{itemize}[leftmargin=*, topsep=2pt, itemsep=0pt, parsep=0pt]
\item 
  $u_n(k)^{2\fq}=1+O\big(|k|^2\big)$   by Lemma \ref{lemPBSunppties}.b
\item
  $\hat \fQ_n(k)=a_n+O\big(|k|^2\big)$   by Proposition \ref{propPBSAnppties}.b
\item
   $|\hat\bD_n(k)|  \le\const\big(|k_0|+ |\bk|^2\big)$    
                      by Lemma \ref{lemPDOhatSzeroppties}.c
\end{itemize}
and that, for $\ell\ne 0$,
\begin{itemize}[leftmargin=*, topsep=2pt, itemsep=0pt, parsep=0pt]
\item
  $|u_n(k+\ell)|^{2\fq}\le\const|k|^{2\fq}
    \prod_{\nu=0}^3\sfrac{1}{(|\ell|_\nu|+\pi)^{2\fq}}$
   by Lemma \ref{lemPBSunppties}.a
\item
   $| \hat\bD_n^{-1}(k+\ell)|
   \le\const$ by Lemma \ref{lemPDOhatSzeroppties}.d
\end{itemize}
 
\noindent
we obtain, by Remark \ref{remPOCde}.a and Lemma \ref{lemPDOhatSzeroppties}.b, 
\begin{align*}
\hat\bDe^{(n)}(k)
&=\frac{\hat \fQ_n(k)\hat\bD_n(k)}{
   \hat \fQ_n(k) u_n(k)^{2\fq} +\hat\bD_n(k)
    +O\big(|k|^3\big)}\displaybreak[0]\\
&=\hat\bD_n(k)\frac{a_n+O\big(|k|^2\big)}{
   a_n +\hat\bD_n(k)
     +O\big(|k|^2\big)}\displaybreak[0]\\
&=\hat\bD_n(k)
     \Big\{1-\sfrac{1}{a_n}\hat\bD_n(k)
              +O\big(|k|^2\big)\Big\}\displaybreak[0]\\
&=-ik_0+\half\veps_n^2k_0^2 
     +\half\! \smsum_{\nu,\nu'=1}^3 \!\!H_{\nu,\nu'}\bk_\nu\bk_{\nu'}
    -\sfrac{1}{a_n}\hat\bD_n(k)^2
    +O\big(|k|^3\big)\displaybreak[0]\\
&=-ik_0+\big(\sfrac{1}{a_n}+\sfrac{\veps_n^2}{2}\big)k_0^2
    +\half\! \smsum_{\nu,\nu'=1}^3 \!\!H_{\nu,\nu'}\bk_\nu\bk_{\nu'}
        +O\big(|k|^3\big)
\end{align*}
This also shows that, in a neighbourhood of the origin,
$\hat\bDe^{(n)}(k)$ is analytic and bounded in magnitude by $2a$.
For the second expansion,
\begin{align*}
\hat\bDe^{(n)}\!(k)\,\hat\bD_n^{-1}(k)
&=\frac{a_n+O\big(|k|^2\big)}{
   a_n +\hat\bD_n(k)
     +O\big(|k|^2\big)}
=1-\sfrac{1}{a_n}\hat\bD_n(k)
              +O\big(|k|^2\big) \\
&=1+\sfrac{ik_0}{a_n} +O\big(|k|^2\big)
\end{align*}

\Item (a), (c) 
We first prove the analyticity and the bound $\big|\hat\bDe^{(n)}(k)\big|\le 2a$
of part (c). We have already done so for a neighbourhood of the origin.
So it suffices to consider $|k|>c_0$, for some suitably small $c_0$. 
Since $\hat\bDe^{(n)}$ is periodic with respect 
to $2\pi\bbbz\times 2\pi\bbbz^3$, it suffices to consider $k$ in the set
\begin{equation*}
M(\mm_1)=\set{k\in\bbbc\times\bbbc^3}{
         |\Re k_\nu|\le\pi\text{ for all $0\le\nu\le3$},
         |\Im k|\le 3\mm_1,\ 
         |k|>c_0 }
\end{equation*}

Recall, from Remarks \ref{remPBSqnft}.d and \ref{remPDOftDn}.b, that $u_n(p)$ and  
$\hat\bD_n(p)$ are entire. If $\mm_1$ is small enough, then, by 
Lemma \ref{lemPDOhatSzeroppties}.d, the functions $k\mapsto \hat\bD_n(k+\ell)$,
$\ell\in\hat\cB_n$, $n\ge 1$, have no zeroes in $M(\mm_1)$. 
Hence each term in the infinite sum
\begin{equation*}
1+\hat \fQ_n(k)\sum_{\ell\in\hat\cB_n} u_n(k+\ell)^{2\fq}\ 
                  \hat\bD_n^{-1}(k+\ell)
\end{equation*}
is analytic and it suffices to prove that, for $k\in M(\mm_1)$, the sum 
converges uniformly
and $\big|1+\hat \fQ_n(k)\sum_{\ell\in\hat\cB_n} u_n(k+\ell)^{2\fq}\ 
                  \hat\bD_n^{-1}(k+\ell)\big|\ge\sfrac{6}{10}$.

By Proposition \ref{propPBSAnppties}.a, Remark \ref{remPBSqnft}.d and 
Lemma \ref{lemPDOhatSzeroppties}.d, there is an $l>0$ such that
\begin{align}\label{eqnPOCtailDe}
|\hat \fQ_n(k)|\sum_{\ell\in\hat\cB_n\atop |\ell|\ge l} \big|u_n(k+\ell)^{2\fq}\ 
                  \hat\bD_n^{-1}(k+\ell)\big|
&\le \sfrac{6}{5}a\sum_{\ell\in\hat\cB_n\atop|\ell|\ge l}\sfrac{1}{\gam_1\pi} \smprod_{\nu=0}^3\big(\sfrac{24}{|\ell_\nu|+\pi}\big)^{2\fq}
\le\sfrac{\const}{l}
\le\sfrac{2}{10} 
\end{align}
Hence we have uniform convergence and
\begin{equation*}
\Big|1+\hat \fQ_n(k)\sum_{\ell\in\hat\cB_n} u_n(k+\ell)^2\ 
                  \hat\bD_n^{-1}(k+\ell)\Big|
\ge \Big|1+\hat \fQ_n(k)\sum_{\ell\in\hat\cB_n\atop |\ell|<l} u_n(k+\ell)^2\ 
                  \hat\bD_n^{-1}(k+\ell)\Big|-\sfrac{2}{10}
\end{equation*}
For real $k$ and every $\ell\in\hat\cB_n$, the real part of
$
\hat \fQ_n(k) u_n(k+\ell)^{2\fq}\ 
                  \hat\bD_n^{-1}(k+\ell)
$
is nonnegative.
Consequently, if $\mm_1$ is small enough and $|\Im k|\le 3\mm_1$
\begin{equation}\label{eqnPOCsmallellDe}
\Re \hat \fQ_n(k)\sum_{\ell\in\hat\cB_n\atop |\ell|<l} u_n(k+\ell)^{2\fq}\ 
                  \hat\bD_n^{-1}(k+\ell)\ge-\sfrac{2}{10}
\end{equation}
since, for all $k\in M(\mm_1)$ and $\ell\in\hat\cB_n$ with $|\ell|<l$, 
\begin{itemize}[leftmargin=*, topsep=2pt, itemsep=0pt, parsep=0pt]
\item
$\hat\bD_n(k+\ell)$ is
bounded away from zero (by Lemma \ref{lemPDOhatSzeroppties}.d) and has 
bounded first derivative (by Lemma \ref{lemPDOhatSzeroppties}.c)
and 
\item 
$u_n(k+\ell)$ and $\hat \fQ_n(k)$ are bounded with bounded 
first derivatives (by analyticity, Remark \ref{remPBSqnft}.d and 
Proposition \ref{propPBSAnppties}.a). 
\end{itemize} 
So, by \eqref{eqnPOCsmallellDe},
\begin{equation*}
 \Big|1+\hat \fQ_n(k)\sum_{\ell\in\hat\cB_n\atop |\ell|<l} u_n(k+\ell)^{2\fq}\ 
                  \hat\bD_n^{-1}(k+\ell)\Big|\ge\sfrac{8}{10}
\end{equation*}
All of this is uniform in $n$ and $L$. 

Shrinking $\mm_1$ by a factor of $2$, the bounds 
$\big|\sfrac{\partial\hfill}{\partial k_\nu}\hat\bDe^{(n)}(k)\big|,
\big|\sfrac{\partial^2\hfill}{\partial k_\nu\partial k_{\nu'}}
                                       \hat\bDe^{(n)}(k)\big|\le \Gam_7$,
with $\Gam_7$ being the maximum of $4a$ divided by the original $3\mm_1$
(for first order derivatives) and $16a$ divided by the square of the
original $3\mm_1$ (for second order derivatives),
follow by the Cauchy integral formula.

\Item (d)
Denote the real and imaginary parts of the numerator and denominator by
\begin{alignat*}{3}
R_n&=\Re \hat \fQ_n(k) &
R_d&=\Re\Big[1+\hat \fQ_n(k)\sum_{\ell\in\hat\cB_n} u_n(k+\ell)^{2\fq}\ 
                  \hat\bD_n^{-1}(k+\ell)\Big]
\\
I_n&=\Im \hat \fQ_n(k) &
I_d&=\Im\Big[1+\hat \fQ_n(k)\sum_{\ell\in\hat\cB_n} u_n(k+\ell)^{2\fq}\ 
                  \hat\bD_n^{-1}(k+\ell)\Big]
\end{alignat*}
 By Proposition \ref{propPBSAnppties}.a and \eqref{eqnPOCtailDe}, 
\eqref{eqnPOCsmallellDe},
\begin{equation*}
R_n\ge\sfrac{a}{2}\qquad
R_d\ge\sfrac{6}{10}
\end{equation*}
By Proposition \ref{propPBSAnppties}.a, Remark \ref{remPBSqnft}.d 
and Lemma \ref{lemPDOhatSzeroppties}.d, 
\begin{equation*}
I_d\le \Big|\hat \fQ_n(k)\sum_{\ell\in\hat\cB_n} u_n(k+\ell)^{2\fq}\ 
                  \hat\bD_n^{-1}(k+\ell)\Big|
\le  a\tilde I_d(c)
\end{equation*}
where $\tilde I_d(c)=\sfrac{6}{5}a \sum_{\ell\in\hat\cB_n}
    \sfrac{1}{\gam_1\min\{c/2,c^2/4\}}
   \smprod_{\nu=0}^3 \sfrac{24^{2\fq}}{(|\ell_\nu|+\pi)^{2\fq}}
$.
When $k$ is real, $\hat \fQ_n(k)$ is real. By analyticity and 
Proposition \ref{propPBSAnppties}.a, the first order derivatives of $\hat \fQ_n(k)$
are bounded. So if $\mm_1$ is small enough,
$
|I_n|\le \sfrac{2}{10\tilde I_d(c)}
$.
So
\begin{align*}
\Re\hat\bDe^{(n)}(k)=\sfrac{R_nR_d+I_nI_d}{R_d^2+I_d^2}
   \ge\sfrac{a(3/10)-a(2/10)}{(1+a\tilde I_d(c))^2}
   =\sfrac{a}{10(1+a\tilde I_d(c))^2}
\end{align*}

\Item (e) By Remark \ref{remPBSqnft}.d, Proposition \ref{propPBSAnppties}.a and 
Lemma \ref{lemPDOhatSzeroppties}.d,
\begin{align*}
&\Big|1+\hat \fQ_n(k)\sum_{\ell\in\hat\cB_n} u_n(k+\ell)^{2\fq}\ 
                  \hat\bD_n^{-1}(k+\ell)\Big|\\
&\hskip0.8in\le \big|\hat \fQ_n(k)\ u_n(k)^{2\fq}\ \hat\bD_n^{-1}(k)\big|
 +1 +\sum_{\bZ\ne\ell\in\hat\cB_n} \big|\hat \fQ_n(k)\ u_n(k+\ell)^{2\fq}\ 
                  \hat\bD_n^{-1}(k+\ell)\big|\\
&\hskip0.8in\le \sfrac{6}{5}a\ \big(\sfrac{24}{\pi}\big)^{8\fq}\ \big|\hat\bD_n^{-1}(k)\big|
 +1 +\sfrac{6}{5}a\sum_{\bZ\ne\ell\in\hat\cB_n}\sfrac{1}{\gam_1\pi} \smprod_{\nu=0}^3\big(\sfrac{24}{|\ell_\nu|+\pi}\big)^{2\fq}
\end{align*}
and so, by Lemma \ref{lemPDOhatSzeroppties}.c, 
\begin{align*}
&\big|\hat\bD_n(k)\big|
\Big|1+\hat \fQ_n(k)\sum_{\ell\in\hat\cB_n} u_n(k+\ell)^{2\fq}\ 
                  \hat\bD_n^{-1}(k+\ell)\Big|\\
&\hskip0.75in\le \sfrac{6}{5}a\big(\sfrac{24}{\pi}\big)^{8\fq}
   + \Gam_5\ \big(\pi+1+3(\pi+1)^2\big)
\Big[1 +\sfrac{6}{5}a\smsum_{\bZ\ne\ell\in\hat\cB_n}\sfrac{1}{\gam_1\pi} \smprod_{\nu=0}^3\big(\sfrac{24}{|\ell_\nu|+\pi}\big)^{2\fq}\Big]
\end{align*}

\Item (f) 
By part (c) of this lemma, Remark \ref{remPDOftDn}.b and 
Lemma \ref{lemPDOhatSzeroppties}.d, it suffices to consider $p=k$ with
$|\Im k|\le 3\mm_1$ and $|k|<c_0<\pi$ where $c_0$ is sufficiently small. Now
\begin{align*}
&\hat\bD_n^{-1}(k)\hat\bDe^{(n)}(k) \\
&\hskip0.5in=\frac{\hat \fQ_n(k)}{
   \hat \fQ_n(k) u_n(k)^{2\fq} +\hat\bD_n(k)\big[
    1+\hat \fQ_n(k)\sum_{\bZ\ne\ell\in\hat\cB_n} u_n(k+\ell)^{2\fq}\ 
                  \hat\bD_n^{-1}(k+\ell)\big]}
\end{align*}
By Proposition \ref{propPBSAnppties}.a, Remark \ref{remPBSqnft}.d, 
Lemma \ref{lemPBSunppties}.b
and parts (c) and (d) of Lemma \ref{lemPDOhatSzeroppties}, we may choose $c_0>0$ 
so that
\begin{equation*}
\Big|\hat\bD_n(k)\big[
    1+\hat \fQ_n(k)\sum_{\bZ\ne\ell\in\hat\cB_n} u_n(k+\ell)^{2\fq}\ 
                  \hat\bD_n^{-1}(k+\ell)\big]\Big|
<\sfrac{a}{5}
\end{equation*}
and
\begin{equation*}
\big|\hat \fQ_n(k)\big|\,\big|u_n(k)^{2\fq}-1|<\sfrac{a}{5}
\end{equation*}
for all $|k|<c_0$ with $|\Im k|\le 3\mm_1$. That does it.

\Item (g) Fix any $1\le\nu\le3$. Observe that
\begin{equation*}
\sfrac{\partial\hat\bDe^{(n)}}{\partial k_\nu}(k)
=\bk_\nu\int_0^1 
\sfrac{\partial^2\hat\bDe^{(n)}}{\partial k_\nu^2}\big(k(t)\big)\ dt
\qquad\text{with}\quad
k(t)_{\nu'}=\begin{cases}
               k_{\nu'}& \text{if $\nu'\ne \nu$}\\
               tk_\nu, & \text{if $\nu=\nu'$}
              \end{cases}
\end{equation*}
since $\sfrac{\partial\hat\bDe^{(n)}}{\partial k_\nu}\big(k(0)\big)=0$,
by Remark \ref{remPOCde}.b. Now
apply the bound on the second derivative from part (c).
\end{proof}

We next consider the ``resolvents''
\begin{equation*}
R^{(n)}_\ze=\big(\ze\bbbone -\sfrac{a}{L^2} Q^*Q-\De^{(n)}\big)^{-1}
\end{equation*}
in preparation for studying $C_n$ and $\sqrt{C_n}$.
We shall use \cite[Lemma \lemBOfnbnd]{Bloch}, with
\begin{equation*}
\veps_T= 1\qquad
\veps_X=1 \qquad
L_T=  L^2 \qquad
L_X= L \qquad
\cL_T=L_\tp\qquad
\cL_X=L_\sp
\end{equation*}
and
\begin{equation*}
\cX_\fin=\cX_{0}^{(n)}\quad
\cZ_\fin=\bbbz\times\bbbz^3 \quad
\cX_\crs=\cX_{-1}^{(n+1)}\quad
\cZ_\crs=L^2\bbbz\times L\bbbz^3 \quad
\cB=\cB^+
\end{equation*}
We wish to apply \cite[Lemma \lemBOfnbnd]{Bloch}, with $A\rightarrow 
D_n=\sfrac{a}{L^2} Q^*Q+\De^{(n)}$ (scaled).
By \cite[Lemmas \lemBOfourier.b and \lemBOifkervar.b]{Bloch}, 
for each $\ell, \ell'\in\hat\cB=\hat\cB^+$,
\begin{equation}\label{eqnPOCftCinverse}
a_{\fk}(\ell,\ell')\rightarrow d_{n,\fk}(\ell,\ell')
      =\sfrac{a}{L^2} u_+(\fk+\ell)^\fq u_+(\fk+\ell')^\fq  
                        +\de_{\ell,\ell'}\hat\De^{(n)}(\fk+\ell)
\end{equation}
Here we are denoting
\begin{itemize}[leftmargin=*, topsep=2pt, itemsep=0pt, parsep=0pt]
\item 
momenta dual to the $L$--lattice 
by $\fk\in 
\big(\bbbr/\sfrac{2\pi}{L^2}\bbbz\big)
            \times \big(\bbbr^3/\sfrac{2\pi}{L}\bbbz^3\big)$ and
\item 
momenta dual to the unit lattice 
$
\bbbz\times \bbbz^3$ by
$k\in 
\big(\bbbr/2\pi\bbbz\big) \times \big(\bbbr^3/2\pi\bbbz^3\big)$
and decompose $k=\fk+\ell$ or $k=\fk+\ell'$ with $\fk$
in a fundamental cell for 
$
\big(\bbbr/\sfrac{2\pi}{L^2}\bbbz\big)
            \times \big(\bbbr^3/\sfrac{2\pi}{L}\bbbz^3\big)$
and $\ell,\ell'\in \big(\sfrac{2\pi}{L^2}\bbbz/2\pi\bbbz\big)
    \times\big(\sfrac{2\pi}{L}\bbbz^3/2\pi\bbbz^3\big)
    =\hat\cB^+$.
\end{itemize}
We also use $d_{n,\fk}$ to denote the $\hat\cB^+\times \hat\cB^+$ 
matrix $\big[d_{n,\fk}(\ell,\ell')\big]_{\ell,\ell'\in\hat\cB^+}$.
Observe, by Remark \ref{remPBSqnft}.d and Lemma \ref{lemPOCDenppties}.a, that
$d_{n,\fk}\big(\ell,\ell'\big)$ is 
analytic in the strip $|\Im \fk|<3\mm_1$.

Let $\big[v_\ell\big]_{\ell\in\hat\cB^+}$ and 
 $\big[w_\ell\big]_{\ell\in\hat\cB^+}$ be any vectors in 
$L^2(\hat\cB^+)$. Then, if $\fk=\bbbl^{-1}(k)$, 
\begin{align*}
&\big<\bar v,d_{n,\fk}w\big>\\
&=
\sfrac{a}{L^2}
\Big[\hskip-2pt\smsum_{\ell\in\hat\cB^+}\hskip-5pt
           u_+(\bbbl^{-1}(k)\!+\!\ell)^\fq \overline{v_\ell}\Big]
\Big[\hskip-2pt\smsum_{\ell\in\hat\cB^+}\hskip-5pt
              u_+(\bbbl^{-1}(k)\!+\!\ell)^\fq w_\ell\Big]
 + \hskip-4pt\sum_{\ell\in\hat\cB^+}\hskip-3pt
            \hat\De^{(n)}(\bbbl^{-1}(k)\!+\!\ell)\overline{v_\ell}w_\ell\\
&=\sum_{\ell,\ell'\in\hat\cB_1} 
      \overline{v_{\bbbl(\ell)}}\ d_{n,k}^{(s)}(\ell,\ell')\  w_{\bbbl(\ell')}
\end{align*}
where the ``scaled'' matrix
\begin{equation*}
d_{n,k}^{(s)}(\ell,\ell')
= \sfrac{a}{L^2}  u_+\big(\bbbl^{-1}(k\!+\!\ell)\big)^\fq
                      u_+\big(\bbbl^{-1}(k\!+\!\ell')\big)^\fq  
             +\de_{\ell,\ell'}\hat\De^{(n)}\big(\bbbl^{-1}(k\!+\!\ell)\big)
\end{equation*}

\begin{lemma}\label{lemPOCakmatrix}
There are constants $\mm_2,\la_0,\Gam_8>0$, 
such that, for all $L>\Gam_2$ and
$k\in\bbbc\times\bbbc^3$ with $|\Im k|< 3\mm_2$,
the following hold.

\begin{enumerate}[label=(\alph*), leftmargin=*]
\item 
Write $\fk=\bbbl^{-1}(k)$. For both the operator and 
(matrix) $\ell^1$--$\ell^\infty$ norms 
\begin{equation*}
\|d_{n,\fk}\|\le\Gam_8\qquad 
\|d_{n,k}^{(s)}\|\le\Gam_8
\end{equation*}

\item 
Let $\la\in\bbbc$ be within a distance $\sfrac{\la_0}{L^2}$
of the negative real axis. Then the resolvent
\begin{equation*}
\big\|{\big(\la\bbbone-d_{n,k}^{(s)}\big)}^{-1}\big\|\le \Gam_8 L^2
\end{equation*}
This is true for both the operator and (matrix) $\ell^1$--$\ell^\infty$ norms.
\end{enumerate}
\end{lemma}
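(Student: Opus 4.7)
To prove part (a), I will write $d_{n,\fk}$ as the sum of a diagonal and a rank-one piece,
$$d_{n,\fk}=D+\tfrac{a}{L^2}\,w\,w^T,$$
where $D_{\ell,\ell}=\hat\De^{(n)}(\fk+\ell)$ and $w_\ell=u_+(\fk+\ell)^\fq$. The diagonal contribution satisfies $\|D\|_\op=\|D\|_{\ell^1-\ell^\infty}=\sup_\ell|\hat\De^{(n)}(\fk+\ell)|\le 2a$ by Lemma \ref{lemPOCDenppties}.c. For the rank-one piece, Lemma \ref{lemPBSuplusppties}.a yields the pointwise estimate $|w_\ell|\le\prod_\nu\bigl(24/(L_\nu|\ell_\nu|+\pi)\bigr)^\fq$; since $\fq\ge 4$, this is summable over $\hat\cB^+$ uniformly in $L$, so $\|w\|_1$, $\|w\|_2$, $\|w\|_\infty$ are all bounded by a constant depending only on $\fq$. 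Hence $\bigl\|\tfrac{a}{L^2}ww^T\bigr\|_{\ell^1-\ell^\infty}\le\tfrac{a}{L^2}\|w\|_\infty\|w\|_1$ and $\bigl\|\tfrac{a}{L^2}ww^T\bigr\|_\op\le\tfrac{a}{L^2}\|w\|_2^2$ are uniformly bounded. Since $d_{n,k}^{(s)}$ is obtained from $d_{n,\bbbl^{-1}k}$ by the bijection $\bbbl\colon\hat\cB^+\to\hat\cB_1$ on indices, the same bounds transfer.

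For part (b), the plan is to apply the Sherman--Morrison formula. With $\alpha=a/L^2$ and $R=(\la\bbbone-D)^{-1}$ (diagonal, with entries $R_\ell=(\la-\hat\De^{(n)}(\bbbl^{-1}(k+\ell)))^{-1}$), the formula reads
$$(\la\bbbone-d_{n,k}^{(s)})^{-1}=R+\frac{\alpha\,(Rw)(Rw)^T}{1-\alpha\,w^T R w},$$
valid whenever $\la\bbbone-D$ is invertible and the denominator is nonzero. The proof then reduces to three bounds: (i) $|R_\ell|\le CL^2$ uniformly in $\ell$; (ii) $|1-\alpha w^T R w|\ge\tfrac12$; and (iii) $\|Rw\|_2^2\le CL^4$ together with $\|Rw\|_\infty,\|Rw\|_1\le CL^2$. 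Combining these gives $\|R\|\le CL^2$, a rank-one correction of operator norm $\le 2\alpha\|Rw\|_2^2\le CL^2$ and of $\ell^1$--$\ell^\infty$ norm $\le 2\alpha\|Rw\|_\infty\|Rw\|_1\le CL^2$, producing the desired $\Gam_8 L^2$ estimate in both norms.

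The heart of the argument is (i). I split by the size of $\fp:=\bbbl^{-1}(k+\ell)$. When $|\fp|\ge c_0$ for a small fixed $c_0$ (independent of $L,n$), Lemma \ref{lemPOCDenppties}.d yields $\Re\hat\De^{(n)}(\fp)\ge\rho(c_0)$, and since $\la$ is within $\la_0/L^2$ of $(-\infty,0]$ this forces $|\la-\hat\De^{(n)}(\fp)|\ge\rho(c_0)/2$, i.e.\ $|R_\ell|=O(1)$, provided $L$ is large and $\la_0$ small. When $|\fp|<c_0$ I use the Taylor expansion of Lemma \ref{lemPOCDenppties}.b combined with the discreteness of the lattice $\bbbl^{-1}(\hat\cB_1)$. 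Either the time component $\fp_0$ is nonzero, in which case $|\Re\fp_0|\ge 2\pi/L^2$ and the linear imaginary term $\Im\hat\De^{(n)}\approx-\fp_0$ together with $|\Im\la|\le\la_0/L^2$ forces $|\la-\hat\De^{(n)}(\fp)|\ge c/L^2$; or $\fp_0=0$ and at least one spatial component is nonzero of magnitude at least $2\pi/L$, in which case the positive-definite quadratic real part $\Re\hat\De^{(n)}\approx\tfrac12\sum H_{\mu\nu}\fp_\mu\fp_\nu$ gives $|\la-\hat\De^{(n)}|\ge c/L^2$. The imaginary perturbations $|\Im\fp|\le 3\mm_2/L$ are subleading once $\mm_2$ is small. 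For (ii), a direct computation shows that for real $k$ and $\la$ on $(-\infty,0]$ the real part of each summand of $\alpha w^TRw$ is nonpositive, so $\Re(1-\alpha w^T R w)\ge 1$; a standard perturbation argument extends this to $|1-\alpha w^T R w|\ge\tfrac12$ for all allowed $(k,\la)$ once $\mm_2,\la_0$ are small. Bound (iii) follows immediately from (i) and the summability of $|w_\ell|$ and $|w_\ell|^2$ established in part (a).

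The chief obstacle is bound (i): the set of $\ell$ producing $|\hat\De^{(n)}_\ell|$ of order $L^{-2}$ is not a bounded single index but comprises many indices near zero momentum, and the uniform $CL^2$ bound relies on matching the parabolic structure of $\hat\De^{(n)}$ (linear in the time direction, quadratic in the spatial directions) to the asymmetric lattice scales $L^{-2}$ and $L^{-1}$. A secondary, genuinely singular case arises at $k=0$, $\la=0$, where $R_0$ is undefined and Sherman--Morrison cannot be applied directly. I would handle this by performing the Schur complement on the one-dimensional block corresponding to $\ell=0$, noting that the rank-one piece contributes exactly $\alpha w_0^2=a/L^2$ to the $(0,0)$ Schur entry, which lifts the otherwise-vanishing diagonal by the required $O(L^{-2})$ amount, and then extending the $CL^2$ bound to the singular point by continuity from the non-degenerate set.
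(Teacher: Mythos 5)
Your part (a) is fine and follows the paper's reasoning closely. The issue is in part (b), where the claimed uniform bound (i), $|R_\ell|\le CL^2$ with $R_\ell=(\la-\hat\De^{(n)}(\bbbl^{-1}(k+\ell)))^{-1}$, is simply false, and the patch you propose does not close the gap. The failure is not confined to the single point $k=0,\la=0$: for $\ell=0$ and $|k|$ small, $\hat\De^{(n)}(\bbbl^{-1}(k))$ traces out a neighbourhood of the origin in the region $\Re z\ge 0$, and $\la$ is free to lie anywhere within distance $\la_0/L^2$ of $\bbbr_-$. Thus $|\la-\hat\De^{(n)}(\bbbl^{-1}(k))|$ can be made arbitrarily small on an \emph{open} set of admissible $(\la,k)$, so $R_0$ is unbounded on that open set. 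Your Schur-complement-plus-continuity fix is designed for an isolated singular point and cannot deliver a bound on an open set; the inverse does not extend by continuity there, because you have not shown it exists.

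The cancellation you are missing is structural. Writing the $(0,0)$ entry of $\la\bbbone - d^{(s)}_{n,k}$ explicitly, it is $\la-\hat\De^{(n)}(\bbbl^{-1}(k))-\tfrac{a}{L^2}u_+(\bbbl^{-1}(k))^{2\fq}$. When $\ell=0$ and $|k|$ is small, it is precisely the rank-one term $\tfrac{a}{L^2}u_+^{2\fq}\approx\tfrac{a}{L^2}$ that keeps this entry bounded away from zero, regardless of how small $\la-\hat\De^{(n)}$ is. Your Sherman--Morrison split $d^{(s)}=D+\alpha ww^T$ separates this saving term into the rank-one correction, and then bounding $\|R\|$, $\|Rw\|$ and $|1-\alpha w^TRw|$ individually cannot recover it: $\|R\|$ alone is unbounded while the compensating blow-up of $1-\alpha w^TRw$ is exactly what makes the full expression finite. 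The paper avoids this by splitting into two regimes. For $|k|\le c_0$ it treats the \emph{full} diagonal of $d^{(s)}_{n,k}$ (including the $\tfrac{a}{L^2}u_+^{2\fq}$ contribution at $\ell=0$) as the main term and the off-diagonal part of the rank-one piece as a Neumann perturbation of size $O(|k|/L^2)$. For $|k|\ge c_0$ every diagonal entry of $D$ alone, including $\ell=0$, is already of size $\gtrsim 1/L^2$ (via Lemmas~\ref{lemPOCDenppties}.e and~\ref{lemPDOhatSzeroppties}.a); there Sherman--Morrison does apply, but with $P$ built from $\Re u_+^{\fq}$ and $D$ evaluated at $\Re k$ so that the Sherman--Morrison scalar $\ka$ has $\Re\ka\le 0$, forcing $|1-\ka|\ge1$. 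You would need to carry out the Schur complement argument as the global strategy for the small-$|k|$ regime (not as a patch at a point), and you would need a separate argument with the right sign structure for the large-$|k|$ regime.
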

\begin{proof}  
Since $d_{n,\fk+p}(\ell,\ell')=d_{n,\fk}(\ell+p,\ell'+p)$, 
for all $p,\ell,\ell'\in\hat\cB^+$, we may always assume that
$|\Re \fk_0|\le\sfrac{\pi}{L^2}$ and $|\Re\fk_\nu|\le\sfrac{\pi}{L}$
for $1\le\nu\le3$.

\Item (a)
 By Lemmas \ref{lemPBSuplusppties}.a and \ref{lemPOCDenppties}.c, the matrices 
\begin{equation*}
\big[\sfrac{a}{L^2} u_+(\fk+\ell)^\fq 
       u_+(\fk+\ell')^\fq\big]_{\ell,\ell'\in\hat\cB^+}\quad\text{and}\quad
\big[\sfrac{a}{L^2}  u_+\big(\bbbl^{-1}(k\!+\!\ell)\big)^\fq
                      u_+\big(\bbbl^{-1}(k\!+\!\ell')\big)^\fq  
             \big]_{\ell,\ell'\in\hat\cB_1}
\end{equation*}
have finite $L^1$--$L^\infty$ norms and the matrix elements of 
\begin{equation*}
\big[\de_{\ell,\ell'}\hat\De^{(n)}(\fk+\ell)\big]_{\ell,\ell'\in\hat\cB^+}
\quad\text{and}\quad
\big[\de_{\ell,\ell'}
    \hat\De^{(n)}\big(\bbbl^{-1}(k\!+\!\ell)\big)\big]_{\ell,\ell'\in\hat\cB^1}
\end{equation*}
are all bounded.  

\Item (b) \emph{Case $|k|\le c_0$, with $c_0$ being chosen later in 
this case:}\ \ \ 
We first consider those diagonal matrix elements of $d_{n,k}^{(s)}(\ell,\ell')$
having $k,\ell$ such that $|\bbbl^{-1}(k+\ell)|<\tilde c_0$, 
where $\tilde c_0>0$ 
is a small number to be chosen shortly. This, and all other constants chosen
in the course of this argument are to be independent of $L$. Then, by 
Lemma \ref{lemPOCDenppties}.b, we have the following.
\begin{itemize}[leftmargin=*, topsep=2pt, itemsep=0pt, parsep=0pt]
\item
If at least one of $\ell_\nu$, $1\le\nu\le 3$ is nonzero, then 
$\Re \hat\De^{(n)}\big(\bbbl^{-1}(k\!+\!\ell)\big)\ge  \sfrac{c_1}{L^2}$,
provided $\mm_2$ and $\tilde c_0$ are chosen small enough. Here $c_1$ and the constraints
on $\mm_2$ and $\tilde c_0$ depend only on the largest and smallest eigenvalues
of $\big[H_{\nu,\nu'}\big]$, $a_n$ and the $O(|k|^3)$. 
To see this, denote by $\bk$ and $\el$ the spatial parts of $k$ and $\ell$
and observe that
   \begin{itemize}[leftmargin=*, topsep=2pt, itemsep=0pt, parsep=0pt]
     \item    $\big|\Re\bbbl^{-1}(\bk+\el)\big|
              \ge\sfrac{1}{L}\max\big\{\pi,\half|\el|\big\}$,
     \item $\big|\Im\bbbl^{-1}(\bk+\el)\big|=\sfrac{1}{L}\big|\Im\bk\big|
                   \le \sfrac{3\mm_2}{L}$ and 
     \item $\big|\Im\bbbl^{-1}(k+\ell)_0\big|
       =\sfrac{1}{L^2}\big|\Im k_0\big| \le \sfrac{3\mm_2}{L^2}$.
   \end{itemize}
In controlling the
contribution from $O\Big(\big|\bbbl^{-1}(k\!+\!\ell)\big|^3\Big)$ when 
$\sfrac{|\ell_0|}{L^2}$ is larger than $\sfrac{|\el|}{L}$, 
we have to use that, in this case, the real part of 
$\big(\sfrac{1}{a_n}+\sfrac{\veps_n^2}{2}\big)
\big(\bbbl^{-1}(k\!+\!\ell)\big)_0^2$
is at least a strictly positive constant times $\sfrac{\ell_0^2}{L^4}$.

\item 
If $\ell_\nu=0$ for all $1\le\nu\le 3$ but $\ell_0\ne 0$,
then $-\sgn\,\ell_0\ \Im \hat\De^{(n)}\big(\bbbl^{-1}(k\!+\!\ell)\big)
                  \ge  \sfrac{\pi}{2L^2}$,
provided $\mm_2$ and $\tilde c_0$ is chosen small enough. To see this observe that
\begin{itemize}[leftmargin=*, topsep=2pt, itemsep=0pt, parsep=0pt]
     \item   $\big|\Re\bbbl^{-1}(\bk+\el)\big|
               =\sfrac{1}{L}|\Re\bk|
              \le\sqrt{3}\sfrac{\pi}{L}$,
     \item $\big|\Im\bbbl^{-1}(\bk+\el)\big|=\sfrac{1}{L}\big|\Im\bk\big|
                   \le \sfrac{3\mm_2}{L}$,
     \item $\sgn\,\ell_0\ \Re\bbbl^{-1}(k+\ell)_0
           \ge\sfrac{1}{L^2}\max\big\{\pi,\half|\ell_0|\big\}$,
    \item $\big|\Re\bbbl^{-1}(k+\ell)_0\big|
      \le\sfrac{3}{2L^2}|\ell_0|$ and
    \item $\big|\Im\bbbl^{-1}(k+\ell)_0\big|
       =\sfrac{1}{L^2}\big|\Im k_0\big| \le \sfrac{3\mm_2}{L^2}$.
\end{itemize}
\item 
If $\ell\ne 0$, then, by parts (a) and (b) of Lemma \ref{lemPBSuplusppties},
\begin{equation*}
\sfrac{a}{L^2} \big|u_+\big(\bbbl^{-1}(k\!+\!\ell)\big)^{2\fq}\big|
\le \sfrac{a}{L^2}|k|\Big[\smprod\limits_{\nu=0}^3 
                               \sfrac{24}{|\ell_\nu|+\pi}\Big]^{2q}
\end{equation*}

\item
 If $\ell=0$ then
$\Re\big\{\sfrac{a}{L^2} u_+\big(\bbbl^{-1}(k)\big)^{2q}
                        + \hat\De^{(n)}\big(\bbbl^{-1}(k)\big)\big\}
   \ge  \sfrac{a}{2L^2}$,
provided $\mm_2$ and $\tilde c_0$ are chosen small enough. To see this 
observe that
\begin{itemize}[leftmargin=*, topsep=2pt, itemsep=0pt, parsep=0pt]
\item $\big|u_+\big(\bbbl^{-1}(k)\big)-1\big|\le 4^3|k|^2$,
                 by Lemma \ref{lemPBSuplusppties}.c and 
                 $\big|u_+\big(\bbbl^{-1}(k)\big)\big|
                         \le \big(\sfrac{24}{\pi}\big)^4$
                  by Lemma \ref{lemPBSuplusppties}.a.
\item $\Re\hat\De^{(n)}\big(\bbbl^{-1}(k)\big)\ge 
                 -c_2\big(\sfrac{\mm_2}{L^2}+\tilde c_0\big(\sfrac{|k|}{L}\big)^2\big)$
             where $c_2$ depends only on $a_n$, the largest eigenvalue
             of $\big[H_{\nu,\nu'}\big]$,  and the $O(|k|^3)$.
\end{itemize}
\end{itemize}
Note that we have now fixed $\tilde c_0$.
Now we consider the remaining matrix elements.
\begin{itemize}[leftmargin=*, topsep=2pt, itemsep=0pt, parsep=0pt]
\item
 For the remaining diagonal matrix elements we have 
$|\bbbl^{-1}(k+\ell)|\ge \tilde c_0$ and then, by Lemma \ref{lemPOCDenppties}.d, 
$\Re\hat\De^{(n)}\big(\bbbl^{-1}(k\!+\!\ell)\big)\ge \rho(\tilde c_0)$

\item
 Finally, the off--diagonal matrix elements of 
$d_{n,k}^{(s)}$ obey, by parts (a) and (b) of Lemma \ref{lemPBSuplusppties},
\begin{equation*}
\sfrac{a}{L^2} \big|u_+\big(\bbbl^{-1}(k\!+\!\ell)\big)^\fq
               u_+\big(\bbbl^{-1}(k\!+\!\ell')\big)^\fq\big|
\le \sfrac{a}{L^2}|k|\Big[\smprod\limits_{\nu=0}^3 
                               \sfrac{24}{|\ell_\nu|+\pi}\Big]^q
        \Big[\smprod\limits_{\nu=0}^3 
                               \sfrac{24}{|\ell'_\nu|+\pi}\Big]^q
\end{equation*}
Hence the  off--diagonal part of $d_{n,k}^{(s)}$  has 
Hilbert-Schmidt,  matrix and, as $\fq>1$, $L^1$--$L^\infty$ norms 
all bounded by a universal constant times $\sfrac{a}{L^2}|k|$.
\end{itemize}

\noindent Thus $\la\bbbone-d_{n,k}^{(s)}$ has diagonal matrix elements
of magnitude at least 
\begin{equation*}
\min\big\{\sfrac{c_1}{L^2},\sfrac{\pi}{2L^2},\sfrac{a}{2L^2},
                                             \rho(\tilde c_0)\big\}
-\sfrac{a}{L^2}|k|\Big[\smprod\limits_{\nu=0}^3 
                               \sfrac{24}{|\ell_\nu|+\pi}\Big]^{2q}
-\sfrac{\la_0}{L^2}
\end{equation*}
and off diagonal part with  $L^1$--$L^\infty$ norm bounded 
by a universal constant times $\sfrac{a}{L^2}|k|$. It now suffices to 
choose $c_0$ and $\la_0$ small enough that every diagonal matrix element
has magnitude at least 
$\sfrac{1}{2L^2}\min\big\{c_1,\sfrac{\pi}{2},\sfrac{a}{2},
              \rho(\tilde c_0)\big\}$
and the off diagonal part has  $L^1$--$L^\infty$ norm bounded
by $\sfrac{1}{4L^2}\min\big\{c_1,\sfrac{\pi}{2},\sfrac{a}{2},
 \rho(\tilde c_0)\big\}$
and then do a Neumann expansion.

\Item (b) \emph{Case $|k|\ge c_0$, with the $c_0$ just chosen:}\ \ \ 
We may assume that $|\Re k_\nu|\le\pi$ for each $0\le\nu\le3$.
\begin{itemize}[leftmargin=*, topsep=2pt, itemsep=0pt, parsep=0pt]
\item
 If $|\bbbl^{-1}(k+\ell)|<\tilde c_0$ and $\ell_\nu\ne 0$ 
for at least one $1\le\nu\le 3$ or if $|\bbbl^{-1}(k+\ell)|\ge \tilde c_0$, 
then 
$\Re \hat\De^{(n)}\big(\bbbl^{-1}(k+\ell)\big)\ge     
                   \min\big\{\rho(\tilde c_0),\sfrac{c_1}{L^2}\big\}$.
The proof of this given in the case $|k|\le c_0$ applies now too.

\item
 If $|\bbbl^{-1}(k+\ell)|<\tilde c_0$,
$\ell_\nu\!=\!0$ for all $\nu\!\ge\! 1$
then 
\begin{align*}
   \Re \hat\De^{(n)}\big(\bbbl^{-1}(\Re k+\ell)\big)
                  &\ge 0\cr
\big|\hat\De^{(n)}\big(\bbbl^{-1}(k+\ell)\big)
 -\hat\De^{(n)}\big(\bbbl^{-1}(\Re k+\ell)\big)\big|
&\le 4\pi\Gam_7\sfrac{1}{L^2}|\Im k|
\end{align*}
The first bound follows immediately 
from Remark \ref{remPOCde}.c. The second bound follows from 
Lemma \ref{lemPOCDenppties}.c  (for the $\sfrac{1}{L^2}\Im k_0
=\Im(\bbbl^{-1}(k+\ell)_0$ 
contribution) and Lemma \ref{lemPOCDenppties}.g (for the $\sfrac{1}{L}\Im\bk_\nu$
contribution, with $1\le\nu\le3$, --- note that on the line segment from
$\bbbl^{-1}(\Re k+\ell)$ to $\bbbl^{-1}(k+\ell)$, 
$\big|\sfrac{\partial\hat\De^{(n)}}{\partial k_\nu}\big|$
is bounded by $\Gam_7|\bbbl^{-1}(k+\ell)_\nu|
=\sfrac{\Gam_7}{L}|\bk_\nu|\le\Gam_7\sfrac{\pi+3\mm_2}{L} $). Furthermore
\begin{alignat*}{3}
  -\sgn\,\ell_0\ \Im \hat\De^{(n)}\big(\bbbl^{-1}(\Re k+\ell)\big)
                  &\ge \sfrac{\pi}{2L^2} &\qquad
                  &\text{if $\ell_0\ne 0$}\\ 
\big|\hat\De^{(n)}\big(\bbbl^{-1}(\Re k)\big)\big|
    &\ge\gam_1\gam_2 
\min\big\{\sfrac{c_0}{\sqrt{2}}\ ,\ \sfrac{c_0^2}{2}\big\}\ \sfrac{1}{L^2} &
                  &\text{if $\ell_0= 0$}\\
\end{alignat*}
In the case $\ell_0\ne 0$, the proof of the bound given in the 
case $|k|\le c_0$ applies now too. (Just apply it to $\Re k$.) 
The bound for the case $\ell_0=0$ follows from Lemma \ref{lemPOCDenppties}.e and
Lemma \ref{lemPDOhatSzeroppties}.a.

\item For all $\ell$, by parts (a) and (e) of Lemma \ref{lemPBSuplusppties},
\begin{align*}
 \big|u_+\big(\bbbl^{-1}(k\!+\!\ell)\big)^{\fq}\big|
&\le \Big[\smprod\limits_{\nu=0}^3 \sfrac{24}{|\ell_\nu|+\pi}\Big]^{\fq}
\\
 \big|\Im u_+\big(\bbbl^{-1}(k\!+\!\ell)\big)^{\fq}\big|
&\le 16\fq\ |\Im k|\ \Big[\smprod\limits_{\nu=0}^3 \sfrac{24}{|\ell_\nu|+\pi}\Big]^{\fq}
\end{align*}
\end{itemize}
We split $\la\bbbone - d_{n,k}^{(s)}$ into three pieces
\begin{equation*}
\big(\la\bbbone-d_{n,k}^{(s)}\big)(\ell,\ell')
=D(\ell,\ell') -P(\ell,\ell') + I(\ell,\ell')
\end{equation*}
where
\begin{alignat*}{3}
D(\ell,\ell')&= \de_{\ell,\ell'} d_\ell & &\text{with }
d_\ell= \la_--
\begin{cases}
  \hat\De^{(n)}\big(\bbbl^{-1}(k\!+\!\ell)\big)
  &\text{$\ell_\nu\ne 0$ for some $\nu\ge 1$}\\
  \hat\De^{(n)}\big(\bbbl^{-1}(k\!+\!\ell)\big)
  &\text{$|\bbbl^{-1}(k+\ell)|\ge \tilde c_0$} \\
\hat\De^{(n)}\big(\bbbl^{-1}(\Re k\!+\!\ell)\big)
  &\text{otherwise}
\end{cases}
\\
& & &\phantom{with }\la_-=\min\{\Re\la,0\}+i\Im\la
\\
P(\ell,\ell')&= \sfrac{a}{L^2}v(\ell)v(\ell')\  & &\text{with }
v(\ell)= \Re u_+\big(\bbbl^{-1}(k\!+\!\ell)\big)^\fq\cr
\end{alignat*}
We have chosen $\la_-$ so that $\Re\la_-\le 0$ and 
$|\la-\la_-|\le\sfrac{\la_0}{L^2}$.
As $P$ is a rank one operator
\begin{equation*}
(D-P)^{-1}(\ell,\ell')
=\sfrac{1}{d_\ell} \de_{\ell,\ell'}  +\sfrac{1}{1-\ka}\sfrac{a}{L^2}\sfrac{v(\ell)}{d_\ell}\sfrac{v(\ell')}{d_{\ell'}}
\end{equation*}
with
\begin{equation*}
\ka=\sum_{\ell''}\sfrac{a}{L^2}\sfrac{1}{d_{\ell''}}v(\ell'')^2
\end{equation*}
We have shown above that 
\begin{align*}
|d_\ell|&\ge \sfrac{\la_1}{L^2} \\
\Re d_\ell&\le \begin{cases} 
      0 & \text{if $|\bbbl^{-1}(k+\ell)|<\tilde c_0$,
$\ell_\nu\!=\!0$ for all $\nu\!\ge\! 1$}\\      
      -\sfrac{\la_1}{L^2} & \text{otherwise}
              \end{cases} \\
| v(\ell)|&\le 
   \Big[\smprod\limits_{\nu=0}^3 \sfrac{24}{|\ell_\nu|+\pi}\Big]^{q}\\
|I(\ell,\ell')|&\le \big(\sfrac{\la_0}{L^2}
     + 4\pi\Gam_7\sfrac{1}{L^2}|\Im k|\big)\ \de_{\ell,\ell'}
+ 32\fq \ |\Im k|\ \sfrac{a}{L^2}
    \Big[\smprod\limits_{\nu=0}^3 \sfrac{24}{|\ell_\nu|+\pi}\Big]^{\fq}
    \Big[\smprod\limits_{\nu=0}^3 \sfrac{24}{|\ell'_\nu|+\pi}\Big]^{\fq}
\end{align*}
provided we choose 
$0<\la_1\le 
   \min\big\{c_1,\rho(\tilde c_0),\sfrac{\pi}{2},
                \gam_1\gam_2\sfrac{c_0}{\sqrt{2}},
                \gam_1\gam_2\sfrac{c_0^2}{2}\big\}
$.

Since $\Re z\le 0\Rightarrow \Re\sfrac{1}{z}\le 0$, and $v(\ell)\in\bbbr$
for all $\ell$, we have that $\Re\ka\le 0$ so that 
$\big|\sfrac{1}{1-\ka}\big|\le 1$ and
\begin{equation*}
\|(D-P)^{-1}\|_{\ell^1-\ell^\infty}\le\sfrac{L^2}{\la_2}\qquad
\|I\|_{\ell^1-\ell^\infty}\le\sfrac{\Gam'_5}{L^2}\big(\la_0+|\Im k|\big)
\end{equation*}
with the constant $\la_2>0$ depending only on $a$ and $\la_1$
and the constant $\Gam'_5$ depending only on $\Gam_7$, $a$ and $\fq$.
It now suffices to choose $\la_0$ and $\mm_2$ smaller than 
$\sfrac{\la_2}{12\Gam'_5}$ and use a Neumann expansion to give
\begin{equation*}
\big\|\big(\la\bbbone-d_{n,k}^{(s)}\big)^{-1}\big\|_{\ell^1-\ell^\infty}
   \le 2 \sfrac{L^2}{\la_2}
\end{equation*}
\end{proof} 

\pagebreak[2]
\begin{proposition}\label{propPOCsquareroot}
Let $\mm_2,\la_0,\Gam_8$ be as in Lemma \ref{lemPOCakmatrix} and use $\bbbr_-$
to denote the negative real axis in $\bbbc$. Set
\bigskip
\begin{align*}
\cO_C &=\set{z\in\bbbc}{\dist(z,\bbbr_-)>\sfrac{\la_0}{2L^2},\ |z|<\Gam_8+1}
\hskip0.3in\smash{\lower1.2in\hbox{\includegraphics{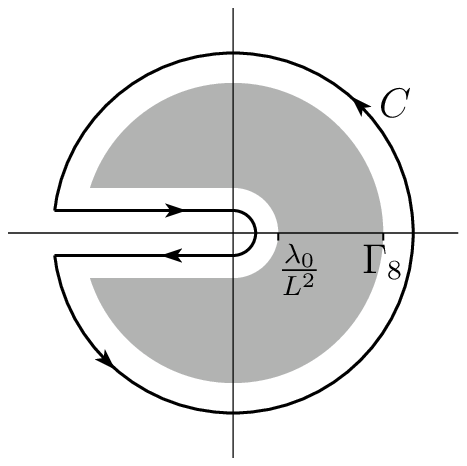}}}\\
\cO&=\set{z\in\bbbc}{\dist(z,\bbbr_-)>\sfrac{\la_0}{3L^2},\ |z|<\Gam_8+2}
\end{align*}
and let
\begin{itemize}[leftmargin=*, topsep=2pt, itemsep=0pt, parsep=0pt]
\item $C=\partial\cO_C$, oriented counterclockwise
\item $f:\cO\rightarrow\bbbc$ be analytic.
\end{itemize}
\medskip
\noindent Then $f\big((\sfrac{a}{L^2}Q^*Q+\De^{(n)})^{(s)}\big)$, defined 
by \cite[(\eqnBOfofA) and Lemma \lemPoPscaling.a]{Bloch}, exists and
there is a constant\footnote{Recall Convention \ref{convPOconstants}.}
$\Gam_9$ such that
\begin{equation*}
\|f\big((\sfrac{a}{L^2}Q^*Q+\De^{(n)})^{(s)}\big)\|_{\mm_2}
   \le \Gam_9 L^7 \sup_{\ze\in C}|f(\ze)|
\end{equation*}

\end{proposition}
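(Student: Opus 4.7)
The plan is to apply the Cauchy functional calculus
\begin{equation*}
f\bigl(A^{(s)}\bigr) = \frac{1}{2\pi i}\oint_C f(\zeta)\,\bigl(\zeta\bbbone - A^{(s)}\bigr)^{-1}\,d\zeta,
\qquad A = \sfrac{a}{L^2}Q^*Q+\De^{(n)},
\end{equation*}
which, by the Bloch decomposition of \eqref{eqnPOCftCinverse}, reduces at each $k$ in a unit cell to an $L^5\times L^5$ matrix contour integral
$\frac{1}{2\pi i}\oint_C f(\zeta)\bigl(\zeta\bbbone - d_{n,k}^{(s)}\bigr)^{-1}\,d\zeta$.

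First I would verify that $C=\partial\cO_C$ lies in the resolvent set of every $d_{n,k}^{(s)}$ with $|\Im k| < 3\mm_2$, and would bound the resolvent uniformly on $C$. The contour decomposes into two horizontal segments and a small half-circle at distance $\la_0/(2L^2)$ from $\bbbr_-$, on which Lemma \ref{lemPOCakmatrix}.b applies (since $\la_0/(2L^2) < \la_0/L^2$) and gives resolvent norm at most $\Gam_8 L^2$; together with an arc of $|\zeta|=\Gam_8+1$, on which Lemma \ref{lemPOCakmatrix}.a combined with a Neumann series gives resolvent norm at most $2$. Since the length of $C$ is bounded independently of $L$, the Cauchy integral yields the fibrewise matrix bound
\begin{equation*}
\bigl\|f(d_{n,k}^{(s)})\bigr\|_{\ell^1-\ell^\infty}\ \le\ c\,L^2\,\sup_{\zeta\in C}|f(\zeta)|
\end{equation*}
uniformly in $k$ in the analytic strip.

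Next I would convert this fibrewise bound into the weighted kernel bound $\|\cdot\|_{\mm_2}$ via \cite[Lemma \lemBOfnbnd]{Bloch}, the tool singled out by the paper just before Lemma \ref{lemPOCakmatrix}. The required analyticity of $k \mapsto f(d_{n,k}^{(s)})$ on $|\Im k| < 3\mm_2$ follows from three facts: (i) the entrywise analyticity of $d_{n,k}^{(s)}(\ell,\ell')$, by Remark \ref{remPBSqnft}.d and Lemma \ref{lemPOCDenppties}.a; (ii) the uniform invertibility on $C$ just established; and (iii) the fact that $\zeta$-integration along the fixed contour $C$ commutes with analytic dependence on $k$. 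The Bloch lemma then translates the fibrewise bound into the weighted kernel bound, picking up an additional factor of $|\hat\cB^+| = L^5$ from summation over the $L^5$ matrix indices per fibre. The product $L^2 \cdot L^5 = L^7$ produces the stated power.

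The principal obstacle is the bookkeeping that links the two scales: the weight $\mm_2$ governs kernel decay on $\cH_0^{(n)}$, and one must verify that the analyticity of the Bloch matrix $d_{n,k}^{(s)}$ in the variable $k = \bbbl\fk$ on $|\Im k| < 3\mm_2$ is precisely what \cite[Lemma \lemBOfnbnd]{Bloch} requires to yield kernel decay at rate $\mm_2$ after the scaling $(\cdot)^{(s)}$. One must also confirm that all accumulating constants --- the length of $C$, the factor $\Gam_8$, and the Bloch-lemma constants --- combine into a single $L$- and $n$-independent constant $\Gam_9$, as demanded by Convention \ref{convPOconstants}.
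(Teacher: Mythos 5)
Your proposal matches the paper's proof in structure and substance: both apply \cite[Lemma \lemBOfnbnd]{Bloch} with the Bloch fibre $\hat a_k = d_{n,k}^{(s)}$, locate the spectrum via Lemma \ref{lemPOCakmatrix}, bound the resolvent on the contour $C$, and obtain the $L^7$ as $L^5$ (from $\#\hat\cB_1$, i.e.\ the number of Bloch indices per fibre after scaling) times $L^2$ (from the resolvent bound of Lemma \ref{lemPOCakmatrix}.b). Your extra care in splitting the contour --- invoking Lemma \ref{lemPOCakmatrix}.a plus a Neumann series on the outer arc $|\zeta|=\Gam_8+1$, where Lemma \ref{lemPOCakmatrix}.b does not literally apply --- is a reasonable tidying of the paper's uniform $\Gam_8 L^2$ estimate, and changes nothing substantive.
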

\begin{proof} 
Apply \cite[Lemma \lemBOfnbnd]{Bloch} with $\hat a_k(\ell,\ell')
= d_{n,k}^{(s)}(\ell,\ell')$ and
\begin{equation*}
\cX_\fin=\cX_1^{(n)}\quad
\cZ_\fin=\veps_1^2\bbbz\times \veps_1\bbbz^3 \quad
\cX_\crs=\cX_0^{(n+1)}\quad
\cZ_\crs=\bbbz\times \bbbz^3 \quad
\cB=\cB_1 
\end{equation*}
and $m=3\mm_2$, $m'=2\mm_2$ and $m''=\mm_2$.
Then $\vol_c=1$, $\#\hat\cB_1=L^5$.
Observe in particular that, by Lemma \ref{lemPOCakmatrix},
the spectrum of $d_{n,k}^{(s)}$ is contained in
\begin{equation*}
\set{z\in\bbbc}{\dist(z,\bbbr_-)>\sfrac{\la_0}{L^2},\ |z|\le\Gam_8}
\end{equation*}
which is the shaded region in the figure above.
So the lemma gives
\begin{align*}
&\big\|f\big((\sfrac{a}{L^2}Q^*Q+\De^{(n)})^{(s)}\big)\big\|_{m''}\\
   &\hskip0.25in\le\sfrac{C_{m'-m''}}{2\pi\ \vol_c}
       |C|\ \sup_{\ze\in C}|f(\ze)|
      \sup_{|\Im\rk|=m'\atop \ze\in C}\sum_{\ell,\ell'\in\hat\cB_1}
      \big|(\ze\bbbone-\hat d_{n,\rk}^{(s)})^{-1}(\ell,\ell')|\\
   &\hskip0.25in \le\sfrac{C_{m'-m''}}{2\pi} (\Gam_8+1)(3\pi+2)
       \ \sup_{\ze\in C}|f(\ze)|\ L^5
       \sup_{|\Im\rk|=m'\atop \ze\in C}
       \sup_{\ell\in\hat\cB_1}
      \sum_{\ell'\in\hat\cB_1}
      \big|(\ze\bbbone-\hat d_{n,\rk}^{(s)})^{-1}(\ell,\ell')|\\
   &\hskip0.25in \le\sfrac{C_{m'-m''}}{2\pi} (\Gam_8+1)(3\pi+2)
       \ \sup_{\ze\in C}|f(\ze)|L^5\ 
      \Gam_8 L^2
\end{align*}
by Lemma \ref{lemPOCakmatrix}.b.

\end{proof}

Applying Proposition \ref{propPOCsquareroot} with $f(z)=\sfrac{1}{z}$,
$f(z)=\sfrac{1}{\sqrt{z}}$ and $f(z)=\sqrt{z}$,
where $\sqrt{z}$ is the principal value of the square root gives
\begin{corollary}\label{corPOCsquareroot}
The operators $C^{(n)}$, $\sqrt{C^{(n)}}$ and 
$\big(\sqrt{C^{(n)}}\,\big)^{-1}$ all exist. There is a 
constant $\Gam_{10}$ such that
\begin{equation*}
\big\|\bbbl_*^{-1} C^{(n)}\bbbl_*\big\|_{\mm_2}\ ,\ 
\big\|{\textstyle\sqrt{\bbbl_*^{-1} C^{(n)}\bbbl_*}}\,\big\|_{\mm_2}\ ,\ 
\big\|\big({\textstyle\sqrt{\bbbl_*^{-1} C^{(n)}\bbbl_*}}\,\big)^{-1}\,\big\|_{\mm_2}
\le \Gam_{10} L^9
\end{equation*}
\end{corollary}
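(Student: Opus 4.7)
The plan is to apply Proposition \ref{propPOCsquareroot} three times, with the choices $f(z)=1/z$, $f(z)=1/\sqrt{z}$, and $f(z)=\sqrt{z}$ (principal branch), and then transcribe the result in terms of the unscaled operator. By the functional calculus set up in \cite[(\eqnBOfofA)]{Bloch}, applying $f(z)=1/z$ to the scaled operator $(\sfrac{a}{L^2}Q^*Q+\De^{(n)})^{(s)}$ yields its inverse, which by \cite[Lemma \lemPoPscaling.a]{Bloch} equals $\bbbl_*^{-1}C^{(n)}\bbbl_*$; similarly the other two choices of $f$ produce $\sqrt{\bbbl_*^{-1}C^{(n)}\bbbl_*}$ and its inverse. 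Existence of these operators is then a consequence of the analyticity hypothesis of Proposition \ref{propPOCsquareroot}, once we check that each $f$ is analytic on $\cO$.

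Since $\cO=\{z\in\bbbc:\dist(z,\bbbr_-)>\sfrac{\la_0}{3L^2},\ |z|<\Gam_8+2\}$ is an open set that avoids $\bbbr_-\cup\{0\}$, the principal branch of $\sqrt{z}$ extends analytically to $\cO$, and $1/z$ and $1/\sqrt{z}$ do too. Thus each of the three choices of $f$ satisfies the hypotheses of Proposition \ref{propPOCsquareroot}.

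Next I would bound $\sup_{\ze\in C}|f(\ze)|$ for each choice, using that points on the contour $C=\partial\cO_C$ satisfy $\dist(\ze,\bbbr_-)\ge\sfrac{\la_0}{2L^2}$ and $|\ze|\le\Gam_8+1$. In particular $|\ze|\ge\sfrac{\la_0}{2L^2}$, so
\begin{equation*}
\sup_{\ze\in C}\big|\tfrac{1}{\ze}\big|\le\tfrac{2L^2}{\la_0},\qquad
\sup_{\ze\in C}\big|\tfrac{1}{\sqrt{\ze}}\big|\le\sqrt{\tfrac{2L^2}{\la_0}},\qquad
\sup_{\ze\in C}\big|\sqrt{\ze}\big|\le\sqrt{\Gam_8+1}.
\end{equation*}
Inserting these into the bound $\Gam_9 L^7\sup_{\ze\in C}|f(\ze)|$ of Proposition \ref{propPOCsquareroot} gives norms bounded by (constants times) $L^9$, $L^8$, and $L^7$ respectively, all of which are dominated by $\Gam_{10}L^9$ for an appropriate $\Gam_{10}$.

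I do not anticipate any substantial obstacle here: the worst case is $f(z)=1/z$, where losing an additional $L^2$ from the minimum distance to the negative real axis is exactly what forces the final power $L^9$. The only point requiring a modicum of care is verifying that the principal square root admits a holomorphic extension to $\cO$, which follows immediately since $\cO$ is a bounded open set disjoint from $\bbbr_-\cup\{0\}$.
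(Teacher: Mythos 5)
Your proof is correct and follows exactly the paper's approach: the paper simply states that applying Proposition \ref{propPOCsquareroot} with $f(z)=1/z$, $f(z)=1/\sqrt{z}$, and $f(z)=\sqrt{z}$ (principal branch) gives the result, and you have supplied precisely the omitted details (analyticity of the three $f$ on $\cO$, the lower bound $|\ze|\ge\la_0/(2L^2)$ on the contour $C$, and the resulting $L^2$ loss for $f(z)=1/z$).
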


\newpage
\section{The Green's Functions}\label{secPOgreens}
In this chapter, we discuss the inverses of the operators
\begin{equation*}
D_n+Q^*_n\fQ_nQ_n
\end{equation*}
These inverses, and variations thereof, are constituents of the leading part of the power series expansion of the background fields of \cite{PAR1,PAR2,
BGE}.
See \cite[Proposition \propHTexistencebackgroundfields]{PAR1} 
and \cite[Proposition \propBGEphivepssoln]{BGE}.
In Proposition \ref{POGmainpos}, below, we show that for sufficiently 
small $\mu$, the operators
$\,
D_n+Q^*_n\fQ_nQ_n-\mu
\,$
are invertible, and we estimate the decay of the kernels $S_n(\mu)(x,y)$ of their inverses
\begin{equation*}
S_n(\mu)=\big[D_n+Q^*_n\fQ_nQ_n-\mu\big]^{-1}
\end{equation*}
By Remark \ref{remPBSunderivAlg}
\begin{equation}\label{eqnPOGpartialS}
\partial_\nu \big(S_n(\mu)^*\big)^{-1}=\big(S_{n,\nu}^{(+)}(\mu)\big)^{-1}\partial_\nu
\qquad
\partial_\nu S_n(\mu)^{-1}=\big(S_{n,\nu}^{(-)}(\mu)\big)^{-1}\partial_\nu
\end{equation}
where
\begin{equation}\label{eqnPOGSnnudef}
S_{n,\nu}^{(+)}(\mu)=\big[D_n^*
        +Q_{n,\nu}^{(+)}\fQ_nQ_{n,\nu}^{(-)}-\mu\big]^{-1}
        \qquad
S_{n,\nu}^{(-)}(\mu)=\big[D_n
        +Q_{n,\nu}^{(+)}\fQ_nQ_{n,\nu}^{(-)}-\mu\big]^{-1}
\end{equation}
and $Q_{n,\nu}^{(+)},Q_{n,\nu}^{(-)}$ were defined in \eqref{eqnPBSqnplusminus}.
We shall write
\begin{equation}\label{eqnPOGSnnuzerodef}
S_n=S_n(0)    \qquad\qquad S_{n,\nu}^{(\pm)}=S_{n,\nu}^{(\pm)}(0)
\end{equation} 
The main result extends the statement of 
\cite[Theorem \HTthminvertibleoperators]{PAR1}. It is

\begin{proposition}\label{POGmainpos}
There are constants  $\mu_{\rm up}, \mm_3>0$ and $\Gam_{11}$, depending only on 
$\fq$, $\bh_0$ and $a$, and in particular independent of $n$ and 
$L>\Gam_2$, such that, for $|\mu| \le \mu_{\rm up}$,
the operators $D_n+Q^*_n\fQ_nQ_n-\mu$ and
$\,D_n^*+Q_{n,\nu}^{(+)}\fQ_nQ_{n,\nu}^{(-)}-\mu\,$,
$\,D_n+Q_{n,\nu}^{(+)}\fQ_nQ_{n,\nu}^{(-)}-\mu\,$ are invertible, and their inverses
$\,S_n(\mu)\,$ and $\,S_{n,\nu}^{(+)}(\mu)\,$, $\,S_{n,\nu}^{(-)}(\mu)\,$, 
respectively, fulfill
\begin{align*}
\|\ S_n(\mu) \|_{\mm_3}, \|\ S_{n,\nu}^{(\pm)}(\mu) \|_{\mm_3}& \le \Gam_{11}
\\
\|\ S_n(\mu)-S_n \|_{\mm_3}, 
   \|\ S_{n,\nu}^{(\pm)}(\mu)-S_{n,\nu}^{(\pm)} \|_{\mm_3} &\le |\mu|\,\Gam_{11}
\end{align*}
\end{proposition}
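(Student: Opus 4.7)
The plan is to first establish the case $\mu = 0$ by a direct Bloch-matrix analysis of $A_n := D_n + Q_n^* \fQ_n Q_n$ and its two variants $A_{n,\nu}^{(\pm)}$, and then to bootstrap to $|\mu|\le\mu_{\rm up}$ by a resolvent expansion. Throughout I use that $\|\cdot\|_m$ is submultiplicative on composable kernels --- an immediate consequence of Schur's test combined with the triangle inequality $|u-u''|\le|u-u'|+|u'-u''|$ --- which is what makes the Neumann series go through.

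For the $\mu=0$ case, $A_n$ acts on $\cH_n$ and is translation invariant with respect to the sublattice $\cX_0^{(n)}\subset\cX_n$, so by Bloch decomposition it diagonalises into the $\hat\cB_n\times\hat\cB_n$ fibre matrices
\begin{equation*}
a_k(\ell,\ell')=\de_{\ell\ell'}\hat\bD_n(k+\ell)
   +\hat\fQ_n(k)\,u_n(k+\ell)^\fq u_n(k+\ell')^\fq,
\qquad k\in\hat\cX_0^{(n)},
\end{equation*}
whose entries are analytic in $k$ on $|\Im k|<3\mm_1$ by Remarks \ref{remPDOftDn}.b, \ref{remPBSqnft}.d and Proposition \ref{propPBSAnppties}.a. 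The goal is to show that, on a slightly smaller strip $|\Im k|\le 3\mm_3$, $a_k$ is invertible with the $\ell^1$--$\ell^\infty$ matrix norm of $a_k^{-1}$ uniformly bounded in $k$, $n$ and $L$. Once this is in hand, \cite[Lemma \lemBOfnbnd]{Bloch}, applied exactly as in Proposition \ref{propPOCsquareroot}, converts that estimate into $\|S_n\|_{\mm_3}\le\Gam$.

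The main obstacle is the invertibility of $a_k$, which I split by the size of $|k|$ in the fundamental cell. For $k$ bounded away from the origin, or for diagonal entries with $\ell\ne 0$, the lower bound $|\hat\bD_n(k+\ell)|\gtrsim\gam_1(|k_0+\ell_0|+|\bk+\bell|^2)$ from Lemma \ref{lemPDOhatSzeroppties} (extended to complex $k$ by parts (d)--(e) of the same lemma) keeps the diagonal piece uniformly large. The hard region is $k$ near $0$ with $\ell=0$, where $\hat\bD_n(k)$ vanishes; there the rank-one correction supplies compensation because $\hat\fQ_n(k)u_n(k)^{2\fq}=a_n+O(|k|^2)$ with $\Re\hat\fQ_n(k)\ge a/2$ (Proposition \ref{propPBSAnppties}.a,b). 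Concretely, writing $a_k=D_k+v_k v_k^\top$ with $D_k(\ell)=\hat\bD_n(k+\ell)$ and $v_k(\ell)=\sqrt{\hat\fQ_n(k)}\,u_n(k+\ell)^\fq$, I would absorb the $(\ell=\ell'=0)$ piece of the rank-one term into the diagonal to obtain a matrix whose entries are uniformly bounded below in magnitude, and then control the remaining off-diagonal rank-one correction via the rapid decay $|u_n(k+\ell)^\fq|\le\prod_\nu\big(24/(|\ell_\nu|+\pi)\big)^\fq$ from Lemma \ref{lemPBSunppties}.a (with $\fq\ge 4$), which makes a Neumann expansion converge. The variants $A_{n,\nu}^{(\pm)}=D_n^{(*)}+Q_{n,\nu}^{(+)}\fQ_n Q_{n,\nu}^{(-)}$ have the same Bloch structure with $u_n(k+\ell)^\fq$ replaced by $\ze_{n,\nu}^{(\pm)}(k,\ell)u_{n,\nu}^{(\pm)}(k+\ell)u_n(k+\ell)^{\fq-1}$; Lemma \ref{lemPBSunderiv} furnishes the same decay in $|\ell|$, and at $k=\ell=\ell'=0$ all the auxiliary factors reduce to $1$ so that the rank-one compensation still equals $a_n$. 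Hence the identical argument yields $\|S_{n,\nu}^{(\pm)}\|_{\mm_3}\le\Gam$.

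With $\|S_n\|_{\mm_3}$ and $\|S_{n,\nu}^{(\pm)}\|_{\mm_3}\le\Gam$ in hand, the $\mu$-dependence is a routine perturbation. Rearranging $(A_n-\mu)S_n(\mu)=\bbbone$ gives $S_n(\mu)=S_n+\mu S_n\,S_n(\mu)$, whose Neumann series
\begin{equation*}
S_n(\mu)=\sum_{j\ge 0}\mu^j S_n^{j+1}
\end{equation*}
converges in $\|\cdot\|_{\mm_3}$ whenever $|\mu|\,\Gam\le 1/2$; this fixes $\mu_{\rm up}=1/(2\Gam)$. Submultiplicativity yields $\|S_n(\mu)\|_{\mm_3}\le 2\Gam$ and $\|S_n(\mu)-S_n\|_{\mm_3}\le\sum_{j\ge 1}|\mu|^j\Gam^{j+1}\le 2|\mu|\Gam^2$, both controlled by a common $\Gam_{11}$. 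The analogous expansions around $S_{n,\nu}^{(\pm)}$ produce the corresponding bounds on $S_{n,\nu}^{(\pm)}(\mu)$.
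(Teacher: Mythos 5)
Your plan for the $\mu=0$ case --- invert the $\hat\cB_n\times\hat\cB_n$ Bloch fibre matrices $a_k(\ell,\ell')$ uniformly and then invoke \cite[Lemma \lemBOfnbnd]{Bloch} ``as in Proposition \ref{propPOCsquareroot}'' --- does not close. First, Proposition \ref{propPOCsquareroot} itself produces an $L$-dependent bound ($\Gam_9 L^7$, not a constant), because the step ``$\sum_{\ell,\ell'}\le\#\hat\cB\cdot\sup_\ell\sum_{\ell'}$'' introduces a factor $\#\hat\cB$; there $\hat\cB=\hat\cB_1$ has $L^5$ elements. Here $\hat\cB=\hat\cB_n$ has $L^{5n}$ elements, so the same argument would yield a bound growing like $L^{5n}$, which is exactly what the proposition must exclude. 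Worse, a uniform $\ell^1$--$\ell^\infty$ matrix-norm bound on $a_k^{-1}=\hat S_{n,k}$ is simply not enough information: the controlling estimate (Lemma \ref{lemPOGSnppties}.b) is $|\hat S_{n,k}(\ell,\ell')|\lesssim\frac{1}{1+|\ell_0|+|\bell|^2}\,\de_{\ell,\ell'}+(\text{summable part})$, and $\sum_{\ell\in\hat\cB_n}\frac{1}{1+|\ell_0|+|\bell|^2}$ is a divergent 4-dimensional sum as $n\to\infty$. This divergence is not an artefact; the kernel $S_n(u,u')$ genuinely blows up like $L^{5n}$ at $u=u'$ (Lemma \ref{lemPOGSnppties}.d and Example \ref{exPOGmodelsn}), so any scheme that passes through a pointwise kernel bound must fail to be uniform in $n$.

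The idea that is missing --- and that the paper is built around --- is the decomposition $S_n=S_n'+\de S$ with $S_n'=[D_n+a_n e^{-\De_n}]^{-1}$ \emph{fully} translation invariant. Because $S_n'$ is translation invariant on $\cX_n$ itself, its Bloch data is a scalar $\widehat S_n'(p)$, and one controls its kernel not through summability of a fibre matrix but through the explicit integrable singularity $|S_n'(u,u')|\lesssim\min\{e^{-2\mm_4|u-u'|}/(|u_0-u_0'|^2+|\bu-\bu'|^4),\,L^{5n}\}$, whose $\veps_n^5$-weighted lattice integral is uniformly bounded precisely because $1/(|u_0|^2+|\bu|^4)$ is locally integrable in $\bbbr^4$. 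The remainder $\de S=S_n-S_n'$ then \emph{does} have a Bloch matrix that is absolutely summable over $\hat\cB_n\times\hat\cB_n$ (Lemma \ref{lemPOGSnppties}.c), hence a pointwise bounded kernel, and the $\|\cdot\|_m$ estimate follows. Your $\mu$-perturbation step via $S_n(\mu)=\sum_{j\ge0}\mu^j S_n^{j+1}$ and submultiplicativity of $\|\cdot\|_m$ is correct and matches the paper exactly, but the $\mu=0$ input it rests on requires the $S_n'$ comparison, which your proposal omits.
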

\noindent This Proposition is proven following the proof 
of Lemma \ref{lemPOGSnnuppties}.

\begin{example}\label{exPOGmodelsn}
As a model computation, we evaluate the inverse transform
\begin{align*}
s(x)=\int_{\bbbr\times\bbbr^3} 
       \sfrac{e^{ip\cdot x}}{-ip_0 + m^2 + \bp_1^2+\bp_2^2+\bp_3^2}
       \sfrac{dp_0\,d^3\bp}{(2\pi)^4}
\end{align*}
of $\hat s(p)=\sfrac{1}{-ip_0 + m^2 + \bp^2}$. It is designed
to mimic the behaviour of $S_n$ in the limit $n\rightarrow\infty$.
Write $x=(t,\bx)\in\bbbr\times\bbbr^3$. We first compute the $p_0$ integral.
Observe that the integrand has exactly one pole, which is at 
$p_0=-i(m^2 + \bp^2)$, and that the $e^{ip_0 t}$ in the integrand forces
us to close the contour in the upper half plane when $t>0$ and in the 
lower half plane when $t<0$. Thus
\begin{align*}
\int_{-\infty}^\infty 
       \sfrac{e^{ip_0t+i\bp\cdot \bx}}{-ip_0 + m^2 + \bp^2}
       \sfrac{dp_0}{(2\pi)^4}
=\begin{cases} 0 & \text{if $t>0$}\\
        e^{(m+\bp^2)t+i\bp\cdot\bx}\sfrac{1}{(2\pi)^3} & \text{if $t<0$}
       \end{cases}
\end{align*}
Hence $s(x)=0$ for $t>0$ and, for $t<0$,
\begin{align*}
s(x) & = \int_{\bbbr^{3}} 
       e^{-(m^2+\bp^2)|t|}e^{i\bp\cdot \bx}
       \sfrac{d^3\bp}{(2\pi)^{3}}
=e^{-m^2|t|}\prod_{j=1}^3 
   \int_{-\infty}^\infty e^{-|t|\bp_j^2}e^{i\bp_j\bx_j}\sfrac{d\bp_j}{2\pi}\\
&=\sfrac{1}{2^3(\pi|t|)^{3/2}}e^{-m^2|t|}e^{-\sfrac{\bx^2}{4|t|}}
\end{align*}
Here are some observations about $s(x)$.
\begin{itemize}[leftmargin=*, topsep=2pt, itemsep=0pt, parsep=0pt]
\item 
Since $m^2|t|+\sfrac{\bx^2}{4|t|}\ge m|\bx|$ (the minimum
is at $|t|=\sfrac{|\bx|}{2m}$), $s(x)$ decays exponentially for large $|x|$
in all directions.
\item 
For $\bx\ne 0$, $\lim_{t\nearrow 0}e^{-\sfrac{\bx^2}{4|t|}}=0$,
so $s(x)$ is continuous everywhere except at $x=0$. 

\item 
$s(x)$ has an integrable singularity at $x=0$.
There are a number of ways to see this. For example,
the inequality $e^{-\sfrac{\bx^2}{8|t|}}
\le\const \big(\sfrac{|t|}{|\bx|^2}\big)^{\ka/2}$
implies that $|s(x)|$ is bounded near $x=0$ by a constant times 
$\sfrac{1}{|t|^{(3-\ka)/2}}\sfrac{1}{|\bx|^\ka}$. This is integrable if
$1<\ka<3$. 

\item 
If we send $x\rightarrow 0$ along a curve with 
$\bx^2=-4\ga|t|\ln|t|$, $s(x)\approx\const\sfrac{1}{|t|^{3/2-\ga}}$.

\item 
We see, using $\bx_j^4e^{-\sfrac{\bx_j^2}{8|t|}}\le\const |t|^2$, that $\big(t^2+\sum_{j=1}^3\bx_j^4\big)|s(x)|$
is bounded and exponentially decaying. Note that 
$\sfrac{1}{t^2+\Si_j\bx_j^4}$ has an integrable singularity at the origin,
since $\int_{-\infty}^\infty \sfrac{1}{t^2+\Si_j\bx_j^4} dt
=\sfrac{\const}{\sqrt{\Si_j \bx_j^4}}$.
\end{itemize}
\end{example}

As preparation for and in addition to the position space estimates 
of Proposition \ref{POGmainpos}, we also derive
bounds on the Fourier transforms of these and related operators. 
To convert bounds in momentum space into bounds in position space,
we shall use \cite[Lemma \lemBOlonelinfty]{Bloch}, with
\begin{equation}\label{eqnPOGsubstspaces}
\cX_\fin=\cX_n\qquad
\cZ_\fin=\veps_n^2\bbbz\times\veps_n\bbbz^3 \qquad
\cX_\crs=\cX_{0}^{(n)}\qquad
\cZ_\crs=\bbbz\times \bbbz^3 \qquad
\cB=\cB_n
\end{equation}
We shall routinely use $|p_0|$, $|\bp_\nu|$ and $|\bp|$ to refer to the 
magnitudes of the smallest representatives of $p_0\in\bbbc$, 
$\bp_\nu\in\bbbc$ and $\bp\in\bbbc^3$ in $\bbbc/\sfrac{2\pi}{\veps_n^2}\bbbz$, 
$\bbbc/\sfrac{2\pi}{\veps_n}\bbbz$ and $\bbbc^3/\sfrac{2\pi}{\veps_n}\bbbz^3$, respectively.

The operators $S_n(\mu)$ act on functions on the lattice $\cX_n$, but they are only translation invariant with respect to the sublattice $\cX_0^{(n)}$.
An exponentially decaying operator which is fully translation invariant, and 
has the same local singularity as $S_n$, is the operator
\begin{equation}\label{eqnPOGSntransapprox}
\begin{split}
S_n'&=\big[D_n+a_n\exp\{-\De_n\}\big]^{-1}\qquad\text{where} \\
\De_n&=\partial_0^*\partial_0 
              +\big(\partial_1^*\partial_1+\partial_2^*\partial_2
                     +\partial_3^*\partial_3\big)
\end{split}
\end{equation}
and $a_n=a\sfrac{1-L^{-2}}{1-L^{-2n}}$ as in Proposition \ref{propPBSAnppties}.b, and
the forward derivatives  $\partial_\nu$ are defined in \eqref{eqnPBSforwardDeriv}.
Obviously $S_n'$ has Fourier transform 
\begin{equation*}
\widehat S'_n(p)=\big[\hat\bD_n(p)
                    +a_n\exp\{-\De_n(p)\}\big]^{-1}
\quad\text{where }
\De_n(p)= 
\big[\sfrac{\sin\frac{1}{2}\veps_n^2 p_0}{\frac{1}{2}\veps_n^2}\big]^2 \!
+\smsum_{\nu=1}^3\!\big[\sfrac{\sin\frac{1}{2}\veps_n \bp_\nu}{\frac{1}{2}\veps_n}\big]^2 
\end{equation*}
Before we discuss the properties of $S_n'$ and of the difference 
$\,\de S = S_n-S_n'\,$ we note

\begin{remark}\label{remPOGpropLapl}
The Fourier transform $\De_n(p)$ of the four dimensional Laplacian $\De_n$ 
is entire. For $p\in\bbbr\times \bbbr^3$,
\begin{equation*}
\De_n(p)\ge\sfrac{2}{\pi^2}\big[|p_0|^2+|\bp|^2\big]
\end{equation*}
For $p\in\bbbc\times \bbbc^3$ with $\veps_n^2|\Im p_0|\le 1$ and 
$\veps_n|\Im\bp|\le 1$
\begin{equation*}
|\De_n(p)|\le 4\big[|p_0|^2+|\bp|^2\big]\qquad
|\sfrac{\partial\hfill}{\partial p_\nu}\De_n(p)|\le 4|p_\nu|\qquad
|\sfrac{\partial^\ell\hfill}{\partial p_\nu^\ell}\De_n(p)|
            \le 4\veps_{n,\nu}^{\ell-2}
            \ \text{if $\ell\ge 2$}
\end{equation*}
with the $\veps_{n,\nu}$ of \eqref{eqnPINTlnu}.
For $p\in\bbbc\times \bbbc^3$ with $|\Im p|\le 1$,
\begin{equation*}
\Re \De_n(p) \ge -5\pi^2 + \sfrac{1}{\pi^2}\big[|p_0|^2+|\bp|^2\big]
\end{equation*}
\end{remark}

\begin{proof}
For the first two claims, just apply parts (a) and (b) of 
Lemma \ref{lemPBSsinxoverx}. For the derivatives, use 
$$
\sfrac{d\hfill}{d\th}\big[\sfrac{\sin(\eta\th)}{\eta}\big]^2
=\sfrac{\sin(2\eta\th)}{\eta}\quad\implies\quad
\sfrac{d^\ell\hfill}{d\th^\ell}\big[\sfrac{\sin(\eta\th)}{\eta}\big]^2
=\pm(2\eta)^{\ell-1}\sfrac{1}{\eta}
          \begin{cases}\sin(2\eta\th) & \text{for $\ell$ odd}\\
                  \cos(2\eta\th) & \text{for $\ell$ even}
              \end{cases}
$$
For the final claim, write $p=\sP+i\sQ$ with $\sP,\sQ\in\bbbr\times\bbbr^3$.
Then 
\begin{align*}
\Re \De_n(\sP+i\sQ)
&\ge  \De_n(\sP)
    -\big|\Re\De_n(\sP+i\sQ)- \De_n(\sP)\big|\\
&\ge \sfrac{2}{\pi^2}\big[|\sP_0|^2+|\sbP|^2\big]
   -4|\sQ||\sP+i\sQ|\\
&\ge \sfrac{2}{\pi^2}|\sP+i\sQ|^2
   -2\big(2+\sfrac{2}{\pi^2}\big)\,|\sQ|\,|\sP+i\sQ|
   -\sfrac{2}{\pi^2}|\sQ|^2\\
&\ge \sfrac{1}{\pi^2}|\sP+i\sQ|^2
   -\big\{\pi^2\big(2+\sfrac{2}{\pi^2}\big)^2+\sfrac{2}{\pi^2}\big\}|\sQ|^2
\end{align*}
\end{proof}

By the resolvent identity
\begin{equation*}
\de S = S_n-S_n'=- S'_n\big[Q_n^*\fQ_nQ_n-a_n\exp\{-\De_n\}\big]S_n
\end{equation*}
$S_n$ and $\de S$ are translation invariant with respect to the sublattice  
$\cX_0^{(n)}$ of $\cX_n$. By ``Floquet theory'' 
(see \cite[Lemma \lemBOkervar]{Bloch}),  their Fourier transforms
$\,\widehat S_n(p,p')\,,\,\widehat{\de S}(p,p')\,$, 
$p,p'\in \hat\cX_n$ vanish unless 
$\,\pi_n^{(n,0)}(p)=\pi_n^{(n,0)}(p')$, 
i.e. unless there are $k\in \hat\cX_0^{(n)}$ and $\ell,\ell'\in \hat\cB_n$ such that $p=k+\ell\,$, $\,p'=k+\ell'$.
The blocks 
$\,\widehat S_{n,k}^{-1}(\ell,\ell') = \widehat S_n^{-1}(k+\ell,k+\ell')\,$
and
$\,\widehat{\de S}_k(\ell,\ell') = \widehat{\de S}(k+\ell,k+\ell')\,$
are given by
\begin{equation}\label{eqnPOGdeSnft}
\begin{split}
\widehat S_{n,k}^{-1}(\ell,\ell')
&=\hat\bD_n(k+\ell)\,\de_{\ell,\ell'}
  +u_n(k+\ell)^\fq\hat \fQ_n(k)u_n(k+\ell')^\fq \\
\widehat{\de S}_k(\ell,\ell') 
&=-\hskip-4pt\sum_{\ell''\in\hat\cB_n}\hskip-4pt
     \widehat S'_n(k\!+\!\ell)
    \big[ u_n(k\!+\!\ell)^\fq\hat \fQ_n(k) u_n(k\!+\!\ell'')^\fq
             -a_ne^{-\De_n(k\!+\!\ell)}\de_{\ell,\ell''}\big]
    \widehat{S}_{n,k}(\ell'',\ell')
\end{split}
\end{equation}
where $u_n$ and $\hat\fQ_n$ are given in parts (b) and (e) of 
Remark \ref{remPBSqnft}.

\begin{lemma}\label{lemPOGSnppties}
There are constants\footnote{Recall Convention \ref{convPOconstants}.}
$\mm_4>0$ and $\Gam_{12}$, 
such that the following hold for all $L>\Gam_2$.
\begin{enumerate}[label=(\alph*), leftmargin=*]
\item 
$\widehat{S}'_n(p)$ is analytic in $|\Im p|<3\mm_4$ and obeys
\begin{equation*}
\big|\widehat{S}'_n(p)\big|\le \sfrac{\Gam_{12}}{1+|p_0|+|\bp|^2}
\quad\text{and}\quad
\big|\sfrac{\partial^2\hfill}{\partial p_0^2}\widehat{S}'_n(p)\big|,
\big|\sfrac{\partial^4\hfill}{\partial \bp_\nu^4}\widehat{S}'_n(p)\big|
\le \sfrac{\Gam_{12}}{{(1+|p_0|+|\bp|^2)}^3}
\end{equation*}
for $1\le\nu\le3$ there. 

\item 
For all $\ell,\ell'\in\hat\cB_n$,
$\widehat S_{n,k}(\ell,\ell')$ is analytic in $|\Im k|<3\mm_4$ and obeys
\begin{equation*}
\big|\widehat S_{n,k}(\ell,\ell')\big|
\le\sfrac{\Gam_{12}}{1+|\ell_0|+\Si_{\nu=1}^3|\ell_\nu|^2}
\Big\{\de_{\ell,\ell'}
+\sfrac{1}{1+|\ell'_0|+\Si_{\nu=1}^3|\ell'_\nu|^2}
    \smprod_{\nu=0}^3\! \sfrac{1}{(|\ell_\nu|+1)^\fq}
    \smprod_{\nu=0}^3\! \sfrac{1}{(|\ell'_\nu|+1)^\fq}
\Big\}
\end{equation*}
there.

\item 
For all $\ell,\ell'\in\hat\cB_n$,
$\widehat{\de S}_k(\ell,\ell')$ is analytic in $|\Im k|<3\mm_4$ and obeys
\begin{align*}
\big|\widehat{\de S}_k(\ell,\ell')\big|
&\le\Gam_{12}\exp\big\{-\sfrac{1}{40}\Si_{\nu=0}^3|\ell_\nu|^2\big\}\ \de_{\ell,\ell'}\\
&\hskip0.8in+\sfrac{\Gam_{12}}{1+|\ell_0|+\Si_{\nu=1}^3|\ell_\nu|^2}\ 
   \Big\{\smprod_{\nu=0}^3 \sfrac{1}{(|\ell_\nu|+1)^\fq}
    \smprod_{\nu=0}^3 \sfrac{1}{(|\ell'_\nu|+1)^\fq}\Big\} \ \ 
     \sfrac{1}{1+|\ell'_0|+\Si_{\nu=1}^3|\ell'_\nu|^2}
\end{align*}
there.

\item 
For all $u,u'\in\cX_n$,
\begin{align*}
\big|S_n(u,u')-S'_n(u,u')\big|&\le \Gam_{12} e^{-2\mm_4 |u-u'|}\\
\big|S'_n(u,u')\big|&\le \Gam_{12} \min\Big\{
             \sfrac{e^{-2\mm_4 |u-u'|}}{|u_0-u'_0|^2+|\bu-\bu'|^4}\,,\,
             L^{5n}\Big\}
\end{align*}
\end{enumerate}
\end{lemma}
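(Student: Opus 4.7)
For part (a), the plan is to show that the denominator $\hat\bD_n(p)+a_n e^{-\De_n(p)}$ of $\widehat S'_n$ is bounded below by a constant multiple of $1+|p_0|+|\bp|^2$ on the strip $|\Im p|<3\mm_4$ for sufficiently small $\mm_4$. Near $p=0$, $\hat\bD_n(p)$ is small by Lemma \ref{lemPDOhatSzeroppties}.b while $a_n e^{-\De_n(p)}$ stays close to $a_n>0$; away from the origin, Lemma \ref{lemPDOhatSzeroppties}.e provides the desired lower bound on $\Re\hat\bD_n$, and since Remark \ref{remPOGpropLapl} keeps $\Re\De_n$ bounded below, the correction $a_n e^{-\De_n}$ cannot reverse it once $\mm_4$ is small enough. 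Analyticity of $\widehat S'_n$ then follows from that of its two summands. The derivative bounds come from differentiating $\widehat S'_n=1/(\hat\bD_n+a_n e^{-\De_n})$ via Leibniz and substituting the derivative estimates on $\hat\bD_n$ from Lemma \ref{lemPDOhatSzeroppties}.c and on $\De_n$ from Remark \ref{remPOGpropLapl}; each extra derivative of $1/f$ produces one extra power of $1/f$, matching the claimed $1/(1+|p_0|+|\bp|^2)^3$ behavior.

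For part (b), I would observe that $\widehat{S}_{n,k}^{-1}$ is the diagonal matrix $\hat\bD_n(k+\ell)\,\de_{\ell,\ell'}$ plus the rank-one perturbation $\hat\fQ_n(k)\,u_n(k+\ell)^\fq\,u_n(k+\ell')^\fq$. The Sherman--Morrison formula then gives
\[
\widehat S_{n,k}(\ell,\ell')=\frac{\de_{\ell,\ell'}}{\hat\bD_n(k+\ell)}-\frac{\hat\fQ_n(k)}{1+\hat\fQ_n(k) K(k)}\,\frac{u_n(k+\ell)^\fq\,u_n(k+\ell')^\fq}{\hat\bD_n(k+\ell)\,\hat\bD_n(k+\ell')},
\]
where $K(k)=\sum_{\ell''} u_n(k+\ell'')^{2\fq}/\hat\bD_n(k+\ell'')$. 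The claimed bound then follows by combining three inputs: the estimate $|\hat\bD_n(k+\ell)|^{-1}\le \const/(1+|\ell_0|+\sum_\nu|\ell_\nu|^2)$ from parts (a),(b),(d) of Lemma \ref{lemPDOhatSzeroppties}; the decay $|u_n(k+\ell)^\fq|\le\prod_\nu(24/(|\ell_\nu|+\pi))^\fq$ from Lemma \ref{lemPBSunppties}.a; and a lower bound $|1+\hat\fQ_n K|\ge c>0$ argued as in Lemma \ref{lemPOCDenppties}.c,d.

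For part (c), I plug the Sherman--Morrison identity of part (b) into \eqref{eqnPOGdeSnft}; the crucial algebraic simplification
\[
\sum_{\ell''} u_n(k+\ell'')^\fq\,\widehat S_{n,k}(\ell'',\ell')=\frac{u_n(k+\ell')^\fq}{\hat\bD_n(k+\ell')\,[1+\hat\fQ_n(k) K(k)]}
\]
collapses the ``rank-one'' part of $\widehat{\de S}_k$ to a product $u_n(k+\ell)^\fq\,u_n(k+\ell')^\fq$ times polynomial decay in $\ell,\ell'$, yielding the second term in the claimed bound via the bound on $\widehat S'_n$ from part (a). The remaining diagonal piece $\widehat S'_n(k+\ell)\,a_n e^{-\De_n(k+\ell)}\,\widehat S_{n,k}(\ell,\ell')$ is itself split via Sherman--Morrison on $\widehat S_{n,k}(\ell,\ell')$: its $\de_{\ell,\ell'}/\hat\bD_n(k+\ell)$ part yields the first claimed bound, and its rank-one part is absorbed into the second claimed bound. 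The Gaussian factor $e^{-\sum|\ell_\nu|^2/40}$ arises because, for $\ell\ne 0$ and $|k|$ small, the smallest representative of $k+\ell$ has magnitude of order $|\ell|$, so Remark \ref{remPOGpropLapl} gives $\Re\De_n(k+\ell)\ge c|\ell|^2$; for $\ell=0$ the term is simply bounded by a constant, absorbed into $\Gam_{12} e^{0}$.

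For part (d), I would bound $S'_n$ in position space by Fourier inversion using two complementary estimates. The trivial bound $|S'_n(u,u')|\le\|\widehat S'_n\|_\infty \int dp/(2\pi)^4\le\Gam_{12}\,L^{5n}$ handles the second alternative, using $\int dp/(2\pi)^4=1/\veps_n^5=L^{5n}$ on the fundamental cell of $\hat\cX_n$. The refined bound comes from integration by parts: $|u_0-u'_0|^2\,S'_n(u,u')=-\int e^{ip\cdot(u-u')}\,\partial_{p_0}^2\widehat S'_n(p)\,dp/(2\pi)^4$, and analogously for $|\bu_\nu-\bu'_\nu|^4$; the derivative bounds in part (a) scale like $(1+|p_0|+|\bp|^2)^{-3}$, making these integrals convergent uniformly in $n,L$. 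Shifting the $p$-contour by $\pm 2\mm_4$ in each coordinate (with the sign of $u_\nu-u'_\nu$) then yields the factor $e^{-2\mm_4|u-u'|}$. For $\de S$, I apply the same contour-shift trick to the Floquet decomposition
\[
\de S(u,u')=\int_{\hat\cX_0^{(n)}}\sum_{\ell,\ell'\in\hat\cB_n}\widehat{\de S}_k(\ell,\ell')\,e^{i(k+\ell)\cdot u-i(k+\ell')\cdot u'}\,\frac{dk}{(2\pi)^4},
\]
and the $\ell,\ell'$ sums converge absolutely by the decay factors in part (c) (using $\fq\ge 4$). The main obstacle will be the bookkeeping in part (c): executing the Sherman--Morrison substitution cleanly and isolating the Gaussian decay from $e^{-\De_n(k+\ell)}$ only in the truly diagonal contribution, while ensuring uniformity in $n$ and $L$ throughout.
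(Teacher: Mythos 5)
Your parts (a) and (d) follow essentially the paper's route. The trouble is in parts (b) and (c), where you propose to apply Sherman--Morrison globally in $k$ and then estimate each piece separately. The Sherman--Morrison representation
\[
\widehat S_{n,k}(\ell,\ell')=\frac{\de_{\ell,\ell'}}{\hat\bD_n(k+\ell)}-\frac{\hat\fQ_n(k)}{1+\hat\fQ_n(k) K(k)}\,\frac{u_n(k+\ell)^\fq\,u_n(k+\ell')^\fq}{\hat\bD_n(k+\ell)\,\hat\bD_n(k+\ell')}
\]
is algebraically correct, but each of its two terms is genuinely unbounded as $k\to 0$ when $\ell=0$ or $\ell'=0$: by Remark \ref{remPDOftDn}.a, $\hat\bD_n(k)\to 0$ as $k\to 0$, so $1/\hat\bD_n(k+\ell)$ blows up at $\ell=0$ and so does $K(k)$. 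Your stated intermediate estimate ``$|\hat\bD_n(k+\ell)|^{-1}\le \const/(1+|\ell_0|+\sum_\nu|\ell_\nu|^2)$, from parts (a),(b),(d) of Lemma \ref{lemPDOhatSzeroppties}'' is simply false at $\ell=0$ in a neighbourhood of $k=0$ (Lemma \ref{lemPDOhatSzeroppties}.d gives such a lower bound on $|\hat\bD_n|$ only \emph{away} from the zero, and part (b) says the opposite near $k=0$). The two singular pieces cancel --- and the paper makes this cancellation explicit by treating the small-$|k|$ regime separately, re-splitting $\widehat S_{n,k}^{-1}$ so that the constant $a_n$ is moved \emph{into} the diagonal block $D$ (so that $D_{0,0}=\hat\bD_n(k)+a_n$ is bounded below near $k=0$, by Lemma \ref{lemPDOhatSzeroppties}.c,d) and the remainder $B$ is $O(|k|)$ and therefore small enough for a Neumann expansion. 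Without some device like this, the Sherman--Morrison plan as written does not yield the claimed $\ell$--decay uniformly in $k$; you would need to explicitly track the cancellation of the removable singularity at $k=0$ in both terms and in the quantity $u_n(k+\ell')^\fq/[\hat\bD_n(k+\ell')(1+\hat\fQ_n K)]$ used in part (c).

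The same issue contaminates your part (c): the ``crucial algebraic simplification'' identity is correct but its right-hand side is again an indeterminate form at $k=0$ when $\ell'=0$, and your plan bounds numerator and denominator separately. The Gaussian factor and the overall structure of (c) are fine once the $k\to 0$ issue is patched. Everything else (the derivative/Leibniz argument in (a), the two-alternative position-space bound and contour shift in (d)) matches the paper's proof.
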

\begin{proof}  (a)
Obviously $\widehat{S}'_n(p)^{-1}$ is entire. 
For real $p$
\begin{equation}\label{eqnPOGsprimelowerbound}
\big|\widehat{S}'_n(p)^{-1}\big|
\ge a_n\exp\{-\De_n(p)\}+\const\big\{|p_0|+|\bp|^2\big\}
\ge \const\big\{1+|p_0|+|\bp|^2\big\}
\end{equation} 
by Remark \ref{remPOGpropLapl}, Lemma \ref{lemPDOhatSzeroppties}.a, and the 
fact that $\De_n(p),\Re \hat\bD_n(p)\ge 0$ for real $p$.
The bound on $\big|\sfrac{\partial\hfill}{\partial p_\nu} \hat\bD_n(p)\big|$ 
of Lemma \ref{lemPDOhatSzeroppties}.c and Remark \ref{remPOGpropLapl} shows
that \eqref{eqnPOGsprimelowerbound} is valid for all $|\Im p|<3\mm_4$, if $\mm_4$
is chosen sufficiently small.

We now bound the derivatives. For any $0\le\nu\le3$ and $\ell\in\bbbn$,
$\sfrac{\partial^\ell\hfill}{\partial p_\nu^\ell}\widehat{S'_n}(p)$
is a finite linear combination of terms of the form
\begin{equation*}
\widehat{S'_n}(p)^{1+j}\prod_{i=1}^j
      \sfrac{\partial^{\ell_i}\hfill}{\partial p_\nu^{\ell_i}}
      \big[\hat\bD_n(p)
                    +a_n\exp\{-\De_n(p)\}\big]
\end{equation*}
with each $\ell_i\ge 1$ and $\sum_{i=1}^j\ell_i=\ell$. 
By Remark \ref{remPOGpropLapl}, all derivatives of 
$\exp\{-\De_n(p)\}$ of order $\ell_i\in\bbbn$ are bounded by 
$\sfrac{\cst{}{\ell_i}}{{[1+|p_0|+|\bp|^2]}^{\ell_i}}$. Hence, by
Lemma \ref{lemPDOhatSzeroppties}.c,
\begin{equation*}
\big|\sfrac{\partial^{\ell_i}\hfill}{\partial p_\nu^{\ell_i}}
      \big[\hat\bD_n\!(p)
                    +a_n\exp\{-\De_n(p)\}\big]\big|
\le\const\!\begin{cases}\!\sfrac{1}{[1+|p_0|+|\bp|^2]^{\ell_i-1}} & 
                                   \text{if $\nu=0$, $\ell_i=1,2$}\\
                  \!\sfrac{1}{[1+|p_0|+|\bp|^2]^{\ell_i/2-1}} & 
                           \text{if $\nu\ge 1$, $1\le\ell_i\le4$}
              \end{cases}
\end{equation*}
As $\big|\widehat{S}'_n(p)\big|\le \sfrac{\const}{1+|p_0|+|\bp|^2}$,
\begin{align*}
&\Big|\widehat{S'_n}(p)^{1+j}\prod_{i=1}^j
      \sfrac{\partial^{\ell_i}\hfill}{\partial p_\nu^{\ell_i}}
      \big[\hat\bD_n(p)
                    +a_n\exp\{-\De_n(p)\}\big]\Big|\\
&\hskip1in\le \sfrac{\const}{1+|p_0|+|\bp|^2}\prod_{i=1}^j
  \begin{cases}\!\sfrac{1}{[1+|p_0|+|\bp|^2]^{\ell_i}} & 
                            \text{if $\nu=0$, $\ell_i=1,2$}\\
                  \!\sfrac{1}{[1+|p_0|+|\bp|^2]^{\ell_i/2}} & 
                             \text{if $\nu\ge 1$, $1\le\ell_i\le4$}
              \end{cases}
\end{align*}
and the claim follows.

\Item (b) 
For any $c''_0>0$ we have, for $|k|\ge c''_0$ and $|\Im k_0|<3\mm_4$, 
analyticity and the bound
\begin{equation*}
\big|\widehat S_{n,k}(\ell,\ell')\big|
\le\sfrac{\Gam'_{10}}{1+|\ell_0|+\Si_{\nu=1}^3|\ell_\nu|^2}\de_{\ell,\ell'}
+\sfrac{\Gam'_{10}}{1+|\ell_0|+\Si_{\nu=1}^3|\ell_\nu|^2}\!
    \smprod_{\nu=0}^3\! \sfrac{1}{(|\ell_\nu|+1)^\fq}\!
    \smprod_{\nu=0}^3\! \sfrac{1}{(|\ell'_\nu|+1)^\fq} \ \
     \sfrac{1}{1+|\ell'_0|+\Si_{\nu=1}^3|\ell'_\nu|^2}
\end{equation*}
(with $\Gam'_{10}$ depending on $c''_0$) which follows from the representation
\begin{equation*}
S_n = D_n^{-1} -\,D_n^{-1}\,Q_n^*\De^{(n)} Q_nD_n^{-1} 
\end{equation*}
(see \cite[Remark \remBSedA.b]{BlockSpin})
and Lemmas \ref{lemPDOhatSzeroppties}.d, \ref{lemPBSunppties}.a and \ref{lemPOCDenppties}.c.

For $|k|< c'_0$, with $c'_0$ to be shortly chosen sufficiently small,
and $|\Im k|<3\mm_4$ we use the representation
\begin{align}\label{eqnPERSnkell}
&\hat S_{n,k}^{-1}(\ell,\ell')
=\hat\bD_n(k+\ell)\,\de_{\ell,\ell'}
  +u_n(k+\ell)^\fq\hat \fQ_n(k)u_n(k+\ell')^\fq
=D_{\ell,\ell'}+B_{\ell,\ell'}
\end{align}
with
\begin{align*}
D_{\ell,\ell'}
&=\hat\bD_n(k+\ell)\, \de_{\ell,\ell'}
  +\begin{cases}a_n & \text{if $\ell,\ell'=0$}\\
                   \noalign{\vskip0.02in}
                   0 & \text{otherwise}
              \end{cases}\\
\noalign{\vskip0.1in}
B_{\ell,\ell'}
&=\begin{cases}\hat \fQ_n(k)u_n(k)^{2\fq}-a_n  & \text{if $\ell=\ell'=0$}\\
               \noalign{\vskip0.02in}
           u_n(k+\ell)^\fq \hat \fQ_n(k) u_n(k+\ell')^\fq  & \text{otherwise}
   \end{cases}
\end{align*}
By parts (c) and (d) of Lemma \ref{lemPDOhatSzeroppties}, assuming that $|k|<c'_0$
with $c'_0$ small enough, $D$ is invertible and the inverse is a diagonal 
matrix with every diagonal matrix element obeying
\begin{equation*}
\big|D^{-1}_{\ell,\ell}\big|
\le \sfrac{\Gam''_{10}}{1+|\ell_0|+\Si_{\nu=1}^3|\ell_\nu|^2}
\end{equation*}
for some $\Gam''_{10}$ which is independent of $c'_0$.
By parts (b) and (c) of Lemma \ref{lemPBSunppties} 
and parts (a) and (b) of Proposition \ref{propPBSAnppties},
\begin{equation*}
\big|B_{\ell,\ell'}\big|\le\cst{}{a,\fq}\ |k|\smprod_{\nu=0}^3  \big(\sfrac{24}{|\ell_\nu|+\pi}\big)^\fq
                 \smprod_{\nu=0}^3  \big(\sfrac{24}{|\ell'_\nu|+\pi}\big)^\fq
\end{equation*}
So if $|k|<c'_0$ with $c'_0$ small enough, $D+B$ is invertible with
the inverse given by the Neumann expansion 
$D^{-1}+\sum_{p=1}^\infty (-1)^p D^{-1}\big(BD^{-1}\big)^p$. 
Since $D$ and $B$ are both analytic on $|\Im k|<2$ and
\begin{equation*}
\sum_{\ell\in 2\pi\bbbz^4} \cst{}{a,\fq}\ |k|
    \Big(\smprod_{\nu=0}^3  \sfrac{24}{|\ell_\nu|+\pi}\Big)^\fq
     \sfrac{\Gam''_{10}}{1+|\ell_0|+\Si_{\nu=1}^3|\ell_\nu|^2}
    \Big(\smprod_{\nu=0}^3  \sfrac{24}{|\ell_\nu|+\pi}\Big)^\fq
<\half
\end{equation*}
if $c'_0$ is small enough, we again get the desired analyticity and
bound on $\big|\widehat S_{n,k}(\ell,\ell')\big|$.

\Item (c) Just apply Remark \ref{remPOGpropLapl} and parts (a) and (b) of 
this lemma,  Lemma \ref{lemPBSunppties}.a and  Proposition \ref{propPBSAnppties}.a 
and the fact that
\begin{equation*}
\sum_{\ell\in 2\pi\bbbz^4} 
    \Big(\smprod_{\nu=0}^3  \sfrac{1}{|\ell_\nu|+\pi}\Big)^\fq
     \sfrac{1}{1+|\ell_0|+\Si_{\nu=1}^3|\ell_\nu|^2}
    \Big(\smprod_{\nu=0}^3  \sfrac{1}{|\ell_\nu|+\pi}\Big)^\fq
\end{equation*}
is bounded uniformly in $n$ and $L$ to \eqref{eqnPOGdeSnft}.

\Item (d) The bound on $\big|S_n(u,u')-S'_n(u,u')\big|$
follows from part (c) by \cite[Lemma \lemBOlonelinfty.b]{Bloch} with the
replacements \eqref{eqnPOGsubstspaces}.
The bound on $\big|S'_n(u,u')\big|$ follows from part (a),
noting in particular that $\sfrac{\Gam_{12}}{{(1+|p_0|+|\bp|^2)}^3}
\in L^1(\hat\cZ_\fin)$, and
\begin{equation*}
|u_\nu-u'_\nu|^jS'_n(u,u')
=\int_{\hat\cZ_\fin} 
     \sfrac{\partial^j\widehat S'_n}{\partial p_\nu^j}(p)\  
     e^{-ip\cdot(u-u')}\sfrac{d^4p}{(2\pi)^4}
\end{equation*}
and
\begin{align*}
\big|S'_n(u,u')\big|
&\le\bigg|\int_{\hat\cZ_\fin} 
      \widehat S'_n(p)\  
     e^{-ip\cdot(u-u')}\sfrac{d^4p}{(2\pi)^4}\bigg|
\le\Gam_{12}\int_{\hat\cZ_\fin} \sfrac{d^4p}{(2\pi)^4}
= \Gam_{12}\,L^{5n}
\end{align*}
\end{proof}

We now prove the analog of Lemma \ref{lemPOGSnppties} for the operators
$S_{n,\nu}^{(+)}$ and $S_{n,\nu}^{(-)}$ of \eqref{eqnPOGSnnuzerodef}.
As in \eqref{eqnPOGSntransapprox}--\eqref{eqnPOGdeSnft}, we decompose
$$
S_{n,\nu}^{(+)} ={S_n'}^{\!*} +\de S_\nu^{(+)}\qquad
S_{n,\nu}^{(-)} =S_n' +\de S_\nu^{(-)}
$$
with
\begin{align*}
\de S_\nu^{(+)} &= S_{n,\nu}^{(+)}-{S_n'}^{\!*}
          =- {S_n'}^{\!*}\big[Q_{n,\nu}^{(+)}\fQ_nQ_{n,\nu}^{(-)}
                       -a_n\exp\{-\De_n\}\big]S_{n,\nu}^{(+)} \\
\de S_\nu^{(-)} &= S_{n,\nu}^{(-)}-S_n'
          =- S'_n\big[Q_{n,\nu}^{(+)}\fQ_nQ_{n,\nu}^{(-)}
                       -a_n\exp\{-\De_n\}\big]S_{n,\nu}^{(-)}
\end{align*}
They have Fourier representations
\begin{equation}\label{eqnPOGdeSnnuft}
\begin{split}
\widehat{\de S_{\nu,k}^{(+)}}(\ell,\ell')
&=-\hskip-8pt\sum_{\ell''\in\hat\cB_n}\hskip-4.5pt
     \overline{\widehat S'_n(k\!+\!\ell)}
    \big[ U^{(+)}_{n,\nu}(k,\ell)\hat \fQ_n(k) U^{(-)}_{n,\nu}(k,\ell'')
             -a_ne^{-\De_n(k\!+\!\ell)}\de_{\ell,\ell''}\big]
    \widehat{S}_{n,\nu,k}^{(+)}(\ell''\!,\ell')\\
\widehat{\de S_{\nu,k}^{(-)}}(\ell,\ell')
&=-\hskip-8pt\sum_{\ell''\in\hat\cB_n}\hskip-4.5pt
     \widehat S'_n(k\!+\!\ell)
    \big[ U^{(+)}_{n,\nu}(k,\ell)\hat \fQ_n(k) U^{(-)}_{n,\nu}(k,\ell'')
             -a_ne^{-\De_n(k\!+\!\ell)}\de_{\ell,\ell''}\big]
    \widehat{S}_{n,\nu,k}^{(-)}(\ell''\!,\ell')
\end{split}
\end{equation}
where, by \eqref{eqnPBSqnplusminus},
\begin{align*}
U^{(+)}_{n,\nu}(k,\ell)&=\ze_{n,\nu}^{(+)}(k,\ell)u_{n,\nu}^{(+)}(k+\ell)
                                                u_n(k+\ell)^{\fq-1}
\\
U^{(-)}_{n,\nu}(k,\ell'')&=\ze_{n,\nu}^{(-)}(k,\ell'')u_{n,\nu}^{(-)}(k+\ell'')
                                                  u_n(k+\ell'')^{\fq-1} 
\end{align*}

\begin{lemma}\label{lemPOGSnnuppties}
There are constants $\mm_5>0$ and $\Gam_{13}$ 
such that the following hold for all $L>\Gam_2$.
\begin{enumerate}[label=(\alph*), leftmargin=*]
\item  
For all $\ell,\ell'\in\hat\cB_n$,
$\widehat S_{n,\nu,k}^{(\pm)}(\ell,\ell')$ is analytic in $|\Im k|<3\mm_5$ and obeys
\begin{align*}
\big|\widehat S_{n,\nu,k}^{(\pm)}(\ell,\ell')\big|
&\le \sfrac{\Gam_{13}}{1+|\ell_0|+\Si_{\nu=1}^3|\ell_\nu|^2}
\Big\{\de_{\ell,\ell'}
+\sfrac{1}{1+|\ell'_0|+\Si_{\nu=1}^3|\ell'_\nu|^2}
    \smprod_{\nu=0}^3\! \sfrac{1}{(|\ell_\nu|+1)^{\fq-1}}
    \smprod_{\nu=0}^3\! \sfrac{1}{(|\ell'_\nu|+1)^\fq}
\Big\}
\end{align*}
there.

\item 
For all $\ell,\ell'\in\hat\cB_n$,
$\widehat{\de S}_k(\ell,\ell')$ is analytic in $|\Im k|<3\mm_5$ and obeys
\begin{align*}
\big|\widehat{\de S_{\nu,k}^{(\pm)}}(\ell,\ell')\big|
&\le\Gam_{13}\exp\big\{-\sfrac{1}{40}\Si_{\nu=0}^3|\ell_\nu|^2\big\}\ \de_{\ell,\ell'}\\
&\hskip0.8in+\sfrac{\Gam_{13}}{1+|\ell_0|+\Si_{\nu=1}^3|\ell_\nu|^2}\ 
    \smprod_{\nu=0}^3 \sfrac{1}{(|\ell_\nu|+1)^{\fq-1}}
    \smprod_{\nu=0}^3 \sfrac{1}{(|\ell'_\nu|+1)^\fq} \ \ 
     \sfrac{1}{1+|\ell'_0|+\Si_{\nu=1}^3|\ell'_\nu|^2}
\end{align*}
there.

\item
For all $u,u'\in\cX_n$,
$
\ \big|S_{n,\nu}^{(\pm)}(u,u')-S'_n(u,u')\big|\le \Gam_{13} e^{-2\mm_5 |u-u'|}
$.
\end{enumerate}
\end{lemma}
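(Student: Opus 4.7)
The plan is to follow, step by step, the strategy used for Lemma \ref{lemPOGSnppties}, with $Q_n^*\fQ_nQ_n$ replaced by $Q_{n,\nu}^{(+)}\fQ_nQ_{n,\nu}^{(-)}$. The key observation is that, by the definitions in \eqref{eqnPBSqnplusminus} and the identity $\ze_{n,\nu}^{(-)}(k,\ell)u_{n,\nu}^{(-)}(k+\ell)\,\ze_{n,\nu}^{(+)}(k,\ell)u_{n,\nu}^{(+)}(k+\ell)=u_n(k+\ell)^2$ from the proof of Remark \ref{remPBSunderivAlg}, the ``diagonal'' cancellation at $\ell=\ell'=0$ that drove the proof of Lemma \ref{lemPOGSnppties}.b goes through unchanged:
$$U_{n,\nu}^{(+)}(k,0)\,\hat\fQ_n(k)\,U_{n,\nu}^{(-)}(k,0)-a_n\ =\ u_n(k)^{2\fq}\hat\fQ_n(k)-a_n\ =\ O(|k|^2),$$
by Lemma \ref{lemPBSunppties}.c and Proposition \ref{propPBSAnppties}.b.

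For part (a), the plan is to split the inverse Fourier kernel
$$\bigl(\widehat{S_{n,\nu,k}^{(\pm)}}\bigr)^{-1}(\ell,\ell')=\hat\bD_n^{(*)}(k+\ell)\,\de_{\ell,\ell'}+U_{n,\nu}^{(+)}(k,\ell)\,\hat\fQ_n(k)\,U_{n,\nu}^{(-)}(k,\ell')$$
as $D+B$ exactly as in \eqref{eqnPERSnkell}, and treat the small-$|k|$ and bounded-away-from-zero regimes separately. For $|k|<c'_0$, the diagonal $D$ is invertible with the claimed decay by parts (c), (d) of Lemma \ref{lemPDOhatSzeroppties}; the off-diagonal entries of $B$ (including those with $\ell\ne 0$ or $\ell'\ne 0$) acquire a factor of $|k|$ from $u_n(k+\ell)^{\fq-1}$ or $u_n(k+\ell')^{\fq-1}$ via Lemma \ref{lemPBSunppties}.b (using $\fq-1\ge 1$), while the remaining factors are bounded by Lemma \ref{lemPBSunderiv}.b; and the $\ell=\ell'=0$ entry of $B$, after absorbing $a_n$ into $D$, is $O(|k|^2)$ by the cancellation above. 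A Neumann expansion then yields the claimed bound. The asymmetric exponents $\fq-1$ (for $\ell$) versus $\fq$ (for $\ell'$) reflect that $u_{n,\nu}^{(+)}$ is a product over three indices whereas $u_{n,\nu}^{(-)}$ is a product over four indices with one duplicated factor, as recorded in Lemma \ref{lemPBSunderiv}.b. For $|k|\ge c''_0$, I would instead use the resolvent expansion $S_{n,\nu}^{(-)}=D_n^{-1}-D_n^{-1}Q_{n,\nu}^{(+)}\De_\nu^{(n)}Q_{n,\nu}^{(-)}D_n^{-1}$ (and its $(+)$ counterpart), paralleling the corresponding step in the proof of Lemma \ref{lemPOGSnppties}.b.

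For part (b), I would substitute the bounds from part (a), Lemma \ref{lemPOGSnppties}.a, Lemma \ref{lemPBSunderiv}.b, Proposition \ref{propPBSAnppties}.a, and the exponential decay of $|a_ne^{-\De_n(k+\ell)}|$ in $|\ell|$ (Remark \ref{remPOGpropLapl}) into the representation \eqref{eqnPOGdeSnnuft}, noting that at $\ell=\ell''=0$ the bracketed expression reduces to $u_n(k)^{2\fq}\hat\fQ_n(k)-a_ne^{-\De_n(k)}$, which is $O(|k|^2)$ by the same argument as in the proof of Lemma \ref{lemPOGSnppties}.c. Summing over $\ell''$ then proceeds verbatim. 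Part (c) follows by applying \cite[Lemma \lemBOlonelinfty.b]{Bloch} with the substitutions \eqref{eqnPOGsubstspaces} to part (b), together with the bound on $|S'_n(u,u')|$ from Lemma \ref{lemPOGSnppties}.d, exactly as in the proof of Lemma \ref{lemPOGSnppties}.d.

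The main obstacle is the bookkeeping for the asymmetric decay in part (a): one must track the three-index structure of $u_{n,\nu}^{(+)}$ versus the four-index structure of $u_{n,\nu}^{(-)}$ carefully enough to verify that the Neumann expansion converges with precisely the claimed exponents $\fq-1$ on the $\ell$ side and $\fq$ on the $\ell'$ side. Once part (a) is in hand, parts (b) and (c) are direct transcriptions of the corresponding parts of Lemma \ref{lemPOGSnppties}.
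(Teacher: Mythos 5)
Your proposal matches the paper's proof essentially step by step: the large-$|k|$ regime via the resolvent identity from \cite[Remark \remBSedA.b]{BlockSpin}, the small-$|k|$ regime via the $D+B$ split in \eqref{eqnPERSnnukell} with Neumann expansion, the $\ell=\ell'=0$ cancellation using $U^{(+)}_{n,\nu}(k,0)U^{(-)}_{n,\nu}(k,0)=u_n(k)^{2\fq}$, and parts (b) and (c) by direct transcription of the corresponding parts of Lemma \ref{lemPOGSnppties}. One small correction: in the large-$|k|$ resolvent expansion you wrote $\De_\nu^{(n)}$, but the middle operator is the same $\De^{(n)}$ as before (with no $\nu$ dependence), precisely because Remark \ref{remPBSunderivAlg} gives $Q_{n,\nu}^{(-)}D_n^{-1}Q_{n,\nu}^{(+)}=Q_nD_n^{-1}Q_n^*$, so $\bigl(\bbbone+\fQ_nQ_{n,\nu}^{(-)}D_n^{-1}Q_{n,\nu}^{(+)}\bigr)^{-1}\fQ_n=\De^{(n)}$.
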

\begin{proof}  
(a) For any $c''_0>0$ we have, for $|k|\ge c''_0$ and $|\Im k_0|<3\mm_5$, 
analyticity and the desired bound
follows from the representations 
(apply \cite[Remark \remBSedA.b]{BlockSpin} with
$\,R= Q_{n,\nu}^{(-)}\,$, $\,R_*= Q_{n,\nu}^{(+)}\,$ and use 
Remark \ref{remPBSunderivAlg} to give $\,R\,D^{-1} R_* = Q_-D^{-1}Q_-^*\,$)
\begin{equation*}
S_{n,\nu}^{(+)} = {D_n^*}^{-1} - 
        {D_n^*}^{-1}\,Q_{n,\nu}^{(+)}\De^{(n)} Q_{n,\nu}^{(-)}{D_n^*}^{-1} \qquad
S_{n,\nu}^{(-)} = D_n^{-1} - D_n^{-1}\,Q_{n,\nu}^{(+)}\De^{(n)} Q_{n,\nu}^{(-)}D_n^{-1} 
\end{equation*}
and Lemmas \ref{lemPDOhatSzeroppties}.d, \ref{lemPBSunppties}.a, 
\ref{lemPBSunderiv}.b and \ref{lemPOCDenppties}.c.

For $|k|< c'_0$, with $c'_0$ to be shortly chosen sufficiently small,
and $|\Im k|<3\mm_5$ we use the representation
\begin{align}\label{eqnPERSnnukell}
&\big(\hat S_{n,\nu,k}^{(-)}\big)^{-1}(\ell,\ell')
=\hat\bD_n(k+\ell)\,\de_{\ell,\ell'}
  +U^{(+)}_{n,\nu}(k,\ell)\hat \fQ_n(k)U^{(-)}_{n,\nu}(k,\ell')
=D_{\ell,\ell'}+B_{\ell,\ell'}
\end{align}
with
\begin{align*}
D_{\ell,\ell'}
&=\hat\bD_n(k+\ell)\, \de_{\ell,\ell'}
  +\begin{cases} a_n & \text{if $\ell,\ell'=0$}\\
                   \noalign{\vskip0.02in}
                   0 & \text{otherwise}
    \end{cases}
\\ \noalign{\vskip0.1in}
B_{\ell,\ell'}
&=\begin{cases}\hat \fQ_n(k)u_n(k)^{2\fq}-a_n  & \text{if $\ell=\ell'=0$}\\
               \noalign{\vskip0.02in}
          U^{(+)}_{n,\nu}(k,\ell) \hat \fQ_n(k) U^{(-)}_{n,\nu}(k,\ell')  & 
                           \text{otherwise}
              \end{cases}
\end{align*}
and the obvious analog for $\big(\hat S_{n,\nu,k}^{(+)}\big)^{-1}$
--- just replace $\hat\bD_n(k+\ell)$ with its
complex conjugate.
As in the proof of Lemma \ref{lemPOGSnppties}.b, $D$ is invertible and the 
inverse is a diagonal matrix with every diagonal matrix element obeying
\begin{equation*}
\big|D^{-1}_{\ell,\ell}\big|
\le \sfrac{\Gam'_{11}}{1+|\ell_0|+\Si_{\nu=1}^3|\ell_\nu|^2}
\end{equation*}
for some $\Gam'_{11}$ which is independent of $c'_0$.
By parts (b) and (c) of Lemma \ref{lemPBSunppties},
parts (a) and (b) of Proposition \ref{propPBSAnppties}, and part (b)
of Lemma \ref{lemPBSunderiv},
\begin{equation*}
\big|B_{\ell,\ell'}\big|\le\cst{}{a,\fq}\ |k|\smprod_{\nu=0}^3  \big(\sfrac{24}{|\ell_\nu|+\pi}\big)^{\fq-1}
                 \smprod_{\nu=0}^3  \big(\sfrac{24}{|\ell'_\nu|+\pi}\big)^\fq
\end{equation*}
So if $|k|<c'_0$ with $c'_0$ small enough, $D+B$ is invertible with
the inverse given by the Neumann expansion 
$D^{-1}+\sum_{p=1}^\infty (-1)^p D^{-1}\big(BD^{-1}\big)^p$. 
As $\fq>1$, the desired analyticity and the desired bound on 
$\big|\widehat S_{n,\nu,k}^{(\pm)}(\ell,\ell')\big|$ follow as
in the proof of Lemma \ref{lemPOGSnppties}.b

\Item (b) is proven just as Lemma \ref{lemPOGSnppties}.c.

\Item (c) follows from part (b) by \cite[Lemma \lemBOlonelinfty.b]{Bloch}.
\end{proof}

\begin{proof}[Proof of Proposition \ref{POGmainpos}]
Set $\mm_3=\min\{\mm_4,\mm_5\}$.
As $\sfrac{1}{|u_0|^2+|\bu|^4}$ is locally integrable in $\bbbr^4$,
the pointwise bounds on $\big|S_n(u,u')-S'_n(u,u')\big|$ and
$\big|S'_n(u,u')\big|$, given in Lemma \ref{lemPOGSnppties}.d, and on 
$\big|S_{n,\nu}^{(\pm)}(u,u')-S'_n(u,u')\big|$, given in 
Lemma \ref{lemPOGSnnuppties}.c, imply
$$
\|\ S_n\|_{\mm_3}, \|\ S_{n,\nu}^{(\pm)}\|_{\mm_3} \le \tilde \Gam_{11}
$$
with $ \tilde \Gam_{11}$ a constant, depending only on $\mm_3$, times 
$\max\{\Gam_{12},\Gam_{13}\}$.
Setting $\mu_{\rm up}= \sfrac{1}{2\tilde \Gam_{11}}$ and 
$\Gam_{11}$ to be the maximum of $2\tilde \Gam_{11}$ (for 
$\|\ S_n(\mu) \|_{\mm_3}$ and$\|\ S_{n,\nu}^{(\pm)}(\mu) \|_{\mm_3}$)
and $2\tilde \Gam_{11}^2$ (for 
$\|\ S_n(\mu)-S_n \|_{\mm_3}$ and$\|\ S_{n,\nu}^{(\pm)}(\mu)
-S_{n,\nu}^{(\pm)} \|_{\mm_3}$)
a  Neumann expansion gives the specified bounds.

\end{proof}

We now formulate and prove two more technical lemmas that will be used
elsewhere.
\begin{lemma}\label{lemPOGsnQstar}
There are constants $\mm_6>0$ and $\Gam_{14}$ 
such that, for all $L>\Gam_2$,
\begin{equation*}
\big|\big(S_nQ_n^*\big)(y,x)\big|\le \Gam_{14}e^{-2\mm_6 |x-y|}\qquad
\|S_nQ_n^*\|_{\mm_6} \le \Gam_{14}
\end{equation*}
\end{lemma}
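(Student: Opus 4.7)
The plan is to follow the pattern of Lemma \ref{lemPOGSnppties} and Lemma \ref{lemPOGSnnuppties}: derive a uniform, strip-analytic Fourier bound on the operator $S_nQ_n^*$ and then invoke \cite[Lemma \lemBOlonelinfty.b]{Bloch} to transfer the estimate to position space.

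First, since $S_n$ is translation invariant with respect to the sublattice $\cX_0^{(n)}\subset\cX_n$ and $Q_n^*$ acts in momentum as in Remark \ref{remPBSqnft}.e, the operator $S_nQ_n^*$ carries a ``Floquet block'' representation: for every $\psi\in\cH_0^{(n)}$, $k\in\hat\cX_0^{(n)}$ and $\ell\in\hat\cB_n$,
\begin{equation*}
\widehat{S_nQ_n^*\psi}(k+\ell)=T_k(\ell)\,\hat\psi(k),\qquad
T_k(\ell):=\sum_{\ell''\in\hat\cB_n}\widehat{S}_{n,k}(\ell,\ell'')\,u_n(k+\ell'')^\fq.
\end{equation*}

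Next, I would estimate $T_k(\ell)$ on the complex strip $|\Im k|<3\mm_4$ using the bounds already available:
\begin{itemize}[leftmargin=*, topsep=2pt, itemsep=0pt, parsep=0pt]
\item Lemma \ref{lemPOGSnppties}.b for $\widehat{S}_{n,k}(\ell,\ell'')$, which gives a diagonal piece bounded by $\frac{\Gam_{12}}{1+|\ell_0|+\Si|\ell_\nu|^2}\de_{\ell,\ell''}$ and an off--diagonal piece with an extra decay factor $\prod_\nu(|\ell_\nu|+1)^{-\fq}\prod_\nu(|\ell''_\nu|+1)^{-\fq}$;
\item Lemma \ref{lemPBSunppties}.a for $u_n(k+\ell'')^\fq$.
\end{itemize}
Combining these, the diagonal term contributes $\frac{\Gam_{12}}{1+|\ell_0|+\Si|\ell_\nu|^2}|u_n(k+\ell)|^\fq$, while in the off--diagonal term the sum over $\ell''$ of $\prod_\nu(|\ell''_\nu|+1)^{-\fq}|u_n(k+\ell'')|^\fq$ converges absolutely and uniformly (in $k$, $n$, $L$) because $\fq\ge 2$. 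All the ingredients are analytic in $|\Im k|<3\mm_4$ (by Lemma \ref{lemPOGSnppties}.b and Remark \ref{remPBSqnft}.d), so the uniform convergence promotes $T_k(\ell)$ to an analytic function on the same strip, with
\begin{equation*}
|T_k(\ell)|\le \frac{C}{1+|\ell_0|+\Si_{\nu=1}^3|\ell_\nu|^2}\,\prod_{\nu=0}^3\frac{1}{(|\ell_\nu|+1)^\fq}
\end{equation*}
for a constant $C$ depending only on $\fq$, $\bh_0$, $a$.

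Finally, setting $\mm_6:=\mm_4$ and applying \cite[Lemma \lemBOlonelinfty.b]{Bloch} with the substitutions \eqref{eqnPOGsubstspaces} yields both conclusions simultaneously: the right-hand side of the bound above is summable in $\ell\in\hat\cB_n$ and integrable over a fundamental cell for $k$, so its inverse Floquet transform produces a position-space kernel obeying the pointwise estimate $|(S_nQ_n^*)(y,x)|\le \Gam_{14}\,e^{-2\mm_6|x-y|}$; the norm bound $\|S_nQ_n^*\|_{\mm_6}\le\Gam_{14}$ follows by integrating $e^{\mm_6|x-y|}$ against this pointwise bound, since the remaining exponential $e^{-\mm_6|x-y|}$ is integrable uniformly in $x$ (respectively $y$).

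The only delicate step is ensuring that the $\ell''$--sum defining $T_k(\ell)$ converges uniformly in $k$, $n$, $L$ and preserves the full product decay $\prod_\nu(|\ell_\nu|+1)^{-\fq}$; this is precisely where $\fq\ge 2$ enters, and is consistent with the standing hypothesis $\fq\ge 4$. Everything else is bookkeeping against the estimates already established in Lemmas \ref{lemPBSunppties}, \ref{lemPOGSnppties} and the Bloch-theoretic machinery of \cite{Bloch}.
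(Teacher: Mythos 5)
Your argument is correct, but it takes a genuinely different route from the paper's. You estimate the Floquet block
\begin{equation*}
T_k(\ell)=\sum_{\ell''\in\hat\cB_n}\widehat{S}_{n,k}(\ell,\ell'')\,u_n(k+\ell'')^\fq
\end{equation*}
directly, plugging in the matrix bounds of Lemma \ref{lemPOGSnppties}.b and summing over $\ell''$. The paper instead uses the algebraic identity $S_nQ_n^*= D_n^{-1}Q_n^*\De^{(n)}\fQ_n^{-1}$, which follows from the definitions of $S_n$ and $\De^{(n)}$ and collapses the Fourier transform to a \emph{single product}
\begin{equation*}
\hat b_\rk(\ell) = \hat\bD_n^{-1}(\rk+\ell)\,u_n(\rk+\ell)^\fq\,
                        \hat\De^{(n)}(\rk)\,\hat \fQ_n(\rk)^{-1},
\end{equation*}
with no $\ell''$-sum at all. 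One then bounds the four factors term by term via Lemmas \ref{lemPDOhatSzeroppties}.d, \ref{lemPBSunppties}.a, \ref{lemPOCDenppties}.c,f and Proposition \ref{propPBSAnppties}.a. The paper's route is cleaner (the algebraic cancellation does the work; no convergence of a series to verify) and, as a byproduct, it does not invoke Lemma \ref{lemPOGSnppties}.b and so has milder prerequisites. Your route buys nothing that the paper's doesn't already have, but it is a perfectly serviceable alternative, and your control of the $\ell''$-sum is correct: the product decay $\prod_\nu(|\ell''_\nu|+1)^{-\fq}\prod_\nu(|\ell''_\nu|+\pi)^{-\fq}$ is summable already for $\fq\ge 1$ (aided by the $(1+|\ell''_0|+\Sigma|\ell''_\nu|^2)^{-1}$ factor), so $\fq\ge 2$ is more than enough. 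Two small points to clean up: (i) the reference should be to \cite[Lemma \lemBOlonelinfty.c]{Bloch} rather than part (b) --- part (b) is used for operators whose Floquet kernel is a genuine $\hat\cB\times\hat\cB$ matrix, whereas $S_nQ_n^*$ has a kernel indexed by a single $\ell\in\hat\cB_n$ and part (c) is the appropriate version, exactly as in the proofs of Lemma \ref{lemPBSunppties}.a and Lemma \ref{lemPOGsnQstar}; (ii) the constant $\mm_6$ must be at most $\min\{\mm_1,\mm_4\}$ (or whatever smaller radius your chain of lemmas supports), not simply $\mm_4$, since the analyticity and the bound on $\hat\De^{(n)}$ come with the strip width $3\mm_1$.
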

\begin{proof} From the definitions of $S_n$,  in \eqref{eqnPOGSnnuzerodef}, and 
$\De^{(n)}$, at the beginning of \S\ref{secPOcovariance}, one sees directly
that
\begin{equation}\label{eqnPOGsq}
S_n^{(*)}Q_n^*= {D_n^{(*)}}^{-1}Q_n^*\De^{(n)(*)}\fQ_n^{-1}: L^2(\cX_\crs) \rightarrow L^2(\cX_\fin)
\end{equation}
The Fourier transform of the kernel, $b(y,x)$, of the operator $S_nQ_n^*$
is
\begin{equation*}
\hat b_\rk(\ell) = \hat\bD_n^{-1}(\rk+\ell)u_n(\rk+\ell)^\fq
                        \hat\De^{(n)}(\rk)\sfrac{1}{\hat \fQ_n(\rk)}
\end{equation*}
By Lemma \ref{lemPOCDenppties}.a,c,f,\ \  
   Remark \ref{remPDOftDn}.b and Lemma \ref{lemPDOhatSzeroppties}.d,\ \ 
   Proposition \ref{propPBSAnppties}.a,\ \ 
   Remark \ref{remPBSqnft}.d  and Lemma \ref{lemPBSunppties}.a, 
$\hat b_\rk(\ell)$ is analytic in $|\Im \rk|<3\mm_6$ and
\begin{equation*}
\big|\hat b_\rk(\ell)\big|
\le \frac{\sfrac{5}{4a}\Gam_6}{1+|\rk_0+\ell_0|+\smsum_{\nu=1}^3 |\rk_\nu+\ell_\nu|^2}
     \prod_{\nu=0}^3 \Big(\frac{24}{|\ell_\nu|+\pi}\Big)^{\fq}
\end{equation*}
The bound is uniform in $n$ and $L$ and is summable in $\ell$, so the claims
follow from \cite[Lemma \lemBOlonelinfty.c]{Bloch}.
\end{proof}

\begin{lemma}\label{lemPOGrightinverse}
There are constants $\mm_7>0$ and $\Gam_{15}$
such that, for all $L>\Gam_2$, the operators 
\begin{align*}
&\big(S_n(\mu)Q_n^*\fQ_n\big)^*\big(S_n(\mu)Q_n^*\fQ_n\big),\ 
\big(S_n(\mu)^*Q_n^*\fQ_n\big)^*\big(S_n(\mu)^*Q_n^*\fQ_n\big)\\
&\big(S_{n,\nu}^{(\pm)}(\mu)Q_{n,\nu}^{(+)} \fQ_n\big)^*
       \big(S_{n,\nu}^{(\pm)}(\mu)Q_{n,\nu}^{(+)} \fQ_n\big)
\end{align*}
all have bounded inverses. The $\|\ \cdot\ \|_{\mm_7}$ norms of the inverses
are all bounded by $\Gam_{15}$.

\end{lemma}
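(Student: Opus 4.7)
The plan is to reduce everything to the case $\mu = 0$ by perturbation, and to handle the $\mu=0$ case by computing the (scalar) Fourier symbol of $T^*T$ in closed form and bounding it below uniformly in a complex strip.

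Take $T_0 = S_n Q_n^* \fQ_n : \cH_0^{(n)} \to \cH_n$ for concreteness. Since $T_0$ is translation invariant with respect to $\cX_0^{(n)}$, so is $T_0^*T_0$, and $\widehat{T_0^*T_0}(k)$ is a single scalar analytic periodic function. The block $\hat S_{n,k}^{-1}$ from \eqref{eqnPERSnkell} has the rank-one form
\begin{equation*}
\hat S_{n,k}^{-1}(\ell,\ell') = \hat\bD_n(k+\ell)\,\delta_{\ell,\ell'} + \hat\fQ_n(k)\,v_k(\ell)v_k(\ell'), \qquad v_k(\ell) = u_n(k+\ell)^\fq,
\end{equation*}
so the Sherman--Morrison identity combined with the definition of $\hat\bDe^{(n)}$ in Remark \ref{remPOCde}.a yields, on the real axis,
\begin{equation*}
T_0\text{'s }k\text{-column}(\ell) = \frac{\hat\bDe^{(n)}(k)\,u_n(k+\ell)^\fq}{\hat\bD_n(k+\ell)}, \qquad \widehat{T_0^*T_0}(k) = |\hat\bDe^{(n)}(k)|^2 \sum_{\ell\in\hat\cB_n} \frac{|u_n(k+\ell)|^{2\fq}}{|\hat\bD_n(k+\ell)|^2}.
\end{equation*}
Retaining only the $\ell = 0$ term and applying Lemma \ref{lemPOCDenppties}.e (which gives $|\hat\bDe^{(n)}(k)|\ge\gam_2|\hat\bD_n(k)|$) and Lemma \ref{lemPBSunppties}.f (which gives $|u_n(k)|\ge (2/\pi)^4$) produces $\widehat{T_0^*T_0}(k)\ge \gam_2^2(2/\pi)^{8\fq}$ on the real axis, uniformly in $n$ and $L$.

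To propagate this bound into a complex strip uniformly in $n,L$, the already-established uniform bounds $\|S_n\|_{\mm_3},\,\|Q_n^*\|_{m=1},\,\|\fQ_n\|_{m=1} \le \const$ (Proposition \ref{POGmainpos}, Lemma \ref{lemPBSunppties}.a, Proposition \ref{propPBSAnppties}.c) give $\|T_0^*T_0\|_{\mm_3}\le\const$, so $\widehat{T_0^*T_0}(k)$ is analytic and uniformly bounded in $|\Im k|<3\mm_3$. A Cauchy estimate on its first derivative then yields a strip $|\Im k|<3\mm_7$, with $\mm_7>0$ chosen independently of $n,L$, in which the real-axis lower bound survives (halved). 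Consequently $1/\widehat{T_0^*T_0}(k)$ is analytic and bounded there, and \cite[Lemma \lemBOlonelinfty.b]{Bloch} gives $\|(T_0^*T_0)^{-1}\|_{\mm_7}\le\const$. To include $|\mu|\le\mu_{\rm up}$: Proposition \ref{POGmainpos} gives $T(\mu)-T_0 = (S_n(\mu)-S_n)Q_n^*\fQ_n = O(|\mu|)$ in $\|\cdot\|_{\mm_3}$, hence $T(\mu)^*T(\mu) - T_0^*T_0 = O(|\mu|)$, and a Neumann series in the Banach algebra $(\|\cdot\|_{\mm_7},\circ)$ (after possibly shrinking $\mu_{\rm up}$) gives $\|(T(\mu)^*T(\mu))^{-1}\|_{\mm_7}\le\Gam_{15}$.

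The remaining two families go through by the same template. Replacing $S_n(\mu)$ by $S_n(\mu)^*$ only complex-conjugates $\hat\bD_n$ (and $\mu$), which is invisible under the moduli in the symbol formula. For $T = S_{n,\nu}^{(\pm)}(\mu)Q_{n,\nu}^{(+)}\fQ_n$, the rank-one piece in \eqref{eqnPERSnnukell} is asymmetric, $A_k + \hat\fQ_n(k)\,U^{(+)}_{n,\nu}(k,\cdot)\,U^{(-)}_{n,\nu}(k,\cdot)^T$, but Sherman--Morrison still applies, and the key product identity $U^{(+)}_{n,\nu}(k,\ell)\,U^{(-)}_{n,\nu}(k,\ell) = u_n(k+\ell)^{2\fq}$ from the proof of Remark \ref{remPBSunderivAlg} keeps both the Sherman--Morrison denominator and the $\hat\bDe^{(n)}$-factor identical to the first case; the $\ell=0$ lower bound reduces to bounding $|U^{(+)}_{n,\nu}(k,0)|^2$ below on the real axis, which follows by applying Lemma \ref{lemPBSsinxoverx}.d to the three $\nu'\ne\nu$ factors defining $u^{(+)}_{n,\nu}$ together with Lemma \ref{lemPBSunppties}.f for the $u_n^{\fq-1}$ factor. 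The only delicate step throughout is the propagation of the real-axis lower bound into a strip uniformly in $n,L$; the Cauchy-estimate argument works precisely because the uniform $\|\cdot\|_{\mm_3}$-bound on $T$ has already been paid for in the preceding sections.
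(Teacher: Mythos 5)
Your proof is correct and follows essentially the same strategy as the paper's: the same Fourier--symbol formula $\tilde b_k(\ell)=\hat\bD_n^{-1}(k+\ell)\,u_n(k+\ell)^\fq\,\hat\De^{(n)}(k)$, the same $\ell=0$ lower bound via Lemmas \ref{lemPOCDenppties}.e and \ref{lemPBSunppties}.f, the same observation that replacing $S_n$ by $S_n^*$ (or passing to the $S_{n,\nu}^{(\pm)}$ family) leaves $T^*T$ unchanged on the real axis, and the same $\mu$--perturbation by Neumann series using Proposition~\ref{POGmainpos}. The only (minor) variations are that you derive the symbol via a Sherman--Morrison computation where the paper simply cites the closed form $S_n Q_n^*\fQ_n=D_n^{-1}Q_n^*\De^{(n)}$ of \eqref{eqnPOGsq}, and that you propagate the real--axis lower bound into a uniform strip by a soft Cauchy estimate on the full symbol (leaning on the already--paid--for $\|\cdot\|_{\mm_3}$ bounds) where the paper instead lists explicit first--derivative bounds on the constituent functions; both routes yield the same conclusion.
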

\begin{proof} We first consider the case that $\mu=0$.
By \eqref{eqnPOGsq}, the operator 
\begin{equation*}
S_nQ_n^*\fQ_n=D_n^{-1}Q_n^*\De^{(n)}:L^2(\cX_\crs)\rightarrow L^2(\cX_\fin)
\end{equation*}
has Fourier transform 
\begin{equation*}
\tilde b_k(\ell) = \hat\bD_n^{-1}(k+\ell)u_n(k+\ell)^\fq
                        \hat\De^{(n)}(k)
\end{equation*}
The operator $(S_nQ_n^*\fQ_n)^*(S_nQ_n^*\fQ_n)$ 
maps $L^2(\cX_\crs)$ to $L^2(\cX_\crs)$ and has Fourier transform 
\begin{equation*}
\sum_{\ell\in\hat\cB}
      \tilde b_{-k}(-\ell)\tilde b_k(\ell) 
= \sum_{\ell\in\hat\cB}
       \hat\bD_n^{-1}(-k-\ell)
        \hat\bD_n^{-1}(k+\ell)u_n(k+\ell)^{2\fq}
                        \hat\De^{(n)}(-k)\hat\De^{(n)}(k)
\end{equation*}
For $k$ real,
\begin{align*}
\sum_{\ell\in\hat\cB}
      \tilde b_{-k}(-\ell)\tilde b_k(\ell) 
&= \sum_{\ell\in\hat\cB} \big|\hat\bD_n^{-1}(k+\ell)u_n(k+\ell)^{\fq}\hat\De^{(n)}(k)\big|^2
\\
&\ge \big|\hat\bD_n^{-1}(k)u_n(k)^{\fq}\hat\De^{(n)}(k)\big|^2
\\
&\ge \gam_2^2\inf_{|k_\nu|\le\pi}|u_n(k)|^{2\fq}
\\
&\ge \gam_2^2\big(\sfrac{2}{\pi}\big)^{8\fq}
\end{align*}
by Lemmas \ref{lemPOCDenppties}.e and \ref{lemPBSunppties}.f. To show that half this
the lower bound extends into a strip along the real axis that has 
width independent of $n$ and $L$, we observe that
\begin{itemize}[leftmargin=*, topsep=2pt, itemsep=0pt, parsep=0pt]
\item
all first order derivatives of $u_n(k+\ell)$ are 
uniformly bounded by $2 \smprod_{\nu=0}^3 \sfrac{24}{|\ell_\nu|+\pi} $
on such a strip by the Cauchy integral formula and Remark \ref{remPBSqnft}.d and 
Lemma \ref{lemPBSunppties}.a and 
\item
$u_n(k+\ell)$ itself is 
uniformly bounded by $\smprod_{\nu=0}^3 \sfrac{24}{|\ell_\nu|+\pi} $
on such a strip by Lemma \ref{lemPBSunppties}.a and 
\item
all first order derivatives of $\hat\De^{(n)}(k)$ are 
uniformly bounded on such a strip by Lemma \ref{lemPOCDenppties}.c and
\item
for $\ell\ne 0$, all first order derivatives of 
$\hat\bD_n^{-1}(k+\ell)$ are uniformly bounded on such a 
strip by parts (c) and (d) of Lemma \ref{lemPDOhatSzeroppties} and
\item
for $\ell= 0$, all first order derivatives of 
$\hat\bD_n^{-1}(k)\hat\De^{(n)}(k)$ are uniformly bounded on such a 
strip by Lemma \ref{lemPOCDenppties}.f.
\end{itemize}

\noindent
The operator $S_n^*Q_n^*\fQ_n= {D_n^{-1}}^*Q_n^*\De^{(n)*}$
has Fourier transform $\tilde b_{-k}(-\ell)$. So
\begin{equation*}
(S_nQ_n^*\fQ_n)^*(S_nQ_n^*\fQ_n)=(S_n^*Q_n^*\fQ_n)^*(S_n^*Q_n^*\fQ_n)
\end{equation*}
The operators $S_{n,\nu}^{(-)}Q_{n,\nu}^{(+)} \fQ_n
                  =D_n^{-1}Q_{n,\nu}^{(+)}\De^{(n)}$ 
and 
$S_{n,\nu}^{(+)}Q_{n,\nu}^{(+)} \fQ_n
                  ={D_n^{-1}}^*Q_{n,\nu}^{(+)}\De^{(n)*}$
map $L^2(\cX_\crs)$ to $L^2(\cX_\fin)$ and have Fourier transforms 
\begin{equation*}
\tilde c_k(\ell) = \hat\bD_n^{-1}(\pm k\pm\ell)
        \ze_{n,\nu}^{(+)}(k,\ell)u_{n,\nu}^{(+)}(k+\ell)u_n(k+\ell)^{\fq-1}
                        \hat\De^{(n)}(\pm k)
\end{equation*}
So the operators 
$(S_{n,\nu}^{(\pm)}Q_{n,\nu}^{(+)} \fQ_n)^*(S_{n,\nu}^{(\pm)}Q_{n,\nu}^{(+)} \fQ_n)$
both map $L^2(\cX_\crs)$ to $L^2(\cX_\crs)$ and have Fourier transform 
\begin{align*}
&\sum_{\ell\in\hat\cB}
      \tilde c_{-k}(-\ell)\tilde c_k(\ell) \\
&\hskip0.5in= \sum_{\ell\in\hat\cB}
       \hat\bD_n^{-1}(-k-\ell)
        \hat\bD_n^{-1}(k+\ell)u_{n,\nu}^{(+)}(k+\ell)^2
                        u_n(k+\ell)^{2(\fq-1)}
                        \hat\De^{(n)}(-k)\hat\De^{(n)}(k)
\end{align*}
This is bounded just as $\sum_{\ell\in\hat\cB}
      \tilde b_{-k}(-\ell)\tilde b_k(\ell)$ was. 
The specified bounds, in the special case that $\mu=0$ follow.

By Proposition \ref{POGmainpos}, a Neumann expansion
gives the desired bounds when $\mu$ is nonzero.

\end{proof}

\newpage
\section{The Degree One Part of the Critical Field}\label{secPOleadingOrder}

In \cite[Proposition \propCFpsisoln\ and (\eqnBGEpsiNexp)]{BGE} we derive 
an expansion for the critical fields of the form 
$$
\psi_{(*)n}(\th_*,\th,\mu,\cV) 
=\sfrac{a}{L^2} C^{(n)}(\mu)^{(*)} Q^*\th_{(*)}
   +\psi_{(*)n}^{(\ge 3)}(\th_*,\th,\mu,\cV) 
$$
with the $C^{(n)}(\mu)$ of \cite[Proposition \propHTexistencecriticalfields]{PAR1}
and with $\psi_{(*)n}^{(\ge 3)}$ being of degree at least $3$ in $\th_{(*)}$.
In this section we derive bounds on a scaled version of 
$C^{(n)}(\mu)^{(*)} Q^*\th_{(*)}$ and some related operators. To do so
we use the representation 
\begin{equation}\label{eqnPOLanotherC}
\sfrac{a}{L^2} C^{(n)}(\mu)^{(*)} Q^*
=\big(\sfrac{a}{L^2}Q^* Q+\fQ_n\big)^{-1}
    \big\{\sfrac{a}{L^2}Q^*  
   +\fQ_n Q_n \check S_{n+1}(\mu)^{(*)}\check Q_{n+1}^* \check\fQ_{n+1}
   \big\}
\end{equation}
of \cite[(\eqnBGEanotherC)]{BGE}. Here, as in \cite[Lemma \lemSCacheckOne\ and 
(\eqnBGEcheckSmu)]{BGE},
\begin{alignat}{5}\label{eqnPOLcheckDefs}
\check Q_{n+1}
     &= \bbbl_*Q_{n+1}\bbbl_*^{-1}
     &&=QQ_n 
     &&:\cH_n\rightarrow\cH_{-1}^{(n+1)}\notag\\
\check\fQ_{n+1}^{-1}
     &=L^2\, \bbbl_*\fQ_{n+1}^{-1}\bbbl_*^{-1}
     &&=\sfrac{1}{aL^{-2}}\bbbone+Q\fQ_n^{-1}Q^*
     &&:\cH_{-1}^{(n+1)}\rightarrow\cH_{-1}^{(n+1)}\notag\\
\check S_{n+1}(\mu)
      &=  L^2\, \bbbl_*S_{n+1}(L^2\mu)\bbbl_*^{-1}
     &&=\big\{D_n-\mu
        +\check Q_{n+1}^*\check\fQ_{n+1}\check Q_{n+1}\big\}^{-1}
     &&:\cH_n\rightarrow\cH_n 
\end{alignat}
These operators are all translation invariant with respect to
$\cX_{-1}^{(n+1)}$. As
\begin{equation*}
\check S_{n+1}(\mu) = \check S_{n+1} +\mu \check S_{n+1} \check S_{n+1}(\mu)
\qquad\text{with }\check S_{n+1}=\check S_{n+1}(0)
\end{equation*}
we have
\begin{equation}\label{eqnPOLcmunomu}
\sfrac{a}{L^2} C^{(n)}(\mu)^{(*)} Q^*
=\sfrac{a}{L^2} {C^{(n)}}^{(*)} Q^*
+\mu A_{\psi,\phi} \check S_{n+1}^{(*)} \check S_{n+1}(\mu)^{(*)}
            \check Q_{n+1}^* \check\fQ_{n+1}
\end{equation}
where
\begin{equation}\label{eqnPOLapsiphi}
A_{\psi,\phi}=(aL^{-2}Q^* Q+\fQ_n)^{-1}\fQ_nQ_n 
   :\cH_n\rightarrow\cH_0^{(n)}
\end{equation}
The operator $A_{\psi,\phi}$ is also used in the course of bounding 
$\psi_{(*)n}^{(\ge 3)}$ in \cite[Proposition \propCFpsisoln]{BGE}.

The main results of this section are
\begin{proposition}\label{propPOLmain}
There are constants\footnote{Recall Convention \ref{convPOconstants}.}
$\mm_8>0$ and $\Gam_{16}, \Gam_{17}$
such that the following holds, for each $L>\Gam_{17}$ and each $\mu$
obeying $|L^2\mu|\le\mu_{\rm up}$.

\begin{enumerate}[label=(\alph*), leftmargin=*]

\item \ \ \ 
$
\big\|\bbbl_*^{-1}A_{\psi,\phi}\bbbl_*\big\|_{m=1}\le\Gam_{16}
$
\ \ \ and \ \ \ 
$
\big\|\bbbl_*^{-1}\ \sfrac{a}{L^2} C^{(n)}(\mu)^{(*)} Q^*\  \bbbl_*\big\|_{\mm_8}\le\Gam_{16}
$

\item   
Let $0\le\nu\le3$. There are operators 
$A_{\psi,\phi,\nu}$ and $A_{\psi_{(*)}\th_{(*)}\nu}(\mu)$
such that
\begin{equation*}
\partial_\nu A_{\psi,\phi} = A_{\psi,\phi,\nu}\partial_\nu\qquad
\partial_\nu \sfrac{a}{L^2} C^{(n)}(\mu)^{(*)} Q^*
       = A_{\psi_{(*)}\th_{(*)}\nu}(\mu)\ \partial_\nu
\end{equation*}
and
\begin{equation*}
\|\bbbl_*^{-1} A_{\psi,\phi,\nu}\bbbl_*\|_{m=1}\le\Gam_{16} \qquad
\big\|\bbbl_*^{-1}A_{\psi_{(*)}\th_{(*)}\nu}(\mu)\bbbl_*\big\|_{\mm_8}
   \le\Gam_{16}
\end{equation*}
\end{enumerate}
\end{proposition}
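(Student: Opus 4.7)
Our strategy is to invert $M := \sfrac{a}{L^2}Q^*Q + \fQ_n$ by the Sherman--Morrison identity. In the Fourier representation of Remark~\ref{remPBSqnft}(e), for each $\fk \in \hat\cX_{-1}^{(n+1)}$ the operator $Q^*Q$ acts on $\ell^2(\hat\cB^+)$ as the rank one matrix $\bigl[u_+(\fk+\ell)^\fq u_+(\fk+\ell')^\fq\bigr]$, while $\fQ_n$ is diagonal; since $\hat\fQ_n(k)\ge\sfrac{5a}{6}$ by Proposition~\ref{propPBSAnppties}(a), $\fQ_n$ is invertible. Writing $M = \fQ_n + Q^*\!\cdot\!\sfrac{a}{L^2}Q$ and applying the matrix inversion lemma, the inner factor $\bbbone + \sfrac{a}{L^2}Q\fQ_n^{-1}Q^*$ equals $\sfrac{a}{L^2}\check\fQ_{n+1}^{-1}$ by the defining identity in~\eqref{eqnPOLcheckDefs}, so
\begin{equation*}
M^{-1} \;=\; \fQ_n^{-1} \,-\, \fQ_n^{-1}Q^*\,\check\fQ_{n+1}\,Q\fQ_n^{-1}.
\end{equation*}
Multiplication on the right by $\fQ_nQ_n$, and separately by $\sfrac{a}{L^2}Q^*$ (where the telescoping $\sfrac{a}{L^2}\check\fQ_{n+1}^{-1} - \sfrac{a}{L^2}Q\fQ_n^{-1}Q^* = \bbbone$ produces a clean cancellation), then yields the explicit formulas
\begin{equation*}
A_{\psi,\phi} = Q_n - \fQ_n^{-1}Q^*\check\fQ_{n+1}\check Q_{n+1},\qquad
(\sfrac{a}{L^2}Q^*Q+\fQ_n)^{-1}\sfrac{a}{L^2}Q^* = \fQ_n^{-1}Q^*\check\fQ_{n+1}.
\end{equation*}
Substituted into \eqref{eqnPOLanotherC} and \eqref{eqnPOLcmunomu}, these display $\sfrac{a}{L^2}C^{(n)}(\mu)^{(*)}Q^*$ as a finite sum of products whose individual factors are already controlled in previous sections.

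For part~(a), we estimate each factor in the appropriate $\|\cdot\|_m$ norm after conjugating by $\bbbl_*$. The averaging operators $Q_n$, $Q$, $Q^*$, $\check Q_{n+1}=QQ_n$ are bounded in $\|\cdot\|_{m=1}$ by Lemma~\ref{lemPBSunppties}(a) (applied also to $Q_{n+1}$ after scaling). The translation invariant factors $\fQ_n$, $\fQ_n^{-1}$, $\check\fQ_{n+1}$ have Fourier transforms analytic in a strip and bounded above and below there by Proposition~\ref{propPBSAnppties}(a), which converts to position space bounds via \cite[Lemma~\lemBOlonelinfty.b]{Bloch}. The $\mu$--dependent factor $\check S_{n+1}(\mu)^{(*)} = L^2\bbbl_*S_{n+1}(L^2\mu)^{(*)}\bbbl_*^{-1}$ is bounded by Proposition~\ref{POGmainpos}, applicable since $|L^2\mu|\le\mu_{\rm up}$. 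Products are estimated by iterated use of \cite[Lemma~\lemBOlonelinfty.c]{Bloch}, and $\mm_8$ is set to the minimum of $1$, $\mm_3$, and any further strip widths arising.

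For part~(b), Remark~\ref{remPBSunderivAlg} intertwines $\partial_\nu$ with each of $Q$, $Q^*$, $Q_n$, $\check Q_{n+1}^*$; since $\fQ_n^{-1}$ is translation invariant on $\cX_0^{(n)}$ and $\check\fQ_{n+1}$ on $\cX_{-1}^{(n+1)}$, each commutes with the $\partial_\nu$ on its lattice. This yields
\begin{equation*}
A_{\psi,\phi,\nu} \;=\; Q_{n,\nu}^{(-)} - \fQ_n^{-1}\,Q_{+,\nu}^{(+)}\,\check\fQ_{n+1}\,Q_{+,\nu}^{(-)}\,Q_{n,\nu}^{(-)},
\end{equation*}
and an analogous formula for $A_{\psi_{(*)}\th_{(*)}\nu}(\mu)$, using in addition the scaled version of~\eqref{eqnPOGpartialS} to commute $\partial_\nu$ through $\check S_{n+1}^{(*)}(\mu)$. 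The new factors $Q_{+,\nu}^{(\pm)}$, $Q_{n,\nu}^{(\pm)}$, and $\check S_{n+1,\nu}^{(*\pm)}(\mu)$ are then bounded by Lemma~\ref{lemPBSunderiv} and the scaled Proposition~\ref{POGmainpos}.

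The main obstacle is keeping every bound independent of $L$ and $n$. A direct Neumann inversion of $M$, analogous to Lemma~\ref{lemPOCakmatrix}(b) but with $\hat\De^{(n)}$ replaced by $\hat\fQ_n$, would introduce a spurious factor $L^2$. The Sherman--Morrison identity above is decisive because it makes the $L^2$--absorbing cancellation exact: the prefactor $\sfrac{a}{L^2}$ in $M$ is precisely matched by the $\sfrac{L^2}{a}$ inside $\check\fQ_{n+1}^{-1}$. A parallel balance controls the $\mu$--dependent correction, where the $L^2$ from $\check S_{n+1}(\mu) = L^2\bbbl_*S_{n+1}(L^2\mu)\bbbl_*^{-1}$ is absorbed by the $L^{-2}$ in $\check\fQ_{n+1}$, and the outer $\mu$ is controlled via $|\mu|\le\mu_{\rm up}/L^2$. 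We take $\Gam_{17}$ to be the maximum of $\Gam_2$ and any further lower bound on $L$ required for the analyticity--strip arguments to apply.
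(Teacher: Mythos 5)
Your proof is correct, and it takes a genuinely different (and cleaner) algebraic route than the paper. The paper's argument is a hybrid: it bounds $(aL^{-2}Q^*Q+\fQ_n)^{-1}-\fQ_n^{-1}$ via a Neumann expansion done block by block in Fourier space (Lemma~\ref{lemPOLcheckAnppties}.b), and then proves the bound on $A_{\psi_{(*)}\th_{(*)}}$ through a detailed kernel estimate in momentum space (Lemma~\ref{lemPOLcheckDeppties}), while for the $\partial_\nu$-intertwined operators it quietly invokes the Woodbury identity $(\bbbone+aL^{-2}\fQ_n^{-1}Q^*Q)^{-1}=\bbbone-aL^{-2}\fQ_n^{-1}Q^*(\bbbone+aL^{-2}Q\fQ_n^{-1}Q^*)^{-1}Q$ inside the proof of Remark~\ref{remPOLderivAlg}.b. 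You pull that identity to the front: your closed-form $M^{-1}=\fQ_n^{-1}-\fQ_n^{-1}Q^*\check\fQ_{n+1}Q\fQ_n^{-1}$ is exactly the Sherman--Morrison form whose $\ell^1$--$\ell^\infty$ content Lemma~\ref{lemPOLcheckAnppties}.b encodes (the $\sfrac{2}{aL^2}$ prefactor there is the $\|\bbbl_*^{-1}\check\fQ_{n+1}\bbbl_*\|_{m=1}=\sfrac{1}{L^2}\|\fQ_{n+1}\|_{m=1}\lesssim\sfrac{1}{L^2}$ of your $\check\fQ_{n+1}$ factor, absorbing the spurious $L^2$), and your formulas $A_{\psi,\phi}=Q_n-\fQ_n^{-1}Q^*\check\fQ_{n+1}\check Q_{n+1}$ and $M^{-1}\sfrac{a}{L^2}Q^*=\fQ_n^{-1}Q^*\check\fQ_{n+1}$ reproduce the paper's intermediate identities exactly. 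The trade-off is that your submultiplicative product bound yields only the $\|\cdot\|_{\mm_8}$ bound, whereas the paper's kernel analysis (Lemma~\ref{lemPOLcheckDeppties}.b) also records the extra $|k|$ decay when $\ell_1\ne 0$, which you forgo but which is not needed for the proposition itself. Two small things you leave tacit but that are required: the bound $\|\bbbl_*^{-1}Q_{+,\nu}^{(+)}\bbbl_*\|_{m=1}<\infty$ (and hence your part (b)) needs $\fq>2$, because the Fourier kernel of $Q_{+,\nu}^{(+)}$ decays only like $(|\ell_\nu|+\pi)^{-(\fq-1)}$ in the $\nu$-direction; and the strip widths entering the various $\|\cdot\|_m$ bounds (from Proposition~\ref{POGmainpos}, Lemma~\ref{lemPOCDenppties}, etc.) must indeed be intersected, as you say, to define $\mm_8$.
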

\noindent
This proposition is proven at the end of this section, 
after Lemma \ref{lemPOLcheckDeppties}.
In this proof we write 
$\sfrac{a}{L^2} {C^{(n)}}^{(*)}Q^* = A_{\psi_{(*)},\th_{(*)}}$
so that
\begin{align*}
A_{\psi_{(*)},\th_{(*)}}&=(aL^{-2}Q^* Q+\fQ_n)^{-1}
    \big\{aL^{-2}Q^* 
          +\fQ_nQ_n\check S^{(*)}_{n+1}\check Q_{n+1}^*\check\fQ_{n+1}\big\}
   :\cH_{-1}^{(n+1)}\rightarrow\cH_0^{(n)}
\end{align*}

\begin{remark}\label{remPOLderivAlg}
\ 
\begin{enumerate}[label=(\alph*), leftmargin=*]
\item  $A_{\psi_{(*)}\th_{(*)}}=\fQ_n^{-1}Q^* \check\fQ_{n+1} 
     + A_{\psi,\phi}{D_n^{-1}}^{(*)}\check Q_{n+1}^*\ \check\De^{(n+1)(*)}$
with
\begin{align*}
\check\De^{(n+1)(*)}
   &=\sfrac{1}{L^2}\bbbl_*\De^{(n+1)(*)}\bbbl_*^{-1}
   =\big\{\bbbone
        +\check\fQ_{n+1}\check Q_{n+1}{D_n^{-1}}^{(*)}
            \check Q_{n+1}^*\big\}^{-1}\check\fQ_{n+1}
\end{align*}
being a fully translation invariant operator on $\cH_{-1}^{(n+1)}$.

\item  Let $0\le\nu\le3$. We have
$\partial_\nu A_{\psi,\phi}=A_{\psi,\phi,\nu}\partial_\nu$
and
$
\partial_\nu A_{\psi_{(*)}\th_{(*)}}(\mu)
   = A_{\psi_{(*)}\th_{(*)}\nu}(\mu) \partial_\nu
$
where
\begin{align*}
A_{\psi,\phi,\nu}
 &=\big[\bbbone - \fQ_n^{-1}Q^{(+)}_{+,\nu}\check\fQ_{n+1} Q^{(-)}_{+,\nu}
   \big]Q_{n,\nu}^{-} \\
A_{\psi_{(*)}\th_{(*)}\nu} 
 &= \fQ_n^{-1}Q_{+,\nu}^{(+)}\check  \fQ_{n+1} 
     + A_{\psi,\phi,\nu}\bbbl_*D_{n+1}^{{-1}(*)}Q_{n+1,\nu}^{(+)}\De^{(n+1)(*)}
                \bbbl_*^{-1} \\
A_{\psi_*\th_*\nu}(\mu)
 &= A_{\psi_*\th_*\nu}
  +L^2\mu A_{\psi,\phi,\nu}\bbbl_* S_{n+1,\nu}^{(+)} S_{n+1,\nu}^{(+)}(L^2\mu)
            Q_{n+1,\nu}^{(+)}  \fQ_{n+1}\bbbl_*^{-1} \\
A_{\psi\th\nu}(\mu)
 &= A_{\psi\th\nu}
 +L^2\mu A_{\psi,\phi,\nu}\bbbl_*  S_{n+1,\nu}^{(-)} S_{n+1,\nu}^{(-)}(L^2\mu)
            Q_{n+1,\nu}^{(+)} \fQ_{n+1}\bbbl_*^{-1} 
\end{align*}
\end{enumerate}
\end{remark}
\begin{proof} 
(a) First observe that, by \eqref{eqnPOGsq},
\begin{align}\label{eqnPOLapsith}
Q_n\check S_{n+1}^{(*)}\check Q_{n+1}^*\check\fQ_{n+1}
&= Q_n\bbbl_* {D_{n+1}^{(*)}}^{-1} Q_{n+1}^*\De^{(n+1)(*)}\bbbl_*^{-1} 
= (Q_n {D_n^{-1}}^{(*)}Q_n^*)\ Q^*\ \check\De^{(n+1)(*)}
\end{align}
Using \eqref{eqnPOLapsith}, the operator
\begin{align*}
A_{\psi_{(*)}\th_{(*)}} 
 &= (aL^{-2}Q^* Q+\fQ_n)^{-1}
    \big\{aL^{-2}Q^*
      +\fQ_nQ_n{D_n^{-1}}^{(*)}\check Q_{n+1}^*\ \check\De^{(n+1)(*)}\big\} 
      \\
 &= \fQ_n^{-1}(Q^* Q\fQ_n^{-1}+\sfrac{1}{aL^{-2}}\bbbone)^{-1} Q^*
        + A_{\psi,\phi}{D_n^{-1}}^{(*)}\check Q_{n+1}^*\ \check\De^{(n+1)(*)} 
      \\
 &=  \fQ_n^{-1}Q^* (Q\fQ_n^{-1}Q^*+\sfrac{1}{aL^{-2}}\bbbone)^{-1}
       + A_{\psi,\phi}{D_n^{-1}}^{(*)}\check Q_{n+1}^*\ \check\De^{(n+1)(*)} 
      \\
 &=  \fQ_n^{-1}Q^* \check\fQ_{n+1} 
     + A_{\psi,\phi}{D_n^{-1}}^{(*)}\check Q_{n+1}^*\ \check\De^{(n+1)(*)}
\end{align*}

\Item (b)
By Remark \ref{remPBSunderivAlg},
\begin{align*}
\partial_\nu A_{\psi,\phi}
&=\partial_\nu(\bbbone+aL^{-2}\fQ_n^{-1}Q^* Q)^{-1}Q_n \\
&=\partial_\nu Q_n 
-\partial_\nu aL^{-2}\fQ_n^{-1}Q^* (\bbbone+aL^{-2}Q\fQ_n^{-1}Q^*)^{-1}QQ_n \\
&= Q_{n,\nu}^{-}\partial_\nu \!
- aL^{-2}\fQ_n^{-1}Q^{(+)}_{+,\nu}
     (\bbbone+aL^{-2}Q\fQ_n^{-1}Q^*)^{-1}Q^{(-)}_{+,\nu}Q^{(-)}_{n,\nu}
         \partial_\nu \\
&=\big[\bbbone - \fQ_n^{-1}Q^{(+)}_{+,\nu}\check\fQ_{n+1} Q^{(-)}_{+,\nu}
   \big]Q_{n,\nu}^{-} \partial_\nu
\end{align*}
Therefore 
by part (a), \eqref{eqnPDOdndef}, \eqref{eqnPOLcheckDefs} and Remark \ref{remPBSunderivAlg},
\begin{align*}
\partial_\nu A_{\psi_{(*)}\th_{(*)}} 
 &= \partial_\nu\big[\fQ_n^{-1}Q^* \check\fQ_{n+1} 
    + A_{\psi,\phi}\bbbl_*D_{n+1}^{{-1}(*)}Q_{n+1}^*\De^{(n+1)(*)}\bbbl_*^{-1}
   \big]
  =  A_{\psi_{(*)}\th_{(*)\nu}} \partial_\nu
\end{align*}
since
$
\partial_\nu \bbbl_*= \sfrac{1}{L_\nu}\bbbl_*\partial_\nu
$
by \cite[Remark \remSCscaling.a,b]{PAR1}.
To get 
$
\partial_\nu A_{\psi_{(*)}\th_{(*)}}(\mu)
   = A_{\psi_{(*)}\th_{(*)}\nu}(\mu) \partial_\nu
$
when $\mu\ne 0$, write, using \eqref{eqnPOLcheckDefs},
\begin{align*}
A_{\psi_{(*)}\th_{(*)}}(\mu)
&=A_{\psi_{(*)}\th_{(*)}} 
+\mu A_{\psi,\phi} \check S_{n+1}^{(*)} \check S_{n+1}(\mu)^{(*)}
            \check Q_{n+1}^* \check\fQ_{n+1}\\
&=A_{\psi_{(*)}\th_{(*)}} 
+L^2\mu A_{\psi,\phi} \bbbl_*S_{n+1}^{(*)} S_{n+1}(L^2\mu)^{(*)}
             Q_{n+1}^* \fQ_{n+1}\bbbl_*^{-1}
\end{align*}
and use \eqref{eqnPOGpartialS}, Remark \ref{remPBSunderivAlg} and the fact that
$\fQ_{n+1}$ is fully translation invariant.

\end{proof}

The operators of principal interest, $A_{\psi_*,\th_*}$ and $A_{\psi,\th}$,
act from $L^2(\cX_\crs)=\cH_{-1}^{(n+1)}$ to
$L^2(\cX_\fin)=\cH_0^{(n)}$ with
\begin{equation*}
\cX_\fin=\cX_{0}^{(n)}\qquad
\cX_\crs=\cX_{-1}^{(n+1)}\qquad
\cB=\cB^+
\end{equation*}
We now give a bunch of Fourier transforms (in the sense of 
\cite[(\eqnBOifkerkell) and  (\eqnBOasymmetric)]{Bloch} -- but we shall 
suppress the $\hat{\ }$ from 
the notation). All of the operators above are periodized in the sense of
\cite[Definition \defBOperiodization]{Bloch}. As before we denote
\begin{itemize}[leftmargin=*, topsep=2pt, itemsep=0pt, parsep=0pt]
\item 
    momenta dual to the $L$--lattice $L^2\bbbz\times L\bbbz^3$ by
    $\fk\in \big(\bbbr/\sfrac{2\pi}{L^2}\bbbz\big)
            \times \big(\bbbr^3/\sfrac{2\pi}{L}\bbbz^3\big)$, 
\item
momenta dual to the unit lattice $\bbbz\times \bbbz^3$ by
$k\in \big(\bbbr/2\pi\bbbz\big) \times \big(\bbbr^3/2\pi\bbbz^3\big)$
and decompose $k=\fk+\ell$ or $k=\fk+\ell'$ with $\fk$
in a fundamental cell for $\big(\bbbr/\sfrac{2\pi}{L^2}\bbbz\big)
\times \big(\bbbr^3/\sfrac{2\pi}{L}\bbbz^3\big)$
and $\ell,\ell'\in \big(\sfrac{2\pi}{L^2}\bbbz/2\pi\bbbz\big)
      \times\big(\sfrac{2\pi}{L}\bbbz^3/2\pi\bbbz^3\big)=\hat\cB^+$ and 
\item
momenta dual to the $\veps_j$--lattice
$\veps_j^2\bbbz\times \veps_j\bbbz^3$ by
$p_j\in \big(\bbbr/\sfrac{2\pi}{\veps_j^2}\bbbz\big)
      \times \big(\bbbr^3/\sfrac{2\pi}{\veps_j}\bbbz^3\big)$
and decompose $p_j=k+\ell_j$ with $k$ in a fundamental cell for $\big(\bbbr/2\pi\bbbz\big) \times \big(\bbbr^3/2\pi\bbbz^3\big)$
and $\ell_j\in\big(2\pi\bbbz/\sfrac{2\pi}{\veps_j^2}\bbbz\big)
    \times\big(2\pi\bbbz^3/\sfrac{2\pi}{\veps_j}\bbbz^3\big)=\hat\cB_j$.
Here $1\le j\le n$.
\end{itemize}
The Fourier transform of $A_{\psi_{(*)}\th_{(*)}}$ is
\begin{equation}\label{eqnPOLapsithft}
\begin{split}
\big(A_{\psi_{(*)}\th_{(*)}}\big)_{\fk}(\ell)
&=\sum_{\ell'\in\hat\cB^+}
     \big(\sfrac{a}{L^2}Q^*Q+\fQ_n\big)^{-1}_{\fk}(\ell,\ell')
  \cr \noalign{\vskip-0.1in}&\hskip0.6in
        \Big\{\sfrac{a}{L^2}Q^*_{\fk}(\ell')
   +\fQ_n(\fk\!+\!\ell')\,
    \big(Q_n {D_n^{-1}}^{(*)}Q_n^*\big)(\fk\!+\!\ell')
    Q_{\fk}^*(\ell')
    \check\De^{(n+1)(*)}(\fk)\Big\}
\end{split}
\end{equation}
where, by Remark \ref{remPBSqnft}, Remark \ref{remPOLderivAlg}.a and \eqref{eqnPOLcheckDefs}
\begin{equation*}
\big(aL^{-2}Q^*Q+\fQ_n\big)_{\fk}(\ell,\ell')
=\fQ_n(\fk+\ell)\de_{\ell,\ell'}
        +aL^{-2}u_+(\fk+\ell)^\fq u_+(\fk+\ell')^\fq 
\end{equation*}
and
\begin{align*}
Q^*_{\fk}(\ell)&=u_+(\fk+\ell)^\fq \\
\fQ_n(k)&=a\Big[1+\sum_{j=1}^{n-1}
   \smsum_{\ell_j\in\hat\cB_j}\sfrac{1}{L^{2j}}u_j(k+\ell_j)^{2\fq}\Big]^{-1}\displaybreak[0]\\
(Q_n \bD_n^{{-1}(*)}Q_n^*)(k)&= \sum_{\ell_n\in\hat\cB_n}
    u_n(k+\ell_n)^{2\fq} \bD_n^{{-1}(*)}(k+\ell_n)\displaybreak[0]\\
\check\De^{(n+1)(*)}(\fk)&=\frac{  \check\fQ_{n+1}(\fk) }
{ 1 + \check\fQ_{n+1}(\fk)\hskip-4pt
    \sum\limits_{\ell_n\in\hat\cB_n\atop\ell\in\hat\cB^+}\hskip-4pt
u_n(\fk\!+\!\ell\!+\!\ell_n)^{2\fq}\, u_+(\fk\!+\!\ell)^{2\fq}\,
  \bD_n^{{-1}(*)}(\fk\!+\!\ell\!+\!\ell_n) }
\displaybreak[0]\\
\check\fQ_{n+1}(\fk)&=\sfrac{a}{L^2}\Big[1+
    \smsum_{\ell\in\hat\cB^+}\sfrac{a}{L^2}u_+(\fk+\ell)^{2\fq}
                   \fQ_n(\fk+\ell)^{-1}\Big]^{-1}\displaybreak[0]\\
\end{align*}

\begin{lemma}\label{lemPOLcheckAnppties}
Let $|\Im \fk_\nu|\le \sfrac{2}{L_\nu}$ for each $0\le\nu\le3$.
There is a constant $\Gam_{17}$, depending only on $\fq$, such that
the following hold for all $L>\Gam_{17}$.
\begin{enumerate}[label=(\alph*), leftmargin=*]
\item 
We have
$
\sfrac{6}{7}\sfrac{a}{L^2}\le |\check\fQ_{n+1}(\fk)| 
  \le \sfrac{6}{5} \sfrac{a}{L^2}
$
and
$
\Re \check\fQ_{n+1}(\fk)\ge \sfrac{a}{2L^2}
$.

\item
We have
\begin{align*}
&\Big|\big(aL^{-2}Q^*Q+\fQ_n\big)_{\fk}^{-1}(\ell,\ell')
                -\fQ_n(\fk+\ell)^{-1}\de_{\ell,\ell'}\Big|\\
&\hskip1.5in\le \sfrac{2}{aL^2}
                \smprod_{\nu=0}^3 
                               \big(\sfrac{24}{L_\nu|\ell_\nu|+\pi}\big)^\fq
                    \smprod_{\nu=0}^3 
                               \big(\sfrac{24}{L_\nu|\ell'_\nu|+\pi}\big)^\fq
                \ \begin{cases}
                      \qquad 1 & \text{for all $\ell$}\\
                        \noalign{\vskip0.05in}
                       \smprod\limits_{\atop{0\le\nu\le3}{\ell_\nu\ne 0}} 
                                            L_\nu|\fk_\nu|& \text{if $\ell\ne 0$}
                 \end{cases}
\end{align*}

\item Let $0\le\nu\le3$. Then
\begin{equation*}
\big|\big[\fQ_n^{-1}Q^{(+)}_{+,\nu}\check\fQ_{n+1} Q^{(-)}_{+,\nu}
   \big]_\fk(\ell,\ell')\big|\le 
        \sfrac{3e^4}{2L^2}
     \big(\sfrac{24}{\pi}\big)^4
  \smprod\limits_{\nu=0}^3 \big(\sfrac{24}{L_\nu|\ell_\nu|+\pi}\big)^{\fq-1}
   \smprod\limits_{\nu=0}^3 \big(\sfrac{24}{L_\nu|\ell'_\nu|+\pi}\big)^\fq
\end{equation*}
\end{enumerate}
\end{lemma}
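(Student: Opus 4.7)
The plan is to handle the three parts in order, using the same single estimate on the ``denominator sum'' to drive all of them.

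\emph{Part (a).} I would start from the explicit formula for $\check\fQ_{n+1}(\fk)$ given just before the lemma, and write $\check\fQ_{n+1}(\fk)=\sfrac{a}{L^2}\bigl[1+\ka(\fk)\bigr]^{-1}$ with $\ka(\fk)=\sum_{\ell\in\hat\cB^+}\sfrac{a}{L^2}u_+(\fk+\ell)^{2\fq}\fQ_n(\fk+\ell)^{-1}$. Bounding $|u_+(\fk+\ell)|$ by Lemma \ref{lemPBSuplusppties}.a and $|\fQ_n(\fk+\ell)^{-1}|$ by Proposition \ref{propPBSAnppties}.a gives $|\ka(\fk)|\le \sfrac{6}{5L^2}\,c_\fq$ with $c_\fq=\bigl[\sum_{m\in\bbbz}\bigl(\sfrac{24}{2\pi|m|+\pi}\bigr)^{2\fq}\bigr]^{4}$, a constant independent of $L$ and $n$ (the sum converges since $\fq\ge 4$). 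Choosing $\Gam_{17}$ so that $L>\Gam_{17}$ forces $|\ka(\fk)|\le 1/6$ immediately yields $\sfrac{5}{6}\le|1+\ka(\fk)|\le\sfrac{7}{6}$, hence $\sfrac{6}{7}\sfrac{a}{L^2}\le |\check\fQ_{n+1}(\fk)|\le \sfrac{6}{5}\sfrac{a}{L^2}$, and $\Re(1+\ka)^{-1}=\sfrac{1+\Re\ka}{|1+\ka|^2}\ge \sfrac{5/6}{(7/6)^2}=\sfrac{30}{49}>\sfrac{1}{2}$, giving the claimed lower bound on $\Re\check\fQ_{n+1}$.

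\emph{Part (b).} Using Remark \ref{remPBSqnft}, the matrix $\bigl(aL^{-2}Q^*Q+\fQ_n\bigr)_\fk(\ell,\ell')=\fQ_n(\fk+\ell)\de_{\ell,\ell'}+\sfrac{a}{L^2}u_+(\fk+\ell)^\fq u_+(\fk+\ell')^\fq$ has the form $D+vw^T$ with $D$ diagonal and $v(\ell)=w(\ell)=\sqrt{a/L^2}\,u_+(\fk+\ell)^\fq$. The Sherman--Morrison formula gives
\begin{equation*}
\bigl(aL^{-2}Q^*Q+\fQ_n\bigr)^{-1}_\fk(\ell,\ell')-\fQ_n(\fk+\ell)^{-1}\de_{\ell,\ell'}
=-\frac{\check\fQ_{n+1}(\fk)\,u_+(\fk+\ell)^\fq u_+(\fk+\ell')^\fq}{\fQ_n(\fk+\ell)\,\fQ_n(\fk+\ell')},
\end{equation*}
where the denominator $1+\ka(\fk)$ in Sherman--Morrison is exactly the one from part (a), so $1/(1+\ka(\fk))=(L^2/a)\check\fQ_{n+1}(\fk)$. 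Combining part (a) ($|\check\fQ_{n+1}|\le 6a/(5L^2)$), Proposition \ref{propPBSAnppties}.a ($|\fQ_n^{-1}|\le 6/(5a)$), and Lemma \ref{lemPBSuplusppties}.a yields the magnitude $\le \sfrac{216}{125 aL^2}\prod\prod<\sfrac{2}{aL^2}\prod\prod$, which is the claimed bound in the general case. For $\ell\ne 0$, I apply Lemma \ref{lemPBSuplusppties}.b to one factor of $u_+(\fk+\ell)$, taking $\cI_+=\{\nu:\ell_\nu\ne 0\}$, and Lemma \ref{lemPBSuplusppties}.a to the remaining $\fq-1$ factors; this extracts the extra $\prod_{\ell_\nu\ne 0}L_\nu|\fk_\nu|$ without changing the rest of the estimate.

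\emph{Part (c).} Using the formulas \eqref{eqnPBSqplusplusminus} for $Q^{(\pm)}_{+,\nu}$ and the full translation invariance of $\fQ_n^{-1}$ and $\check\fQ_{n+1}$, the Floquet kernel factors as
\begin{equation*}
\bigl[\fQ_n^{-1}Q^{(+)}_{+,\nu}\check\fQ_{n+1}Q^{(-)}_{+,\nu}\bigr]_\fk(\ell,\ell')
=\fQ_n(\fk+\ell)^{-1}\,\ze^{(+)}_{+,\nu}u^{(+)}_{+,\nu}(\fk+\ell)\,u_+(\fk+\ell)^{\fq-1}\,\check\fQ_{n+1}(\fk)\,\ze^{(-)}_{+,\nu}u^{(-)}_{+,\nu}(\fk+\ell')\,u_+(\fk+\ell')^{\fq-1}.
\end{equation*}
I then bound factor by factor: $\sfrac{6}{5a}$ from Proposition \ref{propPBSAnppties}.a; $e\prod_{\nu'\ne\nu}\sfrac{24}{L_{\nu'}|\ell_{\nu'}|+\pi}$ and $\sfrac{24e}{L_\nu|\ell'_\nu|+\pi}\prod_{\nu'}\sfrac{24}{L_{\nu'}|\ell'_{\nu'}|+\pi}$ from Lemma \ref{lemPBSunderiv}.b; $\prod_{\nu'}\bigl(\sfrac{24}{L_{\nu'}|\ell_{\nu'}|+\pi}\bigr)^{\fq-1}$ and analogously for $\ell'$ from Lemma \ref{lemPBSuplusppties}.a; and $\sfrac{6a}{5L^2}$ from part (a). The three ``extra'' factors $\sfrac{24}{L_{\nu'}|\ell_{\nu'}|+\pi}$ for $\nu'\ne\nu$ in the $(+)$ bound merge with $\bigl(\cdots\bigr)^{\fq-1}$ but exceed the target by one power of $\sfrac{24}{\pi}$ each, and the single extra factor $\sfrac{24}{L_\nu|\ell'_\nu|+\pi}$ on the $\ell'$ side similarly contributes one more $\sfrac{24}{\pi}$; these four excess factors are absorbed into $\bigl(\sfrac{24}{\pi}\bigr)^4$, while $\sfrac{6}{5a}\cdot\sfrac{6a}{5L^2}\cdot e^2=\sfrac{36 e^2}{25 L^2}\le\sfrac{3 e^4}{2L^2}$ gives the numerical prefactor.

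\emph{Main obstacle.} The substantive content lies in part (a); parts (b) and (c) then follow by routine algebra. The main bookkeeping challenge is in part (c): keeping straight which of the eight factors in the Floquet kernel contributes $\fq$ powers of $\sfrac{24}{L_\nu|\cdot|+\pi}$, which contribute $\fq-1$, and which contribute exactly one ``bonus'' factor that must be absorbed into the $(24/\pi)^4$, so that the result matches the stated asymmetric bound with $\fq-1$ on the $\ell$ side and $\fq$ on the $\ell'$ side.
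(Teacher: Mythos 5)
Your proof is correct and yields all three claimed bounds with constants matching the statement. Parts (a) and (c) follow essentially the paper's intended route — part (a) by analogy with Proposition \ref{propPBSAnppties}.a (paper literally says ``is proven much as'' that result), and part (c) by factoring the Floquet kernel and bounding factor by factor via \eqref{eqnPBSqplusplusminus}, Proposition \ref{propPBSAnppties}.a, Lemma \ref{lemPBSunderiv}.b, Lemma \ref{lemPBSuplusppties}.a, and part (a). In part (b) you diverge from the paper in presentation: the paper writes a Neumann expansion $\smsum_{j\ge0}(\cdot)^j$ for $\big(\sfrac{a}{L^2}Q^*Q+\fQ_n\big)^{-1}_{\fk}$ and then bounds it term by term by a geometric series, whereas you invoke Sherman--Morrison to get the exact rank-one identity
\begin{equation*}
\big(aL^{-2}Q^*Q+\fQ_n\big)^{-1}_\fk(\ell,\ell')-\fQ_n(\fk+\ell)^{-1}\de_{\ell,\ell'}
=-\frac{\check\fQ_{n+1}(\fk)\,u_+(\fk+\ell)^\fq\,u_+(\fk+\ell')^\fq}
       {\fQ_n(\fk+\ell)\,\fQ_n(\fk+\ell')}
\end{equation*}
in one step. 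For a rank-one perturbation the Neumann series sums to exactly this formula, so the two approaches are algebraically the same; the Sherman--Morrison route is a bit cleaner because it immediately identifies $\check\fQ_{n+1}(\fk)$ as the Sherman--Morrison denominator (so part (a) plugs in directly), and it uses $|1+\ka|^{-1}\le 6/5$ rather than the slightly weaker $\smsum_j|\ka|^j=(1-|\ka|)^{-1}$. Your $\ell\neq 0$ refinement, peeling off one factor of $u_+$ via Lemma \ref{lemPBSuplusppties}.b and bounding the remaining $\fq-1$ powers by Lemma \ref{lemPBSuplusppties}.a, also matches the paper. One small remark: the hypothesis $L>\Gam_{17}$ should also be taken $\ge \Gam_2$ so that Proposition \ref{propPBSAnppties}.a is in force; since $\Gam_2$ depends only on $\fq$, this does not affect the claim that $\Gam_{17}$ depends only on $\fq$.
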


\begin{proof} 
(a) is proven much as Proposition \ref{propPBSAnppties}.a was.

\Item (b) The straight forward Neumann expansion gives
\begin{align*}
&\Big|\big(\sfrac{a}{L^2}Q^*Q+\fQ_n\big)_{\fk}^{-1}(\ell,\ell')
                -\fQ_n(\fk+\ell)^{-1}\de_{\ell,\ell'}\Big|\\
&\hskip0.4in\le \sfrac{a}{L^2}\Big|\fQ_n(\fk+\ell)^{-1}
                u_+(\fk+\ell)^\fq u_+(\fk+\ell')^\fq
                \fQ_n(\fk+\ell')^{-1}\Big|\ 
                \sum_{j=0}^\infty\bigg[
     \sum_{l\in\hat\cB^+}\sfrac{a}{L^2}\Big|\sfrac{u_+(\fk+l)^{2\fq}}
                   {\fQ_n(\fk+l)}\Big|\bigg]^j\\
&\hskip0.4in\le \sfrac{a}{L^2}\big(\sfrac{5}{4a}\big)^2
               \big|u_+(\fk+\ell)^\fq\big|
                    \smprod_{\nu=0}^3 
                               \big(\sfrac{24}{L_\nu|\ell'_\nu|+\pi}\big)^\fq
                \sum_{j=0}^\infty\big[\sfrac{5\Gam'_{15}}{4L^2}\big]^j\\
&\hskip0.4in\le \sfrac{2}{aL^2}
                \smprod_{\nu=0}^3 
                               \big(\sfrac{24}{L_\nu|\ell_\nu|+\pi}\big)^\fq
                    \smprod_{\nu=0}^3 
                               \big(\sfrac{24}{L_\nu|\ell'_\nu|+\pi}\big)^\fq
              \ \begin{cases}\qquad 1 & \text{for all $\ell$}\\
                        \noalign{\vskip0.05in}
                       \smprod\limits_{\atop{0\le\nu\le3}{\ell_\nu\ne 0}}
                                            L_\nu|\fk_\nu|& if \text{$\ell\ne 0$}
                  \end{cases}
\end{align*}
by part (a), Lemma \ref{lemPBSuplusppties}.a,b and 
Proposition \ref{propPBSAnppties}.a, with 
$L$ satisfying the conditions of part (a).

\Item (c) The specified bound follows from
\eqref{eqnPBSqplusplusminus}, part (a),
Proposition \ref{propPBSAnppties}.a and 
Lemmas \ref{lemPBSunderiv}.b, \ref{lemPBSuplusppties}.a.
\end{proof}

\begin{lemma}\label{lemPOLscaling}
For all
\begin{align*}
\fk&=\bbbl^{-1}(k)\in  \big(\bbbr/\sfrac{2\pi}{L^2}\bbbz\big)
            \times \big(\bbbr^3/\sfrac{2\pi}{L}\bbbz^3\big)  &
k&\in  \big(\bbbr/2\pi\bbbz\big) \times \big(\bbbr^3/2\pi\bbbz^3\big)\\
p&=\bbbl^{-1}(\fp)\in   \big(\bbbr/\sfrac{2\pi}{\veps_n^2}\bbbz\big)
      \times \big(\bbbr^3/\sfrac{2\pi}{\veps_n}\bbbz^3\big) &
\fp&\in  \big(\bbbr/\sfrac{2\pi}{\veps_{n+1}^2}\bbbz\big)
      \times \big(\bbbr^3/\sfrac{2\pi}{\veps_{n+1}}\bbbz^3\big) \\
& &
\ell_{n+1}&\in \big(2\pi\bbbz/\sfrac{2\pi}{\veps_{n+1}^2}\bbbz\big)
    \times\big(2\pi\bbbz^3/\sfrac{2\pi}{\veps_{n+1}}\bbbz^3\big)
\end{align*}
we have
\begin{enumerate}[label=(\alph*), leftmargin=*]
\item
$
u_+\big(\bbbl^{-1}(k)\big)=u_1(k)\quad\text{and}\quad
u_n\big(\bbbl^{-1}(k)\big)u_+\big(\bbbl^{-1}(k)\big)=u_{n+1}(k)
$
for all $n\in\bbbn$.

\item 
 $\check\fQ_{n+1}\big(\bbbl^{-1}(k)\big)=\sfrac{1}{L^2} \fQ_{n+1}(k)$

\item  $\bD_n^{-1}\big(\bbbl^{-1}(\fp)\big)
                                =L^2\bD_{n+1}^{-1}(\fp)$

\item  $\check\De^{(n+1)(*)}\big(\bbbl^{-1}(k)\big)=
              \sfrac{1}{L^2}\De^{(n+1)(*)}(k)$

\item $\check S_{n+1,\bbbl^{-1}(k)}
              \big(\bbbl^{-1}(\ell_{n+1}),\bbbl^{-1}(\ell'_{n+1})\big)
=L^2\,\hat S_{n+1,k}(\ell_{n+1},\ell'_{n+1})$
\end{enumerate}
\end{lemma}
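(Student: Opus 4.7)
The plan is to reduce each identity to an elementary computation on Fourier multipliers, exploiting the following general scaling principle: if $T$ is a fully translation-invariant operator with Fourier multiplier $\hat T(p)$, then $\bbbl_*T\bbbl_*^{-1}$ has multiplier $q\mapsto \hat T(\bbbl q)$. This is a direct consequence of \eqref{eqnPINTlft}.

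Part (a) is purely algebraic, using the definitions of $u_+$ in \eqref{eqnPBSuplusdef} and of $u_n$ in Remark~\ref{remPBSqnft}.b, together with $\bbbl(\bbbl^{-1}(k))=k$:
\[
u_+(\bbbl^{-1}(k)) \;=\; \sfrac{\si(k)}{L^5\,\si(\bbbl^{-1}(k))} \;=\; \veps_1^5\,\sfrac{\si(k)}{\si(\bbbl^{-1}(k))} \;=\; u_1(k),
\]
and, since $\veps_n^5/L^5=\veps_{n+1}^5$, cancelling $\si(\bbbl^{-1}(k))$ gives
\[
u_n(\bbbl^{-1}(k))\,u_+(\bbbl^{-1}(k))
\;=\; \sfrac{\veps_n^5}{L^5}\,\sfrac{\si(k)}{\si(\bbbl^{-(n+1)}(k))}
\;=\; \veps_{n+1}^5\,\sfrac{\si(k)}{\si(\bbbl^{-(n+1)}(k))} \;=\; u_{n+1}(k).
\]

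For parts (b)--(d) I would first identify each operator as being of the form $\sfrac{1}{L^2}\bbbl_*\,\cdot\,\bbbl_*^{-1}$. For (b) this comes from inverting the second line of \eqref{eqnPOLcheckDefs}, giving $\check\fQ_{n+1}=\sfrac{1}{L^2}\bbbl_*\fQ_{n+1}\bbbl_*^{-1}$. For (c), substituting $n\mapsto n+1$ in \eqref{eqnPDOdndef} and factoring off one outer pair $\bbbl_*^{\mp1}$ yields $D_{n+1}=L^2\bbbl_*^{-1}D_n\bbbl_*$, hence $D_n=\sfrac{1}{L^2}\bbbl_*D_{n+1}\bbbl_*^{-1}$, and the same relation holds for the underlying translation-invariant $\bD_n,\bD_{n+1}$. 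For (d) the identity $\check\De^{(n+1)(*)}=\sfrac{1}{L^2}\bbbl_*\De^{(n+1)(*)}\bbbl_*^{-1}$ is recorded in Remark~\ref{remPOLderivAlg}.a. Applying the general scaling principle then yields
\[
\check\fQ_{n+1}(\fk)=\sfrac{1}{L^2}\fQ_{n+1}(\bbbl\fk),\quad
\hat\bD_n(p)=\sfrac{1}{L^2}\hat\bD_{n+1}(\bbbl p),\quad
\check\De^{(n+1)(*)}(\fk)=\sfrac{1}{L^2}\De^{(n+1)(*)}(\bbbl\fk),
\]
which become the stated formulas after setting $\fk=\bbbl^{-1}(k)$, $p=\bbbl^{-1}(\fp)$ and inverting in (c).

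Part (e) is the analogous principle for an operator that is only translation invariant with respect to a sublattice, so the multiplier carries a matrix index. I would apply $\check S_{n+1}=L^2\bbbl_*S_{n+1}\bbbl_*^{-1}$ to the plane wave $e_q(u)=e^{iq\cdot u}$ for $q\in\hat\cX_n$. Using $\bbbl_*^{-1}e_q=e_{\bbbl q}$ and writing $\bbbl q=k+\ell_{n+1}$ with $k\in\hat\cX_0^{(n+1)}$ and $\ell_{n+1}\in\hat\cB_{n+1}$ (equivalently $q=\bbbl^{-1}(k)+\bbbl^{-1}(\ell_{n+1})$), the Floquet form of the action of $S_{n+1}$ reads
\[
S_{n+1}\,e_{k+\ell_{n+1}} \;=\; \smsum_{\ell'_{n+1}\in\hat\cB_{n+1}} \hat S_{n+1,k}(\ell'_{n+1},\ell_{n+1})\,e_{k+\ell'_{n+1}}.
\]
Applying $L^2\bbbl_*$ sends each $e_{k+\ell'_{n+1}}$ to $e_{\bbbl^{-1}(k)+\bbbl^{-1}(\ell'_{n+1})}$, exhibiting the Floquet decomposition of $\check S_{n+1}$ at coarse momentum $\bbbl^{-1}(k)$ with sublattice indices $\bbbl^{-1}(\ell_{n+1}),\bbbl^{-1}(\ell'_{n+1})$ and matrix element $L^2\hat S_{n+1,k}(\ell'_{n+1},\ell_{n+1})$. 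The only point requiring care, which I expect to be the main (but still routine) obstacle, is the bookkeeping that $\bbbl^{-1}$ is a bijection of $\hat\cX_0^{(n+1)}$ onto $\hat\cX_{-1}^{(n+1)}$ and of $\hat\cB_{n+1}$ onto a set of coset representatives for $\hat\cX_n/\hat\cX_{-1}^{(n+1)}$; this is immediate from the fact that $\bbbl:(q_0,\bq)\mapsto(L^2q_0,L\bq)$ is a linear isomorphism of the relevant discrete groups.
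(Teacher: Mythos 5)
Your proposal is correct and follows essentially the same route as the paper: the paper's proof likewise reduces each identity to the scaling relations $\check\fQ_{n+1}^{-1}=L^2\bbbl_*\fQ_{n+1}^{-1}\bbbl_*^{-1}$, $\bD_{n+1}=L^2\bbbl_*^{-1}\bD_n\bbbl_*$, $\check\De^{(n+1)(*)}=\sfrac{1}{L^2}\bbbl_*\De^{(n+1)(*)}\bbbl_*^{-1}$, $\check S_{n+1}=L^2\bbbl_*S_{n+1}\bbbl_*^{-1}$ together with the general fact (cited there from the Bloch-decomposition paper) about how $\bbbl_*T\bbbl_*^{-1}$ transforms Fourier multipliers and their Floquet block kernels, which is exactly the computation you spell out via \eqref{eqnPINTlft} for (b)--(d) and via plane waves for (e).
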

\begin{proof} 
These all follow from \eqref{eqnPOLcheckDefs}, Remark \ref{remPOLderivAlg}.a and
\begin{itemize}[leftmargin=*, topsep=2pt, itemsep=0pt, parsep=0pt]
\item $Q_1=\bbbl_*^{-1} Q\bbbl_*$
                by \eqref{eqnPBSqn}
\item $\bD_{n+1} = L^2\,\bbbl_*^{-1} \bD_n\bbbl_*$
                by \eqref{eqnPDOdndef}
\end{itemize}
and \cite[Lemmas \lemPoPscaling.b and \lemPoPscalingCrs.b]{Bloch}.
\end{proof}

\begin{corollary}\label{corPOLcheckSnppties}
There are constants $\mm_9>0$ and $\Gam_{18}$
such that, for all $L>\Gam_{17}$,
\begin{equation*}
\|\bbbl_*^{-1}\check S_{n+1}\bbbl_*\|_{\mm_9}\le L^2\Gam_{18}\qquad
\|\bbbl_*^{-1}\check S_{n+1}\check Q_{n+1}^*\bbbl_*\|_{\mm_9} \le L^2 \Gam_{18}
\end{equation*}

\end{corollary}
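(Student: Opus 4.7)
The plan is to observe that both operators in the statement are, after routine algebra with the scaling definitions \eqref{eqnPOLcheckDefs}, equal to $L^2$ times operators already estimated in \S\ref{secPOgreens}, namely $S_{n+1}$ and $S_{n+1}Q_{n+1}^*$. No new analysis is required; the only content is checking that the $\bbbl_*$ factors telescope so that each operator on the left of the stated bound lives on $\cH_{n+1}$, where the earlier $L$- and $n$-independent mass bounds apply directly.

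For the first bound, I specialize the second line of \eqref{eqnPOLcheckDefs} to $\mu=0$ to get $\check S_{n+1}=L^2\bbbl_*S_{n+1}\bbbl_*^{-1}$ as an operator on $\cH_n$. Conjugating by $\bbbl_*^{-1}$ on the left and $\bbbl_*$ on the right simply cancels the outer $\bbbl_*$'s and yields
\begin{equation*}
\bbbl_*^{-1}\check S_{n+1}\bbbl_* \;=\; L^2\, S_{n+1}\ :\ \cH_{n+1}\to\cH_{n+1}.
\end{equation*}
Proposition \ref{POGmainpos} at $\mu=0$ gives $\|S_{n+1}\|_{\mm_3}\le\Gam_{11}$ with $\mm_3,\Gam_{11}$ independent of $n$ and $L$, so the $\|\cdot\|_m$ bound scales as claimed provided we take $\mm_9\le\mm_3$.

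For the second bound, I first rewrite the adjoint $\check Q_{n+1}^*=\bbbl_*Q_{n+1}^*\bbbl_*^{-1}$. This follows from $\check Q_{n+1}=\bbbl_*Q_{n+1}\bbbl_*^{-1}$ together with the identity $\bbbl_*^{\,*}=L^5\bbbl_*^{-1}$, which is an elementary computation from the definitions of the real inner products $\<\cdot,\cdot\>_j$ (the factor $L^5$ compensates for the ratio $\veps_{j-1}^5/\veps_j^5=L^5$ of the lattice volume elements; compare \cite[Appendix \appDEFscaling]{PAR1}). Substituting,
\begin{equation*}
\bbbl_*^{-1}\check S_{n+1}\check Q_{n+1}^*\bbbl_*
 \;=\; \bbbl_*^{-1}\big(L^2\bbbl_*S_{n+1}\bbbl_*^{-1}\big)\big(\bbbl_*Q_{n+1}^*\bbbl_*^{-1}\big)\bbbl_*
 \;=\; L^2\, S_{n+1}Q_{n+1}^*,
\end{equation*}
viewed as an operator $\cH_0^{(n+1)}\to\cH_{n+1}$. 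Lemma \ref{lemPOGsnQstar} then supplies $\|S_{n+1}Q_{n+1}^*\|_{\mm_6}\le\Gam_{14}$ with $\mm_6,\Gam_{14}$ independent of $n$ and $L$.

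Finally I set $\mm_9=\min\{\mm_3,\mm_6\}$ and $\Gam_{18}=\max\{\Gam_{11},\Gam_{14}\}$. There is no real obstacle in this argument; the only bookkeeping is keeping track of which space each operator acts on so that the $\bbbl_*$ factors cancel correctly, and verifying the adjoint identity $\bbbl_*^{\,*}=L^5\bbbl_*^{-1}$ once.
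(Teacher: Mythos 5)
Your proof is correct and reaches the same bounds as the paper, but by a slightly more elementary route: you establish the operator identities $\bbbl_*^{-1}\check S_{n+1}\bbbl_*=L^2 S_{n+1}$ and $\bbbl_*^{-1}\check S_{n+1}\check Q_{n+1}^*\bbbl_*=L^2 S_{n+1}Q_{n+1}^*$ by direct telescoping of the $\bbbl_*$ factors, including an explicit verification of the adjoint relation $\bbbl_*^*=L^5\bbbl_*^{-1}$ (which is indeed correct, since $\veps_{j-1}^5/\veps_j^5=L^5$). The paper instead passes through Fourier transforms, citing parts (a) and (e) of Lemma \ref{lemPOLscaling} together with \cite[Lemmas \lemPoPscaling.b and \lemPoPscalingCrs.b]{Bloch} to convert the Fourier-space scaling identity $\check S_{n+1,\bbbl^{-1}(k)}(\bbbl^{-1}\ell,\bbbl^{-1}\ell')=L^2\hat S_{n+1,k}(\ell,\ell')$ into the same equalities of $\|\cdot\|_m$ norms. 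Your route avoids the Fourier detour entirely and is arguably cleaner for this particular corollary; the paper's route has the advantage of sitting inside the general Bloch--Floquet framework it has already set up and is the same machinery used for the trickier scaling identities elsewhere in \S\ref{secPOleadingOrder}. Both then finish identically by applying Proposition \ref{POGmainpos} and Lemma \ref{lemPOGsnQstar} with $\mm_9=\min\{\mm_3,\mm_6\}$. (Incidentally, the paper's displayed equality writes $\|S_{n+1}Q_{n+1}\|_{\mm_9}$ where $\|S_{n+1}Q_{n+1}^*\|_{\mm_9}$ is meant; you have it right.)
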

\begin{proof}
Set $\mm_9=\min\{\mm_3,\mm_6\}$.
Just combine \cite[Lemmas \lemPoPscaling.b and \lemPoPscalingCrs.b]{Bloch}  and 
parts (a) and (e)  of Lemma \ref{lemPOLscaling} to yield
\begin{equation*}
\|\bbbl_*^{-1}\check S_{n+1}\bbbl_*\|_{\mm_9}= L^2 \| S_{n+1}\|_{\mm_9}\qquad
\|\bbbl_*^{-1}\check S_{n+1}\check Q_{n+1}^*\bbbl_*\|_{\mm_9} 
       = L^2 \| S_{n+1}Q_{n+1}\|_{\mm_9}
\end{equation*}
and then apply Proposition \ref{POGmainpos} and Lemma \ref{lemPOGsnQstar}.
\end{proof}

\begin{lemma}\label{lemPOLcheckDeppties}
Assume that $L>\Gam_{17}$, the constant of Lemma \ref{lemPOLcheckAnppties}.
There are constants $\mm_{10}>0$ and $\Gam_{19}$ 
such that the following hold for all $k\in\bbbc^4$ with $|\Im k|\le \mm_{10}$.

\begin{enumerate}[label=(\alph*), leftmargin=*]
\item  
$\Big|\big(\bbbl_*^{-1}A_{\psi_{(*)}\th_{(*)}}\bbbl_*\big)_{k}(\ell_1)\Big|
=\Big|\big(A_{\psi_{(*)}\th_{(*)}}\big)_{\bbbl^{-1}(k)}(\bbbl^{-1}(\ell_1))\Big|
 \le\Gam_{19} \prod\limits_{\nu=0}^3 \big(\sfrac{24}{|\ell_{1,\nu}|+\pi}\big)^\fq$

\item 
If $\ell_1\ne 0$, then 
$\Big|\big(\bbbl_*^{-1} A_{\psi_{(*)}\th_{(*)}}\bbbl_*\big)_{k}(\ell_1)\Big|
 \le\Gam_{19} |k| \prod\limits_{\nu=0}^3 \big(\sfrac{24}{|\ell_{1,\nu}|+\pi}\big)^\fq$

\end{enumerate}
\end{lemma}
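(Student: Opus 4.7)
The strategy is to evaluate the Fourier-space formula \eqref{eqnPOLapsithft} at $\fk=\bbbl^{-1}(k)$, $\ell=\bbbl^{-1}(\ell_1)$, reparametrize the sum over $\ell'\in\hat\cB^+$ by $\ell'=\bbbl^{-1}(\ell'_1)$ with $\ell'_1\in\hat\cB_1$, and then convert every factor involving $u_+$, $\check\fQ_{n+1}$, $\check\De^{(n+1)}$, and $\bD_n^{-1}$ to its unit-scale counterpart using Lemma \ref{lemPOLscaling}. The key point is that $L_\nu|\bbbl^{-1}(\ell_1)_\nu|=|\ell_{1,\nu}|$ and $L_\nu|\bbbl^{-1}(k)_\nu|=|k_\nu|$, so the momentum-space bounds of Lemmas \ref{lemPBSuplusppties} and \ref{lemPBSunppties} and of Lemma \ref{lemPOLcheckAnppties} line up directly with the target expression $\prod_{\nu=0}^3(24/(|\ell_{1,\nu}|+\pi))^\fq$.

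The first main step is to decompose the resolvent $(aL^{-2}Q^*Q+\fQ_n)^{-1}$ using Lemma \ref{lemPOLcheckAnppties}(b) into its diagonal piece $\fQ_n(\fk+\ell)^{-1}\de_{\ell,\ell'}$ plus an error $E_\fk(\ell,\ell')$ bounded by $\frac{2}{aL^2}\prod(24/(L_\nu|\ell_\nu|+\pi))^\fq\prod(24/(L_\nu|\ell'_\nu|+\pi))^\fq$. The diagonal contribution to the kernel at $\ell'=\ell$ is
\begin{equation*}
\fQ_n(\fk+\ell)^{-1}\tfrac{a}{L^2}u_+(\fk+\ell)^\fq
   + (Q_n\bD_n^{-1(*)}Q_n^*)(\fk+\ell)\,u_+(\fk+\ell)^\fq\,\check\De^{(n+1)(*)}(\fk).
\end{equation*}
For the first summand I use $|\fQ_n^{-1}|\le 6/(5a)$ (Proposition \ref{propPBSAnppties}(a)) together with Lemma \ref{lemPBSuplusppties}(a) and the scaling identity $u_+(\bbbl^{-1}(k+\ell_1))=u_1(k+\ell_1)$. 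For the second summand I bound the sum defining $Q_nD_n^{-1(*)}Q_n^*$ by Lemma \ref{lemPBSunppties}(a) and Lemma \ref{lemPDOhatSzeroppties}(d), noting that the worst growth is $O(L^2)$ (coming from the $\ell_n=0$ term whose $\hat\bD_n^{-1}(\bbbl^{-1}(\cdot))=L^2\hat\bD_{n+1}^{-1}(\cdot)$ is controlled by Lemma \ref{lemPOCDenppties}(f)), and this is balanced by the factor $|\check\De^{(n+1)(*)}(\fk)|=L^{-2}|\De^{(n+1)(*)}(k)|\le L^{-2}\cdot 2a$ from Lemmas \ref{lemPOLscaling}(d) and \ref{lemPOCDenppties}(c).

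For the off-diagonal contribution I insert the bound on $E_\fk(\ell,\ell')$ and apply Lemma \ref{lemPBSuplusppties}(a) to $|u_+(\fk+\ell')^\fq|$ as well as the bounds just used for the braced quantity, which now must be summed over $\ell'_1\in\hat\cB_1$. Since $\fq\ge 4$, each sum
$\sum_{\ell'_1}\prod_\nu(|\ell'_{1,\nu}|+\pi)^{-2\fq}$ converges uniformly in $L$ and $n$, yielding a bound of the same form $\Gam_{19}\prod(24/(|\ell_{1,\nu}|+\pi))^\fq$ as in (a). The value of $\mm_{10}$ is fixed so that all analyticity strips used above (those of Lemmas \ref{lemPBSunppties}--\ref{lemPBSuplusppties}, Proposition \ref{propPBSAnppties}, Lemma \ref{lemPDOhatSzeroppties}, Lemma \ref{lemPOCDenppties}) contain the image of $|\Im k|\le\mm_{10}$ under $\bbbl^{-1}$; this is automatic because $\bbbl^{-1}$ shrinks imaginary parts.

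For part (b), the extra factor $|k|$ comes from two independent sources according as $\ell_1'=\ell_1$ or $\ell_1'\ne\ell_1$. On the diagonal ($\ell_1'=\ell_1$, so $\ell\ne 0$), Lemma \ref{lemPBSuplusppties}(b) applied to $u_+(\fk+\ell)^\fq=u_1(k+\ell_1)^\fq$ extracts an extra factor $|k_\nu|$ for some $\nu$ with $\ell_{1,\nu}\ne 0$, hence at least a factor $|k|$. Off the diagonal, I use the second alternative in Lemma \ref{lemPOLcheckAnppties}(b) (valid since $\ell\ne 0$), which already supplies the factor $\prod_{\nu:\ell_\nu\ne 0}L_\nu|\fk_\nu|=\prod_{\nu:\ell_{1,\nu}\ne 0}|k_\nu|\ge$ a single power of $|k|$. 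Routine bookkeeping then produces the required bound. The main bookkeeping obstacle is the exact tracking of $L$-powers between $\check\De^{(n+1)}\sim L^{-2}$ and $Q_nD_n^{-1(*)}Q_n^*\sim L^2$; once this cancellation is made explicit the rest reduces to the absolute convergence of the geometric sums over $\ell'_1$.
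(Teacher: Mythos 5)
Your proposal is correct and follows essentially the same route as the paper: evaluate the Fourier-space formula \eqref{eqnPOLapsithft} at $\bbbl^{-1}(k)$, split the resolvent $(aL^{-2}Q^*Q+\fQ_n)^{-1}$ into its diagonal part $\fQ_n^{-1}\de_{\ell,\ell'}$ plus the error controlled by Lemma \ref{lemPOLcheckAnppties}.b, track the $L^{-2}$ vs.\ $L^{2}$ cancellation between $\check\De^{(n+1)(*)}$ and $Q_n\bD_n^{-1(*)}Q_n^*$ via Lemma \ref{lemPOLscaling} and Lemma \ref{lemPOCDenppties}.f, and for part (b) draw the factor $|k|$ either from $u_1(k+\ell_1)^\fq$ or from the second alternative of Lemma \ref{lemPOLcheckAnppties}.b. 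The only differences are cosmetic: the paper splits the $\ell'_1$-sum into $\ell'_1=0$ vs.\ $\ell'_1\ne0$ (pulling $|k|$ from $B_1$ when $\ell'_1\ne0$, and from the resolvent error only when $\ell'_1=0$), whereas you split into $\ell'_1=\ell_1$ vs.\ $\ell'_1\ne\ell_1$ and pull $|k|$ from the resolvent error for the entire off-diagonal block --- both work because $\ell_1\ne0$; also your phrase ``$\ge$ a single power of $|k|$'' should read ``$\le\const|k|$'' since $\prod_{\nu:\ell_{1,\nu}\ne0}|k_\nu|\le|k|^m$ with $m\ge1$ and $|k|$ bounded, but the intended upper bound is correct.
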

\begin{proof}
By \eqref{eqnPOLapsithft}, Lemma \ref{lemPOLscaling} and Remark \ref{remPBSqnft}.e,
\begin{align*}
&\big(A_{\psi_{(*)}\th_{(*)}}\big)_{\bbbl^{-1}(k)}(\bbbl^{-1}(\ell_1))\\
&=\sum_{\ell'_1\in\hat\cB_1}
     \big(\sfrac{a}{L^2}Q^*Q+\fQ_n\big)^{-1}_{\bbbl^{-1}(k)}
                        (\bbbl^{-1}(\ell_1),\bbbl^{-1}(\ell'_1))
      \Big\{\sfrac{a}{L^2}Q^*_{\bbbl^{-1}(k)}(\bbbl^{-1}(\ell'_1))
  \cr \noalign{\vskip-0.1in}&\hskip0.5in
   +\fQ_n(\bbbl^{-1}(k+\ell'_1))\,
    \big(Q_n \bD_n^{{-1}(*)}Q_n^*\big)(\bbbl^{-1}(k+\ell'_1))
  \cr \noalign{\vskip-0.1in}&\hskip3.0in
    Q_{\bbbl^{-1}(k)}^*(\bbbl^{-1}(\ell'_1))
    \check\De^{(n+1)(*)}(\bbbl^{-1}(k))\Big\}\\
&=\sum_{\ell'_1\in\hat\cB_1}
     \big(\sfrac{a}{L^2}Q^*Q+\fQ_n\big)^{-1}_{\bbbl^{-1}(k)}
                        (\bbbl^{-1}(\ell_1),\bbbl^{-1}(\ell'_1))\ 
     B_1(k,\ell'_1)\\
&= \fQ_n\big(\bbbl^{-1}(k+\ell_1)\big)^{-1}\  B_1(k,\ell_1)
+\sum_{\ell'_1\in\hat\cB_1}
     C(k,\ell_1,\ell'_1) B_1(k,\ell'_1)
\end{align*}
where
\begin{align*}
B_1(k,\ell'_1)&=u_1(k+\ell'_1)^\fq\Big\{\sfrac{a}{L^2}
   +\fQ_n\big(\bbbl^{-1}(k+\ell'_1)\big)\,
   \sum_{\ell_n\in\hat\cB_n} B_2(k,\ell'_1,\ell_n)\Big\}\cr
B_2(k,\ell'_1,\ell_n)&=u_n(\bbbl^{-1}(k+\ell'_1)+\ell_n)^{2\fq}\  
    \bD_{n+1}^{{-1}(*)}\big(k+\ell'_1+\bbbl(\ell_n)\big)\ \De^{(n+1)(*)}(k)\cr
C(k,\ell_1,\ell'_1)
&=\big(\sfrac{a}{L^2}Q^*Q+\fQ_n\big)^{-1}_{\bbbl^{-1}(k)}
                        (\bbbl^{-1}(\ell_1),\bbbl^{-1}(\ell'_1))
                -\fQ_n\big(\bbbl^{-1}(k+\ell_1)\big)^{-1}\de_{\ell_1,\ell_1'}
\end{align*}

\Item (a)
Choose $\mm_{10}=\min\{2,\mm_1,\bmm(\pi)\}$. Then
\begin{align*}
\Big| \big(\sfrac{a}{L^2}Q^*Q+\fQ_n\big)^{-1}_{\bbbl^{-1}(k)}
                        (\bbbl^{-1}(\ell_1),\bbbl^{-1}(\ell'_1))\Big|
&\le \sfrac{6}{5} a\,\de_{\ell_1,\ell'_1}
+\sfrac{2}{aL^2}
           \smprod_{\nu=0}^3 \big(\sfrac{24}{|\ell_{1,\nu}|+\pi}\big)^\fq
           \smprod_{\nu=0}^3 \big(\sfrac{24}{|\ell'_{1,\nu}|+\pi}\big)^\fq
          \hidewidth\cr
& && & \hskip-85pt
      \text{(by Lemma \ref{lemPOLcheckAnppties}.b, Proposition \ref{propPBSAnppties}.a)}\cr
\big| u_1(k+\ell'_1)^\fq\big|
   &\le \smprod_{\nu=0}^3 \big(\sfrac{24}{|\ell'_{1,\nu}|+\pi}\big)^\fq &&
   &\text{(by Lemma \ref{lemPBSunppties}.a)}\cr
\big|\fQ_n(\bbbl^{-1}(k+\ell'_1))\big|&\le\sfrac{6}{5}a &&
   &\hskip-18pt\text{(by Proposition \ref{propPBSAnppties}.a)}\cr
\big|u_n(\bbbl^{-1}(k+\ell'_1)+\ell_n)^{2\fq}\big|
         &\le \smprod_{\nu=0}^3 \big(\sfrac{24}{|\ell_{n,\nu}|+\pi}\big)^{2\fq} &&
   &\text{(by Lemma \ref{lemPBSunppties}.a)}\cr
\big|\bD_{n+1}^{{-1}(*)}\big(k+\ell'_1+\bbbl(\ell_n)\big)\big|
         &\le \sfrac{1}{\gam_1 \pi}\text{ if }(\ell'_1,\ell_n)\ne (0,0)  &&
   &\text{(by Lemma \ref{lemPDOhatSzeroppties}.d)}\cr
\big|\De^{(n+1)(*)}(k)\big|&\le 2a  &&
   &\text{(by Lemma \ref{lemPOCDenppties}.c)}\cr
\big| \bD_{n+1}^{{-1}(*)}(k)\ \De^{(n+1)(*)}(k)\big|&\le \Gam_6  &&
   &\text{(by Lemma \ref{lemPOCDenppties}.f)}\cr
\end{align*}
So
\begin{align*}
&\Big|\big(A_{\psi_{(*)}\th_{(*)}}\big)_{\bbbl^{-1}(k)}(\bbbl^{-1}(\ell_1))\Big|
\\
&\hskip0.5in\le \sum_{\ell'_1\in\hat\cB_1}
     \Big\{\sfrac{6}{5} a\,\de_{\ell_1,\ell'_1}
         +\sfrac{2}{aL^2}
           \smprod_{\nu=0}^3 \big(\sfrac{24}{|\ell_{1,\nu}|+\pi}\big)^\fq
           \smprod_{\nu=0}^3 \big(\sfrac{24}{|\ell'_{1,\nu}|+\pi}\big)^\fq
    \Big\}
  \\ \noalign{\vskip-0.1in}&\hskip1.3in
         \smprod_{\nu=0}^3 \big(\sfrac{24}{|\ell'_{1,\nu}|+\pi}\big)^\fq
     \Big\{\sfrac{a}{L^2}
   +\sfrac{6}{5}a\,
    \sum_{\ell_n\in\hat\cB_n}
     \smprod_{\nu=0}^3 \big(\sfrac{24}{|\ell_n|+\pi}\big)^{2\fq} 
    \max\big\{\Gam_6,\sfrac{2a}{\gam_1\pi}\big\}\Big\}\\
&\hskip0.5in\le\const \sum_{\ell'_1\in\hat\cB_1}
     \Big\{\sfrac{6}{5} a\,\de_{\ell_1,\ell'_1}
         +\sfrac{2}{aL^2}
           \smprod_{\nu=0}^3 \big(\sfrac{24}{|\ell_{1,\nu}|+\pi}\big)^\fq
           \smprod_{\nu=0}^3 \big(\sfrac{24}{|\ell'_{1,\nu}|+\pi}\big)^\fq
    \Big\}\smprod_{\nu=0}^3 \big(\sfrac{24}{|\ell'_{1,\nu}|+\pi}\big)^\fq\\
&\hskip0.5in\le\Gam_{19} 
     \smprod_{\nu=0}^3 \big(\sfrac{24}{|\ell_{1,\nu}|+\pi}\big)^\fq
\end{align*}

\Item (b) Using the bounds of part (a) together with
\begin{align*}
\big| u_1(k+\ell'_1)^\fq\big|
   &\le |k|\smprod_{\nu=0}^3 \big(\sfrac{24}{|\ell'_{1,\nu}|+\pi}\big)^\fq\ \ 
   \text{ if }\ell'_1\ne 0 &&
   &\text{(by Lemma \ref{lemPBSunppties}.b)}\\
\big|u_n(\bbbl^{-1}(k+\ell'_1)+\ell_n)^{2\fq}\big|
         &\le |\bbbl^{-1}(k+\ell'_1)|
            \smprod_{\nu=0}^3 \big(\sfrac{24}{|\ell_n|+\pi}\big)^{2\fq} 
            \text{ if }\ell_n\ne 0 &&
   &\text{(by Lemma \ref{lemPBSunppties}.b)}
\end{align*}
we have, if $\ell'_1\ne 0$,
\begin{align*}
\big|B_1(k,\ell'_1)\big|
&\le |k|\smprod_{\nu=0}^3 \big(\sfrac{24}{|\ell'_{1,\nu}|+\pi}\big)^\fq
\Big\{\sfrac{a}{L^2}
   +\sfrac{6}{5}a\,
    \sum_{\ell_n\in\hat\cB_n}
     \smprod_{\nu=0}^3 \big(\sfrac{24}{|\ell_n|+\pi}\big)^{2\fq} 
    \max\big\{\Gam_6,\sfrac{2a}{\gam_1\pi}\big\}\Big\}\\
&\le \const |k|\smprod_{\nu=0}^3 \big(\sfrac{24}{|\ell'_{1,\nu}|+\pi}\big)^\fq
\end{align*}
and 
\begin{align*}
\big|B_1(k,0)\big|
&\le \smprod_{\nu=0}^3 \big(\sfrac{24}{\pi}\big)^\fq
\Big\{\sfrac{a}{L^2}
   +\sfrac{6}{5}a\,
    \sum_{\ell_n\in\hat\cB_n}
     \smprod_{\nu=0}^3 \big(\sfrac{24}{|\ell_n|+\pi}\big)^{2\fq} 
    \max\big\{\Gam_6,\sfrac{2a}{\gam_1\pi}\big\}\Big\}
\le \const 
\end{align*}
Using these bounds, the first bound of part (a) and 
Lemma \ref{lemPOLcheckAnppties}.b, and assuming that  $\ell_1\ne 0$,
\begin{align*}
&\big(A_{\psi_{(*)}\th_{(*)}}\big)_{\bbbl^{-1}(k)}(\bbbl^{-1}(\ell_1))
=\sum_{\ell'_1\in\hat\cB_1}
     \big(\sfrac{a}{L^2}Q^*Q+\fQ_n\big)^{-1}_{\bbbl^{-1}(k)}
                        (\bbbl^{-1}(\ell_1),\bbbl^{-1}(\ell'_1))\ 
     B_1(k,\ell'_1)\\
&\hskip0.5in=\big(\sfrac{a}{L^2}Q^*Q+\fQ_n\big)^{-1}_{\bbbl^{-1}(k)}
                        (\bbbl^{-1}(\ell_1),0))\  B_1(k,0)
  +O\Big(|k|\smprod_{\nu=0}^3 
               \big(\sfrac{24}{|\ell_{1,\nu}|+\pi}\big)^\fq\Big)\\
&\hskip0.5in=\sfrac{2}{aL^2}
                \smprod_{\nu=0}^3 
                               \big(\sfrac{24}{|\ell_{1,\nu}|+\pi}\big)^\fq
                    \smprod_{\nu=0}^3 
                               \big(\sfrac{24}{\pi}\big)^\fq
                \smprod\limits_{\atop{0\le\nu\le3}{\ell_{1,\nu}\ne 0}} |k_\nu|
                 \  B_1(k,0)
  +O\Big(|k|\smprod_{\nu=0}^3 
               \big(\sfrac{24}{|\ell_{1,\nu}|+\pi}\big)^\fq\Big)\\
&\hskip0.5in=O\Big(|k|\smprod_{\nu=0}^3 
               \big(\sfrac{24}{|\ell_{1,\nu}|+\pi}\big)^\fq\Big)
\end{align*}
\end{proof}
\begin{proof}[Proof of Proposition \ref{propPOLmain}]
\ 
\Item \emph{Bound on 
   $\big\|\bbbl_*^{-1} A_{\psi,\phi}\bbbl_*\big\|_{m=1}$:}\ \ \ 
By \cite[Lemma \lemPoPscaling.b]{Bloch} and Lemma \ref{lemPOLcheckAnppties}.b,
if $|\Im k_{\nu'}|\le 2$ for each $0\le\nu'\le3$ then
\begin{align*}
&\big|\big[\bbbl_*^{-1}\big\{\big(aL^{-2}Q^*Q+\fQ_n\big)^{-1}
                -\fQ_n^{-1}\big\}\bbbl_*\big]_k(\ell,\ell')\big| \\
&\hskip0.5in=\big|\big[\big(aL^{-2}Q^*Q+\fQ_n\big)^{-1}
        -\fQ_n^{-1}\big]_{\bbbl^{-1}k}(\bbbl^{-1}\ell,\bbbl^{-1}\ell')\big|\\
&\hskip0.5in\le \sfrac{2}{aL^2}
                \smprod_{\nu=0}^3 
                               \big(\sfrac{24}{|\ell_\nu|+\pi}\big)^\fq
                    \smprod_{\nu=0}^3 
                               \big(\sfrac{24}{|\ell'_\nu|+\pi}\big)^\fq
\end{align*}
So, by \cite[Lemma \lemBOlonelinfty.b]{Bloch}, 
\begin{equation*}
\big\|\bbbl_*^{-1}\big\{\big(aL^{-2}Q^*Q+\fQ_n\big)^{-1}
                -\fQ_n^{-1}\big\}\bbbl_*\|_{m=1}
\le\const_{\fq}
\end{equation*}
By Proposition \ref{propPBSAnppties}.a, Lemma \ref{lemPBSunppties}.a and 
\cite[Lemmas \lemBOlonelinfty.b,c]{Bloch},
\begin{equation*}
\|\fQ_n\|_{m=1},\ \|\fQ_n^{-1}\|_{m=1},\ \|Q_n\|_{m=1}\le\const_{\fq}
\end{equation*}
too.  Now just apply 
\cite[Lemmas \lemPoPscaling.c and \lemPoPscalingCrs.c]{Bloch}.

\Item 
\emph{Bound on 
   $\big\|\bbbl_*^{-1} A_{\psi,\phi,\nu}\bbbl_*\big\|_{m=1}$:}\ \ \ 
By \cite[Lemma \lemPoPscaling.b]{Bloch} and Lemma \ref{lemPOLcheckAnppties}.c,
if $|\Im k_{\nu'}|\le 2$ for each $0\le\nu'\le3$ then
\begin{align*}
\big|\big[\bbbl_*^{-1}\big(\fQ_n^{-1}Q^{(+)}_{+,\nu}\check\fQ_{n+1} Q^{(-)}_{+,\nu}
  \big)\bbbl_*\big]_k(\ell,\ell')\big|
&=\big|\big[\fQ_n^{-1}Q^{(+)}_{+,\nu}\check\fQ_{n+1} Q^{(-)}_{+,\nu}
         \big]_{\bbbl^{-1}k}(\bbbl^{-1}\ell,\bbbl^{-1}\ell')\big|\\
&\le \sfrac{3e^4}{2L^2}
     \big(\sfrac{24}{\pi}\big)^4
  \smprod\limits_{\nu=0}^3 \big(\sfrac{24}{|\ell_\nu|+\pi}\big)^{\fq-1}
   \smprod\limits_{\nu=0}^3 \big(\sfrac{24}{|\ell'_\nu|+\pi}\big)^\fq
\end{align*}
As $\fq>2$, \cite[Lemma \lemBOlonelinfty.b]{Bloch} yields 
\begin{equation*}
\big\|\bbbl_*^{-1} \fQ_n^{-1}Q^{(+)}_{+,\nu}\check\fQ_{n+1} Q^{(-)}_{+,\nu}\bbbl_*\|_{m=1}
\le\const_{\fq}
\end{equation*}
By Lemma \ref{lemPBSunderiv}.b and \cite[Lemma \lemBOlonelinfty.c]{Bloch},
$\|Q_{n,\nu}^{(-)}\|_{m=1}\le\const_{\fq}$ too, since $\fq>2$. Now just apply
\cite[Lemma \lemPoPscalingCrs.c]{Bloch}.

\Item 
\emph{Bound on 
   $\big\|\bbbl_*^{-1}A_{\psi_{(*)}\th_{(*)}}\bbbl_*\big\|_{\mm_8}$:}\ \ \ 
This follows from Lemma \ref{lemPOLcheckDeppties}.a by \cite[Lemma \lemBOlonelinfty.c]{Bloch}.

\Item 
\emph{Bound on 
   $\big\|\bbbl_*^{-1}\ \sfrac{a}{L^2} C^{(n)}(\mu)^{(*)} Q^*\  \bbbl_*\big\|_{\mm_8}$:}\ \ \ 
By \eqref{eqnPOLcmunomu}
\begin{equation*}
\bbbl_*^{-1}\ \sfrac{a}{L^2} C^{(n)}(\mu)^{(*)} Q^*\  \bbbl_*
=\bbbl_*^{-1}A_{\psi_{(*)}\th_{(*)}}\bbbl_*
+L^2\mu \bbbl_*^{-1}A_{\psi,\phi}\bbbl_* 
    S_{n+1}^{(*)}  S_{n+1}(L^2\mu)^{(*)} Q_{n+1}^* \fQ_{n+1}
\end{equation*}
Now just apply Proposition \ref{POGmainpos}, Lemma \ref{lemPBSunppties} and
Proposition \ref{propPBSAnppties}.c.

\Item 
\emph{Bound on 
   $\big\|\bbbl_*^{-1} A_{\psi_{(*)}\th_{(*)}\nu}\bbbl_*\big\|_{\mm_8}$:}\ \ \ 
It suffices to bound
\begin{itemize}[leftmargin=*, topsep=2pt, itemsep=0pt, parsep=0pt]
\item
$\bbbl_*^{-1} A_{\psi,\phi,\nu}\bbbl_*$ as above,
\item
 bound $\bbbl_*^{-1} \fQ_n^{-1}Q_{+,\nu}^{(+)}\check  \fQ_{n+1} \bbbl_*$ using 
\begin{align*}
\big|\big[\bbbl_*^{-1}\big(\fQ_n^{-1}Q^{(+)}_{+,\nu}\check\fQ_{n+1}
  \big)\bbbl_*\big]_k(\ell)\big|
&=\big|\big[\fQ_n^{-1}Q^{(+)}_{+,\nu}\check\fQ_{n+1} 
         \big]_{\bbbl^{-1}k}(\bbbl^{-1}\ell)\big|\\
&\le \sfrac{3e^2}{2L^2}
     \big(\sfrac{24}{\pi}\big)^3
  \smprod\limits_{\nu=0}^3 \big(\sfrac{24}{|\ell_\nu|+\pi}\big)^{\fq-1}
\end{align*}
(by \cite[Lemma \lemPoPscalingCrs.b]{Bloch}, \eqref{eqnPBSqplusplusminus}, 
Proposition \ref{propPBSAnppties}.a and 
Lemmas \ref{lemPBSunderiv}.b, \ref{lemPBSuplusppties}.a,
\ref{lemPOLcheckAnppties}.a) and \cite[Lemma \lemBOlonelinfty.c]{Bloch}, and

\item 
bound $D_{n+1}^{{-1}(*)}Q_{n+1,\nu}^{(+)}\De^{(n+1)(*)}$
using \begin{align*}
&\big|\big(\hat\bD_{n+1}^{{-1}(*)}Q_{n+1,\nu}^{(+)}\De^{(n+1)(*)}
  \big)_k(\ell_{n+1})\big|\\
&\hskip0.3in\le \big|\hat\bD_{n+1}^{{-1}(*)}\big(k+\ell_{n+1})\big)
\ze_{n+1,\nu}^{(+)}(k,\ell_{n+1})
u_{n+1,\nu}^{(+)}(k+\ell_{n+1})u_{n+1}(k+\ell_{n+1})^{\fq-1} \\
&\hskip4.3in\hat\De^{(n+1)(*)}(k)\big|
\end{align*}
(by \eqref{eqnPBSqnplusminus}) and
\begin{align*}
\big|\ze_{n+1,\nu}^{(+)}(k,\ell_{n+1})u_{n+1,\nu}^{(+)}(k+\ell_{n+1})\big|
         &\le e^2\big(\sfrac{24}{\pi}\big)^3 &
   &\text{(by Lemma \ref{lemPBSunderiv}.b)}\\
\big|u_{n+1}(k+\ell_{n+1})^{\fq-1}\big|
         &\le \smprod_{\nu=0}^3 \big(\sfrac{24}{|\ell_{n+1}|+\pi}\big)^{\fq-1} &
   &\text{(by Lemma \ref{lemPBSunppties}.a)}\\
\big|\hat\bD_{n+1}^{{-1}(*)}\big(k+\ell_{n+1}\big)\big|
         &\le \sfrac{1}{\gam_1 \pi}\text{ if }\ell_{n+1}\ne 0  &
   &\text{(by Lemma \ref{lemPDOhatSzeroppties}.d)}\\
\big|\hat\De^{(n+1)(*)}(k)\big|&\le 2a  &
   &\text{(by Lemma \ref{lemPOCDenppties}.c)}\\
\big|\hat\bD_{n+1}^{{-1}(*)}(k)\ \hat\De^{(n+1)(*)}(k)\big|&\le \Gam_6  &
   &\text{(by Lemma \ref{lemPOCDenppties}.f)}
\end{align*}
and \cite[Lemma \lemBOlonelinfty.c]{Bloch}.
\end{itemize}

\Item 
\emph{Bound on 
   $\big\|\bbbl_*^{-1} A_{\psi_{(*)}\th_{(*)}\nu}(\mu)\bbbl_*\big\|_{\mm_8}$:}\ \ \
This follows from the previous bounds of this Proposition, 
Remark \ref{remPOLderivAlg}.b, Proposition \ref{POGmainpos}, Lemma \ref{lemPBSunderiv}.c and
Proposition \ref{propPBSAnppties}.c.

\end{proof}

\newpage
\appendix
\section{Trigonometric Inequalities}\label{appTrigIneq}

\begin{lemma}\label{lemPBSsinxoverx}
\ 
\begin{enumerate}[label=(\alph*), leftmargin=*]
\item 
For $x,y$ real with $|x|\le\sfrac{\pi}{2}$,
\begin{equation*}
\big|\sin(x+iy)\big|\ge \sfrac{\sqrt{2}}{\pi}|x+iy|
\end{equation*}

\item  
For $x,y$ real with $|y|\le 1$,
\begin{equation*}
\sfrac{|\sin(x+iy)|}{|x+iy|}\le 2\min\big\{1\,,\,\sfrac{1}{|x+iy|}\big\}\qquad
\Big|\Im\sfrac{\sin(x+iy)}{x+iy}\Big|
   \le 2|y|\min\big\{|x|\,,\,\sfrac{2}{|x+iy|}\big\}   
\end{equation*}

\item 
For $0<\veps\le 1$ and $x,y$ real with $|\veps x|\le\pi$, $|y|\le 2$
\begin{align*}
\bigg|\frac{\sin\half (x+iy)}{\sfrac{1}{\veps}\sin\half\veps (x+iy)}\bigg|
    \le 4\min\left\{1,\frac{2}{|x|}\right\}\qquad
\bigg|\Im\frac{\sin\half (x+iy)}{\sfrac{1}{\veps}\sin\half\veps (x+iy)}\bigg|
    \le 6 |y|\min\left\{|x|\,,\,\frac{8}{|x|}\right\}
\end{align*}

\item 
For $x$ real with $|x|\le\sfrac{\pi}{2}$,
\begin{equation*}
\sfrac{2}{\pi}\le\sfrac{\sin x}{x} \le 1
\end{equation*}

\item 
For any complex number $z$ obeying $|z|\le 2$, 
\begin{equation*}
\big|\sfrac{\sin z}{z}-1\big|\le\half |z|^2
\end{equation*}
\end{enumerate}
\end{lemma}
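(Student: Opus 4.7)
\medskip

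\noindent\textbf{Proof plan for Lemma \ref{lemPBSsinxoverx}.}

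I would address the five parts in the order (a), (d), (e), (b), (c), since (d) is essentially (a) restricted to real arguments, (e) is a direct Taylor argument, and parts (b), (c) rely on the lower bounds from (a) to control the denominators.

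For part (a), I would use the standard identity $|\sin(x+iy)|^2 = \sin^2 x + \sinh^2 y$. On $|x|\le \pi/2$, Jordan's inequality gives $|\sin x|\ge (2/\pi)|x|$, and $|\sinh y|\ge |y|$ is immediate from the Taylor series. Since $2/\pi \le 1$, we obtain
\begin{equation*}
|\sin(x+iy)|^2 \ge (2/\pi)^2 x^2 + y^2 \ge (2/\pi^2)(x^2+y^2),
\end{equation*}
which is exactly the claimed bound. Part (d) is this argument restricted to $y=0$, combined with the trivial upper bound $|\sin x|\le |x|$. For (e), I would expand $\sin z/z - 1 = \sum_{k\ge 1}(-1)^k z^{2k}/(2k+1)!$, factor out $|z|^2$, and on $|z|\le 2$ bound the remaining series term by term: $\sum_{k\ge 1}|z|^{2k-2}/(2k+1)! \le 1/6 + 4/120 + 16/5040+\cdots < 1/2$, where each successive term is dominated by a rapidly shrinking ratio.

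For (b), I would use $\sin(x+iy) = \sin x\,\cosh y + i\cos x\,\sinh y$. The bound $|\sin z|\le 2$ on $|y|\le 1$ follows from $|\sin z|\le \cosh y \le \cosh 1 < 2$, while $|\sin z|\le 2|z|$ follows from part (e) applied to control $\sin z - z$ (hence $|\sin z|\le |z| + |z|^3/2$, which is $\le 2|z|$ for moderate $|z|$, with the tail handled by $|\sin z|\le \cosh 1$). Together these give the $\min$ estimate. For the imaginary part, the identity above yields
\begin{equation*}
\Im\frac{\sin(x+iy)}{x+iy}
=\frac{x\cos x\,\sinh y - y\sin x\,\cosh y}{x^2+y^2}.
\end{equation*}
Writing $\sinh y = y + y\,\rho_1(y)$ and $\cosh y = 1 + \rho_2(y)$ with $|\rho_i(y)|\le y^2$ on $|y|\le 1$, the numerator becomes $y\bigl[x\cos x - \sin x\bigr] + O(y^3)$-terms. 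On small $|x|$, Taylor gives $|x\cos x-\sin x|\le |x|^3/3$ producing the $|y||x|$ bound (after dividing by $|x+iy|^2$), while on large $|x|$ the crude estimate $|x\cos x-\sin x|\le |x|+1$ together with $|x+iy|^{-2}$ yields the $2/|x+iy|$ alternative.

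For part (c), I would apply part (a) to the denominator with argument $\veps(x+iy)/2$, whose real part has magnitude at most $\pi/2$; this gives
\begin{equation*}
\Bigl|\tfrac{1}{\veps}\sin\tfrac{\veps(x+iy)}{2}\Bigr|
\ge \tfrac{\sqrt 2}{\pi}\cdot \tfrac{|x+iy|}{2}.
\end{equation*}
Combined with $|\sin\tfrac{x+iy}{2}|\le \cosh(y/2)\le \cosh 1$ and $|\sin\tfrac{x+iy}{2}|\le \tfrac{|x+iy|}{2}\cosh(y/2)$, this yields the first claim of (c). The imaginary-part bound in (c) follows by writing $\Im(A/B) = (\Im A\cdot \Re B - \Re A\cdot \Im B)/|B|^2$ and using the part (b) style expansions of both numerator $A = \sin\tfrac{x+iy}{2}$ and denominator $B = \tfrac{1}{\veps}\sin\tfrac{\veps(x+iy)}{2}$, combined with the already established lower bound on $|B|$.

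The conceptual content is elementary; the main obstacle is tracking the specific numerical constants ($\sqrt 2/\pi$, $2$, $4$, $6$, $8$) that appear in the statement. Each bound requires splitting into a ``small argument'' regime (where Taylor expansions produce the factor of $|x|$ or $|z|^2$) and a ``large argument'' regime (where periodicity/boundedness produces the $1/|x+iy|$ factor), and verifying the transition thresholds are compatible with the stated constants.
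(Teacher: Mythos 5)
Your plan is correct and follows essentially the same route as the paper's proof: the sine addition formula with Jordan-type inequalities for (a) and (d), Taylor-remainder estimates for (e) and for the imaginary-part bound in (b) (with the same small/large argument split giving the two alternatives of the $\min$), and in (c) the same quotient structure, bounding the denominator below via (a) and the numerator via (b)-type estimates together with the formula $\Im(A/B)=(\Im A\,\Re B-\Re A\,\Im B)/|B|^2$. The only cosmetic difference is that in (a) you use $|\sin(x+iy)|^2=\sin^2 x+\sinh^2 y$ where the paper takes the maximum of the real and imaginary parts; both give the stated constant.
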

\begin{proof} 
By the standard trig identity
\begin{equation*}
\sin(x+iy)=\sin(x)\cos(iy)+\cos(x)\sin(iy)
=\sin(x)\cosh(y)+i\cos(x)\sinh(y)
\end{equation*}

\Item (a)
For $x,y$ real with $|x|\le\sfrac{\pi}{2}$
\begin{align*}
\big|\Re \sin(x+iy)\big| &= \big|\sin(x)\cosh(y)\big|
                         \ge|\sin(x)|
                         \ge\sfrac{2}{\pi}|x|\\
\big|\Re \sin(x+iy)\big| &= \big|\sin(x)\cosh(y)\big|
                         \ge\big|\sin(x)\sinh(y)\big|
                         \ge|\sin(x)|\,|y| \\
\big|\Im \sin(x+iy)\big| &= \big|\cos(x)\sinh(y)\big|
                         \ge|\cos(x)|\,|y|
\end{align*}
so that
\begin{equation*}
\big|\sin(x+iy)\big|\ge\max\big\{\sfrac{2}{\pi}|x|\,,\,|y|\big\}
\ge \sfrac{\sqrt{2}}{\pi}|x+iy|
\end{equation*}

\Item (b)
For $x,y$ real with $|y|\le 1$,
\begin{align*}
|\sin(x+iy)|&=\big|\sin(x)\cosh(y)+i\cos(x)\sinh(y)\big|
           \le\cosh(1)\big|\sin(x)+i\cos(x)\big| \\
           &=\cosh(1)
\end{align*}
and, since 
$
\big|\sinh(y)\big|=\Big|\sum_{n=0}^\infty\sfrac{y^{2n+1}}{(2n+1)!}\Big|
\le |y|\sum_{n=0}^\infty\sfrac{|y|^{2n}}{(2n)!}
=|y|\cosh(y)\le\cosh(1)|y|
$,
\begin{align*}
|\sin(x+iy)|&=\big|\sin(x)\cosh(y)+i\cos(x)\sinh(y)\big|
           \le\cosh(1)\big||x|+i|y|\big| \\
            &=\cosh(1) |x+iy|
\end{align*}
Thus
\begin{equation*}
\sfrac{|\sin(x+iy)|}{|x+iy|}\le\cosh(1)\min\big\{1\,,\,\sfrac{1}{|x+iy|}\big\}
\le 2\min\big\{1\,,\,\sfrac{1}{|x+iy|}\big\}
\end{equation*}
giving the first bound.

For the second bound
\begin{align*}
\Big|\Im\sfrac{\sin(x+iy)}{x+iy}\Big|
=\Big|\sfrac{x\,\Im\sin(x+iy)-y\,\Re\sin(x+iy)}{x^2+y^2}\Big|
=\Big|\sfrac{x\,\cos(x)\sinh(y)-y\,\sin(x)\cosh(y)}{x^2+y^2}\Big|
\end{align*}
Using
\begin{alignat*}{3}
\cos x-1&=-\int_0^x dt\ \sin t 
        &&=-\int_0^x dt\int_0^t ds\ \cos s\\
\sin x-x&= \int_0^x dt\ [\cos t-1] 
      &&=-\int_0^x dt\int_0^t ds\int_0^s du\ \cos u\\
\cosh y-1&=\int_0^y dt\ \sinh t 
       &&=\int_0^y dt\int_0^t ds\ \cosh s\\
\sinh y-y&= \int_0^y dt\ [\cosh t-1] 
      &&=\int_0^y dt\int_0^t ds\int_0^s du\ \cosh u
\end{alignat*}
and $\cosh(1) < 2$, we have
\begin{alignat*}{3}
\cos x &= 1 + \al(x) \sfrac{x^2}{2}\qquad&
\sin x &= x + \be(x) \sfrac{x^3}{6}\\
\cosh y &= 1 +\ga(y) y^2\qquad&
\sinh y &= y + \de(y)\sfrac{y^3}{3}
\end{alignat*}
with, for $|y|\le 1$, $|\al(x)|\,,\,|\be(x)|\,,\,|\ga(y)|\,,\,|\de(y)|\ \le 1$.
Consequently,
\begin{equation*}
\Big|\Im\sfrac{\sin(x+iy)}{x+iy}\Big|\le 2|xy|
\end{equation*}
Alternatively, using $|\sin(x)|\le |x|$, $|\sinh(y)|\le 2|y|$ 
and $|\cosh(y)|\le 2$,
\begin{equation*}
\Big|\Im\sfrac{\sin(x+iy)}{x+iy}\Big|\le \sfrac{4|xy|}{x^2+y^2}
\le \sfrac{4|y|}{\sqrt{x^2+y^2}}
\end{equation*}

\Item  (c)
For $0<\veps\le 1$, $x,y$ real and $|\veps x|\le\pi$, $|y|\le 2$, 
\begin{align*}
\bigg|\frac{\sin\half (x+iy)}{\sfrac{1}{\veps}\sin\half\veps (x+iy)}\bigg|
&=\left|\frac{\sfrac{\sin\half (x+iy)}{\half (x+iy)}}
     {\sfrac{\sin\half\veps (x+iy)}{\half\veps (x+iy)}}\right|
\le\sfrac{\pi}{\sqrt{2}}\cosh(1)\min\left\{1,\frac{2}{|x+iy|}\right\}
\le 4\min\left\{1,\frac{2}{|x|}\right\}
\end{align*}
and 
\begin{align*}
\bigg|\Im\frac{\sin\half (x+iy)}{\sfrac{1}{\veps}\sin\half\veps (x+iy)}\bigg|
&=\frac{\big|\Im\sfrac{\sin\half (x+iy)}{\half (x+iy)}
               \Re\sfrac{\sin\half\veps (x+iy)}{\half\veps (x+iy)}
             -\Re\sfrac{\sin\half (x+iy)}{\half (x+iy)}
               \Im\sfrac{\sin\half\veps (x+iy)}{\half\veps (x+iy)}\big|}
     {\big|\sfrac{\sin\half\veps (x+iy)}{\half\veps (x+iy)}\big|^2}\\
&\le\frac{|y|\min\big\{\half|x|\,,\,\sfrac{4}{|x+iy|}\big\}
             +4\min\big\{1,\sfrac{2}{|x|}\big\}
               \big|\Im\sfrac{\sin\half\veps (x+iy)}{\half\veps (x+iy)}\big|}
     {\big|\sfrac{\sin\half\veps (x+iy)}{\half\veps (x+iy)}\big|}\\
&\le\frac{|y|\min\big\{\half|x|\,,\,\sfrac{4}{|x+iy|}\big\}
             +4\min\big\{1,\sfrac{2}{|x|}\big\}
               \veps|y|\min\big\{\half\veps|x|\,,\,\sfrac{4}{\veps|x+iy|}\big\}}
     {\sfrac{\sqrt{2}}{\pi}}\\
&\le \sfrac{\pi}{\sqrt{2}}|y|\min\big\{
         \big(\half+2\veps^2\big)|x|\,,\,\sfrac{4+16}{|x+iy|}\big\}\\
&\le 6 |y|\min\big\{|x|\,,\,\sfrac{8}{|x+iy|}\big\}
\end{align*}

\Item (d) and (e) are standard.
\end{proof}

\newpage
\section{Lattice and Operator Summary}\label{appLatOpSummary}
The following table gives, for most of the
operators considered in this paper,
\begin{itemize}[leftmargin=*, topsep=2pt, itemsep=0pt, parsep=0pt]
\item 
the definition of the operator
\item
a reference to where in \cite{PAR1,PAR2}, the operator is introduced and
\item
the translation invariance properties of the operator.
\end{itemize}
A later table will specify where, in this paper, bounds
on the operators are proven.

\begin{center}
\renewcommand{\arraystretch}{1.3}
  \begin{tabular}{|cl|c|c| }
    \hline
     Operator& 
     &Definition
     &Tiwrt \\ \hline
    $D_0=-e^{-h_0}\partial_0 +\big[\bbbone-e^{-h_0}\big]$
   &$\!\!\!\!:\cH_0\rightarrow\cH_0$
   &\S\sectINTstartPoint 
   &$\cX_0$ \\ \hline
   $D_n = L^{2n}\ \bbbl_*^{-n} \, D_0\, \bbbl_*^n$
  &$\!\!\!:\cH_n\rightarrow\cH_n$
  &Def \defHTbackgrounddomaction.a
  &$\cX_n$ \\ \hline
   $\fQ_n=a\big(\bbbone
             +\smsum_{j=1}^{n-1}\sfrac{1}{L^{2j}}Q_jQ_j^*\big)^{-1}$
  &$\!\!\!:\cH_0^{(n)}\rightarrow\cH_0^{(n)}$
  &Def \defHTbackgrounddomaction.b
  &$\cX_0^{(n)}$ \\ \hline
   $\De^{(0)}=D_0$
  &$\!\!\!:\cH_0\rightarrow\cH_0$
  &(\eqnHTden)
  &$\cX_0$ \\ \hline
   $\De^{(n)}=\big(\bbbone+\fQ_n Q_nD_n^{-1} Q_n^*\big)^{-1}\fQ_n,\ n\ge 1$
  &$\!\!\!:\cH_0^{(n)}\rightarrow\cH_0^{(n)}$
  &(\eqnHTden)
  &$\cX_0^{(n)}$ \\ \hline
   $C^{(n)}=\big(\sfrac{a}{L^2} Q^*Q+\De^{(n)}\big)^{-1}$
  &$\!\!\!:\cH_0^{(n)}\rightarrow\cH_0^{(n)}$
  &(\eqnHTcn)
  &$\cX_{-1}^{(n+1)}$ \\ \hline
   $S_n^{-1}=D_n+Q^*_n\fQ_nQ_n$
  &$\!\!\!:\cH_n\rightarrow\cH_n$
  &Thm \HTthminvertibleoperators
  &$\cX_0^{(n)}$ \\ \hline
   $S_n(\mu)^{-1}=D_n+Q^*_n\fQ_nQ_n-\mu$
  &$\!\!\!:\cH_n\rightarrow\cH_n$
  &Thm \HTthminvertibleoperators
  &$\cX_0^{(n)}$ \\ \hline
   $\check\fQ_{n+1}=\big(\sfrac{L^2}{a}\bbbone+Q\fQ_n^{-1}Q^*\big)^{-1}$
  &$\!\!\!:\cH_{-1}^{(n+1)}\!\rightarrow\!\cH_{-1}^{(n+1)}$
  &Lem \lemSCacheckOne.b
  &$\cX_{-1}^{(n+1)}$ \\ \hline
   $\check S_{n+1}(\mu)=\big\{D_n
     \!+\!\check Q_{n+1}^*\check\fQ_{n+1}\check Q_{n+1}\!-\!\mu\big\}^{-1}\!$
  &$\!\!\!:\cH_n\rightarrow\cH_n$
  &(\eqnBGEcheckSmu)
  &$\cX_{-1}^{(n+1)}$ \\ \hline
   $A_{\psi,\phi}$
  &$\!\!\!:\cH_n\rightarrow\cH_0^{(n)}$
  &Prop \propCFpsisoln
  &$\cX_{-1}^{(n+1)}$ \\ \hline
  \end{tabular}
\renewcommand{\arraystretch}{1.0}
\end{center}

\noindent 
The references in the above table are to \cite{PAR1,PAR2} and 
``Tiwrt'' stands for ``translation invariant with respect to''.

\bigskip

\begin{center}
\renewcommand{\arraystretch}{1.3}
  \begin{tabular}{|cl|c|c| }
    \hline
     Operator& 
     &Definition
     &Tiwrt \\ \hline
   $S_nQ_n^*= D_n^{-1}Q_n^*\big(\bbbone+\fQ_n Q_nD^{-1}_n Q_n^*\big)^{-1}$
  &$\!\!\!:\cH_0^{(n)}\rightarrow\cH_n$
  &Lemma \ref{lemPOGsnQstar}
  &$\cX_0^{(n)}$ \\ \hline
   $\check S_{n+1}=\check S_{n+1}(0)$
  &$\!\!\!:\cH_n\rightarrow\cH_n$
  &after \eqref{eqnPOLcheckDefs}
  &$\cX_{-1}^{(n+1)}$ \\ \hline
   $A_{\psi,\phi}$
  &$\!\!\!:\cH_n\rightarrow\cH_0^{(n)}$
  &\eqref{eqnPOLapsiphi}
  &$\cX_{-1}^{(n+1)}$ \\ \hline
   $A_{\psi,\phi,\nu}$
  &$\!\!\!:\cH_n\rightarrow\cH_0^{(n)}$
  &Remark \ref{remPOLderivAlg}
  &$\cX_{-1}^{(n+1)}$ \\ \hline
   $A_{\psi_{(*)},\th_{(*)}}=\sfrac{a}{L^2} {C^{(n)}}^{(*)}Q^*$
  &$\!\!\!:\cH_{-1}^{(n+1)}\rightarrow\cH_0^{(n)}$
  &before Rmk \ref{remPOLderivAlg}
  &$\cX_{-1}^{(n+1)}$ \\ \hline
   $A_{\psi_*,\th_*,\nu},\ A_{\psi,\th,\nu},\ 
          A_{\psi_*,\th_*,\nu}(\mu),\ A_{\psi,\th,\nu}(\mu)$
  &$\!\!\!:\cH_{-1}^{(n+1)}\rightarrow\cH_0^{(n)}$
  &Remark \ref{remPOLderivAlg}
  &$\cX_{-1}^{(n+1)}$ \\ \hline
  \end{tabular}
\renewcommand{\arraystretch}{1.0}
\end{center}

\noindent
 The lattices involved are
\begin{alignat*}{3}
\cX_n&=\big(\veps_n^2\bbbz/\veps_n^2L_\tp\bbbz\big)\times
         \big(\veps_n\bbbz^3/\veps_nL_\sp\bbbz^3\big) &
\hat\cX_n
  &=\big(\sfrac{2\pi}{\veps_n^2L_\tp}\bbbz/\sfrac{2\pi}{\veps_n^2}\bbbz\big)
    \!\times\!
 \big(\sfrac{2\pi}{\veps_nL_\sp}\bbbz^3/\sfrac{2\pi}{\veps_n}\bbbz^3\big) 
  \\
\cX_{0}^{(n)}&=\big(\bbbz/\veps_n^2L_\tp\bbbz\big)\times
         \big(\bbbz^3/\veps_nL_\sp\bbbz^3\big) &
\hat\cX_{0}^{(n)}&=\big(\sfrac{2\pi}{\veps_n^2L_\tp}\bbbz/2\pi\bbbz\big)
       \!\times\!
         \big(\sfrac{2\pi}{\veps_nL_\sp}\bbbz^3/2\pi\bbbz^3\big)\\
\cX_{-1}^{(n+1)}\!&=\big(L^2\bbbz/\veps_n^2L_\tp\bbbz\big)\!\times\!
         \big(L\bbbz^3/\veps_nL_\sp\bbbz^3\big)\quad\  &
\!\!\!\hat\cX_{-1}^{(n+1)}\!
  &=\big(\sfrac{2\pi}{\veps_n^2L_\tp}\bbbz/\sfrac{2\pi}{L^2}\bbbz\big)
\!\times\!         \big(\sfrac{2\pi}{\veps_nL_\sp}\bbbz^3/\sfrac{2\pi}{L}\bbbz^3\big)
\end{alignat*}
where $\veps_n=\sfrac{1}{L^n}$.
The ``single period'' lattices are
\begin{alignat*}{3}
\cB_n&=\big(\veps_n^2\bbbz/\bbbz\big)\times
         \big(\veps_n\bbbz^3/\bbbz^3\big)\qquad &
\hat\cB_n
  &=\big(2\pi\bbbz/\sfrac{2\pi}{\veps_n^2}\bbbz\big)
    \!\times\!
 \big(2\pi\bbbz^3/\sfrac{2\pi}{\veps_n}\bbbz^3\big) 
  \\
\cB^+&=\big(\bbbz/L^2\bbbz\big)\times
         \big(\bbbz^3/L\bbbz^3\big) &
\hat\cB^+&=\big(\sfrac{2\pi}{L^2}\bbbz/2\pi\bbbz\big)
       \!\times\!
         \big(\sfrac{2\pi}{L}\bbbz^3/2\pi\bbbz^3\big)
\end{alignat*}

\noindent
The following table specifies where, in this paper, bounds
on the various operators are proven.

\begin{center}
\renewcommand{\arraystretch}{1.3}
  \begin{tabular}{|c|c|}
    \hline
     Operator
     &Bound \\ \hline
   $Q_n$
  &Lemma \ref{lemPBSunppties}.a\\ \hline
   $\fQ_n$
  &Proposition \ref{propPBSAnppties}\\ \hline
   $Q_{n,\nu}^{(\pm)}$
  &Lemma \ref{lemPBSunderiv} \\ \hline
   $D_n$
  &Lemma \ref{lemPDOhatSzeroppties}\\ \hline
   $\De^{(n)}$
  &Lemma \ref{lemPOCDenppties}\\ \hline
   $C^{(n)}$
  &Corollary \ref{corPOCsquareroot}\\ \hline
   $D^{(n)}$
  &Corollary \ref{corPOCsquareroot}\\ \hline
   $S_n(\mu),\ S_n$
  &Proposition \ref{POGmainpos}\\ \hline
   $S_{n,\nu}^{(\pm)}(\mu),\ S_{n,\nu}^{(\pm)}$
  &Proposition \ref{POGmainpos}\\ \hline
   $A_{\psi,\phi}$
  &Proposition \ref{propPOLmain}\\ \hline
   $A_{\psi,\phi,\nu}$
  &Proposition \ref{propPOLmain}\\ \hline
   $A_{\psi_{(*)}\th_{(*)}}$
  &Proposition \ref{propPOLmain}\\ \hline
   $A_{\psi_{(*)}\th_{(*)}\nu}(\mu)$
  &Proposition \ref{propPOLmain}\\ \hline
   $\check S_{n+1}$
  &Corollary \ref{corPOLcheckSnppties}\\ \hline
  \end{tabular}
\renewcommand{\arraystretch}{1.0}
\end{center}

\newpage
\bibliographystyle{plain}
\bibliography{refs}

\end{document}